\newcommand{\qedhere}{\hfill\qed}
\definecolor{vgreen}{rgb}{.1,.5,0}
\definecolor{vred}{rgb}{.7,0,0}
\definecolor{vblue}{rgb}{.1,.15,.62}
\newcommand{\rref}[2][]{\prettyref{#2}}
\definecolor{vblue}{rgb}{.1,.15,.62}
\tikzstyle{state}=[circle,very thick,draw=blue!30,top color=white,bottom color=blue!20,circular drop shadow]
\tikzstyle{trans}=[draw=vblue,->,>=stealth',semithick]
\newcommand{\drawstate}[4][1cm]{%
  \begin{minipage}{#1}%
    \(%
      \begin{array}{@{}c@{}}%
        \ifthenelse{\equal{#2}{}}{}{\textsf{#2}\\}
        #3\ifthenelse{\equal{#4}{}}{}{\\}
        #4
      \end{array}%
    \)%
  \end{minipage}%
}
\renewcommand{\entails}[1][]{\vDash}
\renewcommand{\mapply}[3][]{#2(#3)}
\newcommand{\der}[1]{\DD{(#1)}}%
\newcommand{\bebecomes}{\mathrel{::=}}
\newcommand{\alternative}{~|~}
\newcommand{\precond}{I}
\newcommand{\ivr}{\chi}%
\newcommand{\inv}{\phi}
\newcommand{\var}{\varphi}
\newcommand{\postcond}{g}
\newcommand*{\genDE}[1]{\theta}%
\newcommand{\sol}{x}%
\newcommand{\solf}{y}%
\newcommand{\stime}{\sol_0}%
\newcommand{\FOD}{FOD\xspace}
\newcommand{\FOQD}{\text{FOQD}\xspace}%
\newcommand{\onew}[2][]{n_{#2}}%
  \let\laforall\lforall%
\newcommand{\hastype}[2]{#1:#2}
  \renewcommand{\idomain}[2]{\iget[state]{#1}\ifthenelse{\equal{#2}{}}{}{(#2)}}
\newcommand{\jupd}{\mathcal{A}}%
\newcommand{\stdI}{\dLint[state=\nu]}
\newcommand{\I}{\stdI}
\newcommand{\It}{\dLint[state=\omega]}
\newcommand{\If}{\DALint[flow=\varphi]}
\title{Dynamic Logics of Dynamical Systems}
\author{ANDR\'E PLATZER
\affil{Carnegie Mellon University}}
\begin{abstract}
We study the logic of dynamical systems, that is, logics and proof principles for properties of dynamical systems.
\emph{Dynamical systems} are mathematical models describing how the state of a system evolves over time.
They are important for modeling and understanding many applications, including embedded systems and cyber-physical systems.
In \emph{discrete dynamical systems}, the state evolves in discrete steps, one step at a time, as described by a difference equation or discrete state transition relation.
In \emph{continuous dynamical systems}, the state evolves continuously along a function, typically described by a differential equation.
Hybrid dynamical systems or \emph{hybrid systems} combine both discrete and continuous dynamics.
\emph{Distributed hybrid systems} combine distributed systems with hybrid systems, i.e., they are multi-agent hybrid systems that interact through remote communication or physical interaction.
\emph{Stochastic hybrid systems} combine stochastic dynamics with hybrid systems.

We survey \emph{dynamic logics} for specifying and verifying properties for each of those classes of dynamical systems.
A dynamic logic is a first-order modal logic with a pair of parametrized modal operators for each dynamical system to express necessary or possible properties of their transition behavior.
Due to their full basis of first-order modal logic operators, dynamic logics can express a rich variety of system properties, including safety, controllability, reactivity, liveness, and quantified parametrized properties, even about relations between multiple dynamical systems.
In this survey, we focus on some of the representatives of the family of \emph{differential dynamic logics}, which share the ability to express properties of dynamical systems having continuous dynamics described by various forms of differential equations.

We explain the dynamical system models, dynamic logics of dynamical systems, their semantics, their axiomatizations, and proof calculi for proving logical formulas about these dynamical systems.
We study \emph{differential invariants}, i.e., induction principles for differential equations.
We survey theoretical results, including soundness and completeness and deductive power.
Differential dynamic logics have been implemented in automatic and interactive theorem provers and have been used successfully to verify safety-critical applications in automotive, aviation, railway, robotics, and analogue electrical circuits.
\end{abstract}
\keywords{Logic of dynamical systems, dynamic logic, differential dynamic logic, hybrid systems, distributed hybrid systems, stochastic hybrid systems, axiomatization, deduction}
\begin{document}

\begin{bottomstuff}
This material is an extended version of \cite{DBLP:conf/lics/Platzer12a} and based upon work supported by the National Science Foundation under
NSF CAREER Award CNS-1054246, NSF EXPEDITION CNS-0926181, and under Grant Nos.
CNS-1035800 and CNS-0931985, by the ONR award N00014-10-1-0188, by the Army Research Office under Award No. W911NF-09-1-0273, and by the German Research Council (DFG) as part of the Transregional Collaborative Research Center ``Automatic Verification and Analysis of Complex Systems'' (SFB/TR 14 AVACS).

Author's addresses: A. Platzer, Computer Science Department,
Carnegie Mellon University.
\end{bottomstuff}

\maketitle

\section{Introduction}

\newsavebox{\exbox}%
\sbox{\exbox}{\rotatebox[origin=c]{180}{$\exists$}}%

\newsavebox{\Rval}%
\sbox{\Rval}{$\scriptstyle\mathbb{R}$}
\irlabel{qear|\usebox{\Rval}}
\irlabel{qeer|\usebox{\Rval}}
\newsavebox{\Ival}%
\sbox{\Ival}{$\mathcal{I}$}

Dynamical systems study the mathematics of change \cite{HirschSmaleDevaney,PerkoDEDS06}.
Dynamical systems are mathematical models for describing how the state of a system evolves over time in a state space.
They can describe, for example, the temporal evolution of the state of an embedded system or of a cyber-physical system, i.e., a system combining and integrating cyber (computation and/or communication) with physical effects.
Cars \cite{DBLP:conf/hybrid/DeshpandeGV96}, aircraft \cite{DBLP:journals/tac/TomlinPS:98}, robots \cite{DBLP:journals/fmsd/PlakuKV09}, and power plants \cite{DBLP:journals/circsm/FourlasKV04} are prototypical examples.
But dynamical systems are more general and can also describe and analyze chemical processes \cite{DBLP:journals/ejc/RileyKR10,DBLP:conf/hybrid/KerkezGDB10}, biological systems \cite{DBLP:conf/lics/Tiwari11}, medical models \cite{DBLP:conf/cav/GrosuBFGGSB11,DBLP:conf/emsoft/KimASLJZJ11}, and many other behavioral phenomena.
Since dynamical systems occur in so many different contexts, different variations of dynamical system models are relevant for applications, including discrete dynamical systems described by difference equations or discrete transitions relations \cite{Galor}, continuous dynamical systems described by differential equations \cite{HirschSmaleDevaney,PerkoDEDS06}, hybrid dynamical systems alias hybrid systems combining discrete and continuous dynamics \cite{DBLP:conf/rex/MalerMP91,DBLP:journals/tcs/AlurCHHHNOSY95,DBLP:conf/hybrid/Branicky95,DBLP:conf/lics/Henzinger96,DBLP:journals/tac/BranickyBM98,DavorenNerode_2000,DBLP:journals/ieee/AlurHLP00,DBLP:journals/jar/Platzer08,DBLP:journals/logcom/Platzer10,Platzer08,Platzer10,DBLP:conf/lics/Platzer12b}, distributed hybrid systems or multi-agent hybrid systems \cite{DBLP:conf/hybrid/DeshpandeGV96,DBLP:conf/hybrid/Rounds04,DBLP:conf/hybrid/KratzSPL06,DBLP:journals/taas/GilbertLMN09,DBLP:conf/csl/Platzer10,DBLP:journals/lmcs/Platzer12b}, and stochastic hybrid systems that take stochastic effects into account \cite{DBLP:journals/roystats/Davis84,DBLP:journals/jcopt/GhoshAM97,DBLP:conf/hybrid/HuLS00,BujorianuL06,Cassandras2006,DBLP:conf/hybrid/MeseguerS06,DBLP:journals/tsmc/KoutsoukosR08,DBLP:journals/jlp/FranzleTE10,DBLP:conf/cade/Platzer11}.

For many of the applications that can be understood as dynamical systems, we are interested in analyzing and predicting their behavior, e.g., because the applications are safety-critical or performance-critical.
For car control systems, for example, it is important to verify that the controllers choose only safe control choices that can never lead to collisions with other traffic participants at any later point in time \cite{DBLP:conf/hybrid/DeshpandeGV96,DBLP:conf/fm/LoosPN11}.

This illustrates a central point about the analysis of dynamical systems.
Whether a \emph{current} control choice is safe or unsafe in a dynamical system depends on whether the states that the dynamical system could reach after this control choice \emph{in the future} will be safe or unsafe.
Whether a dynamical system is safe or unsafe depends on whether it will \emph{always} choose safe control choices \emph{at all times}.
Whether we can find that out depends on whether we can find a \emph{proof} that the dynamical system is safe or whether we can find a proof that it is unsafe.

What we can accept as a proof or other form of evidence depends on how critical it is that the answer is right.
If the answer is that the dynamical system is unsafe, then a test scenario demonstrating one bad behavior is good evidence, because it can be used for debugging purposes.
If the dynamical system is suspected unsafe, then an expert's engineering judgment can be good evidence, because that would already prevent premature manufacturing and/or deployment of a potentially unsafe system design.
If the answer is that the dynamical system is safe, we prefer  stronger evidence than a series of successful test scenarios.
After all, most dynamical systems have large or even (uncountably) infinite state spaces, so that no finite set of tests alone could demonstrate that the system will be safe in the infinitely many other possible situations that could not be tested.
This issue is particularly daunting for the complex systems found in practical applications, e.g., because they follow complex control logic or many of their features interact or because their physical interactions are difficult etc.

For those reasons, we pursue the question of what constitutes a proof about a dynamical system and how we can systematically obtain proofs to show whether the system is safe or unsafe.
Safety, in this introductory discussion, should be broadly construed, because the approaches we study in this article work for much more complicated properties than classical safety properties as well, including liveness, controllability, reactivity, quantified parametrized properties and so on.

Our technical vehicle for answering these questions from a logically foundational perspective is our study of logics of dynamical systems.
We survey logics for studying properties of the behavior of dynamical systems and proof approaches for proving those properties deductively.
Dynamic logic \cite{DBLP:conf/focs/Pratt76} has been developed and used very successfully for conventional discrete programs, both for theoretical \cite{DBLP:conf/stoc/HarelMP77,Segerberg77,DBLP:conf/mfcs/Parikh78,DBLP:journals/jcss/FischerL79,Harel_1979,DBLP:journals/tcs/KozenP81,DBLP:journals/jcss/MeyerP81,DBLP:journals/jacm/Peleg87,DBLP:journals/iandc/Istrail84,Harel_et_al_2000,DBLP:conf/lics/Leivant06} and practical purposes \cite{DBLP:conf/cade/ReifSS97,Harel_et_al_2000,KeYBook2007}.
We consider extensions of dynamic logic to dynamical systems, including logic for hybrid systems \cite{DBLP:conf/tableaux/Platzer07,DBLP:journals/jar/Platzer08,DBLP:journals/logcom/Platzer10,Platzer08,Platzer10,DBLP:conf/lics/Platzer12b}, logic for distributed hybrid systems \cite{DBLP:conf/csl/Platzer10,DBLP:journals/lmcs/Platzer12b}, and logic for stochastic hybrid systems \cite{DBLP:conf/cade/Platzer11}.
We emphasize that the logic of dynamical systems approach we survey in this article lends itself to many interesting theoretical investigations as witnessed by a number of highly nontrivial theoretical results \cite{DBLP:conf/tableaux/Platzer07,DBLP:journals/jar/Platzer08,DBLP:journals/logcom/Platzer10,Platzer08,Platzer10,DBLP:conf/csl/Platzer10,DBLP:conf/cade/Platzer11,DBLP:journals/lmcs/Platzer12,DBLP:conf/lics/Platzer12b,DBLP:journals/lmcs/Platzer12b}, while, at the same time, enabling the practical verification of complex applications across different fields \cite{Platzer08,DBLP:conf/fm/PlatzerC09,DBLP:conf/icfem/PlatzerQ09,Platzer10,DBLP:conf/fm/LoosPN11,DBLP:conf/itsc/LoosP11,DBLP:conf/icfem/RenshawLP11,DBLP:conf/iccps/MitschLP12,DBLP:conf/acc/ArechigaLPK12} and inspiring algorithmic approaches based directly on these logics \cite{DBLP:conf/cav/PlatzerC08,Platzer08,DBLP:journals/fmsd/PlatzerC09,DBLP:conf/cade/PlatzerQ08,DBLP:conf/cade/PlatzerQR09,Platzer10,DBLP:conf/icfem/RenshawLP11}.

We remind the reader that this is not an isolated phenomenon.
Logics have been used very successfully in many different ways, including deduction and model checking, for verifying several other classes of systems, including finite-state systems \cite{ClarkeGrumberg_MC_1999,BaierKL08}, programs \cite{DBLP:conf/focs/Pratt76,Harel_et_al_2000,KeYBook2007,BradleyManna07,AptdeBoerOlderog10}, and real-time systems \cite{DBLP:conf/lics/Dutertre95,ZhouH04,OlderogD08,BaierKL08}.
Hybrid systems verification, for example, has generally received significant attention by the research community, including a number of verification tools \cite{DBLP:journals/sttt/HenzingerHW97,DBLP:conf/hybrid/MitchellT05,RatschanS07,DBLP:journals/sttt/Frehse08,DBLP:conf/cade/PlatzerQ08,DBLP:conf/icfem/RenshawLP11,DBLP:conf/cav/FrehseGDCRLRGDM11}; see \rref{ch:RelatedWork} for an overview.
Each verification approach has benefits and tradeoffs. It is promising  to combine ideas from approaches rooted in different traditions to leverage the specific advantages of each.
For instance, fixpoint loops, which are a driving force behind  model checking \cite{ClarkeGrumberg_MC_1999,BaierKL08}, have been used as a proof strategy to find deductive proofs in the proof calculus of differential dynamic logic \cite{DBLP:journals/jar/Platzer08}. Both can be used to compute invariants and differential invariants of the system \cite{DBLP:conf/cav/PlatzerC08,DBLP:journals/fmsd/PlatzerC09}.
The study of the logic of dynamical systems combines many areas of science, including mathematical logic, automated theorem proving, proof theory, model checking, and decision procedures, as well as differential algebra, computer algebra, algebraic geometry, analysis, stochastic calculus, and numerical approximation.

We see a number of advantages of the approach we focus on here, which make it attractive for research and applications, with the most important being soundness, completeness, compositionality, and extendability.
Because dynamical systems can capture very complex behavior, their analysis can become very challenging and it is surprisingly difficult to get the reasoning sound \cite{DBLP:journals/mst/Collins07,DBLP:conf/hybrid/PlatzerC07}.
In logic, \emph{soundness} is easier to achieve, because we just check a small number of elementary proof rules for soundness once and for all. Then everything that can be derived from those simple rules, no matter how complicated, is going to be correct.
Soundness (everything we prove is true) and completeness (we can prove everything that is true) are separated by design.
In logic, \emph{completeness} is a meaningful question to ask, not just in practice but also in theory, and has been answered in detail for logic of dynamical systems (\rref{sec:dL-complete} and \cite{DBLP:conf/lics/Platzer12b}).
More generally, theoretical questions and logically foundational questions, including relative completeness \cite{DBLP:journals/jar/Platzer08,DBLP:conf/lics/Platzer12b,DBLP:journals/lmcs/Platzer12b} and relative deductive power \cite{DBLP:journals/logcom/Platzer10,DBLP:journals/lmcs/Platzer12}, become meaningful in a logical setting.

The logics and proof systems we consider are \emph{compositional}. That is, the logics have a perfectly compositional, denotational semantics, in which the semantics of a model and the meaning of a formula are simple functions of the respective semantics of their parts.
Furthermore, the proof systems are compositional, i.e., they exploit this compositional semantics and systematically reduce a property of a complex systems to a number of properties about simpler systems by structural decomposition.
This makes it possible to understand complex dynamical systems in terms of their parts, which are often much easier than the full system.
In fact, completeness results prove that decomposition is always successful.
This result translates into practice, where systems that are designed according to good engineering practice adhering to modularity principles are easier to verify than those that are not.
Smart decompositions can have a tremendous impact on the practical verification complexity and improve scalability \cite{DBLP:conf/fm/LoosPN11}.

Another beneficial phenomenon in logics of dynamical systems is that they are easy to \emph{extend}.
Verification is based on a proof calculus, which is a collection of simple proof rules (and axioms).
In order to verify a feature in a different way, we can simply add new proof rules, which will improve the verification since the previous proof rules are kept as alternatives.
We will exercise this a number of times in this article, particularly when we are adding more and more proof rules to handle various sophisticated aspects of differential equations.
We start with simple rules using solutions of differential equations, then study differential invariants \cite{DBLP:journals/logcom/Platzer10}, an induction principle for differential equations, then differential cuts \cite{DBLP:journals/logcom/Platzer10,DBLP:journals/lmcs/Platzer12}, a logical cut principle for differential equations, and finally differential auxiliaries \cite{DBLP:journals/lmcs/Platzer12}.
Differential refinement and differential transformation rules are further extensions \cite{DBLP:journals/logcom/Platzer10,Platzer10}, but beyond the scope of this article.
Temporal logic extensions \cite{Platzer10,DBLP:conf/lfcs/Platzer07} and extensions to differential-algebraic hybrid systems \cite{DBLP:journals/logcom/Platzer10,Platzer10} are other illustrations of how the logic and proof calculus can be extended easily just by adding rules to cover more advanced temporal properties and systems with more complex dynamics.

In this article we focus on the logic of hybrid systems and we illustrate two more invasive extensions that change the logic of dynamical systems in fundamental ways by changing the characteristic of relevant dynamical aspects.
In \rref{ch:QdL}, we consider the logic of distributed hybrid systems \cite{DBLP:conf/csl/Platzer10,DBLP:journals/lmcs/Platzer12b}, which changes the state space in fundamental ways from fixed finite-dimensional state spaces to evolving and infinite-dimensional state spaces of arbitrarily many hybrid system agents interacting with each other through remote communication and physical interaction.
This extension is as radical as that from propositional logic to first-order logic, except that it happens in the dynamics, not just the propositions.
In \rref{ch:SdL}, we consider the logic of stochastic hybrid systems \cite{DBLP:conf/cade/Platzer11}, which changes the dynamics in fundamental ways to incorporate discrete and continuous stochastic effects changing the semantics from deterministic boolean truth to the randomness of stochastic processes.
While both extensions are radical, catapulting us into fundamentally different classes of dynamical systems, we will see that the changes in the proof calculi are surprisingly moderate additions of proof rules for new dynamical features and refinements, e.g., to adapt to a stochastic semantics.

Another helpful aspect of logic is that it produces proofs that can serve as readable evidence for the correctness of a system for certification purposes.
Concerns that are sometimes voiced in the context of classical discrete systems about theorem proving compared to model checking involve the degree of automation and the ability to find counterexamples.
They are less relevant for general dynamical systems.
Even the verification of very simple classes of hybrid systems is neither semidecidable nor co-semidecidable \cite{DBLP:journals/jcss/AsarinM98,DBLP:conf/lics/Henzinger96,DBLP:conf/concur/CassezL00}.
Consequently, quite unlike in finite-state systems and timed automata \cite{ClarkeGrumberg_MC_1999,BaierKL08}, exhaustive exploration of all states, even in bisimulation quotients, does not terminate in general, so that approximations and abstractions have to be used during the reachability analysis, and counterexamples are no longer reliable (see \cite{Clarke_2003b} for counterexample-guided abstraction refinement techniques).
Some nontrivial applications \cite{DBLP:conf/cav/PlatzerC08,DBLP:journals/fmsd/PlatzerC09,DBLP:conf/fm/PlatzerC09,DBLP:conf/icfem/PlatzerQ09,Platzer10} have been proved fully automatically with the approach we survey here.
Improving automation and scalability is, nevertheless, a permanently promising challenge in verification.
For complex systems, we find it advantageous that proving is amenable to human guidance, because the designer can specify the critical invariants of his system design, which helps finding proofs when current automation techniques fail.
In this article, we take a view that we call \emph{multi-dynamical systems}, i.e., the principle to understand complex systems as a combination of multiple elementary dynamical aspects.
This approach helps us tame the complexity of complex systems by understanding that their complexity just comes from combining lots of simple dynamical aspects with one another.
The overall system itself is still as complicated as the whole application.
But since differential dynamic logics and proofs are compositional, we can leverage the fact that the individual parts of a system are simpler than the whole, and we can prove correctness properties about the whole system by reduction to simpler proofs about their parts.
This approach demonstrates that the whole can be greater than the sum of all parts. The whole system is complicated, but we can still tame its complexity by an analysis of its parts, which are simpler.
Completeness results are the theoretical justification why this multi-dynamical systems principle works.

The results reported in this paper are based on previous research on logics of dynamical systems \cite{DBLP:conf/tableaux/Platzer07,DBLP:journals/jar/Platzer08,DBLP:journals/logcom/Platzer10,Platzer08,Platzer10,DBLP:conf/csl/Platzer10,DBLP:conf/hybrid/Platzer11,DBLP:conf/cade/Platzer11,DBLP:conf/lics/Platzer12b,DBLP:journals/lmcs/Platzer12,DBLP:journals/lmcs/Platzer12b}.
The results presented here are new in that we show significantly simplified Hilbert-type axiomatizations and, consequently, simplified semantics in comparison to the earlier presentations, which were more tuned for automation.
This setting enables us to identify connections between the approaches for the different classes of dynamical systems.
We provide an overview of the approach of logic of dynamical systems here, but it is, by no means, possible to handle all material comprehensively in this survey.
A more comprehensive source on logic of hybrid systems is a book \cite{Platzer10} and subsequent extensions \cite{DBLP:journals/lmcs/Platzer12,DBLP:conf/lics/Platzer12b}. Details about the logic of distributed hybrid systems \cite{DBLP:conf/csl/Platzer10,DBLP:conf/hybrid/Platzer11,DBLP:journals/lmcs/Platzer12b} and about logic of stochastic hybrid systems \cite{DBLP:conf/cade/Platzer11} can be found in previous work.
More information about algorithmic aspects can be found in related papers \cite{DBLP:conf/verify/Platzer07,DBLP:conf/cav/PlatzerC08,DBLP:journals/fmsd/PlatzerC09,DBLP:conf/cade/PlatzerQ08} and applications \cite{DBLP:conf/fm/PlatzerC09,DBLP:conf/icfem/PlatzerQ09,DBLP:conf/fm/LoosPN11,DBLP:conf/itsc/LoosP11,DBLP:conf/icfem/RenshawLP11,DBLP:conf/iccps/MitschLP12}.
Complementary extensions to differential temporal dynamic logic \cite{Platzer10,DBLP:conf/lfcs/Platzer07} and extensions to differential-algebraic hybrid systems with complex dynamics \cite{DBLP:journals/logcom/Platzer10,Platzer10} are very useful, but beyond the scope of this article.

In \rref{ch:dynamical-systems}, we briefly summarize the dynamical aspects of various classes of dynamical systems before we study their models, logics, and proofs in more detail in subsequent sections.
In \rref{ch:dL}, we study the logic of hybrid systems, which includes the logic of discrete dynamical systems and the logic of continuous dynamical systems as fragments.
In \rref{ch:QdL}, we study the logic of distributed hybrid systems, extending the results from \rref{ch:dL} to multi-agent scenarios.
We study the logic of stochastic hybrid systems in \rref{ch:SdL}.
We discuss related work in \rref{ch:RelatedWork} and give pointers to the literature.
Section~\ref{ch:Conclusions} concludes with a summary and an outlook for future research opportunities.

\section{Dynamical Systems} \label{ch:dynamical-systems}

In this section, we briefly recall the basic principles behind a number of classes of dynamical systems, for which we study models, logics, and proof approaches in subsequent sections.
We also illustrate our multi-dynamical systems view on these dynamical systems, which we detail in subsequent sections.

Formally, a \emph{dynamical system} is an action of a monoid $T$ (time) on a state space $\mathcal{X}$.
That is a dynamical system is described by a function $\varphi$, whose value $\varphi_t(x)\in\mathcal{X}$ at time $t\in T$ denotes the state that the system has at time $t$, provided that it started in the initial state $x\in\mathcal{X}$ at time 0.
It starts at \m{\varphi_0(x)=x} and the evolution can proceed in stages, i.e., \m{\varphi_{t+s}(x)=\varphi_s(\varphi_t(x))} for all $s,t\in T$ and $x\in\mathcal{X}$.
That is, if the dynamical system evolves for time $t$ and, from the state $\varphi_t(x)$ that it reached then, for time $s$, then it reaches the same state by simply evolving for time $t+s$ starting from $x$ right away.
For different choices of $T$ and $\mathcal{X}$, we get different classes of dynamical systems.
For computational analysis purposes, it is also crucial to choose a sufficiently computational description of the dynamical system~$\varphi$.

\subsection{Discrete Dynamical Systems} \label{sec:discrete-dynamical-system}

Discrete dynamical systems have an integer notion of time (e.g., $T=\naturals$ or $T=\integers$) so that the state evolves in discrete steps, one step at a time, as typically described by a difference equation or discrete state transition function.
The \emph{discrete dynamical system}
\begin{equation}
\varphi_{n+1}(x) = f(\varphi_n(x)) \quad(n\in\naturals)
\label{eq:discrete-dynamical}
\end{equation}
is fully described by its \emph{generator} \m{f:\mathcal{X}\to\mathcal{X}} or transition function, where $x\in\mathcal{X}$ is the initial state.
Equivalently, when defining \m{h(x):=f(x)-x} the discrete dynamical system \rref{eq:discrete-dynamical} can be described by the \emph{difference equation}
\[
\varphi_{n+1}(x) - \varphi_n(x) = h(\varphi_n(x)) \quad(n\in\naturals)
\]
Computation processes can be described by discrete dynamical systems, for example.
The system starts in an initial state $\varphi_0(x)=x$ at a time 0, performs a transition to a new state \m{\varphi_1(x)=f(x)} at a time 1, then another transition to a state \m{\varphi_2(x)=f(f(x))} at time 2, etc. until the computation terminates at a state $\varphi_n(x)$ at some time $n$. The scaling unit of these integer time steps is not relevant, but could be chosen, e.g., as the cycle time of a processor or discrete controller.
Program models and automata models have been used to describe discrete dynamical systems and have been used very successfully in verification \cite{ClarkeGrumberg_MC_1999,BaierKL08,AptdeBoerOlderog10}.
The behavior of systems with a discrete state transition relation \m{R\subseteq\mathcal{X}\times\mathcal{X}} is nondeterministic, but can still be captured as a discrete dynamical system using the powerset $2^{\mathcal{X}}$ as the state space instead of $\mathcal{X}$:
\[
\varphi_{n+1}(X) = \{f(x) \with x\in \varphi_n(X)\} \quad(n\in\naturals)
\]
when starting from a set $X\subseteq\mathcal{X}$ of initial states.

Discrete dynamical systems cannot, however, describe continuous processes, except as approximations at discrete points in time, e.g., with a uniform discretization grid $\frac{1}{n}$ at the discrete points in time $\frac{0}{n}, \frac{1}{n}, \frac{2}{n},\dots,\frac{n}{n}$.
Discrete-time approximations give limited information about the behavior in between the $\frac{i}{n}$, which causes fundamental differences \cite{DBLP:conf/hybrid/PlatzerC07} and similarities \cite{DBLP:conf/lics/Platzer12b}.

\subsection{Continuous Dynamical Systems}

Continuous dynamical systems have a real continuous notion of time (e.g. $T=\reals_{\geq0}$ or $T=\reals$) so that the state evolves continuously along a function of real time, typically described by a differential equation.
The state of the system $\varphi_t(x)$ then is a function of continuous time $t$.
The \emph{continuous dynamical system}
\begin{align*}
  \D[t]{\varphi_t(x)} &= f(\varphi_t(x)) \quad (t\in\reals)\\
  \varphi_0(x) &= x
\end{align*}
is fully described by its \emph{generator} \m{f:\mathcal{X}\to\mathcal{X}}, where $x\in\mathcal{X}$ is the initial state.
Depending on the duration of the solution of the above differential equation, the continuous system may only be defined on a subinterval of $\reals$.
The time-derivative $\D[t]{}$ is only well-defined under additional assumptions, e.g., that $\mathcal{X}$ is a Euclidean space $\reals^n$ or a differentiable manifold \cite{HirschSmaleDevaney,PerkoDEDS06}.

Many physical processes are continuous dynamical systems described by differential equations.
The movement of the longitudinal position of a car of velocity $v$ down a straight road from initial position $p_0$, for example, can be described by the differential equation \m{\D{p}(t)=v} with initial value \m{p(0)=p_0}.
The state of the dynamical system at time $t$ then is the solution \m{\varphi_t(p_0)=p_0+t v}, which is defined at all times $t\in\reals$.
We refer to the literature for more details and many more examples of continuous dynamical systems \cite{HirschSmaleDevaney,PerkoDEDS06}.
Continuous dynamical systems cannot represent discrete transitions easily; see, however, \rref{sec:dL-complete}.
Discrete transitions lead to discontinuities, which lead to interesting but very complicated generalized notions of solutions, including Carath\'eodory solutions \cite{Walter:ODE} or Filippov solutions \cite{AubinCellina84}.

\subsection{Hybrid Systems}

\emph{Hybrid dynamical systems} alias \emph{hybrid systems} \cite{DBLP:journals/tcs/AlurCHHHNOSY95,DBLP:conf/hybrid/Branicky95,DBLP:conf/lics/Henzinger96,DBLP:journals/tac/BranickyBM98,DavorenNerode_2000,DBLP:journals/ieee/AlurHLP00,DBLP:journals/jar/Platzer08,DBLP:journals/logcom/Platzer10,Platzer08,Platzer10,DBLP:conf/lics/Platzer12b} are dynamical systems that combine discrete dynamical systems and continuous dynamical systems.
Discrete and continuous dynamical systems are not just combined side by side to form hybrid systems, but they can interact in interesting ways.
Part of the system can be described by discrete dynamics (e.g., decisions of a discrete-time controller), other parts are described by continuous dynamics (e.g., movement of a physical process), and both kinds of dynamics interact freely in a hybrid system (e.g., when the discrete controller changes control variables of the continuous side by appropriate actuators, e.g., when changing acceleration, or when the continuous dynamics determines the values of sensor readings for the discrete decisions, e.g., the velocity).
Embedded systems and cyber-physical systems are often modeled as hybrid systems, because they involve both discrete control and physical effects.

\begin{wrapfigure}{r}{37mm}
\vspace*{-\baselineskip}
\includegraphics[width=39mm]{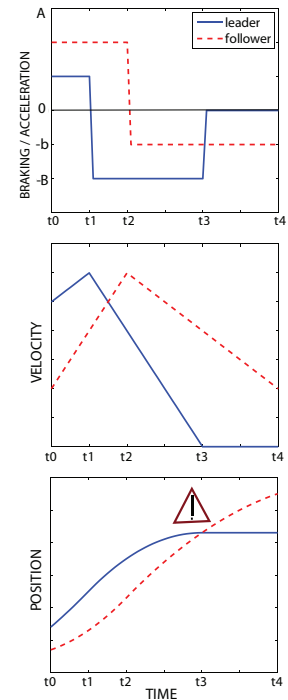}
\caption{Local car crash} 
\label{fig:local-lane}
\end{wrapfigure}
A typical example is a car that drives on a road according to a differential equation for the physical movement, but is subject to discrete control decisions where discrete controllers change the acceleration and braking of the wheels, e.g., when the adaptive cruise control or the electronic stability program takes effect.
Figure~\ref{fig:local-lane} shows how the acceleration of a car changes instantaneously by discrete control decisions (top), and how the velocity and position evolve continuously over time (middle and bottom).
The situation in \rref{fig:local-lane} illustrates a bad control choice, where the follower car brakes too late (at time $t_2$) and then crashes into the leader car at time $t_3$.
In particular, the follower car made a bad decision to keep on accelerating at some point before time $t_2$, when it should have activated the brakes instead, because, at time $t_2$, no control choice (within the physical acceleration limits $-b$ to $A$) could prevent the crash.
This is one illustration of the phenomenon that bad control choices in the past cause unsafety in the future and that we need to verify our control choices now by considering their possible dynamical effects in the future.

Notice that the state space $\mathcal{X}$ has no bearing on whether a system is a hybrid system or not. It is the notion of time and dynamics that determines hybrid systems.
For example, a system that has both discrete-valued state variables from a discrete set $\{1,2,3,4\}$ and continuous-valued state variables from a continuous set like $\reals$ is still a discrete dynamical system if all its variables only change in discrete steps (\rref{sec:discrete-dynamical-system}).

In hybrid systems, we follow our multi-dynamical systems philosophy and model each part of the system by the most appropriate dynamics, whether discrete or continuous, instead of having to model everything discrete, uniformly, for the whole system as in discrete dynamical systems or to model everything continuous, uniformly, as in continuous dynamical systems.
The overall system behavior can still be very complicated, if the system under investigation is complex, but at least each part of the system has an easier, more natural model.

For example, when using hybrid systems, there neither is a need to use unnatural discretizations for continuous phenomena, because full continuous dynamics is allowed in hybrid systems. Nor is there a need to represent the system dynamics with the interesting but  complicated discontinuous Carath\'eodory \cite{Walter:ODE} or Filippov solutions \cite{AubinCellina84} to understand jumps in continuous processes, because discrete jumps are allowed directly as separate elements in hybrid systems.
The overall system behavior can still be as complicated, and, in fact, a study of some behaviors in terms of Carath\'eodory and Filippov solutions can be insightful. But the individual parts of the hybrid system have a simpler behavior that can be understood and analyzed by easier means.
In our model for hybrid systems, the dynamical affects have separate atomic programs that can be combined in flexible ways by program combinators (\rref{ch:dL}).

We exploit the multi-dynamical systems philosophy in our analysis approach, because the logics we explain in the subsequent sections of this article have a fully compositional semantics and fully compositional proof principles.
Thus, since our proof approach works by reasoning by parts, all the individual reasoning steps get easier, because hybrid systems combine many but simpler dynamical aspects instead of requiring a single inscrutable effect.
Consequently, we can reason separately about the individual parts of the hybrid systems.

\subsection{Distributed Hybrid Systems}

Distributed hybrid systems \cite{DBLP:conf/hybrid/DeshpandeGV96,DBLP:conf/hybrid/Rounds04,DBLP:conf/hybrid/KratzSPL06,DBLP:conf/hybrid/MeseguerS06,DBLP:journals/taas/GilbertLMN09,DBLP:conf/csl/Platzer10,DBLP:journals/lmcs/Platzer12b,DBLP:conf/fmoods/JohnsonM12} are dynamical systems that combine distributed systems \cite{Lynch,DBLP:conf/concur/AttieL01,AptdeBoerOlderog10} with hybrid systems (and their discrete and continuous dynamics).
Again, they are not just combined side by side, but can interact.

Distributed systems are systems consisting of multiple computers that interact through a communication network.
They feature both (discrete) local computation and remote communication.
Distributed hybrid systems, instead, consist of multiple hybrid systems that interact through a communication network, but may also interact through physical interactions.
Distributed hybrid systems include multi-agent hybrid systems and hybrid systems where the number of agents involved in the system evolves over time.
A typical example is a distributed car control scenario (see \rref{fig:distributed-car-control-new}),
\begin{wrapfigure}{r}{0.5\textwidth}
  \includegraphics[width=0.5\textwidth]{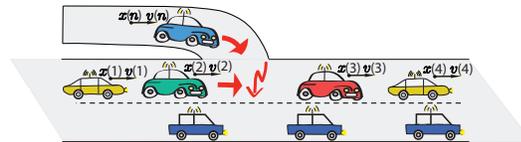}%
  \caption{Distributed car control}
  \label{fig:distributed-car-control-new}
\end{wrapfigure}
in which multiple cars drive on a road and use sensing and/or communication to inform each other of their respective positions and velocities and control intentions in order to coordinate their actions to prevent collisions.
Distributed hybrid systems become crucial, e.g., when we do not know how many agents are going to be involved exactly, or when there are more agents than hybrid systems analysis could handle.
Consequently, unlike in classical hybrid systems, the state space of distributed hybrid systems is usually an infinite-dimensional vector space $\mathcal{X}$.
Because of their importance in practical applications, many modeling approaches have been pursued for distributed hybrid systems \cite{DBLP:conf/hybrid/DeshpandeGV96,DBLP:conf/hybrid/Rounds04,DBLP:conf/hybrid/KratzSPL06,DBLP:conf/hybrid/MeseguerS06}, including SHIFT \cite{DBLP:conf/hybrid/DeshpandeGV96}, R-Charon \cite{DBLP:conf/hybrid/KratzSPL06}, and the process algebra $\chi$ \cite{DBLP:journals/jlp/BeekMRRS06}.

In distributed hybrid systems, we follow our multi-dynamical systems philosophy and model each part of the system by the most appropriate dynamical aspect, whether discrete or continuous or structural (e.g., changes in the communication topology or changes in the physical configuration) or dimensional (e.g., appearance or disappearance of cars on the street).
In our model for distributed hybrid systems, the dynamical affects have separate atomic programs that can be combined in flexible ways by program combinators (\rref{ch:QdL}).
We exploit the multi-dynamical systems philosophy in our analysis, logic, and proofs, so that we can reason separately about the individual parts of a distributed hybrid system.

\subsection{Stochastic Hybrid Systems}

Stochastic hybrid systems \cite{DBLP:journals/roystats/Davis84,DBLP:journals/jcopt/GhoshAM97,DBLP:conf/hybrid/HuLS00,BujorianuL06,Cassandras2006,DBLP:conf/hybrid/MeseguerS06,DBLP:journals/tsmc/KoutsoukosR08,DBLP:journals/jlp/FranzleTE10,DBLP:conf/cade/Platzer11} are dynamical systems that combine the dynamics of stochastic processes \cite{KaratzasShreve,Oksendal07,KloedenPlaten2010} with hybrid systems.
Again, they are not just combined side by side, but can interact.

\begin{wrapfigure}{r}{0.5\columnwidth}
  \vspace{-\baselineskip}
  \includegraphics[width=0.5\columnwidth]{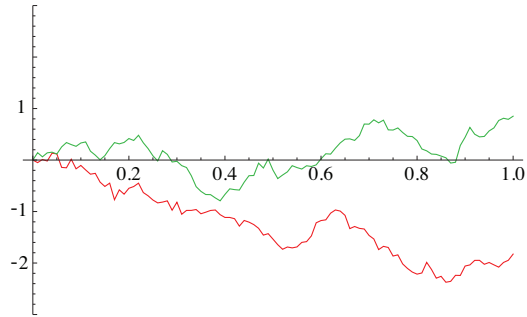}%
  \caption{Two samples from a switched continuous stochastic process}
  \label{fig:brownian-motion3}
\end{wrapfigure}
There is more than one way in which stochasticity has been added into hybrid systems models; see, e.g., \rref{fig:brownian-motion3}.
Stochasticity might be restricted to the discrete dynamics, as in piecewise deterministic Markov decision processes \cite{DBLP:journals/roystats/Davis84}, restricted to the continuous and switching behavior as in switching diffusion processes \cite{DBLP:journals/jcopt/GhoshAM97}, or allowed in many parts as in so-called General Stochastic Hybrid Systems; see \cite{BujorianuL06,Cassandras2006} for an overview.
Stochastic hybrid systems models have the desire in common to add stochastic information about uncertainties into the system dynamics.
Hybrid systems and distributed hybrid systems are limited to nondeterministic views and can only encode simple probabilistic effects in their hybrid dynamics.
For stochastic hybrid systems, the state space is more complicated, because it has to be rich enough to define stochastic process transitions. But the time domain is still such that some transitions are in continuous time, others are discrete steps in time.

In stochastic hybrid systems, we follow our multi-dynamical systems philosophy and model each part of the system by the most appropriate dynamical aspect, whether discrete or continuous, whether stochastic or not.
In particular, there is no need to represent the system dynamics with interesting but complicated concepts like semimartingales \cite{KaratzasShreve,Protter}.
The overall system behavior can still be as complicated, and a study of some behaviors in terms of semimartingales can be insightful. But the individual parts of the stochastic hybrid system have a simpler behavior that can be understood and analyzed by easier means.
In our model for stochastic hybrid systems, the dynamical effects have separate atomic programs that can be combined by program combinators (\rref{ch:SdL}).
We exploit the multi-dynamical systems philosophy in our analysis, logic, and proofs, so that we can reason separately about the individual parts of a stochastic hybrid system.

\section{Differential Dynamic Logic for Hybrid Systems} \label{ch:dL}

In this section, we study \emph{differential dynamic logic} \dL \cite{DBLP:conf/tableaux/Platzer07,DBLP:journals/jar/Platzer08,DBLP:conf/lics/Platzer12b}, the \emph{logic of hybrid systems}, i.e., systems with interacting discrete and continuous dynamics.

Hybrid systems \cite{DBLP:journals/tcs/AlurCHHHNOSY95,DBLP:conf/hybrid/Branicky95,DBLP:conf/lics/Henzinger96,DBLP:journals/tac/BranickyBM98,DavorenNerode_2000,DBLP:journals/ieee/AlurHLP00,DBLP:journals/jar/Platzer08,DBLP:journals/logcom/Platzer10,Platzer08,Platzer10,DBLP:conf/lics/Platzer12b} are a fusion of continuous dynamical systems and discrete dynamical systems.
They freely combine dynamical features from both worlds and play an important role, e.g., in modeling systems that use computers to control physical systems.
Hybrid systems feature (iterated) difference equations for discrete dynamics and differential equations for continuous dynamics. They, further, combine conditional switching, nondeterminism, and repetition.

As a specification and verification language for hybrid systems, we have introduced \dfn[logic!differential~dynamic]{differential dynamic logic} \dfn[\dL]{{\dL}} \cite{DBLP:conf/tableaux/Platzer07,DBLP:journals/jar/Platzer08,Platzer08,Platzer10,DBLP:conf/lics/Platzer12b}.
The logic \dL is based on first-order modal logic \cite{DBLP:journals/jsyml/Carnap46,HughesCresswell96} and dynamic logic \cite{DBLP:conf/focs/Pratt76,Harel_et_al_2000} and internalizes operational models of hybrid systems as first-class citizens, so that correctness statements about the transition behavior of hybrid systems can be expressed as logical formulas.
In addition to all operators of first-order real arithmetic, the logic \dL provides parametrized modal operators~$\dbox{\alpha}{}$ and~$\ddiamond{\alpha}{}$ that refer to the states reachable by hybrid system~$\alpha$ and can be placed in front of any formula.
The \dL formula~\m{\dbox{\alpha}{\phi}} expresses
that all states reachable by hybrid system~$\alpha$ satisfy formula~$\phi$. Likewise,~\m{\ddiamond{\alpha}{\phi}} expresses
that there is at least one state reachable by~$\alpha$ for
which~$\phi$ holds.
These modalities can be used to express necessary or possible properties of the transition behavior of~$\alpha$.

We first explain the system model of hybrid programs that \dL provides for modeling hybrid systems (\rref{sec:HP}).
Then we explain the logical formulas that \dL provides for specification and verification purposes (\rref{sec:dL-formula}).
For reference, we provide a short exposition of hybrid automata (\rref{sec:hybrid-automata}) and relate them to hybrid programs.
Then, we explain reasoning principles, axioms, and proof rules for verifying \dL formulas (\rref{sec:dL-calculus}).
We subsequently show soundness and relative completeness theorems (\rref{sec:dL-complete}) and investigate stronger proof rules for differential equations (\rref{sec:diffind}--\ref{sec:diffaux}).
Finally, we briefly discuss an implementation in the theorem prover \KeYmaera and applications (\rref{sec:KeYmaera}).

\subsection{Regular Hybrid Programs} \label{sec:HP}
Differential dynamic logic uses (regular) \emph{hybrid programs} (HP) \cite{DBLP:conf/tableaux/Platzer07,DBLP:journals/jar/Platzer08,Platzer10,DBLP:conf/lics/Platzer12b} as hybrid system models.
HPs are a program notation for hybrid systems and combine differential equations with conventional program constructs and discrete assignments.
HPs form a Kleene algebra with tests \cite{DBLP:journals/toplas/Kozen97}.
Atomic HPs are instantaneous discrete jump \emph{assignments} \m{\pupdate{\pumod{x}{\theta}}},
\emph{tests} $\ptest{\ivr}$ of a first-order formula\footnote{
The test $\ptest{\ivr}$ means ``if $\ivr$ then \textit{skip} else \textit{abort}''. Our results generalize to rich-test \dL, where \m{\ptest{\ivr}} is a HP for any \dL formula $\ivr$ (\rref{sec:dL-formula}).} $\ivr$ of real arithmetic,
and \emph{differential equation (systems)} \m{\hevolvein{\D{x}=\genDE{x}}{\ivr}} for a continuous evolution restricted to the domain of evolution $\ivr$, where $\D{x}$ denotes the time-derivative of $x$.
Compound HPs are generated from atomic HPs by nondeterministic choice ($\cup$), sequential composition ($;$), and Kleene's nondeterministic repetition ($\prepeat{}$).
We use polynomials with rational coefficients as terms here, but divisions can be allowed as well when guarding against singularities of divisions by zero; see \cite{DBLP:journals/jar/Platzer08,Platzer10} for details.
\begin{definition}[Hybrid program]
HPs are defined by the following grammar ($\alpha,\beta$ are HPs, $x$ a variable, $\theta$ a term possibly containing $x$, and $\ivr$ a formula of first-order logic of real arithmetic):
\[
  \alpha,\beta ~\bebecomes~
  \pupdate{\pumod{x}{\theta}}
  \alternative
  \ptest{\ivr}
  \alternative
  \hevolvein{\D{x}=\genDE{x}}{\ivr}
  \alternative
  \alpha\cup\beta
  \alternative
  \alpha;\beta
  \alternative
  \prepeat{\alpha}
\]
\end{definition}
The first three cases are called atomic HPs, the last three compound.
The \dfn{test} action~\m{\ptest{\ivr}} is used to define conditions. Its effect is that of a \textit{no-op} if the formula~$\ivr$ is true in the current state; otherwise, like \textit{abort}, it allows no transitions.
That is, if the test succeeds because formula~$\ivr$ holds in the current state, then the state does not change, and the system execution continues normally.
If the test fails because formula~$\ivr$ does not hold in the current state, then the system execution cannot continue, is cut off, and not considered any further.

Nondeterministic choice~\m{\pchoice{\alpha}{\beta}}, sequential composition~\m{\alpha;\beta}, and non\-de\-ter\-min\-is\-tic repetition~\m{\prepeat{\alpha}} of programs are as in regular expressions but generalized to a semantics in hybrid systems.
\dfn[nondeterministic!choice]{Nondeterministic choice} \m{\pchoice{\alpha}{\beta}} expresses behavioral alternatives between the runs of~$\alpha$ and~$\beta$.
That is, the HP~\m{\pchoice{\alpha}{\beta}} can choose nondeterministically to follow the runs of HP~$\alpha$, or, instead, to follow the runs of HP~$\beta$.
The \dfn[composition!sequential]{sequential composition}~\m{\alpha;\beta} models that the HP~$\beta$ starts running after HP~$\alpha$ has finished ($\beta$ never starts if~$\alpha$ does not terminate).
In~\m{\alpha;\beta}, the runs of~$\alpha$ take effect first, until~$\alpha$ terminates (if it does), and then~$\beta$ continues.
Observe that, like repetitions, continuous evolutions within~$\alpha$ can take more or less time, which causes uncountable nondeterminism.
This nondeterminism occurs in hybrid systems, because they can operate in so many different ways, which is as such reflected in HPs.
\dfn[nondeterministic!repetition]{Nondeterministic repetition}~\m{\prepeat{\alpha}} is used to express that the HP~$\alpha$ repeats any number of times, including zero times.
When following~\m{\prepeat{\alpha}}, the runs of HP~$\alpha$ can be repeated over and over again, any nondeterministic number of times (\m{{\geq}0}).

These operations can define all classical WHILE programming constructs and all hybrid systems \cite{Platzer10}.
We, e.g., write \m{\hevolve{\D{x}=\genDE{x}}} for the unrestricted differential equation \m{\hevolvein{\D{x}=\genDE{x}}{\ltrue}}.
We allow differential equation systems and use vectorial notation.
Vectorial assignments are definable from scalar assignments and $;$ using auxiliary variables.\footnote{\newcommand{\old}[1]{\grave{#1}}A vectorial assignment \m{\pupdate{\pumod{x_1}{\theta_1}\syssep\dots\syssep\pumod{x_n}{\theta_n}}} is definable by \m{\pupdate{\pumod{\old{x}_1}{x_1}};\dots;\pupdate{\pumod{\old{x}_n}{x_n}};\pupdate{\pumod{x_1}{\old{\theta}_1}};\dots;\pupdate{\pumod{x_n}{\old{\theta}_n}}} where $\old{\theta}_i$ is $\theta_i$ with $x_j$ replaced by $\old{x}_j$ for all $j$.}
Other program constructs can be defined easily \cite{Platzer10}.
For example,  nondeterministic assignments of any real value to $x$, if-then-else statements, and while loops can be defined as follows:
\begin{equation}
\begin{aligned}
  \prandom{x}
  &\equiv \pchoice{\pevolve{\D{x}=1}}{\pevolve{\D{x}=-1}}\\
  \text{if}~ (\ivr) ~\text{then}~ \alpha ~\text{else}~\beta~\text{fi}
  &\equiv \pchoice{(\ptest{\ivr};\alpha)}{(\ptest{\lnot\ivr};\beta)}\\
  \text{if}~ (\ivr) ~\text{then}~ \alpha 
  &\equiv \pchoice{(\ptest{\ivr};\alpha)}{\ptest{\lnot\ivr}}\\
  \pwhile{\ivr}{\alpha} &\equiv \prepeat{(\ptest{\ivr}; \alpha)}; \ptest{\lnot\ivr}
\end{aligned}
\label{eq:HP-defined}
\end{equation}

HPs have a compositional semantics.
We define their semantics by a reachability relation and refer to previous work for their trace semantics \cite{DBLP:conf/lfcs/Platzer07,Platzer10}.
A \emph{state} $\iget[state]{\I}$ is a mapping from variables to $\reals$.
The set of states is denoted $\linterpretations{\Sigma}{V}$.
We denote the value of term $\theta$ in $\iget[state]{\I}$ by \m{\ivaluation{\I}{\theta}}.
The state \m{\iget[state]{\imodif[state]{\I}{x}{d}}} agrees with~$\iget[state]{\I}$ except for the interpretation of variable~$x$, which is changed to~\m{d\in\reals}.
We write $\imodels{\I}{\chi}$ iff first-order formula $\chi$ is true in state $\iportray{\I}$ (defined in \rref{sec:dL-formula}).
\begin{definition}[Transition semantics of HPs] \label{def:HP-transition}
Each HP $\alpha$ is interpreted semantically as a binary reachability relation \m{\iaccess[\alpha]{\I}\subseteq\linterpretations{\Sigma}{V}\times\linterpretations{\Sigma}{V}} over states, defined inductively by
\begin{itemize}
\item \m{\iaccess[\pupdate{\pumod{x}{\theta}}]{\I} = \{(\iget[state]{\I},\iget[state]{\It}) \with \iget[state]{\It}=\iget[state]{\I}~\text{except that}~\ignore{\iget[state]{\It}(x)=}\ivaluation{\It}{x}=\ivaluation{\I}{\theta}\}}
\item \m{\iaccess[\ptest{\ivr}]{\I} = \{(\iget[state]{\I},\iget[state]{\I}) \with \imodels{\I}{\ivr}\}}
\item
\newcommand{\Ift}{\DALint[state=\varphi(t)]}%
\newcommand{\Ifz}{\DALint[state=\varphi(\zeta)]}%
  \m{\iaccess[\hevolvein{\D{x}=\genDE{x}}{\ivr}]{\I} = \{({\iget[flow]{\If}(0)},{\iget[flow]{\If}(r)}) ~\with~ 
        \imodels{\Ift}{\hevolve{\D{x}=\genDE{x}}}} 
        and \m{\imodels{\Ift}{\ivr}} for all \m{0\leq t\leq r} for a solution \m{\iget[flow]{\If}:[0,r]\to\linterpretations{\Sigma}{V}} of any duration \m{r\}};
        i.e., with \m{\iget[state]{\Ift}(\D{x}) \mdefeq \D[\zeta]{\iget[state]{\Ifz}(x)}(t)}, $\iget[flow]{\If}$ solves the differential equation and satisfies $\ivr$ at all times \cite{DBLP:journals/jar/Platzer08}
\item \m{\iaccess[\pchoice{\alpha}{\beta}]{\I} = \iaccess[\alpha]{\I} \cup \iaccess[\beta]{\I}}
\newcommand{\Iz}{\dLint[state=\mu]}
\item \m{\iaccess[\alpha;\beta]{\I} = \iaccess[\beta]{\I} \compose\iaccess[\alpha]{\I}}
\(= \{(\iget[state]{\I},\iget[state]{\It}) : (\iget[state]{\I},\iget[state]{\Iz}) \in \iaccess[\alpha]{\I},  (\iget[state]{\Iz},\iget[state]{\It}) \in \iaccess[\beta]{\I}\}\)
\item \m{\iaccess[\prepeat{\alpha}]{\I} = \displaystyle\cupfold_{n\in\naturals}\iaccess[{\prepeat[n]{\alpha}}]{\I}} 
with \m{\prepeat[n+1]{\alpha} \mequiv \prepeat[n]{\alpha};\alpha} and \m{\prepeat[0]{\alpha}\mequiv\,\ptest{\ltrue}}.
\end{itemize}
\end{definition}
We refer to our book \cite{Platzer10} for a comprehensive background and for an elaboration how the case \(r=0\) (in which the only condition is \m{\iget[flow]{\If}(0)\models\ivr}) is captured by the above definition.
Time itself is not special but implicit. If a clock variable $t$ is needed in a HP, it can be axiomatized by \m{\hevolve{\D{t}=1}}.

\begin{example}[Single car] \label{ex:HP}
As an example, consider a simple car control scenario.
We denote the position of a car by $x$, its velocity by $v$, and its acceleration by $a$.
From Newton's laws of mechanics, we obtain a simple kinematic model for the longitudinal motion of the car on a straight road, which  can be described by the differential equation \m{\D{x}=v\syssep\D{v}=a}.
That is, the time-derivative of position is velocity (\m{\D{x}=v}) and, simultaneously, the derivative of velocity is acceleration (\m{\D{v}=a}).
We restrict the car to never drive backwards by specifying the evolution domain constraint $v\geq0$ and obtain the continuous dynamical system \m{\hevolvein{\D{x}=v\syssep\D{v}=a}{v\geq0}}.
In addition, suppose the car controller can decide to accelerate (represented by \m{\pupdate{\pumod{a}{A}}}) or brake (\m{\pupdate{\pumod{a}{-b}}}), where $A\geq0$ is a symbolic parameter for the maximum acceleration and $b>0$ a symbolic parameter describing the brakes.
The HP \m{\pchoice{\pupdate{\pumod{a}{A}}}{\pupdate{\pumod{a}{-b}}}} describes a controller that can choose nondeterministically to accelerate or brake.
Accelerating will only sometimes be a safe control decision, so the discrete controller in the following HP requires a test $\ptest{\ivr}$ to be passed in the acceleration choice:
\begin{equation}
  \textit{car}_s \mequiv
  \prepeat{\big((\pchoice{(\ptest{\ivr};\pupdate{\pumod{a}{A}})}{\pupdate{\pumod{a}{-b}}});~
  \hevolvein{\D{x}=v\syssep\D{v}=a}{v\geq0}\big)}
  \label{eq:ex-HP}
\end{equation}
This HP, which we abbreviate by $\textit{car}_s$, first allows a nondeterministic choice of acceleration (if the test $\ivr$ succeeds) or braking, and then follows the differential equation for an arbitrary period of time (that does not cause $v$ to enter $v<0$).
The HP repeats nondeterministically as indicated by the $\prepeat{}$ repetition operator.
Note that the nondeterministic choice ($\cup$) in \rref{eq:ex-HP} can nondeterministically select to proceed with \m{\ptest{\ivr};\pupdate{\pumod{a}{A}}} or with \m{\pupdate{\pumod{a}{-b}}}.
Yet the first choice can only continue if, indeed, formula $\ivr$ is true about the current state (then both choices are possible).
Otherwise only the braking choice will run successfully.
With this principle, HPs elegantly separate the fundamental principles of (nondeterministic) choice from conditional execution (tests).

Which formula is suitable for $\ivr$ depends on the control objective or property we care about.
A simple guess for $\ivr$ like $v\leq20$ has the effect that the controller can only choose to accelerate at lower speeds.
This condition alone is insufficient for most control purposes.
We will refine $\ivr$ in \rref{ex:dL}.

HPs are a program notation for hybrid systems.
Hybrid automata \cite{DBLP:journals/tcs/AlurCHHHNOSY95,DBLP:conf/lics/Henzinger96} are an automaton notation for hybrid systems.
Hybrid automata correspond to finite automata with guards and reset relations annotated at edges and with differential equations and evolution domain constraints annotated at nodes (defined in detail in \rref{sec:hybrid-automata}).
The car system in \rref{eq:ex-HP} can be represented by the hybrid automaton in \rref{fig:hybridAutomaton}.
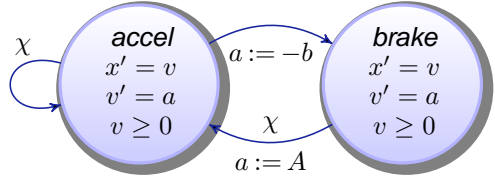
\begin{figure}[tbh]
  \centering
  \begin{tikzpicture}
    \renewcommand{\syssep}{\\}%
    \tikzstyle{every initial by arrow}=[trans]
    \tikzstyle{my loop}=[->,to path={ .. controls +(165:2) and +(195:2) .. (\tikztotarget) \tikztonodes}]
    \node[state,initial text=] (accel) at (0,0)
      {\drawstate[0.9cm]{$\textsl{accel}$}
        {\D{x}=v\syssep\D{v}=a}%
        {v\geq0}
      };
    \node[state] (brake) at (3.5,0)
      {\drawstate[0.9cm]{$\textsl{brake}$}
        {\D{x}=v\syssep\D{v}=a}%
        {v\geq0}
      };
    \draw[trans] (accel) to[bend left=30]
        node[above] {}
        node[below=3pt]{$\pumod{a}{-b}$}
        (brake);
    \draw[trans] (brake) to[bend left=30]
        node[above] {$\ivr$}
        node[below]{$\pumod{a}{A}$}
        (accel);
      \draw[trans] (accel) to [my loop] node[near end,below] {} node[near start,above] {$\ivr$}  (accel);
  \end{tikzpicture}
  \caption{Hybrid automaton for a simple car}
  \label{fig:hybridAutomaton}
\end{figure}
All hybrid automata can be represented as HPs \cite{Platzer10} just like finite automata can be implemented in classical WHILE programs (\rref{sec:hybrid-automata}).

An important phenomenon is that the evolution domain constraint in \rref{eq:ex-HP} and \rref{fig:hybridAutomaton} is too lax for many purposes.
It does not specify when the continuous evolution stops.
Many systems are unsafe if the continuous evolution evolves forever without giving the controller a chance to react.
To model \emph{event-triggered systems}, we would augment the evolution domain constraint with a formula that prevents the continuous evolution from missing important events. 
For example, we could add the evolution domain constraint $v\leq22$ into the differential equation in \rref{eq:ex-HP} to ensure that the continuous evolutions stop and the discrete controllers will react before the velocity increases beyond $22$:
\begin{equation*}
  \prepeat{\big((\pchoice{(\ptest{\ivr};\pupdate{\pumod{a}{A}})}{\pupdate{\pumod{a}{-b}}});~
  \hevolvein{\D{x}=v\syssep\D{v}=a}{v\geq0\land v\leq22}\big)}
  \label{eq:ex-HP-event}
\end{equation*}
In \emph{time-triggered systems}, we would, instead, replace the continuous evolution in \rref{eq:ex-HP} by
\m{
  \pupdate{\pumod{t}{0}};~\hevolvein{\D{x}=v\syssep\D{v}=a\syssep\D{t}=1}{v\geq0\land t\leq\varepsilon}
}
with a clock $t$ with slope \m{\D{t}=1} that is reset by a discrete assignment (\m{\pupdate{\pumod{t}{0}}}) before the continuous evolution and whose value is bounded ($t\leq\varepsilon$ in the evolution domain constraint) by a symbolic parameter for the maximum reaction time $\varepsilon>0$.
Then, the continuous evolution stops at the latest after $\varepsilon$ time units so that the discrete controllers have a chance to react to situation changes.
Without such a bound on the reaction time, systems are rarely safe.
The time-triggered version of \rref{eq:ex-HP} is the following HP, which we abbreviate by $\textit{car}_\varepsilon$:
\begin{equation}
\textit{car}_\varepsilon \mequiv
  \big((\pchoice{(\ptest{\ivr};\pupdate{\pumod{a}{A}})}{\pupdate{\pumod{a}{-b}}});~
  \pupdate{\pumod{t}{0}};~\hevolvein{\D{x}=v\syssep\D{v}=a\syssep\D{t}=1}{v\geq0\land t\leq\varepsilon}
  \prepeat{\big)}
  \label{eq:ex-HPeps}
\end{equation}
Time-triggered models are closer to the implementation, because event-triggered models require permanent sensing.
Event-triggered models are usually easier to verify but time-triggered models are easier to implement and reveal important timing effects.
\end{example}

Observe that, at this point, we could try to investigate the reachability question whether from a given state $\iget[state]{\I}$ we can reach a state $\iget[state]{\It}$ along car model $\textit{car}_s$ from \rref{eq:ex-HP}, i.e., \m{\iaccessible[\textit{car}_s]{\I}{\It}}, at which \m{\iget[state]{\It}(x)} is at a certain goal position.
We could also study the safety question whether for all states $\iget[state]{\It}$ with \m{\iaccessible[\textit{car}_s]{\I}{\It}} it is the case that \m{\iget[state]{\It}(v)<10} is true.
Instead of studying each of those questions with one ad-hoc notion for each question, we follow a more principled approach and define a logic in which those and many more general properties of hybrid systems can be expressed and verified.
We first discuss another instructive example, however.

\begin{example}[Bouncing ball] \label{ex:bouncing-ball}
Another intuitive example of a hybrid system is the bouncing ball \cite{EgerstedtJSL99}; see \rref{fig:bouncingball-simple}.
\begin{figure}[bth]
  \centering
  \begin{minipage}[T]{3cm}
    \begin{tabbing}
      $\bigl(\,$\= $\hevolvein{\D{h}=v\syssep\D{v}=-g}{h\geq 0}$;\\
      \>$\text{if }$\=$(h=0)$ $\text{then}$\\
      \> \> $v:= -cv$ \\
      \> $\text{fi} \bigr)^*$
    \end{tabbing}
  \end{minipage}%
  \qquad
  \begin{minipage}[T]{1.9cm}
  \includegraphics[width=1.9cm]{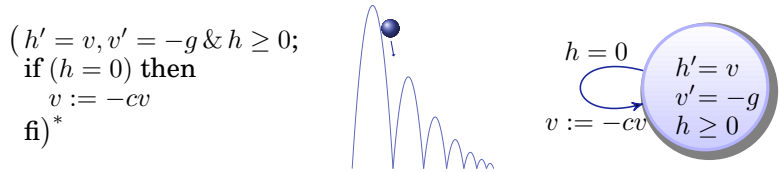}
  \end{minipage}%
  \qquad
  \begin{minipage}[T]{3.1cm}
    \begin{tikzpicture}
    \useasboundingbox (-2.1,-1) rectangle (1,1);
    \tikzstyle{every initial by arrow}=[trans]
    \tikzstyle{my loop}=[->,to path={ .. controls +(165:2) and +(195:2) .. (\tikztotarget) \tikztonodes}]
    \draw(0,0) node[state,text width=0.8cm,inner sep=0pt] (q) {\begin{tabbing}$\D{h}$\=$=v$\\$\D{v}$\>$=-g$\\$h$\>$\geq 0$\end{tabbing}};
      \draw[trans] (q) to [my loop] node[near end,below] {\m{v:=-cv}} node[near start,above] {$h=0$}  (q);
    \end{tikzpicture}
  \end{minipage}%
  \caption{Hybrid program, plot, and hybrid automaton of a bouncing ball}
  \label{fig:bouncingball-simple}
\end{figure}
The bouncing ball is let loose in the air and is falling towards the ground.
When it hits the ground, the ball bounces back up and climbs until gravity wins and it starts to fall again.
The bouncing ball follows the continuous dynamics of physical movement by gravity.
It can be understood naturally as a hybrid system, because its continuous movement switches from falling to climbing by reversing its velocity whenever the ball hits the ground and bounces back.
Let us denote the height of the ball by~$h$ and the current velocity of the ball by~$v$.
The bouncing ball is affected by gravity of force~\m{g>0}, so its height follows the differential equation \m{\D[2]{h}=-g}, i.e., the second time derivative of height equals the negative gravity force.
The ball bounces back from the ground (which is at height \m{h=0}) after an elastic deformation.
At every bounce, the ball loses energy according to a damping factor~\m{0\leq c < 1}.
Figure~\ref{fig:bouncingball-simple} depicts a HP, an illustration of the system dynamics, and a representation as a hybrid automaton.

The first line of the HP describes the continuous dynamics along the differential equation \m{\hevolve{\D{h}=v\syssep\D{v}=-g}} (which is equivalent to \m{\D[2]{h}=-g}) restricted to (written $\&$) the evolution domain \m{h\geq0} above the floor.
In particular, the bouncing ball never falls through the floor.
After the sequential composition ($;$), an if-then statement resets velocity~$v$ to \m{-cv} by assignment \m{\pupdate{\pumod{v}{-cv}}} if \m{h=0} holds at the current state.
This assignment will change the direction from falling (the velocity~$v$ was negative before) to climbing (the velocity \m{-cv} is nonnegative again) after dampening the velocity~$v$ by~$c$.
Recall \rref{eq:HP-defined} for how if-then is defined.
Finally, the sequence of continuous and discrete statements can be repeated arbitrarily often, as indicated by the regular-expression-style repetition operator ($\prepeat{}$) at the end.

The hybrid automaton on the right of \rref{fig:bouncingball-simple} represents the same system as a HP.
It has one node: falling along the differential equation system \m{\D{h}=v,\D{v}=-g} restricted to  evolution domain \m{h\geq0}, above the floor.
The hybrid automaton has one jump edge: on the ground (\m{h=0}), it can reset the velocity~$v$ to \m{-cv} and continue in the same node.

Note one strange phenomenon in the bouncing ball.
It seems like the bouncing ball will bounce over and over again, switching its direction in shorter and shorter periods of time as indicated in \rref{fig:bouncingball-simple} (unless \m{c=0}, which means that the ball will just lie flat right away).
Even worse, the ball will end up switching directions infinitely often in a short amount of time.
This controversial phenomenon is called \emph{Zeno behavior}.

  In reality, the ball bounces a couple of times and can then come to a standstill when its remaining kinetic energy is insufficient.
  To model this phenomenon without the need to have a precise physical model for all physical forces and frictions, we can allow for the damping factor~$c$ to change at each bounce by adding \m{\prandom{c};~ \ptest{(0\leq c<1)}} before \m{\pupdate{\pumod{v}{-cv}}}.
  HP \m{\prandom{c}} represents an uncountably infinite nondeterministic choice for~$c$ as a nondeterministic assignment.
  Recall \rref{eq:HP-defined} for its definition.
  The subsequent test \m{\ptest{(0\leq c<1)}} restricts the arbitrary choices for~$c$ to choices in the half-open interval \m{\interval{[0,1)}} and discards all other choices.
  Now the bouncing ball can stop.
  This particular model still allows a Zeno execution when each choice of $c$ is $c>0$, which can be removed by imposing additional restrictions on the permitted choices of $c$.
\end{example}

To avoid technicalities, we consider only polynomial differential equations here and refer to previous work \cite{DBLP:journals/logcom/Platzer10,Platzer10} for how to handle hybrid systems with more general differential equations, including differential equations with fractions, differential inequalities \cite{Walter:ODE}, differential-algebraic equations \cite{KunkelM06}, and differential-algebraic constraints with disturbances.
Those more general hybrid systems can be modeled by differential-algebraic programs, for which there is an extension of \dL called \emph{differential-algebraic dynamic logic} \DAL \cite{DBLP:journals/logcom/Platzer10,Platzer10}.
There also is an extension of \dL to temporal properties that gives hybrid programs a trace semantics.
This extension is called \emph{differential temporal dynamic logic} \dTL \cite{Platzer10,DBLP:conf/lfcs/Platzer07}. We refer to \cite{Platzer10} for details.

\subsection{\dLbf Formulas} \label{sec:dL-formula}

Differential dynamic logic \dL \cite{DBLP:conf/tableaux/Platzer07,DBLP:journals/jar/Platzer08,Platzer10,DBLP:conf/lics/Platzer12b} is a dynamic logic \cite{DBLP:conf/focs/Pratt76} for hybrid systems. It combines first-order real arithmetic \cite{tarski_decisionalgebra51} with first-order modal logic \cite{DBLP:journals/jsyml/Carnap46,HughesCresswell96} and dynamic logic \cite{DBLP:conf/focs/Pratt76} generalized to hybrid systems.
(Nonlinear) real arithmetic is necessary for describing concepts like safe regions of the state space and real-valued quantifiers are for quantifying over the possible values of system parameters.
The modal operators~$\dbox{\alpha}{}$ and~$\ddiamond{\alpha}{}$ refer to all (modal operator $\dbox{\alpha}{}$) or some (modal operator $\ddiamond{\alpha}{}$) state reachable by following HP~$\alpha$.

\begin{definition}[\dL formula]
The \emph{formulas of differential dynamic logic} ({\dL}) are defined by the grammar
(where $\phi,\psi$ are \dL formulas, $\theta_1,\theta_2$ terms, $x$ a variable, $\alpha$ a HP):
  \[
  \phi,\psi ~\bebecomes~
  \theta_1=\theta_2 \alternative
  \theta_1\geq\theta_2 \alternative
  \lnot \phi \alternative
  \phi \land \psi \alternative
  \lforall{x}{\phi} \alternative 
  \dbox{\alpha}{\phi}
  \]
\end{definition}
The operator $\ddiamond{\alpha}{}$ dual to $\dbox{\alpha}{}$ is defined by \m{\ddiamond{\alpha}{\phi} \mequiv \lnot\dbox{\alpha}{\lnot\phi}}.
Operators $>,\leq,<,\lor,\limply,\lbisubjunct,\exists{x}{}$ can be defined as usual, e.g., \m{\lexists{x}{\phi} \mequiv \lnot\lforall{x}{\lnot\phi}}.
We use the notational convention that quantifiers and modal operators bind strong, i.e., their scope only extends to the formula immediately after. Thus, \(\dbox{\alpha}{\phi}\land\psi \mequiv (\dbox{\alpha}{\phi})\land\psi\) and \(\lforall{x}{\phi}\land\psi \mequiv (\lforall{x}{\phi})\land\psi\).
In our notation, we also let $\lnot$ bind stronger than $\land$, which binds stronger than $\lor$, which binds stronger than $\limply,\lbisubjunct$.
Thus, \(\lnot A\land B\lor C\limply D\lor E\land F \mequiv (((\lnot A)\land B)\lor C) \limply (D\lor (E\land F))\).

\begin{definition}[\dL semantics]
\newcommand{\Id}{\imodif[state]{\I}{x}{d}}%
The \dfn[satisfaction]{satisfaction relation} \m{\imodels{\I}{\phi}} for \dL formula $\phi$ in state $\iget[state]{\I}$ is defined inductively and as usual in first-order modal logic (of real arithmetic):
  \begin{itemize}
  \item \(\imodels{\I}{(\theta_1=\theta_2)}\)
    iff \(\ivaluation{\I}{\theta_1} = \ivaluation{\I}{\theta_2}\).
  \item \(\imodels{\I}{(\theta_1\geq\theta_2)}\)
    iff \(\ivaluation{\I}{\theta_1} \geq \ivaluation{\I}{\theta_2}\).
  \item \(\imodels{\I}{\lnot\phi}\) iff
    it is not the case that \(\imodels{\I}{\phi}\).
  \item \(\imodels{\I}{\phi \land \psi}\) iff
    \(\imodels{\I}{\phi}\) and \(\imodels{\I}{\psi}\).
  \item \(\imodels{\I}{\lforall{x}{\phi}}\)
    iff
    \(\imodels{\Id}{\phi}\)
    for all \m{d\in\reals}.
    \index{$\lforall{}{}$}
  \item \(\imodels{\I}{\lexists{x}{\phi}}\)
    iff
    \(\imodels{\Id}{\phi}\)
    for some \m{d\in\reals}.
    \index{$\lexists{}{}$}
  \item
    \(\imodels{\I}{\dbox{\alpha}{\phi}}\)
      iff
      \m{\imodels{\It}{\phi}} for all $\iget[state]{\It}$ with \m{\iaccessible[\alpha]{\I}{\It}}.
      \index{$\dbox{\alpha}{}$}
     \item \(\imodels{\I}{\ddiamond{\alpha}{\phi}}\)
       iff
       \(\imodels{\It}{\phi}\)
       for some $\iget[state]{\It}$ with
       \m{\iaccessible[\alpha]{\I}{\It}}.
       \index{$\ddiamond{\alpha}{}$}
     \end{itemize}
If \m{\imodels{\I}{\phi}}, then we say that $\phi$ is true at $\iportray{\I}$.
A \dL formula $\phi$ is \emph{valid}, written \m{\entails\phi}, iff \m{\imodels{\I}{\phi}} for all states $\iportray{\I}$.
\end{definition}

A \dL formula of the form \m{A\limply\dbox{\alpha}{B}} corresponds to a Hoare triple \cite{Floyd67,DBLP:journals/cacm/Hoare69}, but for hybrid systems.
It is valid if, for all states: if \dL formula $A$ holds (in the initial state), then \dL formula $B$ holds for all states reachable by following HP $\alpha$.
That is, \m{A\limply\dbox{\alpha}{B}} is valid if $B$ holds in all states reachable by HP $\alpha$ from initial states satisfying $A$.
\begin{example}[Single car] \label{ex:dL}
First, consider a very simple \dL formula:
\[
v\geq0 \land A\geq0 \limply \dbox{\pupdate{\pumod{a}{A}};\, \pevolve{\D{x}=v\syssep\D{v}=a}}{v\geq0}
\]
This \dL formula expresses that, when, initially, the velocity $v$ and maximal acceleration $A$ are nonnegative, then all states reachable by the HP in the $\dbox{\cdot}{}$ modality have a nonnegative velocity ($v\geq0$).
The HP first performs a discrete assignment \m{\pupdate{\pumod{a}{A}}} setting the acceleration $a$ to maximal acceleration $A$, and then, after the sequential composition ($;$), follows the differential equation \m{\pevolve{\D{x}=v\syssep\D{v}=a}} where the derivative of the position $x$ is the velocity (\m{\D{x}=v}) and the derivative of the velocity is the chosen acceleration $a$ (\m{\D{v}=a}).
This \dL formula is valid, because the velocity will never become negative when accelerating. It could, however, become negative when choosing a negative acceleration $a<0$, which is what this simple \dL formula does not allow.

Next, consider the following \dL formula, where $\textit{car}_s$ denotes the HP from \rref{eq:ex-HP} in \rref{ex:HP} that always allows braking but acceleration only when \(\chi\mequiv v\leq20\) holds:
\[
v\geq0 \land A\geq0\land b>0 \limply \dbox{\textit{car}_s}{v\geq0}
\]
This \dL formula is trivially valid, simply because the postcondition $v\geq0$ is implied by both the precondition and by the evolution domain constraint of \rref{eq:ex-HP}.
Because it is implied by the precondition, $v\geq0$ holds initially.
It is also implied by the evolution domain constraint and the system has no runs that leave the evolution domain constraint.
Note that this \dL formula would not be valid, however, if we removed the evolution domain constraint, because the controller would then be allowed nondeterministically to choose a negative acceleration ($\pupdate{\pumod{a}{-b}}$) and stay in the continuous evolution arbitrarily long.

A more interesting valid \dL formula is the following, where $\textit{car}_\varepsilon$ denotes the time-triggered HP from \rref{eq:ex-HPeps} with the choice
\(\chi\mequiv 2b(x-m)\geq v^2+\big(A+b\big)\big(A\varepsilon^2+2\varepsilon v\big)\)
as acceleration constraint:
\begin{equation}
v^2\leq2b(m-x) \land A\geq0\land b>0 \limply \dbox{\textit{car}_\varepsilon}{x\leq m}
\label{eq:car-single-essentials}
\end{equation}
This \dL formula expresses that if, initially, the velocity is not too large (\m{v^2\leq2b(m-x)}) compared to the braking $b$ and remaining distance $m-x$ to a stoplight $m$, and if \(A\geq0\,\land\, b>0\), then all states reachable by following HP $\textit{car}_\varepsilon$ satisfy the postcondition \m{x\leq m}, i.e., the car never passes the stoplight at position $m$.

The \dL formula \rref{eq:car-single-essentials} expresses a safety property, because it says that $\textit{car}_\varepsilon$ always remains safely before the stoplight.
But that would be the case for car controllers that never move.
Yet, we can also express and prove liveness in \dL by showing that the car is able to (note the $\ddiamond{\cdot}{}$ modality) pass every point~$p$ by an appropriate choice of the stoplight $m$:
\begin{equation}
  \varepsilon>0\land-b>0\land A\geq0 \limply
  \lforall{p}{\lexists{m}{\ddiamond{\textit{car}_\varepsilon}{\,x\geq p}}}
  \label{eq:car-single-live}
\end{equation}
Statements of this type give alternations of quantifiers and of modalities.
See \cite[Sections 2.9 and 7.3]{Platzer10} and \cite{DBLP:conf/itsc/LoosP11,DBLP:conf/iccps/MitschLP12} for details about models in which $m$ changes dynamically as permission to move changes over time and for details on the proof of \dL formula \rref{eq:car-single-live}.
\end{example}

\begin{example}[Single car, multiple modalities]
The fact that \dL formula \rref{eq:car-single-essentials} is valid shows that its assumption about the initial state is sufficient for safety.
The logic \dL can be used to state and prove constraints that are both necessary and sufficient for dynamical properties \cite[Chapter 7]{Platzer10}.
First, we consider what the proper assumptions about the initial state should be for car control.
The HP $\textit{car}_\varepsilon$ in \rref{eq:car-single-essentials}, which originates from \rref{eq:ex-HPeps}, is a specific car control model deciding under which circumstance to choose which control action.
It would not make sense to require a controller to remain safe even in circumstances where no safe control choice is left, e.g., when not even immediate braking would be safe anymore.
In \dL, we can easily state that we want $\textit{car}_\varepsilon$ to always remain safe at least from those states where braking remains safe:
\begin{equation}
v\geq0 \land A\geq0\land b>0  \land \dbox{\pevolve{\D{x}=v\syssep\D{v}=-b}}{x\leq m} \limply \dbox{\textit{car}_\varepsilon}{x\leq m}
\label{eq:car-single-essentials-relative}
\end{equation}
This valid (and provable) \dL formula says that if, initially, \(v\geq0 \land A\geq0\land b>0\) holds and if $x\leq m$ would always hold if the decision were to brake immediately (i.e. \m{\dbox{\pevolve{\D{x}=v\syssep\D{v}=-b}}{x\leq m}}), then the more permissive control model $\textit{car}_\varepsilon$ also always remains safe (i.e. \m{\dbox{\textit{car}_\varepsilon}{x\leq m}}), because it may accelerate instead, but, due to the choice of constraint $\ivr$, will start braking in due time before $m$.
This principle of using modal formulas about simpler dynamical systems to describe states can be very useful for systematically designing controllers.
Formula \rref{eq:car-single-essentials-relative} is very intuitive: if braking would be safe, then $\textit{car}_\varepsilon$ will be safe, because it will notice in time when acceleration would not be safe any longer.

The same principle can be used to design how to choose the constraint $\ivr$ in $\textit{car}_\varepsilon$.
Constraint $\ivr$ is a design choice that determines under which circumstance the car controller is allowed to choose to accelerate.
Since, according to \rref{eq:ex-HPeps}, the car controller may possibly not have a chance to react again for up to $\varepsilon$ time units, the car controller can only choose to accelerate, if it would be safe to accelerate for $\varepsilon$ time units, and, after that, the car still has enough distance to brake (from its then faster velocity) before reaching the stoplight $m$.
This behavior can be expressed by the following \dL formula with two nested modalities, the first one for the acceleration for up to time $\varepsilon$, the second one for subsequent braking:
\[
  \dbox{\pupdate{\pumod{t}{0}};~\hevolvein{\D{x}=v\syssep\D{v}=a\syssep\D{t}=1}{v\geq0\land t\leq\varepsilon}}
  {\dbox{\pevolve{\D{x}=v\syssep\D{v}=-b}}{x\leq m}}
\]
In rich-test \dL, we can directly use \dL formulas with such modalities as tests inside HPs.
These are useful and instructive for designing systems, but harder to implement, because they refer to future states reached when following a dynamics.
This is perfect for model-predictive control, but tests on static quantifier-free first-order arithmetic formulas without modalities are easier to implement by simple arithmetic checks for the concrete values of the current state.

The following equivalence shows that the assumption about the initial state in \rref{eq:car-single-essentials} is necessary and sufficient and also explains how \dL formulas \rref{eq:car-single-essentials} and \rref{eq:car-single-essentials-relative} are related:
\[
v\geq0 \land b>0 \limply
  \big(v^2\leq2b(m-x) \lbisubjunct \dbox{\pevolve{\D{x}=v\syssep\D{v}=-b}}{x\leq m}\big)
\]
This valid \dL formula expresses that, if the initial velocity is nonnegative and the braking constant is $b>0$, then the car will always remain before the stoplight when braking if and only if \m{v^2\leq2b(m-x)} holds for the initial state.
Note that this \dL formula relates a dynamic statement (\m{\dbox{\pevolve{\D{x}=v\syssep\D{v}=-b}}{x\leq m}}) about the behavior at all future states of a dynamical system to a static statement about its present state.
Because the \dL formula is an equivalence ($\lbisubjunct$), it characterizes all states from which the car can be controlled to remain safely before the stoplight.
We refer to \cite[Chapter 7]{Platzer10} for more details on equivalent characterizations of dynamical constraints and how they can be used for systematic design.
\end{example}

\begin{example}[Bouncing ball]
Consider the bouncing ball (with or without variable damping coefficient $c$) from \rref{ex:bouncing-ball} and denote this HP by \textit{ball}.
The intuitive property that, if gravity $g$ is positive, the bouncing ball never bounces higher than its initial height~$H$ is expressed by the following \dL formula:
  \begin{equation*}
    h=H \land h\geq 0 \land g > 0 %
    \limply
    \dbox{\textit{ball}}{(0\leq h \leq H)}
  \end{equation*}
This \dL formula may be intuitive, but it is not valid, because the postcondition \m{0\leq h\leq H} will be violated if the initial velocity is positive (climbing).
Assuming $v\leq0$ holds initially,
  \begin{equation*}
    v\leq 0 \land h=H \land h\geq 0 \land g > 0 %
    \limply
    \dbox{\textit{ball}}{(0\leq h \leq H)}
  \end{equation*}
will, however, still not lead to a valid formula, because the ball would then start falling, but, if it is initially falling very fast (e.g., when dribbling a basket ball), then it will jump back higher than its initial height, despite the damping coefficient $c$.
We refer to \cite{Platzer10} for more details and for properties of bouncing balls that are valid (and provable).
\end{example}

The logic \dL also supports more complicated nested properties and quantifiers like \m{\lexists{p}{\dbox{\alpha}{\ddiamond{\beta}{\phi}}}} which says that there is a choice of parameter~$p$ (expressed by \m{\exists{p}{}}) such that for all behaviors of HP~$\alpha$  (expressed by \m{\dbox{\alpha}{}}) there is a reaction of HP~$\beta$ (i.e., \m{\ddiamond{\beta}{}}) that ensures that~$\phi$ holds in the resulting state.
Likewise,~\m{\lexists{p}{(\dbox{\alpha}{\phi}\land\dbox{\beta}{\psi})}} says that there is a choice of parameter~$p$ that makes both~\m{\dbox{\alpha}{\phi}} and~\m{\dbox{\beta}{\psi}} true, simultaneously, i.e., that makes the conjunction \m{\dbox{\alpha}{\phi}\land\dbox{\beta}{\psi}} true, saying that formula $\phi$ holds for all states reachable by runs of HP $\alpha$ and, independently, $\psi$ holds after all runs of HP $\beta$.
This results in a very flexible logic for specifying and verifying even sophisticated properties of hybrid systems, including the ability to refer to multiple hybrid systems at once in a single formula.
This flexibility is useful for computing invariants and differential invariants \cite{DBLP:conf/cav/PlatzerC08,DBLP:journals/fmsd/PlatzerC09,Platzer10}.

\subsection{Hybrid Automata} \label{sec:hybrid-automata}
{%
\newcommand{\flowc}[1]{\textit{flow}_{#1}}%
\newcommand{\jump}[1]{\textit{jump}_{#1}}%
\newcommand{\invariant}[1]{\textit{dom}_{#1}}%
\newcommand{\initc}[1]{\textit{init}_{#1}}%
\newcommand{\guard}[1]{\textit{guard}_{#1}}%
\newcommand{\reset}[1]{\textit{reset}_{#1}}%
\newcommand{\poststate}[1]{{#1}^+}%
\newcommand{\trans}[1][]{\stackrel{#1}{\curvearrowright}}%
\newcommand{\reach}{\closureTransitive{\trans}}%

In this subsection, we discuss hybrid automata and show their close relation to hybrid programs.
Besides hybrid programs, hybrid automata \cite{DBLP:journals/tcs/AlurCHHHNOSY95,DBLP:conf/lics/Henzinger96} are another popular notation for hybrid systems and there is a close connection between both models, which we have seen in Examples~\ref{ex:HP} and~\ref{ex:bouncing-ball}.
There are numerous slightly different notions of hybrid automata or automata-based models for hybrid systems~\cite{Tavernini87,DBLP:conf/hybrid/AlurCHH92,DBLP:conf/hybrid/NicollinOSY92,DBLP:journals/tcs/AlurCHHHNOSY95,DBLP:conf/hybrid/Branicky95,DBLP:conf/lics/Henzinger96,DBLP:journals/tse/AlurHH96,DBLP:journals/tac/BranickyBM98,DBLP:conf/hybrid/LafferrierePY99,DavorenNerode_2000,DBLP:conf/cav/PiazzaAMPWM05,DammHO06}.
We review a hybrid automata model close to Henzinger's \cite{DBLP:conf/lics/Henzinger96}, yet with polynomial differential equations even if the theoretical model allows others and even if verification tools for hybrid automata focus on subclasses of hybrid automata, e.g., constant \cite{DBLP:journals/sttt/HenzingerHW97,DBLP:journals/sttt/Frehse08} or linear dynamics.

Hybrid automata are graph models with two kinds of transitions: discrete jumps in the state space caused by mode switches (edges in the graph), and continuous evolution along continuous flows within a mode (vertices in the graph).
Recall the automata in \rref{fig:hybridAutomaton} from \rref{ex:HP} and in \rref{fig:bouncingball-simple} from \rref{ex:bouncing-ball} as typical examples.
\begin{definition}[Hybrid automaton] \label{def:hybridautomaton}
  A \dfn[automaton!hybrid]{hybrid automaton}~$A$ consists of
  \index{hybrid!automaton|see{automaton, hybrid}}%
  \begin{itemize}
  \item a finite set \(X=\{x_1,\dots,x_n\}\) of real-valued state variables, where $n\in\naturals$ is the dimension of $A$;
  \item a finite directed multigraph (i.e., graph that may have multiple edges between the same pair of vertices) with vertices~$Q$ (as \emph{modes}\index{mode}) and edges~$E$ (as \emph{control switches}\index{control!switch});
  \item flow conditions~\m{\flowc{q}} in mode $q\in Q$, i.e., differential equations \m{\hevolve{\D{x_1}=\theta_1\syssep\dots\syssep\D{x_n}=\theta_n}} that determine the relationship of the continuous state variables~$x_i$ and their time derivative~$\D{x_i}$ during continuous evolution in mode~\m{q\in Q};\index{mode}
  \item evolution domain constraints~\m{\invariant{q}}, which are first-order real arithmetic formulas over $X$ that have to be true of the continuous state while in mode~\m{q\in Q};
  \item initial conditions~\m{\initc{q}}, which are first-order real arithmetic formulas over $X$ that are true of the continuous state if the system starts in mode~\m{q\in Q};
  \item guard\index{guard} conditions~\m{\guard{e}}, i.e., which are first-order real arithmetic formulas over $X$ that determine whether the automaton can follow edge $e\in E$ depending on whether \m{\guard{e}} is true of the current state value;
  \item resets\index{reset}~\m{\reset{e}} along edge $e\in E$, which are lists of equalities \m{\poststate{x}_1=\theta_1,\dots,\poststate{x}_n=\theta_n} where $\theta_i$ is a term over $X$ that determines $\poststate{x}_i$, which denotes the new value of the continuous state variable~$x_i$ after following edge~$e\in E$;
  \end{itemize}
\end{definition}
It is crucial to work with a computational representation \cite{DBLP:conf/lics/Henzinger96}, e.g., as first-order real arithmetic formulas, instead of just arbitrary initial set of states \m{\initc{q}\subseteq\reals^n} and an arbitrary relation \m{\flowc{q}\subseteq\reals^n\times\reals^n} of variables and their derivatives to describe the dynamics.
Otherwise computational analysis becomes impossible.
If $\initc{q}$ is an undecidable set, it may already be undecidable whether 0 is an initial state ($0\in\initc{q}$).

\begin{definition}[Transition semantics of hybrid automata]
\newcommand{\Ix}{\dLint[state=x]}
\newcommand{\Ifz}{\dLint[state=\varphi(\zeta)]}
  The \emph{transition system} of a hybrid automaton~$A$ is a transition\index{transition} relation~$\trans$ defined as follows
  \begin{itemize}
  \item \m{S \eqdef \{(q,x) \in Q\times\reals^n \with x\models\invariant{q}\}} is the state space;
  \item \m{S_0 \eqdef \{(q,x) \in S \with x\models\initc{q}\}} is the set of initial states;
  \item ${\trans}\, \subseteq S\times S$ is the transition relation defined as the union
    \m{
    \cupfold_{e\in E}\trans[e] ~\cup~ \cupfold_{q\in Q} \trans[q]
    }
    where
    \begin{enumerate}
    \item \m{\relatedi{\trans[e]}{(q,x)}{(\tilde{q},\tilde{x})}} iff
      $e\in E$ is an edge from~\m{q\in Q} to~\m{\tilde{q}\in Q} in the hybrid automaton $A$ and \(x\models\guard{e}\) and, further, \m{\tilde{x}_i=\ivaluation{\Ix}{\theta_i}} for $i=1,\dots,n$. (\emph{discrete transition}).\index{discrete!transition}
    \item \m{\relatedi{\trans[q]}{(q,x)}{(q,\tilde{x})}} iff
      \m{q\in Q} and
      there is a function~\m{\varphi:\interval{[0,r]}\to\reals^n} that has a time derivative \m{\D{\varphi}:\interval{(0,r)}\to\reals^n}
      such that
      \(\varphi(0)=x,\varphi(r)=\tilde{x}\)
      and such that
      \m{\D{\varphi_i}(\zeta) = \ivaluation{\Ifz}{\theta_i}}
      at each \m{\zeta\in\interval{(0,r)}} and for $i=1,\dots,n$, where $\varphi_i$ is the projection of $\varphi$ to the $i$-th component.
      Further,
      \m{\varphi(\zeta)\models\invariant{q}} has to hold
      for each \m{\zeta \in \interval{[0,r]}}. (\emph{continuous transition}).\index{continuous!transition}
    \end{enumerate}
  \end{itemize}
  State~\m{\sigma\in S} is \emph{reachable} from state~\m{\sigma_0\in S_0}, denoted by~\m{\relatedi{\reach}{\sigma_0}{\sigma}}, iff,
  for some~\m{n\in\naturals}, there is a sequence of states
  \m{\sigma_1,\sigma_2,\dots,\sigma_n=\sigma\in S}
  such that
  \m{\relatedi{\trans}{\sigma_{i-1}}{\sigma_i}} for \m{1\leq i\leq n}.
\end{definition}

Just like for classical discrete systems, where every finite automaton can be implemented as a WHILE program, every hybrid automaton can be represented as a hybrid program with a similar construction \cite{Platzer10}.
Na\"I've compilation introduces additional coding variables, however, which may make verification unnecessarily tedious compared to a direct natural representation as a hybrid program.
The following HP has been compiled from the hybrid automaton in \rref{fig:hybridAutomaton}:
  \[
  \begin{array}{l}
    \hupdate{\humod{q}{\textsl{accel}}};  \qquad\textit{/* initial mode is node \textsl{accel} */}
    \\
    \big(
        \phantom{\hchoice{}{}\!}
        (\htest{q=\textsl{accel}};~~
        \hevolvein{\D{x}=v\syssep\D{v}=a}{v\geq0})
        \\\hchoice{}{}~
        (\htest{q=\textsl{accel}};~~\internal{\htest{inv(\textsl{accel})};}
        \hupdate{\humod{a}{-b}};~~
        \hupdate{\humod{q}{\textsl{brake}}};~~\htest{v\geq0})
        \\\hchoice{}{}~
        (\htest{q=\textsl{accel}\land \ivr};~~\internal{\htest{inv(\textsl{accel})};}
        \hupdate{\humod{q}{\textsl{accel}}};~~\htest{v\geq0})
        \\\hchoice{}{}~
        (\htest{q=\textsl{brake}};~~
        \hevolvein{\D{x}=v\syssep\D{v}=a}{v\geq0})
        \\\hchoice{}{}~
        (\htest{q=\textsl{brake}\land \ivr};~~\internal{\htest{inv(\textsl{accel})};}
        \hupdate{\humod{a}{A}};~~
        \hupdate{\humod{q}{\textsl{accel}}};~~\htest{v\geq0})
      \big)
      ^{*}
  \end{array}
  \]
Note the difference of this HP compared to the natural HP in \rref{ex:HP}.
Line~1 represents that, in the beginning, the current node~$q$ of the system is the initial node \textsl{accel}.
The HP represents each discrete and continuous transition of the automaton as a sequence of statements with a nondeterministic choice ($\cup$) between these transitions.
Line~2 represents a continuous transition of the automaton. It tests if the current node~$q$ is \textsl{accel}, and then (i.e., if the test was successful) follows the differential equation system \m{\hevolve{\D{x}=v\syssep\D{v}=a}} restricted to the evolution domain~\m{v\geq0}.
Line~3 characterizes a discrete transition of the automaton.
It tests whether the automaton is in node \textsl{accel}, resets \m{\pupdate{\pumod{a}{-b}}} and then switches~$q$ to node \textsl{brake}.
By the semantics of hybrid automata, an automaton in node \textsl{accel} is only allowed to make a transition to node \textsl{brake} if the evolution domain restriction of \textsl{brake} is true when entering the node, which is expressed by the additional test \m{\ptest{v\geq0}} at the end of line~3.
Observe that this test of the evolution domain restriction generally needs to be checked as the last operation after the guard and reset, because a reset like \m{\pumod{v}{v-1}} could affect the outcome of the evolution domain region test.
In order to obtain a fully compositional model, HPs make all these implicit side conditions explicit.
Line~4 represents the discrete transition for the self-loop at \textsl{accel} of the automaton.
It tests the guard \m{\ivr} when in node \textsl{accel}, and, if successful, switches~$q$ back to node \textsl{accel}, and checks the evolution domain constraint \m{v\geq0} of \textsl{accel}.
Line~5 represents the continuous transition when staying in node \textsl{brake} and following the differential equation system \m{\hevolve{\D{x}=v\syssep\D{v}=a}} restricted to the evolution domain~\m{v\geq0}.
Line~6 represents the discrete transition from node \textsl{brake} of the automaton to node \textsl{accel}, again testing the guard in the beginning and testing the evolution domain constraint of \textsl{accel} at the end.

Lines~2--6 cannot run unless their tests succeed.
In particular, at any state, the nondeterministic choice ($\cup$) among lines~2--6 reduces de facto to a nondeterministic choice between either lines~2--4 or between lines~5--6.
At any state,~$q$ can have value either \textsl{accel} or \textsl{brake} (assuming these are different constants), not both.
Consequently, when \m{q=\textsl{brake}}, a nondeterministic choice of lines~2--4 would immediately fail the tests in the beginning and not run any further.
The only remaining choices that have a chance to succeed are lines~5--6 then.
In fact, only the single successful choice of line~5 would remain if the second conjunct \m{\ivr} of the test in line~6 does not hold for the current state.
Note that, still, all four choices in lines~2--6 are available, but at least two of these nondeterministic choices will always be unsuccessful.
Note that executions of line~3,4, or 6 would fail if the respective test at the end of those lines fails.
Since $v\geq0$ is in the evolution domain constraints of all nodes, however, the system gets stuck if $v<0$, which can only happen initially in this system.
Finally, the repetition operator ($\prepeat{}$) at the end of the HP expresses that the transitions of a hybrid automaton, as represented by lines~2--6, can repeat arbitrarily often, possibly taking different nondeterministic choices between lines~2--6 at every repetition.
}%

We could have defined differential dynamic logic for hybrid automata instead of for hybrid programs, because hybrid automata can be compiled to hybrid programs.
The primary reason why we chose a hybrid program representation instead of an automata representation for our logic is because our verification works by structural decomposition and hybrid programs have a perfectly compositional semantics, which enables us to use perfectly compositional proof rules (\rref{sec:dL-calculus}).

\subsection{Axiomatization} \label{sec:dL-calculus}

We do not only use \dL for specification purposes but also for verification of hybrid systems.
That is, we use \dL formulas to specify what properties of hybrid systems we are interested in, and then use \dL proof rules to verify them.
The axioms and proof rules of \dL are syntactic, which means that we can use them to verify properties of hybrid systems without having to recourse to their mathematical semantics.
In \rref{sec:dL-complete}, we show that the semantics and proof rules of \dL match completely, so we are not losing anything by taking on a syntactic perspective on verification.
Syntactic proof rules are crucial, because they can be implemented and used computationally in a computer (\rref{sec:KeYmaera}).

Our axiomatization of \dL is shown in \rref{fig:dL}.
To highlight the logical essentials, we use our axiomatization from our recent result \cite{DBLP:conf/lics/Platzer12b} that is simplified compared to our earlier work \cite{DBLP:journals/jar/Platzer08}, which was tuned for automation.
The axiomatization we use here is closer to that of Pratt's dynamic logic for conventional discrete programs \cite{DBLP:conf/focs/Pratt76,DBLP:conf/stoc/HarelMP77}.
We use the first-order Hilbert calculus (modus ponens \irref{MP} and $\forall$-generalization rule \irref{gena}) as a basis and allow all instances of valid formulas of first-order real arithmetic as axioms.
The first-order theory of real-closed fields is decidable \cite{tarski_decisionalgebra51} by quantifier elimination.
\begin{figure}[tbh]
  \renewcommand*{\irrulename}[1]{\text{#1}}%
  \newdimen\linferenceRulehskipamount%
  \linferenceRulehskipamount=1mm%
  \newdimen\lcalculuscollectionvskipamount%
  \lcalculuscollectionvskipamount=0.1em%
  \begin{calculuscollections}{\columnwidth}
    \begin{calculus}
      \cinferenceRule[assignb|$\dibox{:=}$]{assignment / substitution axiom}
      {\linferenceRule[equiv]
        {\mapply[x]{\phi}{\theta}}
        {\dbox{\pupdate{\umod{x}{\theta}}}{\mapply[x]{\phi}{x}}}
      }
      {}%
      \cinferenceRule[testb|$\dibox{?}$]{test}
      {\linferenceRule[equiv]
        {(\ivr \limply \phi)}
        {\dbox{\ptest{\ivr}}{\phi}}
      }{}
      \cinferenceRule[evolveb|$\dibox{'}$]{evolve}
      {\linferenceRule[equiv]
        {\lforall{t{\geq}0}{\dbox{\pupdate{\pumod{x}{\solf(t)}}}{\phi}}}
        {\dbox{\hevolve{\D{x}=\genDE{x}}}{\phi}}
      }{\m{\D{\solf}(t)=\genDE{\solf}}}%
    \cinferenceRule[evolveinb|${[\&]}$]{evolution domain restriction} %
      {\linferenceRule[equiv]
        {\lforall{t_0{=}\stime}{\dbox{\hevolve{\D{\sol}=\genDE{\sol}}}{}}
        {{\big(\dbox{\hevolve{\D{\sol}=-\genDE{\sol}}}{(\stime\geq t_0\limply\ivr)} \limply \phi\big)}}}
        {\dbox{\hevolvein{\D{\sol}=\genDE{\sol}}{\ivr}}{\phi}}
      }{}%
      \cinferenceRule[choiceb|$\dibox{\cup}$]{axiom of nondeterministic choice}
      {\linferenceRule[equiv]
        {\dbox{\alpha}{\phi} \land \dbox{\beta}{\phi}}
        {\dbox{\pchoice{\alpha}{\beta}}{\phi}}
      }{}
      \cinferenceRule[composeb|$\dibox{{;}}$]{composition} %
      {\linferenceRule[equiv]
        {\dbox{\alpha}{\dbox{\beta}{\phi}}}
        {\dbox{\alpha;\beta}{\phi}}
      }{}
      \cinferenceRule[iterateb|$\dibox{{}^*}$]{iteration/repeat unwind} %
      {\linferenceRule[equiv]
        {\phi \land \dbox{\alpha}{\dbox{\prepeat{\alpha}}{\phi}}}
        {\dbox{\prepeat{\alpha}}{\phi}}
      }{}
      \cinferenceRule[K|K]{K axiom / modal modus ponens} %
      {\linferenceRule[impl]
        {\dbox{\alpha}{(\phi\limply\psi)}}
        {(\dbox{\alpha}{\phi}\limply\dbox{\alpha}{\psi})}
      }{}
      \cinferenceRule[I|I]{loop induction}
      {\linferenceRule[impl]
        {\dbox{\prepeat{\alpha}}{(\inv\limply\dbox{\alpha}{\inv})}}
        {(\inv\limply\dbox{\prepeat{\alpha}}{\inv})}
      }{}
      \cinferenceRule[C|C]{loop convergence}
      {\linferenceRule[impl]
        {\dbox{\prepeat{\alpha}}{\lforall{v{>}0}{(\mapply{\var}{v}\limply\ddiamond{\alpha}{\mapply{\var}{v-1}})}}}
        {\lforall{v}{(\mapply{\var}{v} \limply
            \ddiamond{\prepeat{\alpha}}{\lexists{v{\leq}0}{\mapply{\var}{v}}})}\qquad}
      }{\m{v\not\in\alpha}}%
      \cinferenceRule[B|B]{Barcan$\dbox{}{}\forall{}$} %
      {\linferenceRule[impl]
        {\lforall{x}{\dbox{\alpha}{\phi}}}
        {\dbox{\alpha}{\lforall{x}{\phi}}}
      }{\m{x\not\in\alpha}}
      \cinferenceRule[V|V]{vacuous $\dbox{}{}$}
      {\linferenceRule[impl]
        {\phi}
        {\dbox{\alpha}{\phi}}
      }{\m{FV(\phi)\cap BV(\alpha)=\emptyset}}%
      \cinferenceRule[G|G]{$\dbox{}{}$ generalization} %
      {\linferenceRule[formula]
        {\phi}
        {\dbox{\alpha}{\phi}}
      }{}
      \cinferenceRule[MP|MP]{modus ponens}
      {\linferenceRule[formula]
        {\phi\limply\psi \quad \phi}
        {\psi}
      }{}%
      \cinferenceRule[gena|$\forall$]{$\forall{}$ generalization}
      {\linferenceRule[formula]
        {\phi}
        {\lforall{x}{\phi}}
      }{}%
    \end{calculus}%
  \end{calculuscollections}
  \caption{Differential dynamic logic axiomatization}
  \label{fig:dL}
\end{figure}
We write \m{\infers \phi} iff \dL formula $\phi$ can be \emph{proved} with \dL rules from \dL axioms (including first-order rules and axioms); see \rref{fig:dL}.
That is, a \dL formula is inductively defined to be \dfn{provable} in the \dL calculus if it is an instance of a \dL axiom or if it is the conclusion (below the rule bar) of an instance of one of the \dL proof rules \irref{G}, \irref{MP}, \irref{gena}, whose premises (above the rule bar) are all provable.
Our axiomatization in \rref{fig:dL} is phrased in terms of $\dbox{\cdot}{}$.
Corresponding axioms hold for $\ddiamond{\cdot}{}$ by the defined duality \m{\ddiamond{\alpha}{\phi}\mequiv\lnot\dbox{\alpha}{\lnot\phi}}; see \cite{Platzer10} for explicit $\ddiamond{\cdot}{}$ rules.

Axiom \irref{assignb} is Hoare's assignment rule.
It uses substitutions to axiomatize discrete assignments.
To show that~$\mapply[x]{\phi}{x}$ is true after a discrete assignment, axiom \irref{assignb} shows that it has been true before, when substituting the affected variable~$x$ with its new value~$\theta$.
Formula $\mapply[x]{\phi}{\theta}$ is obtained from $\mapply[x]{\phi}{x}$ by \emph{substituting} $\theta$ for $x$,
provided $x$ does not occur in the scope of a quantifier or modality binding $x$ or a variable of $\theta$.
All substitutions in this paper require this admissibility condition.
A modality $\dbox{\alpha}{}$ containing \m{\pupdate{\pumod{z}{}}} or $\D{z}$ \emph{binds} $z$ (written $z\in BV(\alpha)$ for bound variable).
Only variables that are bound by HP $\alpha$ can possibly be changed when running $\alpha$.

Tests are proven by assuming
that the test succeeds with an implication in axiom \irref{testb}, because test~$\ptest{\ivr}$ can only make a transition when condition~$\ivr$ actually holds true.
From left to right, axiom \irref{testb} for \dL formula \m{\dbox{\ptest{\ivr}}{\phi}} assumes that formula $\ivr$ holds true (otherwise there is no transition and thus nothing to show) and shows that $\phi$ holds after the resulting no-op.
The converse implication from right to left is by case distinction.
Either $\ivr$ is false, then $\ptest{\ivr}$ cannot make a transition and there is nothing to show.
Or $\ivr$ is true, but then also $\phi$ is true.

In axiom \irref{evolveb}, $\solf(\cdot)$ is the (unique \cite[Theorem~10.VI]{Walter:ODE}) solution of the symbolic initial-value problem \m{\D{\solf}(t)=\genDE{\solf},\solf(0)=x}.
Given such a solution $\solf(\cdot)$, continuous evolution along differential equation \m{\hevolve{\D{x}=\genDE{x}}} can be replaced by a discrete assignment \m{\pupdate{\pumod{x}{\solf(t)}}} with an additional quantifier for the evolution time~$t$.
It goes without saying that variables like $t$ are fresh in \rref{fig:dL}.
Notice that conventional initial-value problems are numerical with concrete numbers $x\in\reals^n$ as initial values, not symbols $x$ \cite{Walter:ODE}.
This would not be enough for our purpose, because we need to consider all states in which the system could start, which may be uncountably many.
That is why axiom \irref{evolveb} solves one symbolic initial-value problem, because we could hardly solve uncountable many numerical initial-value problems.

Nondeterministic choices split into their alternatives in axiom \irref{choiceb}.
From right to left: If all $\alpha$ runs lead to states satisfying~$\phi$ (i.e., \m{\dbox{\alpha}{\phi}} holds) and all $\beta$ runs lead to states satisfying $\phi$ (i.e., \m{\dbox{\beta}{\phi}} holds), then all runs of HP \m{\pchoice{\alpha}{\beta}}, which may choose between following $\alpha$ and following $\beta$, also lead to states satisfying $\phi$ (i.e., \m{\dbox{\pchoice{\alpha}{\beta}}{\phi}} holds).
The converse implication from left to right holds, because \m{\pchoice{\alpha}{\beta}} can run all runs of $\alpha$ and all runs of $\beta$. 
A general principle behind the \dL axioms is most noticeable in axiom \irref{choiceb}.
The equivalence axioms of \dL are primarily intended to be used by reducing the formula on the left to the (structurally simpler) formula on the right.
With such a reduction, we symbolically decompose a property of a more complicated system into separate properties of easier fragments~$\alpha$ and~$\beta$. This decomposition makes the problem tractable and is good for scalability purposes.
For these symbolic structural decompositions, it is very helpful that \dL is a full logic that is closed under all logical operators, including disjunction and conjunction, for then both sides in \irref{choiceb} are \dL formulas again (unlike in Hoare logic \cite{DBLP:journals/cacm/Hoare69}).
This is also an advantage for computing invariants \cite{DBLP:conf/cav/PlatzerC08,DBLP:journals/fmsd/PlatzerC09,Platzer10}.

Sequential compositions are proven using nested modalities in axiom \irref{composeb}.
From right to left: If, after all $\alpha$-runs, all $\beta$-runs lead to states satisfying~$\phi$ (i.e., \m{\dbox{\alpha}{\dbox{\beta}{\phi}}} holds), then all runs of the sequential composition \m{\alpha;\beta} lead to states satisfying $\phi$ (i.e., \m{\dbox{\alpha;\beta}{\phi}} holds).
The converse implication uses the fact that if after all $\alpha$-runs all $\beta$-runs lead to $\phi$ (i.e., \m{\dbox{\alpha}{\dbox{\beta}{\phi}}}), then all runs of \m{\alpha;\beta} lead to $\phi$ (that is, \m{\dbox{\alpha;\beta}{\phi}}), because the runs of \m{\alpha;\beta} are exactly those that first do any $\alpha$-run, followed by any $\beta$-run.
Again, it is crucial that \dL is a full logic that considers reachability statements as modal operators, which can be nested, for then both sides in \irref{composeb} are \dL formulas (unlike in Hoare logic \cite{DBLP:journals/cacm/Hoare69}, where intermediate assertions need to be guessed or computed as weakest preconditions for $\beta$ and $\phi$).
Note that \dL can directly express weakest preconditions, because the \dL formula \m{\dbox{\beta}{\phi}} or any formula equivalent to it already is the weakest precondition for $\beta$ and $\phi$.
Strongest postconditions are expressible in \dL as well.

Axiom \irref{iterateb} is the iteration axiom, which partially unwinds loops.
It uses the fact that $\phi$ always holds after repeating $\alpha$ (i.e., \m{\dbox{\prepeat{\alpha}}{\phi}}), if $\phi$ holds at the beginning (for $\phi$ holds after zero repetitions then), and if, after one run of $\alpha$, $\phi$ holds after every number of repetitions of $\alpha$, including zero repetitions (i.e., \m{\dbox{\alpha}{\dbox{\prepeat{\alpha}}}{\phi}}).
So axiom \irref{iterateb} expresses that \m{\dbox{\prepeat{\alpha}}{\phi}} holds iff $\phi$ holds immediately and after one or more repetitions of $\alpha$.
Bounded model checking corresponds to unwinding loops $N$ times by axiom \irref{iterateb} and simplifying the resulting formula in the \dL calculus.
If the formula is invalid, a bug has been found, otherwise $N$ increases.
We use induction axioms \irref{I} and \irref{C} for proving formulas with unbounded repetitions of loops.

Axiom \irref{K} is the modal modus ponens from modal logic \cite{DBLP:journals/jsyml/Kripke59,Kripke63,HughesCresswell96}.
It expresses that, if an implication \m{\phi\limply\psi} holds after all runs of $\alpha$ (i.e., \m{\dbox{\alpha}{(\phi\limply\psi)}}) and $\phi$ holds after all runs of $\alpha$ (i.e., \m{\dbox{\alpha}{\phi}}), then $\psi$ holds after all runs of $\alpha$ (i.e., \m{\dbox{\alpha}{\psi}}), because $\psi$ is a consequence in each state reachable by $\alpha$.

Axiom \irref{I} is an induction schema for repetitions.
Axiom \irref{I} says that, if, after any number of repetitions of $\alpha$, invariant $\inv$ remains true after one (more) iteration of $\alpha$ (i.e., \m{\dbox{\prepeat{\alpha}}{(\inv\limply\dbox{\alpha}{\inv})}}), then $\inv$ holds after any number of repetitions of~$\alpha$ (i.e., \m{\dbox{\prepeat{\alpha}}{\inv}}) if $\inv$ holds initially.
That is, if~$\inv$ is true after running~$\alpha$ whenever~$\inv$ has been true before, then, if~$\inv$ holds in the beginning,~$\inv$ will continue to hold, no matter how often we repeat~$\alpha$ in \m{\dbox{\prepeat{\alpha}}{\inv}}.

Axiom \irref{C}, in which $v$ does not occur in $\alpha$ (written $v\not\in\alpha$), is a variation of Harel's convergence rule, suitably adapted to hybrid systems over $\reals$.
Axiom \irref{C} expresses that, if, after any number of repetitions of $\alpha$, $\var(v)$ can decrease after some run of~$\alpha$ by~1 (or another positive real constant) when $v>0$, then, if $\var(v)$ holds for any $v$, then the variant~$\var(v)$ holds for some real number~\m{v\leq0} after repeating~$\alpha$ sufficiently often (i.e., \m{\ddiamond{\prepeat{\alpha}}{\lexists{v{\leq}0}{\mapply{\var}{v}}}}). This axiom shows that positive progress with respect to~$\var(v)$ can be achieved by running~$\alpha$.
Note that positive progress is only sufficient if it is bounded from below, otherwise progress could converge to zero before reaching the destination.

Axiom \irref{B} is the Barcan formula of first-order modal logic, characterizing anti-monotonic domains \cite{HughesCresswell96}. In order for it to be sound for \dL, $x$ must not occur in $\alpha$.
It expresses that, if, from all initial values of $x$, all runs of $\alpha$ lead to states satisfying $\phi$, then, after all runs of $\alpha$, $\phi$ holds for all values of $x$, because the value of $x$ cannot affect the runs of $\alpha$, nor can $x$ change during runs of $\alpha$, since $x\not\in\alpha$.
The converse of \irref{B} is provable\footnote{
From \(\lforall{x}{\phi}\limply\phi\), derive \(\dbox{\alpha}{(\lforall{x}{\phi}\limply\phi)}\) by \irref{G}, from which \irref{K} and propositional logic derive \(\dbox{\alpha}{\lforall{x}{\phi}}\limply\dbox{\alpha}{\phi}\).
Then, first-order logic derives \(\dbox{\alpha}{\lforall{x}{\phi}}\limply\lforall{x}{\dbox{\alpha}{\phi}}\), as $x$ is not free in the antecedent.
}
\cite[BFC\,p. 245]{HughesCresswell96} and called \irref{B}.

Axiom \irref{V} is for vacuous modalities and requires that no free variable of $\phi$ (written $FV(\phi)$) is bound by $\alpha$, because $\alpha$ then cannot change any of the free variables of $\phi$.
It expresses that, if $\phi$ holds in a state, then it holds after all runs of $\alpha$, because, by \m{FV(\phi)\cap BV(\alpha)=\emptyset}, no variable that $\alpha$ can change occurs free in $\phi$.
The converse of \irref{V} holds, but we do not need it.
Note that, unlike the other axioms, \irref{B}, \irref{V}, and \irref{iterateb} are not strictly required for proving \dL formulas.

Rule \irref{G} is G\"odel's necessitation rule for modal logic \cite{HughesCresswell96}.
It expresses that, if $\phi$ is valid, i.e., true in all states, then $\dbox{\alpha}{\phi}$ is valid.
Note that, quite unlike rule \irref{G}, axiom \irref{V} crucially requires the variable condition that ensures that the value of $\phi$ is not affected by running $\alpha$ \cite{DBLP:conf/lics/Platzer12b}.

Rules \irref{MP} and \irref{gena} are as in first-order logic.
Modus ponens (\irref{MP}) expresses that if we know that both $\phi\limply\psi$ and $\phi$ are valid, then $\psi$ is a valid consequence.
The $\forall$-generalization rule (\irref{gena}) expresses that if $\phi$ is valid, then so is $\lforall{x}{\phi}$.

The \dL axiomatization in \rref{fig:dL} uses a modular \dL axiom \irref{evolveinb} that reduces differential equations with evolution domain constraints to differential equations without them by checking the evolution domain constraint backwards along the reverse flow \cite{DBLP:conf/lics/Platzer12b}.
It checks $\ivr$ backwards from the end of the evolution up to the initial time $t_0$, using that \m{\hevolve{\D{x}=-\genDE{x}}} follows the same flow as \m{\hevolve{\D{x}=\genDE{x}}}, but backwards.
See \rref{fig:backflow}%
\begin{wrapfigure}{r}{7cm}
  \newcommand{\mtime}{t}%
  \begin{minipage}[b]{7cm}
    \begin{tikzpicture}[scale=1.5]
  \newcommand{\ws}{\nu}\newcommand{\wt}{}%
  \renewcommand{\I}{\iconcat[state=\ws]{\stdI}}%
  \renewcommand{\It}{\iconcat[state=\wt]{\stdI}}%
  \def\vec#1{#1}%
  \tikzstyle{axes}=[]
  \tikzstyle{mode switch}=[black!70,thin,dotted]
      \begin{scope}[style=axes]
        \draw[->] (-0.1,0) -- (2.4,0) node[right] {$\mtime$} coordinate(t axis);
        \draw[->] (0,-0.1) -- (0,1.2) node[above] {$\vec{x}$} coordinate(x axis);
      \end{scope}
      {
        \draw[draw=vgreen,fill=vgreen!5] (1.1,0.8) ellipse (0.9cm and 0.4cm);
        \node[color=vgreen!140] at (1.6,0.6) {$\ivr$};
      }
      \newcommand{\breakp}{1.8}
      \begin{scope}[xshift=0.7cm,yshift=-0.1cm]
        {
          \draw[thick,domain=-0.7:0.6,smooth,xshift=.5cm]
            plot
            (\x,{exp(-1.5*\x)+1.2*(1-exp(-1.5*\x))})
          node (flowend) {}
          node[above] {$\iget[state]{\It}$};
        }
        \node (flowstart) at (-0.7,0.62847) {};
        \draw[thick,vred,domain=-0.7:0.6,smooth,xshift=0.5cm,yshift=-0.11cm]
          plot[mark=triangle*,mark options={vred},mark phase=5,mark repeat=8]
          (\x,{exp(-1.5*\x)+1.2*(1-exp(-1.5*\x))})
          node (backstart) {};
        \draw[thick,dotted,vred,domain=-1:-0.7,smooth,xshift=0.5cm,yshift=-0.11cm] plot (\x,{exp(-1.5*\x)+1.2*(1-exp(-1.5*\x))});
        \tikzstyle{my loop}=[->,to path={
          .. controls +(10:0.5) and +(-10:0.5) .. (\tikztotarget) \tikztonodes}]
        \draw[thick,my loop] (flowend) to (backstart);
        \node[right,text width=4cm] at (1.5,1) {\footnotesize revert flow and time $\stime$;\\check~$\ivr$ backwards};
      \end{scope}
      \node[above=-1pt,rotate=5] at (1,0.9) {$\hevolve{\D{x}={\genDE{x}}}$};
      \draw[mode switch] (0.5,0) node[below,black] {$t_0=\stime$} -- ++(0,1.1);
      \draw[mode switch] (\breakp,0) node[below,black] {$r$} -- ++(0,1.1);
      \path (0.7,0.4) -- node[below,vred] {$\hevolve{\D{x}={-\genDE{x}}}$} (\breakp,0.4);
    \end{tikzpicture}
  \end{minipage}
  \caption[``There and back again'' axiom checks evolution domain along backwards flow over time]{``There and back again'' axiom\,\irref{evolveinb} checks evolution domain along backwards flow over time}
  \label{fig:backflow}
\end{wrapfigure}
for an illustration.
To simplify notation, we assume that the (vector) differential equation \m{\hevolve{\D{x}=\genDE{x}}} in axiom \irref{evolveinb} already includes a clock \m{\hevolve{\D{\stime}=1}} for tracking time.
The idea behind axiom \irref{evolveinb} is that the fresh variable $t_0$ remembers the initial time $\stime$, then $x$ evolves forward along \m{\pevolve{\D{x}=\genDE{x}}} for any amount of time.
Afterwards, $\phi$ has to hold if, for all ways of evolving backwards along \m{\pevolve{\D{x}=-\genDE{x}}} for any amount of time, \m{\stime\geq t_0\limply\ivr} holds, i.e., $\ivr$ holds at all previous times that are later than the initial time $t_0$.
Thus, $\phi$ is not required to hold after a forward evolution if the evolution domain constraint $\ivr$ can be left by evolving backwards for less time than the forward evolution took.

The following loop invariant rule \irref{invind} derives from \irref{G} and \irref{I}.
Convergence rule \irref{con} derives from $\forall$-generalization, \irref{G}, and \irref{C} (like in \irref{C}, $v$ does not occur in $\alpha$):
\[
      \dinferenceRule[invind|$ind$]{inductive invariant}
      {\linferenceRule[formula]
        {\inv\limply\dbox{\alpha}{\inv}}
        {\inv\limply\dbox{\prepeat{\alpha}}{\inv}}
      }{}
      \qquad\qquad
      \dinferenceRule[con|$con$]{loop convergence right} %
      {\linferenceRule[formula]
        {\mapply{\var}{v}\land v>0\limply\ddiamond{\alpha}{\mapply{\var}{v-1}}}
        {\mapply{\var}{v} \limply
            \ddiamond{\prepeat{\alpha}}{\lexists{v{\leq}0}{\mapply{\var}{v}}}}
      }{}%
\]
While this is not the focus of this paper, we note that we have successfully used a refined sequent calculus variant \cite{DBLP:journals/jar/Platzer08} of the Hilbert calculus in \rref{fig:dL} for automatic verification of hybrid systems, including trains, cars, and aircraft; see \rref{sec:KeYmaera}.
Several different verification paradigms can be formulated for the \dL calculus by choosing in which order to use the axioms, including proving by symbolic execution, proving by forward image computation, proving by backward image computation, proving by fixpoint loops, and full deduction \cite{Platzer10}.

Uses of real arithmetic, which, we denote by \irref{qear}, are decidable by quantifier elimination in real-closed fields \cite{tarski_decisionalgebra51}.
\begin{definition}[Quantifier elimination] \label{def:qelim}
  A first-order theory admits
  \dfn[quantifier~elimination]{quantifier elimination} if, with each formula~$\phi$, a quantifier-free formula $\qelim{\phi}$\indexn[_QE]{\QE}\index{QE~(quantifier~elimination)@$\QE$~(quantifier~elimination)} can be associated effectively that is equivalent (i.e., \(\phi\lbisubjunct\qelim{\phi}\) is valid) and has no additional free variables\index{symbol!free} or function symbols. The operation $\qelim{}$ is further assumed to evaluate formulas without variables,
  yielding a decision procedure for closed formulas\index{formula!closed}\index{closed!formula|see{formula, closed}} of this theory (i.e., formulas without free variables)\index{free~variable}.
\end{definition}
Quantifier elimination is decidable in the first-order logic of real-closed fields \cite{tarski_decisionalgebra51}.
It exploits the special structure of real arithmetic to express quantified arithmetic formulas equivalently without quantifiers.
\begin{example}[Quantifier elimination]
\qelim{} yields the equivalence:
\[
 \qelim{\lexists{x}{(ax^2+bx+c=0)}} \mequiv
 (a\neq0 \land b^2-4ac\geq0) \lor (a=0 \land (b=0\limply c=0))
\]
In this particular case, the equivalence can be found by using the generic condition for solvability of quadratic equations over the reals plus special cases when coefficients are zero.
\end{example}
Quantifier elimination gives a decision procedure for real arithmetic \cite{tarski_decisionalgebra51}. Implementations use partial cylindrical algebraic decomposition \cite{DBLP:journals/jsc/CollinsH91}, virtual substitution \cite{DBLP:journals/aaecc/Weispfenning97}, semidefinite programming relaxations \cite{DBLP:journals/mp/Parrilo03,DBLP:conf/tphol/Harrison07} for Stengle's Positivstellensatz \cite{DBLP:journals/mathann/Stengle74}, or Gr\"obner bases for the real Nullstellensatz \cite{DBLP:conf/cade/PlatzerQR09}, which combine Gr\"obner bases \cite{Buchberger65} with Stengle's real Nullstellensatz \cite{DBLP:journals/mathann/Stengle74} and semidefinite programming \cite{BoydV04}.

For the purposes of this survey, we denote the use of decidable real arithmetic and quantifier elimination in proofs by \irref{qear}.
More constructive deduction modulo proof rules, which can be used to combine first-order real arithmetic with the proof calculus presented here and that are suitable for automation, have been reported in previous work \cite{DBLP:journals/jar/Platzer08,DBLP:journals/logcom/Platzer10,Platzer10}.
Those are based on real-valued free variables, Skolemization, Deskolemization, and the following lifting of quantifier elimination in real-closed fields \cite{tarski_decisionalgebra51,DBLP:conf/automata/Collins75}.
\begin{lemma}[Quantifier elimination lifting \cite{DBLP:journals/jar/Platzer08}] \label{lem:qelim-lift}
  Quantifier elimination can be lifted to instances of formulas of first-order theories that admit quantifier elimination,
  i.e., to formulas that result from the base theory by substitution.
  \index{quantifier~elimination!lifting}
  \index{lifting|see{quantifier~elimination, lifting}}
\end{lemma}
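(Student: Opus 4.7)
The plan is to reduce quantifier elimination on an instance formula to quantifier elimination on a pure base-theory formula, by abstracting the foreign subterms, running \QE in the base theory, and then substituting the subterms back. Concretely, suppose $\phi'$ is obtained from a base-theory formula $\phi_0$ by simultaneously substituting terms $\theta_1,\dots,\theta_k$ (which may involve symbols outside the base theory, e.g.~from the surrounding \dL language) for designated free variables or subterm positions of $\phi_0$. The first step is to introduce fresh base-theory variables $y_1,\dots,y_k$ distinct from all variables of $\phi_0$ and all variables occurring in the $\theta_i$, and to form the \emph{abstracted} base-theory formula $\phi \mequiv \phi_0[\theta_1/y_1,\dots,\theta_k/y_k]^{-1}$; that is, $\phi$ is the pure base-theory formula with the $y_i$ in place of the $\theta_i$, chosen so that $\phi'$ arises from $\phi$ by the admissible substitution $\{y_i \mapsto \theta_i\}$.

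Second, because $\phi$ is a formula of the base theory, Definition~\ref{def:qelim} produces a quantifier-free base-theory formula $\qelim{\phi}$ with the same free variables as $\phi$ (in particular, free in the $y_i$) and such that $\phi \lbisubjunct \qelim{\phi}$ is valid in the base theory. Third, apply the inverse substitution to obtain $(\qelim{\phi})\{y_i \mapsto \theta_i\}$, which is quantifier-free and, by construction, syntactically equal to an instance formula matching $\phi'$. Validity of $\phi\lbisubjunct\qelim{\phi}$ in every base-theory interpretation lifts pointwise to every state $\iget[state]{\I}$ of the richer language: evaluate the $\theta_i$ in $\iget[state]{\I}$ to real values $d_i$, and observe that the values of $\phi'$ and of $(\qelim{\phi})\{y_i\mapsto\theta_i\}$ at $\iget[state]{\I}$ coincide with the values of $\phi$ and $\qelim{\phi}$ at $\iget[state]{\imodif[state]{\I}{y_1,\dots,y_k}{d_1,\dots,d_k}}$, which agree by the base-theory equivalence.

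The main obstacle is the admissibility of substitution: the abstraction step must be set up so that neither the introduction of the $y_i$ nor the reinsertion of the $\theta_i$ captures any variable bound by a quantifier of $\phi_0$ or of $\qelim{\phi}$. This is handled by $\alpha$-renaming bound variables of $\phi_0$ away from the free variables of the $\theta_i$ before abstraction, and by noting that \QE returns a formula whose free variables are contained in those of $\phi$, so no new capture opportunities for the $\theta_i$ are created. A secondary subtlety is that the $\theta_i$ need not be pure base-theory terms, hence the equivalence must be argued semantically via the coincidence lemma for first-order logic rather than by any syntactic rewriting in the base theory alone; once evaluation of $\theta_i$ at $\iget[state]{\I}$ is treated as the chosen witness for the corresponding $y_i$, the lifting is immediate and uniform in $\iget[state]{\I}$.
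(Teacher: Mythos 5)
Your proof is correct and takes essentially the same route as the paper's (the survey defers the actual proof to the cited earlier work): abstract the substituted terms by fresh variables, apply quantifier elimination to the resulting pure base-theory formula, and substitute the terms back into the quantifier-free result, the equivalence transferring because validity of the base-theory equivalence is preserved under admissible substitution. Your pointwise evaluation of the substituted terms in each state is just the substitution/coincidence lemma made explicit, and your care about capture corresponds to the paper's standing admissibility convention on substitutions.
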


\begin{example}[Single car] \label{ex:dL-proof}
     \def\MA{m}%
     \def\prem{v\geq0\land x\leq m}%
In order to illustrate how the \dL calculus can be used to prove \dL formulas and identify parameter constraints required for them to be valid, we consider a \dL formula for the braking case of HP \rref{eq:ex-HP}:
\begin{equation}
v\geq0\land x\leq m \limply \dbox{\pupdate{\pumod{a}{-b}};\hevolve{\D{x}=v\syssep\D{v}=a}}{x\leq m}
\label{eq:car-single-braking-discovery}
\end{equation}
Formula \rref{eq:car-single-braking-discovery} claims a car would never run a stoplight if it starts before the stoplight (\m{x\leq m}) and is applying the brakes.
Since braking is the safest operation for cars, we might think that car control would always be safe in this most conservative scenario.
But that is not the case.
If the car starts off too fast compared to the remaining distance to the stoplight, then not even braking can prevent a crash.
We can easily find out, however, under which circumstance the \dL formula \rref{eq:car-single-braking-discovery} is valid by applying \dL axioms to it.
The following \dL proof reveals that \rref{eq:car-single-braking-discovery} is valid if \m{v^2 \leq 2\itimes b\itimes(m-x)} holds initially:
     \renewcommand{\arraystretch}{1.3}%
      \begin{sequentdeduction}[array]
      \linfer[qear]
      {\lsequent{\prem} {v^2 \leq 2\itimes b\itimes(\MA-x)}}
      {\linfer[assignb]
        {\lsequent{\prem}{\lforall{t{\geq}0}{(\frac{-b}{2}\itimes t^2+v\itimes t + x\leq\MA)}}}
        {\linfer[assignb]
            {\lsequent{\prem}{\dbox{\pupdate{\pumod{a}{-b}}}{\lforall{t{\geq}0}{(\frac{a}{2}\itimes t^2+v\itimes t + x\leq\MA)}}}}
          {\linfer[evolveb]
              {\lsequent{\prem}{\dbox{\pupdate{\pumod{a}{-b}}}{\lforall{t{\geq}0}{\dbox{\hupdate{\humod{x}{\frac{a}{2}\itimes t^2+v\itimes t + x}}}{x\leq\MA}}}}}
            {\linfer[composeb]
              {\lsequent{\prem}{\dbox{\pupdate{\pumod{a}{-b}}}{\dbox{\hevolve{\D{x}=v\syssep\D{v}=a}}{x\leq\MA}}}}
              {\lsequent{\prem}{\dbox{\pupdate{\pumod{a}{-b}};\hevolve{\D{x}=v\syssep\D{v}=a}}{x\leq\MA}}}
            }
          }
        }
      }
      \end{sequentdeduction}
We follow the conventions in sequent calculus and read proofs bottom-up, from the desired conclusion at the bottom to the premises at the top; see previous work \cite{DBLP:journals/jar/Platzer08} for more details on a sequent calculus for \dL.
Here, we first apply the axiom \irref{composeb} to reduce the sequential composition equivalently to a nested modality.
Then we use axiom \irref{evolveb} to reduce the differential equation to an assignment with the solution and a quantifier $\forall t$ for its duration.
Even though it is a quantifier over a real variable, we cannot use the decision procedure of quantifier elimination for real-closed fields \cite{tarski_decisionalgebra51} to handle it, because we do not have a formula of first-order real arithmetic, but still a \dL formula with a modality expressing a property of all reachable states.
Instead, we first use axiom \irref{assignb} twice to equivalently substitute in the effect of the assignments.
Finally, we use equivalences of real arithmetic (using quantifier elimination, denoted \irref{qear}) to discover the constraint \m{v^2 \leq 2\itimes b\itimes(m-x)}, which is required to make \rref{eq:car-single-braking-discovery} valid.
Indeed, if we add this so-discovered constraint about the initial state, the following \dL formula is provable in the \dL calculus by a minor variation of the above \dL proof:
\[
v^2 \leq 2\itimes b\itimes(m-x)\land
v\geq0\land x\leq m \limply 
\dbox{\pupdate{\pumod{a}{-b}};\hevolve{\D{x}=v\syssep\D{v}=a}}{x\leq m}
\]
This construction explains why \m{v^2 \leq 2\itimes b\itimes(m-x)} has to be assumed in \dL formula \rref{eq:car-single-essentials}, because braking is one of the choices for its HP $\textit{car}_\varepsilon$.
Constructions based on this principle turn out to be very effective for discovering invariants \cite{DBLP:conf/cav/PlatzerC08,DBLP:journals/fmsd/PlatzerC09,Platzer10} and constraints on the free parameters of the system \cite{DBLP:journals/jar/Platzer08,DBLP:conf/icfem/PlatzerQ09,Platzer10}, or design constraints for closed-loop properties \cite{DBLP:conf/acc/ArechigaLPK12}.
Another interesting observation is that parameter constraints discovered in this way from precise proofs about simplified models are useful for deriving design decisions about the full system, even for aspects that have not been modeled, like camera resolutions and frame rates for video-based car safety technology \cite{DBLP:conf/iccps/MitschLP12,Platzer10}.

\begin{figure*}[tb]
\footnotesize
\def\prem{C}%
\def\concl{E}%
\[\hspace*{-0.5cm}
\begin{array}{@{}l@{}l@{~}l@{}l@{}}
 & \prem \limply \dbox{\pupdate{\pumod{a}{-b}}}{\dbox{\hevolve{\D[2]{x}=a}}{\concl}} &&\text{1}\\
 & \dbox{\pupdate{\pumod{a}{-b}};\hevolve{\D[2]{x}=a}}{\concl}\lbisubjunct\dbox{\pupdate{\pumod{a}{-b}}}{\dbox{\hevolve{\D[2]{x}=a}}{\concl}} &&\text{2: by \irref{composeb}}\\
 & (\dbox{\pupdate{\pumod{a}{-b}};\hevolve{\D[2]{x}=a}}{\concl}\lbisubjunct\dbox{\pupdate{\pumod{a}{-b}}}{\dbox{\hevolve{\D[2]{x}=a}}{\concl}}) \limply ((\prem \limply \dbox{\pupdate{\pumod{a}{-b}}}{\dbox{\hevolve{\D[2]{x}=a}}{\concl}})\limply (\prem \limply \dbox{\pupdate{\pumod{a}{-b}};\hevolve{\D[2]{x}=a}}{\concl})) && \text{3: by \irref{qear}}\\
 &(\prem \limply \dbox{\pupdate{\pumod{a}{-b}}}{\dbox{\hevolve{\D[2]{x}=a}}{\concl}})\limply (\prem \limply \dbox{\pupdate{\pumod{a}{-b}};\hevolve{\D[2]{x}=a}}{\concl}) && \text{4: \irref{MP}(2,3)}\\
 &\prem \limply \dbox{\pupdate{\pumod{a}{-b}};\hevolve{\D[2]{x}=a}}{\concl} && \text{5:  \irref{MP}(4,1)}
\end{array}
\]
  \caption{\dL Hilbert proof corresponding to bottom proof step in \rref{ex:dL-proof}, abbreviating $v\geq0 \land x\leq\MA$ by $\prem$, $x\leq\MA$ by $\concl$, and \m{\hevolve{\D{x}=v\syssep\D{v}=a}} by \m{\hevolve{\D[2]{x}=a}}}
  \label{fig:ex-dL-Hilbert-proof}
\end{figure*}

The reader should note that we use an abbreviated notation for proofs here.
Without abbreviations, the bottom-most proof step \irref{assignb} in the above proof expands to the Hilbert-style \dL proof shown in \rref{fig:ex-dL-Hilbert-proof}, in which each line corresponds to a \dL axiom or the result of a \dL proof rule applied to previous lines as indicated in the column on the right.
In this paper, we abbreviate Hilbert proofs by rewriting formulas using the \dL axioms directly, e.g., by replacing instances of the left-hand side of an equivalence in a \dL axiom by the corresponding (structurally) simpler right-hand side.
This abbreviated style can be understood systematically in a sequent calculus formulation \cite{DBLP:journals/jar/Platzer08} of the \dL proof calculus.

For typesetting purposes for a \dL proof for \dL formula \rref{eq:car-single-essentials}, let us abbreviate \m{\hevolvein{\D{x}=v\syssep\D{v}=a\syssep\D{t}=1}{v\geq0\land t\leq\varepsilon}} in the HP $\textit{car}_\varepsilon$ from \rref{eq:car-single-essentials} by \m{\hevolvein{\D[2]{x}=a}{t\leq\varepsilon}}.
Furthermore, we abbreviate \m{v^2\leq2b(m-x) \land A\geq0\land b>0} by $\inv$.
\begin{sequentdeduction}[array]
\linfer[testb]
  {\lsequent{\inv} {
    (\ivr\limply\dbox{\pupdate{\pumod{a}{A}}}{\dbox{\pupdate{\pumod{t}{0}}; \hevolvein{\D[2]{x}=a}{t\leq\varepsilon}}{\inv}})
    \land \dbox{\pupdate{\pumod{a}{-b}}}{\dbox{\pupdate{\pumod{t}{0}}; \hevolvein{\D[2]{x}=a}{t\leq\varepsilon}}{\inv}}}}
  {\linfer[composeb]
    {\lsequent{\inv} {
      \dbox{\ptest{\ivr}}{\dbox{\pupdate{\pumod{a}{A}}}{\dbox{\pupdate{\pumod{t}{0}}; \hevolvein{\D[2]{x}=a}{t\leq\varepsilon}}{\inv}}}
      \land \dbox{\pupdate{\pumod{a}{-b}}}{\dbox{\hevolve{\pupdate{\pumod{t}{0}}; \D[2]{x}=a}}{\inv}}}}
    {\linfer[choiceb]
      {\lsequent{\inv} {
        \dbox{\ptest{\ivr};\pupdate{\pumod{a}{A}}}{\dbox{\pupdate{\pumod{t}{0}}; \hevolvein{\D[2]{x}=a}{t\leq\varepsilon}}{\inv}}
        \land \dbox{\pupdate{\pumod{a}{-b}}}{\dbox{\pupdate{\pumod{t}{0}}; \hevolvein{\D[2]{x}=a}{t\leq\varepsilon}}{\inv}}}}
      {\linfer[composeb]
        {\lsequent{\inv} {\dbox{\pchoice{(\ptest{\ivr};\pupdate{\pumod{a}{A}})}{\pupdate{\pumod{a}{-b}}}}{\dbox{\pupdate{\pumod{t}{0}}; \hevolvein{\D[2]{x}=a}{t\leq\varepsilon}}{\inv}}}}
        {\linfer[invind]
          {\lsequent{\inv} {\dbox{(\pchoice{(\ptest{\ivr};\pupdate{\pumod{a}{A}})}{\pupdate{\pumod{a}{-b}}}); \pupdate{\pumod{t}{0}}; \hevolvein{\D[2]{x}=a}{t\leq\varepsilon}}{\inv}}}
          {\linfer[K+G]
            {\lsequent{\inv} {\dbox{\prepeat{((\pchoice{(\ptest{\ivr};\pupdate{\pumod{a}{A}})}{\pupdate{\pumod{a}{-b}}}); \pupdate{\pumod{t}{0}}; \hevolvein{\D[2]{x}=a}{t\leq\varepsilon})}}{\inv}}}
            {\lsequent{\inv} {\dbox{\prepeat{((\pchoice{(\ptest{\ivr};\pupdate{\pumod{a}{A}})}{\pupdate{\pumod{a}{-b}}}); \pupdate{\pumod{t}{0}}; \hevolvein{\D[2]{x}=a}{t\leq\varepsilon})}}{x\leq m}}}
          }
        }
    }
  }
}
\end{sequentdeduction}
The first (bottom-most) step uses that \m{\phi\limply x\leq m} is provable in arithmetic.
From this, \irref{G} proves \(\dbox{\textit{car}_\varepsilon}{(\phi\limply x\leq m)}\), and, thus, \irref{K} proves \(\dbox{\textit{car}_\varepsilon}{\phi} \limply \dbox{\textit{car}_\varepsilon}{x\leq m}\).
The right conjunct in the top-most premise can be proven by a minor variation of the \dL proof for \rref{eq:car-single-braking-discovery}.
The \dL proof for the left conjunct in the premise works similarly, but requires slightly more involved arithmetic and the choice
\(\chi\mequiv 2b(x-m)\geq v^2+\big(A+b\big)\big(A\varepsilon^2+2\varepsilon v\big)\).
A full proof about a similar system can be found in our book \cite[Section 2.9]{Platzer10}.
\end{example}

\subsection{Soundness and Completeness} \label{sec:dL-complete}

The \dL calculus is \emph{sound} \cite{DBLP:journals/jar/Platzer08,DBLP:conf/lics/Platzer12b}, that is, every formula that is provable using the \dL axioms and proof rules is valid, i.e., true in all states.
That is, for all \dL formulas $\phi$:
\begin{equation}
  \infers\phi ~\text{implies}~ \entails\phi
  \label{eq:sound}
\end{equation}
Soundness should be \textit{sine qua non} for formal verification, but, for fundamental reasons \cite{DBLP:conf/hybrid/PlatzerC07,DBLP:journals/mst/Collins07}, is so complex for hybrid systems that it is sometimes inadvertently forsaken.
In logic, we ensure soundness easily just by checking it locally once for each axiom and proof rule.
Thus, no matter how complicated a proof, the proven \dL formula is valid, because it is a (complicated) consequence of lots simple valid proof steps.

More intriguingly, however, our logical setting also enables us to ask the converse: is the \dL proof calculus \emph{complete}, i.e., can it prove all that is true? That is, does the converse of \rref{eq:sound} hold?
A simple corollary to G\"odel's incompleteness theorem shows that already the fragments for discrete dynamical systems and for continuous dynamical systems are incomplete.
\begin{theorem}[Incompleteness \cite{DBLP:journals/jar/Platzer08}] \label{thm:dL-incomplete}
  \index{incomplete!dL@\dL}
  Both the discrete fragment and the continuous fragment of \dL are \dfn[axiomatize]{not effectively axiomatizable}, i.e., they have no sound and complete effective calculus, because natural numbers are definable\index{definable} in both fragments.
  \index{integer!arithmetic}
  \index{natural!number!definable}
\end{theorem}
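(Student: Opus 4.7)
The plan is to reduce to the classical fact that true first-order arithmetic $\mathrm{Th}(\naturals,0,1,+,\cdot)$ is not effectively axiomatizable (G\"odel-Tarski). Since the base theory of \dL is first-order real arithmetic, the symbols $+$ and $\cdot$ are already available for free; what is missing is the ability to restrict quantifiers to natural numbers. I would therefore show that each fragment can define a unary predicate $\mathrm{Nat}(n)$ meaning ``$n$ is a natural number''. Once this is in hand, every sentence of first-order arithmetic mechanically relativizes to a \dL sentence, so an effective sound and complete calculus for \dL would effectively enumerate true arithmetic, a contradiction.

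For the discrete fragment, the defining formula writes itself as a counting loop:
\[
  \mathrm{Nat}(n) \;\mequiv\; \ddiamond{\pupdate{\pumod{x}{0}};\,\prepeat{\pupdate{\pumod{x}{x+1}}}}{x = n}.
\]
By the semantics of $\prepeat{\cdot}$ as the union of all finite iterates, this formula is true exactly when $n$ is reached from $0$ after some number $k\in\naturals$ of increments, i.e., iff $n\in\naturals$.

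For the continuous fragment the challenge is to extract a countable discrete set from purely continuous polynomial dynamics. I would use the periodic trigonometric system $\D{s}=c,\,\D{c}=-s,\,\D{\tau}=1$ with initial state $(s,c,\tau)=(0,1,0)$, whose solution $(\sin\tau,\cos\tau,\tau)$ returns to $(0,1)$ in the $(s,c)$-plane at exactly the discrete times $\tau=2\pi k$ for $k\in\naturals$. The \dL formula
\[
  R(y) \;\mequiv\; \ddiamond{\pupdate{\pumod{s}{0}};\pupdate{\pumod{c}{1}};\pupdate{\pumod{\tau}{0}};\,\hevolve{\D{s}=c\syssep\D{c}=-s\syssep\D{\tau}=1}}{(s=0\land c=1\land \tau=y)}
\]
defines the set $S=\{2\pi k:k\in\naturals\}$; the constant $2\pi$ itself is definable as the least positive $y$ with $R(y)\land y>0$, and $\mathrm{Nat}(n)$ becomes $R(2\pi\cdot n)$. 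Arithmetic on $\naturals$ transfers to $S$ verbatim for $+$, and, after division by the now-definable constant $2\pi$, for $\cdot$ as well, yielding an isomorphic copy of $(\naturals,0,1,+,\cdot)$ inside the continuous fragment.

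The main obstacle is the continuous case: pinning down a countable discrete set using only polynomial ODEs. The trigonometric trick succeeds because $\sin$ and $\cos$ jointly satisfy a polynomial ODE system and their simultaneous zero pattern is purely discrete, but one must be careful that the period $2\pi$ is irrational and therefore accessible only via definability, not as a literal of the base language. Once $\mathrm{Nat}$ is available in both fragments, the standard relativization of arithmetic finishes the argument for both fragments simultaneously.
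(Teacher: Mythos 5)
Your proposal is correct and takes essentially the same route as the paper: the identical counting-loop formula for the discrete fragment, and the linear system $\D{s}=c$, $\D{c}=-s$, $\D{\tau}=1$ with a clock for the continuous fragment, reducing to G\"odel via definability of (a copy of) the naturals. The only differences are cosmetic: to stay strictly inside the continuous fragment the initializing assignments should be replaced by existentially quantified initial conditions $s=0\land c=1\land \tau=0$ as the paper does (that fragment's programs are pure differential equations), and where you detect full returns to $(0,1)$ (multiples of $2\pi$) and then divide out the definable period, the paper simply detects zeros of $s$ (multiples of $\pi$) and works with that isomorphic copy directly.
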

\begin{proof}
  G\"odel's incompleteness theorem \cite{Goedel_1931} applies to the discrete fragment of \dL, because natural numbers are definable in that fragment by repeated addition:
  \[
  \textit{nat}(n) ~\lbisubjunct~ \ddiamond{\pupdate{\umod{x}{0}};\prepeat{(\pupdate{\umod{x}{x+1}})}}{~x=n}
  \index{_nat_@$\textit{nat}$}
  \]
  \begin{figure}[tbh]
    \centering
    \begin{tikzpicture}
      \begin{scope}
        \draw[->] (-0.1,0) -- (8.3,0) node[right] {$\tau$} coordinate(t axis);
        \draw[->] (0,-1.2) -- (0,1.2) node[above] {$s$} coordinate(x axis);
      \end{scope}
      \fill[draw=blue,fill=blue!20,domain=0:7.8539815,smooth] plot[id=sin2x] function{sin(2*x)} |- (0,0);
      \foreach \A in {0,1.5707963,3.1415926,...,9}
      {
        \fill[vred] (\A,0) circle (2pt);
      }
      \draw (1.5707963-0.15,0) node[below] {$\phantom{1}\pi$} -- ++(0,0);
      \foreach \A in {3,5}
      {
        \draw (\A*1.5707963-0.15,0) node[below] {$\A\pi$} -- ++(0,0);
      }
      \foreach \A in {2,4}
      {
        \draw (\A*1.5707963+0.15,0) node[below] {$\A\pi$} -- ++(0,0);
      }
    \end{tikzpicture}
    \caption{Characterization of~$\naturals$ as zeros of solutions of differential equations}
    \label{fig:dL-incomplete-sin}
  \end{figure}
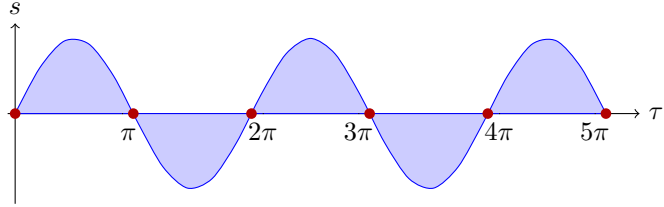
  G\"odel's incompleteness theorem \cite{Goedel_1931} applies to the continuous fragment of \dL, because an isomorphic copy of the natural numbers is definable using linear differential equations, which characterize solutions $\sin$ and $\cos$, whose zeros (see \rref{fig:dL-incomplete-sin}), as detected by $\tau$, correspond to natural numbers, scaled by $\pi$:
  \[
  \textit{nat}(n) ~\lbisubjunct~
  \lexists{s}{\lexists{c}{\lexists{\tau}{(s=0\land c=1\land\tau=0 \land
 \ddiamond{\hevolve{\D{s}={c}\syssep\D{c}={-s}\syssep\D{\htime}={1}}}{(s=0 \land \htime=n)})}}}
  \]
  \qedhere
\end{proof}
Incompleteness is not unexpected since hybrid systems contain a computationally complete sublanguage and because reachability of hybrid systems is not semidecidable \cite{DBLP:conf/lics/Henzinger96}.
Yet, it is instructive to understand the above simple proof based on a classical result about logic.

In logic, the suitability of an axiomatization can still be established by showing completeness relative to a fragment \cite{DBLP:journals/siamcomp/Cook78,DBLP:conf/stoc/HarelMP77}.
This \emph{relative completeness}, in which we assume we were able to prove valid formulas in a fragment and prove that we can then prove all others, tells us how subproblems are related computationally.
It tells us whether one subproblem dominates the others.
Standard relative completeness \cite{DBLP:journals/siamcomp/Cook78,DBLP:conf/stoc/HarelMP77}, however, which works relative to the data logic, is inadequate for hybrid systems, whose complexity comes from the dynamics, not the data logic, first-order real arithmetic, which is perfectly decidable \cite{tarski_decisionalgebra51}.

We have shown that both the original \dL sequent calculus \cite{DBLP:journals/jar/Platzer08} and the Hilbert-type calculus in \rref{fig:dL} \cite{DBLP:conf/lics/Platzer12b} are sound and complete axiomatizations of \dL relative to the continuous fragment (\FOD).
\FOD is the \emph{first-order logic of differential equations}, i.e., first-order real arithmetic augmented with formulas expressing properties of differential equations, that is, \dL formulas of the form \m{\dbox{\hevolve{\D{x}=\genDE{\theta}}}{F}} with a first-order formula~$F$.
Note that axioms \irref{B} and \irref{V} are not needed for the proof of the following theorem.

\begin{theorem}[Relative completeness of \dL \cite{DBLP:journals/jar/Platzer08,DBLP:conf/lics/Platzer12b}] \label{thm:dL-complete}
  \index{complete!relatively!dL@\dL}
  The \dL calculus is a \emph{sound and complete axiomatization} of hybrid systems relative to \FOD, i.e.,
  every valid \dL formula can be derived from \FOD tautologies:
  \[
  \entails\phi ~~\text{iff}~~ \text{\upshape Taut}_{\FOD} \infers \phi
  \]
\end{theorem}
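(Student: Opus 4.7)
The plan is to prove soundness and relative completeness separately. Soundness is local: for each axiom in \rref{fig:dL}, I would verify that both sides are true in the same states (using \rref{def:HP-transition} and the \dL semantics), and for each rule I would verify that validity of the premises forces validity of the conclusion. The only step that needs care is axiom \irref{evolveinb}, where I would check that reverting the flow of $\hevolve{\D{x}=\genDE{x}}$ along $\hevolve{\D{x}=-\genDE{x}}$ and bounding the reverse time by $\stime\geq t_0$ really does quantify over exactly the interval of forward evolution. For \irref{V} and \irref{B} the bound variable side conditions are essential.

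For relative completeness, I would follow the Cook/Harel template adapted to hybrid dynamics. The central auxiliary result is an \textbf{expressibility lemma}: for every \dL formula $\phi$ and every HP $\alpha$, there is a \FOD formula $\phi^\flat$ such that $\entails \phi\lbisubjunct\phi^\flat$. The atomic modal cases \m{\dbox{\pupdate{\pumod{x}{\theta}}}{\phi}}, \m{\dbox{\ptest{\ivr}}{\phi}}, and \m{\dbox{\hevolvein{\D{x}=\genDE{x}}{\ivr}}{\phi}} are direct (the last one is already in \FOD by definition once one unfolds \irref{evolveinb}). Compound cases for $\cup$ and $;$ distribute via \irref{choiceb} and \irref{composeb}. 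The one genuinely difficult case is $\dbox{\prepeat{\alpha}}{\phi}$, for which I would use a G\"odel-style encoding of finite sequences of states: encode a finite run $\nu_0,\nu_1,\ldots,\nu_n$ of $\alpha$ by a single real number (using the pairing techniques that make natural numbers definable, as in the proof of \rref{thm:dL-incomplete}), so that ``$\phi$ holds after every finite $\alpha$-iteration'' becomes a first-order quantification over such codes, each of whose transition steps $(\nu_i,\nu_{i+1})\in\iaccess[\alpha]{}$ is itself expressed using the inductive \FOD expression for $\alpha$.

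Once expressibility is in hand, the completeness argument is by induction on the structure of $\phi$, showing that every valid \dL formula is provable from $\text{Taut}_\FOD$. The nontrivial case is again $\dbox{\prepeat{\alpha}}{\psi}$: here I would take as loop invariant the expressibility witness $I \mdefeq (\dbox{\prepeat{\alpha}}{\psi})^\flat$ itself and use \irref{invind} (equivalently, \irref{I}). The two induction premises $I\limply\dbox{\alpha}{I}$ and $I\limply\psi$ are both valid (since $I$ characterizes exactly the $\alpha$-safe states) and, by the inductive hypothesis applied to the smaller modal formula $\dbox{\alpha}{I}$ and by the expressibility of $\psi$, they are provable from $\text{Taut}_\FOD$. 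The existential dual uses \irref{con} with a \FOD-expressible variant instead. Throughout, axioms \irref{K}, \irref{G}, \irref{MP}, \irref{gena} glue the subproofs together, and arithmetic reasoning is discharged by \irref{qear} on tautologies of \FOD.

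The main obstacle I anticipate is setting up the G\"odel encoding of state sequences for hybrid runs carefully enough that (i) each transition predicate $(\nu_i,\nu_{i+1})\in\iaccess[\alpha]{}$ is itself a \FOD formula with the code variables substituted in admissibly, and (ii) the resulting fixed-point characterization $I$ is provably an $\alpha$-invariant using only the axioms in \rref{fig:dL}. Proving this second derivability is the heart of the completeness argument, because it requires showing that the calculus can internalize the induction that the semantics performs; everything else is essentially structural bookkeeping over the grammar of HPs and \dL formulas.
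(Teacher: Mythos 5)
Your proposal is correct and follows essentially the same route as the proof the paper relies on (cited from \cite{DBLP:journals/jar/Platzer08,DBLP:conf/lics/Platzer12b}): local soundness checks, a Cook/Harel-style expressibility lemma reducing every \dL formula to \FOD via a real G\"odel encoding of finite state sequences (exploiting the \FOD-definability of $\naturals$ and handling evolution domains through \irref{evolveinb}), and then structural induction where the loop cases are discharged by \irref{I}/\irref{invind} with the expressed invariant and by \irref{C}/\irref{con} with an expressed variant. The only refinement in the original proofs is that provability of valid \FOD-to-\FOD statements about a program $\alpha$ is established by a separate induction on the structure of $\alpha$ before the induction on formula structure, which is the bookkeeping your sketch folds into one step.
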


This central result shows that we can prove properties of hybrid systems in the \dL calculus exactly as good as properties of differential equations can be proved.
One direction is obvious, because differential equations are part of hybrid systems, so we can only understand hybrid systems to the extent that we can reason about their differential equations.
We have shown the other direction by proving that all true properties of hybrid systems can be reduced effectively in the \dL calculus to elementary properties of differential equations.
Moreover, the \dL proof calculus for hybrid systems can perform this reduction constructively and, vice versa, provides a provably perfect lifting of every approach for differential equations to hybrid systems.

Another important consequence of this result is that decomposition can be successful in taming the complexity of hybrid systems.
The \dL proof calculus is strictly compositional.
All proof rules prove logical formulas or properties of HPs by reducing them to structurally simpler \dL formulas.
As soon as we understand that the hybrid systems complexity comes from a combination of several simpler aspects, we can, hence, tame the system complexity by reducing it to analyzing the dynamical effects of simpler parts.
This decomposition principle makes it possible for \dL proofs to scale to interesting systems in practice.
\rref{thm:dL-complete} gives the theoretical evidence why this principle works in general, not just in the case studies we have considered so far.
This is a good illustration of our principle of multi-dynamical systems and even a proof that the decompositions behind the multi-dynamical systems approach are successful.
Note that, even though \rref{thm:dL-complete} proves (constructively) that every true property of hybrid systems can be proved in the \dL calculus by decomposition from elementary properties of differential equations, it is still an interesting question which decompositions are most efficient.

For an even more surprising ``converse'' result proving a sound and complete axiomatization of \dL relative to the discrete fragment of \dL, we refer to recent work \cite{DBLP:conf/lics/Platzer12b}.
That proof is again a constructive reduction, proving that hybrid dynamics, continuous dynamics, and discrete dynamics are proof-theoretically equivalently reducible in the \dL calculus. Even though the nature of each kind of dynamics is fundamentally different, they still enjoy a perfect proof-theoretical correspondence.
In a nutshell, we have shown that we can proof-theoretically equate:
\[
\text{``}\textit{hybrid} = \textit{continuous} = \textit{discrete}\text{''}
\]
A discussion of this fundamental result about the nature of hybridness is beyond the scope of this paper; we refer to previous work \cite{DBLP:conf/lics/Platzer12b}.

\subsection{Differential Invariants} \label{sec:diffind}
{\def\inv{F}%

The \dL axiomatization in \rref{fig:dL} is sound and complete relative to \FOD.
But \rref{fig:dL} only has a very simple proof rule for differential equations (\irref{evolveb}) based on computing a solution of the differential equation; we refer to previous work for discretization techniques \cite{DBLP:conf/lics/Platzer12b}.
For proving more complicated differential equations by induction, \dL provides \emph{differential invariants} and \emph{differential variants} \cite{DBLP:journals/logcom/Platzer10}, which have been introduced in 2008 \cite{DBLP:journals/logcom/Platzer10} and later refined to a procedure that computes differential invariants in a fixed-point loop \cite{DBLP:conf/cav/PlatzerC08,DBLP:journals/fmsd/PlatzerC09}.
All premier proof principles for discrete loops are based on some form of induction.
\rref{thm:dL-complete} and its discrete converse \cite{DBLP:conf/lics/Platzer12b} prove that verification techniques that are successful for discrete systems generalize to continuous and hybrid systems and vice versa.
Differential invariants and differential variants can be considered as one (of many possible) constructive and practical consequences of this result.
Differential induction defines induction for differential equations.
It resembles induction for discrete loops (rule \irref{invind}) but works for differential equations instead and uses a \emph{differential formula} (\m{\subst[\D{\inv}]{\D{x}}{\theta}}, which we develop below) for the induction step.

\begin{center}
\begin{calculus}
  \cinferenceRule[diffind|DI]{differential invariant}
  {\linferenceRule[sequent]
    {\lsequent{\ivr}{\subst[\D{\inv}]{\D{x}}{\theta}}}
    {\lsequent{\inv}{\dbox{\hevolvein{\D{x}=\theta}{\ivr}}{\inv}}}
  }{}
\end{calculus}
\end{center}
\noindent
This \emph{differential induction} rule is a natural induction principle for differential equations.
The difference compared to ordinary induction for discrete loops is that the evolution domain constraint~$\ivr$ is assumed in the premise (because the continuous evolution is not allowed to leave its evolution domain constraint) and that the induction step uses the differential formula \m{\subst[\D{\inv}]{\D{x}}{\theta}} corresponding to formula $\inv$ and the differential equation \m{\hevolve{\D{x}=\theta}} instead of a statement that the loop body preserves the invariant.
Intuitively, the \emph{differential formula} \m{\subst[\D{\inv}]{\D{x}}{\theta}} captures the infinitesimal change of formula $\inv$ over time along \m{\hevolve{\D{x}=\theta}}, and expresses the fact that $\inv$ is only getting more true when following the differential equation \m{\hevolve{\D{x}=\theta}}.
The semantics of differential equations is defined in a mathematically precise but computationally intractable way using analytic differentiation and limit processes at infinitely many points in time.
The key point about differential invariants is that they replace this precise but computationally intractable semantics with a computationally effective, algebraic and syntactic total derivative \m{\D{\inv}} along with simple substitution of differential equations.
The valuation of the resulting computable formula \m{\subst[\D{\inv}]{\D{x}}{\theta}} along differential equations coincides with analytic differentiation.

\begin{definition}[Derivation] \label{def:derivation}
  The operator~\m{\DD{}\ignore{:\lterms{\Sigma{\cup}\D{\Sigma}}{V}\to\lterms{\Sigma{\cup}\D{\Sigma}}{V}}} that is defined as follows on terms is called \emph{syntactic (total) derivation}\index{derivation!syntactic}:
  \begin{subequations}
  \begin{align}
    \der{r} & = 0
      \hspace{2.1cm}\text{for numbers}~r\in\rationals
    \label{eq:Dconstant}\\
    \der{x} & =  \D{x}
      \hspace{2cm}\text{for variable}~x\label{eq:Dpolynomial}\\
    \der{a+b} & = \der{a} + \der{b}
    \label{eq:Dadditive}\\
    \der{a-b} & = \der{a} - \der{b}
    \label{eq:Dsubtractive}\\
    \der{a\cdot b} & = \der{a}\cdot b + a\cdot\der{b}
    \label{eq:DLeibniz}\\
    \der{a / b} & = (\der{a}\cdot b - a\cdot\der{b}) / b^2
    \label{eq:Dquotient}
  \end{align}
  \label{eq:Dterm}
  \end{subequations}
  \index{differential!symbol}%
  We extend it to (quantifier-free) first-order real-arithmetic formulas $F$ as follows:
  \begin{subequations}
  \begin{align}
    \der{F\land G} &\,\mequiv\, \der{F} \land \der{G}\\
    \der{F\lor G} &\,\mequiv\, \der{F} \land \der{G}
    \\
    \der{a\geq b} &\,\mequiv\, \der{a} \geq \der{b}
    \hspace{1cm}\text{accordingly for \m{<,>,\leq,=}}
  \end{align}
  \label{eq:Dformula}
  \end{subequations}
We abbreviate \m{\der{\inv}} by \m{\D{\inv}} and define \m{\subst[\D{\inv}]{\D{x}}{\theta}} as the result of substituting $\theta$ for $\D{x}$ in \m{\D{\inv}}, which is a Lie-type operator \cite{Lie93}.
\end{definition}
The conditions~\rref{eq:Dterm} define a derivation operator on terms that~\rref{eq:Dformula} lifts conjunctively to logical formulas.
It is important for the soundness of \irref{diffind} to define \m{\D{(F\lor G)}} as \m{\D{F} \land \D{G}}, because both subformulas need to satisfy the induction step, it is not enough if $F$ satisfies the induction step $\D{F}$ and $G$ holds initially; see \cite{DBLP:journals/logcom/Platzer10,Platzer10} for details and alternatives. 
We assume for simplicity that formulas use dualities like \m{\lnot(a\geq b) \mequiv a<b} to avoid negations; see \cite{DBLP:journals/logcom/Platzer10,Platzer10} for a discussion of this and the $\neq$ operator.
For a discussion why this definition of differential invariants gives a sound approach and many other attempts would be unsound, we refer to previous work \cite{DBLP:journals/logcom/Platzer10,Platzer10}.
We also refer to previous work \cite{Platzer10} for discussions about which weaker conditions like \m{\der{a>b} \,\mequiv\, \der{a}\geq\der{b}} are sound.
It is crucial, however, to realize that it would generally be unsound to assume $\inv$ or the boundary of $\inv$ in the premise.
Otherwise, we could draw the counterfactual conclusion that \m{-(x-y)^2\geq0} is an invariant of \m{\hevolve{\D{x}=1\syssep\D{y}=y}}.
See previous work \cite{DBLP:journals/logcom/Platzer10,Platzer10,DBLP:journals/lmcs/Platzer12} for an explanation under which circumstances this assumption would be sound.

\begin{wrapfigure}{r}{2.8cm}
  \vspace*{-\baselineskip}
  \includegraphics[width=2.6cm]{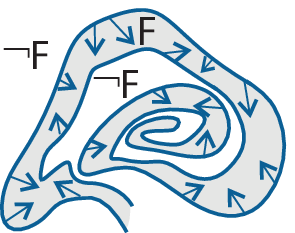} 
  \caption{Differential invariant $F$ for safety}
  \label{fig:diffind}
\end{wrapfigure}
The basic idea behind rule \irref{diffind} is that the premise of \irref{diffind} shows that the total derivative~\m{\D{\inv}} holds within evolution domain~$\ivr$ when substituting the differential equations \m{\D{x}=\theta} into~$\D{\inv}$.
If~$\inv$ holds initially (antecedent of conclusion), then~$\inv$ itself always stays true (succedent of conclusion).
Intuitively, the premise gives a condition showing that, within~$\ivr$, the total derivative~$\D{\inv}$ along the differential constraints is pointing inwards or transversally to~$\inv$ but never outwards to~$\lnot \inv$; see \rref{fig:diffind} for an illustration.
Hence, if we start in~$\inv$ and, as indicated by~$\D{\inv}$, the local dynamics never points outside~$\inv$, then the system always stays in~$\inv$ when following the dynamics.
Observe that, unlike \m{\D{\inv}}, the premise of \irref{diffind} is a well-formed formula, because all differential expressions are replaced by non-differential terms when forming~\m{\subst[\D{\inv}]{\D{x}}{\theta}}.
It is possible and insightful to give a meaning to the differential formula \m{\D{\inv}} itself in differential states \cite{DBLP:journals/logcom/Platzer10}.
Crucial for soundness is the result that the valuation of syntactic derivatives along differential equations coincides with analytic differentiation.
This derivation lemma plays a role similar to the substitution lemma in classical logic, but for derivations and differential equations.
For simplicity, we report a derivation lemma that already combines differential substitution \cite[Lemma 2]{DBLP:journals/logcom/Platzer10} with the derivation lemma \cite[Lemma 1]{DBLP:journals/logcom/Platzer10}.
These results form the basis for more general differential transformations \cite[Sect. 3.5]{Platzer10}.

{\newcommand{\crf}{c}%
\begin{lemma}[{Derivation lemma \cite{DBLP:journals/logcom/Platzer10}}] \label{lem:derivationLemma}
  \newcommand{\Iff}{\iconcat[state=\varphi(t)]{\I}}%
  \newcommand{\Ifz}{\iconcat[state=\varphi(\zeta)]{\I}}%
  Let~$\hevolvein{\D{x}=\theta}{\ivr}$ be a differential equation with evolution domain constraint~$\ivr$ and
  let~$\iget[flow]{\If}:[0,r]\to(V\to\reals)$ be a corresponding solution of duration~$r>0$, where $V$ is the set of variables.
  Then for all terms~$\crf$ and all~$\zeta\in[0,r]:$
  \begin{displaymath}
    \D[t]{\,{\ivaluation{\Iff}{\crf}}}(\zeta) = \ivaluation{\Ifz}{\subst[\D{\crf}]{\D{x}}{\theta}}
    \enspace.
  \end{displaymath}
  In particular,~\m{\ivaluation{\Iff}{\crf}} is continuously differentiable.
\end{lemma}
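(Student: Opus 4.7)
The plan is to proceed by structural induction on the term $c$, matching the inductive definition of the syntactic derivation operator $\der{\cdot}$ in \rref{def:derivation}. The ODE hypothesis is used only in the variable base case; every subsequent step is a straightforward combination of a standard rule of differential calculus and the fact that term valuation is a homomorphism with respect to $+,-,\cdot,/$.

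For the base cases: if $c$ is a rational constant $r$, then $\der{c} = 0$ by \rref{eq:Dconstant}, which evaluates to $0$ in every state and matches the time-derivative of the constant function $t \mapsto r$. If $c$ is a variable $y$, then $\der{y} = \D{y}$ by \rref{eq:Dpolynomial}, so after substitution $\subst[\D{c}]{\D{x}}{\theta}$ denotes the component $\theta_y$ of the vector field (or $0$ if $y$ does not appear in the differential equation system). On the other hand, $\D[t]{\,\ivaluation{\iconcat[state=\varphi(t)]{\I}}{y}}(\zeta)$ unfolds to the time-derivative of the $y$-coordinate of $\iget[flow]{\If}$ at $\zeta$, which equals the evaluation of $\theta_y$ at $\varphi(\zeta)$ exactly because $\iget[flow]{\If}$ is a solution of $\D{x}=\theta$ in the sense of \rref{def:HP-transition}. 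This is the one place where the ODE hypothesis is genuinely used.

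For the inductive step, assume the claim holds for proper subterms $a$ and $b$. Each compound case reduces to a standard differentiation rule applied pointwise, together with the induction hypothesis. The sum and difference cases follow from linearity of $\D[t]{}$ and match \rref{eq:Dadditive} and \rref{eq:Dsubtractive}. The product case uses the Leibniz rule applied to the time-derivative of the product of two real-valued functions; recombining via the induction hypothesis on each factor yields exactly the evaluation of $\subst[\D{a}]{\D{x}}{\theta}\cdot b + a\cdot \subst[\D{b}]{\D{x}}{\theta}$ at $\varphi(\zeta)$, matching \rref{eq:DLeibniz}. The quotient case $c = a/b$ is analogous, using the classical quotient rule and \rref{eq:Dquotient}, provided $b$ is nonzero along the solution so that the expression is well-defined, in line with the paper's convention on guarding against division by zero.

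Continuous differentiability then follows at once: the right-hand side is obtained by evaluating a fixed polynomial (or rational) expression at $\varphi(\zeta)$, and $\varphi(\cdot)$ is continuous on $[0,r]$, so the composition is continuous in $\zeta$. I anticipate no serious obstacle. The only subtle points are the variable base case, which must carefully extract the coordinatewise time-derivative from the semantic definition of a solution of the vector differential equation, and, in the quotient case, the routine-but-necessary exclusion of singularities where the denominator vanishes.
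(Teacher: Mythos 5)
Your proposal is correct and follows essentially the same route as the proof this lemma refers to (the paper itself defers the proof to the cited work): structural induction on the term, with the ODE hypothesis entering only in the variable base case (and the convention $\D{y}=0$ for variables without a differential equation), the standard sum, Leibniz, and quotient rules in the compound cases, and continuity of the derivative following because it is the valuation of a term along the continuous solution $\varphi$. The only cosmetic difference is that you prove the combined statement (derivation plus differential substitution) in one induction rather than splitting it into two lemmas, which changes nothing essential.
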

}
\begin{example}[Rotational dynamics] \label{ex:rotational}
The rotational dynamics \m{\hevolve{\D{x}=y\syssep\D{y}=-x}} is complicated to the extent that the solution involves trigonometric functions, which are generally outside decidable classes of arithmetic (see proof of \rref{thm:dL-incomplete}).
Yet, we can easily prove properties about the solution using \irref{diffind} and decidable polynomial arithmetic.
As a simple example, we can prove that \m{x^2+y^2\geq p^2} is a differential invariant of the dynamics using the following \dL proof:
\renewcommand{\arraystretch}{1.2}%
\begin{sequentdeduction}[array]
 \linfer[diffind]
 {\linfer
   {\linfer[qear]
     {\lclose}
     {\lsequent{}{2xy+2y(-x)\geq0}}
   }
   {\lsequent{}{\subst[(2x\D{x}+2y\D{y}\geq0)]{\D{x}}{y}\subst[\,]{\D{y}}{-x}}}
 }
 {\lsequent{x^2+y^2\geq p^2}{\dbox{\hevolve{\D{x}=y\syssep\D{y}=-x}}{x^2+y^2\geq p^2}}}
\end{sequentdeduction}
\end{example}
\begin{example}[Quartic dynamics] \label{ex:quartic}
The following simple \dL proof uses \irref{diffind} to prove an invariant of a quartic dynamics.
\renewcommand{\arraystretch}{1.6}%
  \begin{sequentdeduction}[array]
    \linfer[diffind]
    {\linfer
      {\linfer[qear]
        {\lclose}
        {\lsequent{a\geq0}{2x^2((x-3)^4+a)\geq0}}
      }
      {\lsequent{a\geq0}{\subst[(2x^2\D{x}\geq0)]{\D{x}}{(x-3)^4+a}}}
    }
    {\lsequent{x^3\geq-1}{\dbox{\hevolvein{\D{x}=(x-3)^4+a}{a\geq0}}{x^3\geq-1}}}
  \end{sequentdeduction}
  Observe that rule \irref{diffind} directly makes the evolution domain constraint \m{a\geq0} available as an assumption in the premise, because the continuous evolution is never allowed to leave it.
  This is useful if we have a strong evolution domain constraint or can make it strong during the proof, which is what we consider in \rref{sec:diffcut}.
\end{example}

\begin{counterexample}[Negative equations] \label{cex:diffind-neq}
  It is crucial for soundness that we do not define \m{\der{a\neq b}} to be \m{\der{a}\neq\der{b}}.
  Otherwise we could draw the wrong conclusion that \m{x\neq0} is an invariant of \m{\hevolve{\D{x}=1}}
  from the fact that \m{\der{x}\neq0}, i.e., \m{1\neq0}.
  This would be counterfactual, because variable~$x$ can reach \m{x=0} without its derivative ever being~$0$.
  In fact, $x$ can only reach \m{x=0} from an initial state \m{x\neq0} if its derivative is nonzero at some point.
  Instead, we could define \m{\der{a\neq b} \,\mequiv\, \der{a}=\der{b}} if needed.
  Intuitively, if $a$ and $b$ have the same derivative, then, $a\neq b$ is an invariant if it holds initially.
  This definition is already included, because we can equivalently rewrite \m{a\neq b} to \m{a>b \lor a<b} and use the weaker condition \m{\der{a>b} \,\mequiv\, \der{a}\geq\der{b}} to obtain \m{\der{a>b \lor a<b} \,\mequiv\, \der{a}\geq \der{b} \land \der{a}\leq \der{b}},
  which is equivalent to \m{\der{a}=\der{b}}.
  Sometimes, it can be more efficient to split the reasoning into two cases, the case where $a>b$ and the case where $a<b$ and consider them separately to prove that $a>b$ is an invariant if it was true initially and that $a<b$ is an invariant if that was true initially, instead of proving the stronger condition \m{\der{a\neq b} \,\mequiv\, \der{a}=\der{b}}.
\end{counterexample}

More advanced uses of differential invariants can be found in previous work \cite{DBLP:journals/logcom/Platzer10,DBLP:conf/cav/PlatzerC08,Platzer10,DBLP:conf/fm/PlatzerC09,DBLP:conf/icfem/PlatzerQ09,DBLP:journals/lmcs/Platzer12}.
Differential dynamic logic proofs with differential invariants have been instrumental in enabling the verification of more complicated hybrid systems, including separation properties in complex curved flight collision avoidance maneuvers for air traffic control \cite{Platzer10,DBLP:conf/fm/PlatzerC09}, advanced safety, reactivity and controllability properties of train control systems with disturbance and PI controllers \cite{Platzer10,DBLP:conf/icfem/PlatzerQ09}, and properties of electrical circuits \cite{Platzer10}.
Differential invariants are also the proof technique of choice for differential inequalities, differential-algebraic equations, and differential equations with disturbances \cite{DBLP:journals/logcom/Platzer10,Platzer10}.

Differential invariants enjoy closure properties, e.g.:
\begin{lemma}[Closure properties of differential invariants \cite{DBLP:journals/logcom/Platzer10}]
  Differential invariants are closed under conjunction, differentiation, and propositional equivalences.
\end{lemma}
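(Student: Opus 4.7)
The plan is to handle each of the three closure properties by a direct appeal to the rule \irref{diffind} together with the inductive definition of the derivation operator $\der{\cdot}$ from \rref{def:derivation}. In each case, the task reduces to showing that the induction premise required by \irref{diffind} for the combined/transformed formula follows from the induction premises witnessing invariance of the given formulas.

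For conjunction, suppose $F$ and $G$ are differential invariants of \m{\hevolvein{\D{x}=\theta}{\ivr}}. Then by \irref{diffind}, the premises \(\lsequent{\ivr}{\subst[\D{F}]{\D{x}}{\theta}}\) and \(\lsequent{\ivr}{\subst[\D{G}]{\D{x}}{\theta}}\) are both derivable. The identity \(\der{F\land G} \mequiv \der{F} \land \der{G}\) holds by the defining clause for conjunction in \rref{eq:Dformula}, and substitution commutes with $\land$, so \(\subst[\D{(F\land G)}]{\D{x}}{\theta} \mequiv \subst[\D{F}]{\D{x}}{\theta} \land \subst[\D{G}]{\D{x}}{\theta}\). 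Propositional combination of the two premises then yields the \irref{diffind} premise for $F \land G$.

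For propositional equivalences, the key observation is that the derivation operator on formulas collapses the distinction between $\land$ and $\lor$: both \(\der{F\land G}\) and \(\der{F\lor G}\) are defined as \(\der{F}\land\der{G}\). An induction on the positive-fragment propositional proof of \(F \lbisubjunct G\) (built from associativity, commutativity, idempotence, and distributivity of $\land$ and $\lor$) shows that the differential formulas $\der{F}$ and $\der{G}$ are propositionally equivalent in the standard sense. Hence validity of the \irref{diffind} premise for $F$ transfers to one for $G$, making $G$ a differential invariant whenever $F$ is. (As \rref{cex:diffind-neq} emphasizes, one must avoid reasoning across $\neq$ or naive negation; the equivalences considered here live in the positive fragment over the comparison atoms permitted by \rref{def:derivation}.)

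For differentiation, the claim is that the total derivative of a differential invariant, taken along the vector field, is itself invariant along the flow. Concretely, if \(F\) is a differential invariant of \m{\hevolvein{\D{x}=\theta}{\ivr}}, then the ground formula \(\subst[\D{F}]{\D{x}}{\theta}\) holds at every state reachable along the ODE. This follows from the Derivation Lemma (\rref{lem:derivationLemma}), which equates the syntactic substituted derivative with the analytic time derivative of the valuation along the solution: the \irref{diffind} premise ensures this quantity is nonnegative (or signed appropriately) throughout $\ivr$, which in turn allows us to add \(\subst[\D{F}]{\D{x}}{\theta}\) as an assumption about the trajectory and iterate differential induction on it. The main subtlety, and what I expect to be the principal obstacle, is pinning down the precise statement of ``closure under differentiation'' — specifically, making sure the substitution \(\subst[\cdot]{\D{x}}{\theta}\) is admissible and that the fresh differential symbols introduced by \(\der{\cdot}\) are eliminated before a new application of \irref{diffind}, so that the resulting formula is a well-formed first-order arithmetic formula of the shape required by the rule.
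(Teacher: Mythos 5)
The survey itself does not spell out a proof of this lemma (it defers to the cited prior work), so I can only measure your argument against the definitions it does give (\rref{def:derivation}, rule \irref{diffind}). Your conjunction case is correct and is the standard argument: \(\der{F\land G}\mequiv\der{F}\land\der{G}\), substitution of $\theta$ for $\D{x}$ distributes over $\land$, so the two \irref{diffind} premises combine propositionally. The propositional-equivalence case, however, has a real gap. Your induction ranges over proofs built from associativity, commutativity, idempotence and distributivity, and inside that fragment your key claim is true — but only because those transformations never change the \emph{set of atomic formulas}, and by \rref{def:derivation} the formula \(\der{F}\) is just the conjunction of the derivations of the atoms of $F$ (both $\land$ and $\lor$ are mapped to $\land$). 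Those four laws do not generate all propositional equivalences: absorption, e.g. \(A\lor(A\land B)\lbisubjunct A\), changes the atom set, and there the claim ``\(\der{F}\) and \(\der{G}\) are propositionally equivalent'' fails concretely. With \(A\mequiv x\geq0\), \(B\mequiv -x\geq0\), and \m{\hevolve{\D{x}=1}}, the formula $A$ passes the \irref{diffind} premise (\(1\geq0\)), while the propositionally equivalent \(A\lor(A\land B)\) has premise \(1\geq0\land-1\geq0\), which is false. So as written you prove a strictly weaker closure (equivalences preserving the atoms); to cover equivalences that introduce or eliminate atoms you need a different mechanism, e.g. transferring the already proved invariance of $F$ to $G$ through the modality using \irref{K} and \irref{G} together with the propositional implications between $F$ and $G$, rather than re-running \irref{diffind} on $G$ itself.

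The differentiation case is the weakest part, as you yourself suspect. You run together two different statements: (i) that \(\subst[\D{F}]{\D{x}}{\theta}\) holds at every state reached along the evolution — this is immediate from the \irref{diffind} premise \(\ivr\limply\subst[\D{F}]{\D{x}}{\theta}\) together with the fact that the flow never leaves $\ivr$ (essentially \irref{diffweak}); \rref{lem:derivationLemma} is not needed for it — and (ii) that \(\subst[\D{F}]{\D{x}}{\theta}\) is itself a differential invariant on which \irref{diffind} can be ``iterated''. For (ii) you give no argument, and none follows from what you have: the \irref{diffind} premise for \(\subst[\D{F}]{\D{x}}{\theta}\) concerns second derivatives and is not a consequence of the premise for $F$. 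For instance, with \m{\hevolvein{\D{x}=y\syssep\D{y}=-1}{y\geq0}} and \(F\mequiv x\geq0\), the premise for $F$ is \(y\geq0\limply y\geq0\) and holds, while \(\subst[\D{F}]{\D{x}}{y}\mequiv y\geq0\) has premise \(y\geq0\limply-1\geq0\), which fails (here \(y\geq0\) is an invariant only via the evolution domain, i.e. by \irref{diffweak}, not by \irref{diffind}). So the step ``add it as an assumption and iterate differential induction'' does not go through as stated; you first need to fix precisely what ``closed under differentiation'' asserts — which is exactly the point your proposal leaves open — and then prove that statement rather than appealing to an iteration of \irref{diffind} whose premise you have not discharged.
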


\begin{wrapfigure}{r}{0.5\textwidth}
  \includegraphics[width=0.5\textwidth]{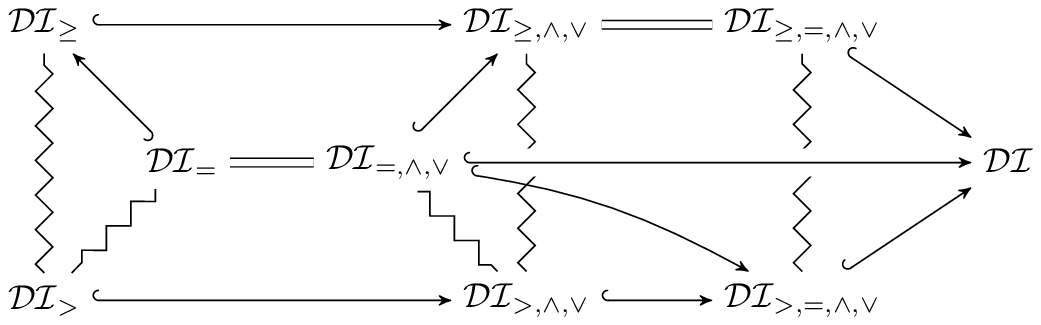}
  \caption{Differential invariance chart}
  \label{fig:diffind-chart}
\end{wrapfigure}
We refer to previous work \cite{DBLP:journals/lmcs/Platzer12,DBLP:journals/logcom/Platzer10} for more details on the structure of differential invariants and a complete investigation of the relative deductive power of several classes of differential invariants; see \rref{fig:diffind-chart} for an overview of classes of differential invariants restricted to the operators as indicated, where strict inclusions of the deductive power are indicated by $\hookrightarrow$, equivalences of deductive power are indicated by $=$, and incomparable deductive powers are indicated by \tikz{\draw[decorate,decoration=zigzag] (0,0) -- + (0.6,0);}.

We also refer to previous work \cite{DBLP:journals/logcom/Platzer10,Platzer10} for the technique of \emph{differential axiomatization}, which is useful for transforming sophisticated non-polynomial differential equations into polynomial differential equations by introducing new variables.
This is beneficial because, even though the solutions of the resulting polynomial differential equations are still equally complicated, we never need the solutions when working with differential invariants.
Differential invariants depend on the right-hand side of the differential equations, which is then polynomial and, thus, leads to decidable arithmetic.

\subsection{Differential Variants} \label{sec:difffin}

\newcommand{\wnot}[1]{{\sim}#1}%
\newcommand{\crf}{c}%
\emph{Differential variants} \cite{DBLP:journals/logcom/Platzer10} use ideas similar to those behind differential invariants, except that they use progress arguments so that differential variants can be used to prove formulas of the form \m{\ddiamond{\hevolvein{\D{x}=\genDE{x}}{\ivr}}{\inv}}.
That is, differential variants prove that the system can make progress along \m{\D{x}=\genDE{x}} to finally reach $\inv$ without having left~$\ivr$ before; see the right side of \rref{fig:diffind} for an illustration.
\begin{center}
\begin{calculus}
  \cinferenceRule[difffin|DV]{differential variant}
  {\linferenceRule[sequent]
        {\lsequent%
         {} {\lexists{\varepsilon{>}0}{\lforall{x}{(\lnot F\land\ivr \limply
            \subst[(\D{F}\geq\varepsilon)]{\D{x}}{\theta})}}}}
        {\lsequent{\dbox{\hevolvein{\D{x}=\genDE{x}}{\wnot{F}}}{\ivr}}
          {
            \ddiamond{\hevolvein{\D{x}=\genDE{x}}{\ivr}}{F}}}
  }{}
\end{calculus}
\end{center}

\begin{wrapfigure}{r}{2.8cm}
  \includegraphics[width=2.6cm]{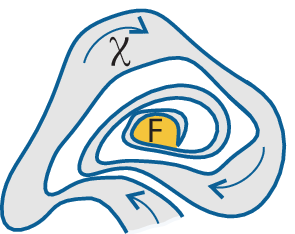} 
  \caption{Differential variant for progress}
  \label{fig:difffin}
\end{wrapfigure}
In rule \irref{difffin}, \m{\D{F}\geq\varepsilon} is a mnemonic notation for replacing all occurrences of inequalities~\m{a\geq b} in~$\D{F}$ with~\m{a\geq b+\varepsilon} and~\m{a>b} by~\m{a>b+\varepsilon}\internal{a\geq b+\varepsilon would be sufficient in both cases}
(accordingly for~\m{\leq,>,<}).
Intuitively, the premise expresses that, wherever~$\ivr$ holds but~$F$ does not yet hold, the total derivative is pointing towards~$F$; see right side of \rref{fig:difffin}.
Especially,~\m{\D{F}\geq\varepsilon} guarantees a minimum progress rate of~$\varepsilon$ towards~$F$ along the dynamics.
To further ensure that the continuous evolution towards~$F$ remains within~$\ivr$, the antecedent of the conclusion shows that~$\ivr$ holds \emph{until}~$F$ is attained, which can again be proven using \irref{diffind}.
Overall, the premise of rule \irref{difffin} shows that the dynamics makes progress (at least some~$\varepsilon$) toward~$F$, and the antecedent shows that the dynamics does not leave the evolution domain restriction~$\ivr$ on the way to~$F$.
In this context,~\m{\wnot{F}}\index{__~F@${\sim}F$} is a shorthand notation for \emph{weak negation}\index{weak~negation}, i.e., the operation that behaves like~$\lnot$, except that~\m{\wnot{(a\geq b)} \mequiv a\leq b} and~\m{\wnot{(a>b)} \mequiv a\leq b}.
Unlike negation $\lnot F$, weak negation $\wnot{F}$ retains the boundary of~$F$, which is required in \irref{difffin} as~$\ivr$ needs to continue to hold (including the boundary of~$F$) until~$F$ is reached.
Especially, for rule \irref{difffin}, invariant~$\ivr$ is not required to hold after~$F$ has been reached successfully.
The operations~\m{\D{F}\geq\varepsilon} and~$\wnot{F}$ are defined accordingly for other inequalities (in rule~\irref{difffin}, we do not permit~$F$ to contain equalities, because they could lead to unsoundness).
We refer to previous work \cite{DBLP:journals/logcom/Platzer10,Platzer10} for details.
Note that the order of quantifiers in \irref{difffin} is crucial for soundness \cite{DBLP:journals/logcom/Platzer10,Platzer10} to avoid Zeno progress that never reaches $F$ despite always getting closer.

\begin{example}[Progress discovery]
  Consider the simple property \m{\ddiamond{\hevolve{\D{x}=a}}{x\geq b}}, i.e., that we can finally reach region \m{x\geq b}, when following the dynamics \m{\hevolve{\D{x}=a}} long enough.
  We analyze this \dL formula in the following \dL proof:
     \renewcommand{\arraystretch}{1.2}%
  \begin{sequentdeduction}[array]
  \linfer[difffin]
    {\linfer
      {\linfer[qeer]
        {\lsequent{} {a>0}}
        {\lsequent{} {\lexists{\varepsilon{>}0}{\lforall{x}{(x\leq b \limply a\geq\varepsilon)}}}}
      }
      {\lsequent{} {\lexists{\varepsilon{>}0}{\lforall{x}{(x\leq b \limply \subst[(\D{x}\geq\varepsilon)]{\D{x}}{a})}}}}
    }
    {\lsequent{} {\ddiamond{\hevolve{\D{x}=a}}{x\geq b}}}
  \end{sequentdeduction}
  As the \dL proof reveals, the \dL formula is valid if \m{a>0}. This makes sense, because the system dynamics is then evolving towards \m{x\geq b}; otherwise it is evolving away from \m{x\geq b} (if \m{a<0}) or is constant (\m{a=0}).
  This proof constructs the answer that \m{a>0} is the required condition, which illustrates how constraints on parameters can be found by \dL proofs.
  For the above proof, we do not need to solve the differential equations.
  Solving the differential equation would be trivial here for a constant~$a$, but is more involved when~$a$ is an arbitrary term.
  With \irref{difffin}, we just form the total differential of
  \m{F \mequiv x\geq b},
  which gives
  \m{\D{F} = \D{x}\geq \D{b}}.
  When we substitute in the differential equations \m{\hevolve{\D{x}=a}}, we obtain
  \m{\subst[\D{F}]{\D{x}}{a} \mequiv a\geq0}.
  Consequently,
  \m{\subst[(\D{F}\geq\varepsilon)]{\D{x}}{a}}
  gives \m{a\geq\varepsilon}.
  If we prove \m{a\geq\varepsilon} holds for one common minimum progress \m{\varepsilon>0}, then the system makes some minimum progress towards the goal and will reach it in finite time.
  This even holds if we restrict the progress condition to all~$x$ that have not yet reached \m{x\geq b} or are on the boundary of \m{x\geq b}, which is the assumption \m{x\leq b} in the premise.
\end{example}

\subsection{Differential Cuts} \label{sec:diffcut}

\emph{Differential cuts} \cite{DBLP:journals/logcom/Platzer10} are another fundamental proof principle for differential equations. They can be used to strengthen assumptions in a sound way:

\begin{center}
\begin{calculus}
  \cinferenceRule[diffcut|DC]{differential cut}
  {\linferenceRule[sequent]
    {\lsequent{\inv}{\dbox{\hevolvein{\D{x}=\theta}{\ivr}}{C}}
    &&\lsequent{\inv}{\dbox{\hevolvein{\D{x}=\theta}{(\ivr\land C)}}{\inv}}}
    {\lsequent{\inv}{\dbox{\hevolvein{\D{x}=\theta}{\ivr}}{\inv}}}
  }{}
\end{calculus}
\end{center}
\noindent
The differential cut rule works like a cut, but for differential equations.
In the right premise, rule \irref{diffcut} restricts the system evolution to the subdomain \m{\ivr \land C} of $\ivr$, which restricts the system dynamics to a subdomain but this change is a pseudo-restriction, because the left premise proves that the extra restriction $C$ on the system evolution is an invariant anyhow (e.g. using rule \irref{diffind}).
Rule \irref{diffcut} is special in that it changes the dynamics of the system (it adds a constraint to the system evolution domain region that the resulting system is never allowed to leave by \rref{def:HP-transition}), but it is still sound, because this change does not reduce the reachable set.
The benefit of rule \irref{diffcut} is that $C$ will (soundly) be available as an extra assumption for all subsequent \irref{diffind} uses on the right premise of \irref{diffcut}.
The differential cut rule \irref{diffcut} can be used to strengthen the right premise with more and more auxiliary differential invariants $C$ that cut down the state space and will be available as extra assumptions to prove the right premise, once they have been proven to be differential invariants in the left premise.
\begin{example}[Multidimensional nonlinear dynamics]
\def\tmpa{y^5}%
\def\tmpDy{y^2}%
\begin{figure*}[tb]
\advance\leftskip-0.5cm
  \begin{minipage}{\textwidth}\footnotesize
\renewcommand{\linferPremissSeparation}{~~}%
\def\tmpa{y^5}%
\def\tmpDy{y^2}%
\renewcommand{\arraystretch}{1.6}%
\begin{sequentdeduction}[array]
  \linfer[diffcut]
    {\linfer[diffind]
    {\linfer
      {\linfer[qear]
        {\lclose}
        {\lsequent{}{5y^4y^2\geq0}}
      }
      {\lsequent{}{\subst[(5y^4\D{y}\geq0)]{\D{x}}{(x-3)^4+\tmpa}\subst[\,]{\D{y}}{\tmpDy}}}
    }
    {\lsequent{\tmpa\geq0}{\dbox{\hevolve{\D{x}=(x-3)^4+\tmpa\syssep\D{y}=\tmpDy}}{\tmpa\geq0}}}
    !
    \linfer[diffind]
    {\linfer
      {\linfer[qear]
        {\lclose}
        {\lsequent{\tmpa\geq0}{2x^2((x-3)^4+\tmpa)\geq0}}
      }
      {\lsequent{\tmpa\geq0}{\subst[(2x^2\D{x}\geq0)]{\D{x}}{(x-3)^4+\tmpa}\subst[\,]{\D{y}}{\tmpDy}}}
    }
    {\lsequent{x^3\geq-1}{\dbox{\hevolvein{\D{x}=(x-3)^4+\tmpa\syssep\D{y}=\tmpDy}{\tmpa\geq0}}{x^3\geq-1}}}
    }
    {\lsequent{x^3\geq-1\land\tmpa\geq0}{\dbox{\hevolve{\D{x}=(x-3)^4+\tmpa\syssep\D{y}=\tmpDy}}{x^3\geq-1}}}
\end{sequentdeduction}
\end{minipage}
  \caption{Differential cut proof for multidimensional nonlinear dynamics}
  \label{fig:diffcut-ex}
\end{figure*}%
The proof in \rref{ex:quartic} depends on evolution domain constraint $a\geq0$.
If we replace variable $a$ in \rref{ex:quartic} by a term, say $\tmpa$, then we first need to use \irref{diffcut} with the choice $C\mequiv\tmpa\geq0$ and prove that $\tmpa\geq0$ is indeed an invariant of the dynamics before we can use the proof from \rref{ex:quartic}, which depends on the evolution domain constraint $y^5\geq0$.
In \rref{ex:quartic}, this is trivial since $y$ has no differential equation, so that we assume \m{\D{y}=0} by convention.
Figure~\ref{fig:diffcut-ex} shows a \dL proof for a more interesting case, where $y$ changes during the continuous evolution.
The proof uses \irref{diffcut} once and \irref{diffind} twice.
\end{example}
Besides \irref{diffind}, the following simple proof rule for \emph{differential weakening} also benefits from strengthening evolution domain constraints via \irref{diffcut}:

\centerline{
\begin{calculus}
  \cinferenceRule[diffweak|DW]{differential weakening}
  {\linferenceRule[sequent]
    {\lsequent{\ivr}{\inv}}
    {\lsequent{\inv}{\dbox{\hevolvein{\D{x}=\theta}{\ivr}}{\inv}}}
  }{}
\end{calculus}
}%

\noindent
This rule is obviously sound, because the system \m{\hevolvein{\D{x}=\theta}{\ivr}}, by definition, is never allowed to leave the evolution domain constraint $\ivr$ anyhow, hence, if $\ivr$ implies $\inv$, then $\inv$ is an invariant, no matter what the dynamics \m{\hevolve{\D{x}=\theta}} does.
The simple rule \irref{diffweak} alone cannot prove very interesting properties, because it only works when $\ivr$ is very informative.
It can, however, be useful in combination with \irref{diffcut} with which we can soundly restrict the evolution domain constraint to subregions.

Using differential cuts repeatedly in a process called \emph{differential saturation} has turned out to be extremely useful in practice and even simplifies the invariant search, because it leads to several simpler invariants to find and prove instead of a single complex property \cite{DBLP:conf/cav/PlatzerC08,DBLP:journals/fmsd/PlatzerC09,Platzer10}.

Differential cuts do not only help in practice, but are a fundamental proof principle in theory.
We have refuted the \emph{differential cut elimination hypothesis} \cite{DBLP:journals/logcom/Platzer10}.
Differential cuts have a simple intuition.
Similar to a cut in first-order logic, they can be used to first prove a lemma and then use it.
By the seminal cut elimination theorem of Gentzen \cite{Gentzen35I,Gentzen35}, standard logical cuts do not change the deductive power and can be eliminated, even if that may have significant cost \cite{DBLP:journals/jphil/Booloes84}.
Unlike standard cuts, however, differential cuts actually increase the deductive power \cite{DBLP:journals/lmcs/Platzer12}.
There are properties of differential equations that can only be proven using differential cuts, not without them.
Hence, differential cuts are a fundamental proof principle for differential equations.
\begin{theorem}[Differential cut power \cite{DBLP:journals/lmcs/Platzer12}] \label{thm:diffcut-power}
  The deductive power with differential cuts (rule \irref{diffcut}) exceeds the deductive power without differential cuts.
\end{theorem}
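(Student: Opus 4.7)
The plan is to establish this separation by exhibiting a concrete differential equation and a safety property that is provable in the full calculus but not in the subcalculus obtained by removing \irref{diffcut}. The positive direction follows from an explicit construction in the spirit of \rref{fig:diffcut-ex}: choose dynamics \m{\hevolve{\D{x}=\theta_1\syssep\D{y}=\theta_2}}, a postcondition $F$, and an auxiliary formula $C$ so that (i) $C$ is a differential invariant provable directly by \irref{diffind}, and (ii) $F$ is a differential invariant of the restricted system \m{\hevolvein{\D{x}=\theta_1\syssep\D{y}=\theta_2}{C}} but not of the unrestricted one. A single application of \irref{diffcut} with cut $C$, followed by two uses of \irref{diffind}, then closes the proof.

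The harder direction is to show that no derivation of the chosen formula exists without \irref{diffcut}. The strategy is to reduce this to a purely semantic non-existence statement about Lie derivatives along the unrestricted vector field. In the subcalculus without \irref{diffcut}, every rule that touches a differential equation sees the original, un-strengthened evolution domain \m{\ivr}, so the only remaining avenues for closing an invariance goal \m{F\limply\dbox{\hevolvein{\D{x}=\theta}{\ivr}}{F}} are \irref{diffind} and \irref{diffweak}, combined with the purely logical rules \irref{K}, \irref{G}, modus ponens, and first-order real arithmetic. A structural analysis of such proofs shows that closure forces the existence of a witness formula $G$ in the available term language with $F\limply G$, with $G$ implying $F$ on \m{\ivr}, and with the Lie-derivative condition \m{\ivr\limply\subst[\D{G}]{\D{x}}{\theta}} valid in real-closed fields. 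The task is then to design the example so that no such witness $G$ exists.

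To block every candidate $G$, the example is crafted so that the Lie derivative of any formula equivalent to $F$ on reachable states, when evaluated on the full evolution domain \m{\ivr}, picks up a sign-indeterminate term that only becomes non-negative once $C$ is assumed. A clean way to enforce this is to choose $F$ whose boundary is outward-pointing with respect to the vector field exactly at states where $C$ fails, while being inward-pointing or tangential where $C$ holds. Then on the full domain \m{\ivr} the required Lie-derivative inequality is violated at some offending point outside the $C$-region, whereas \irref{diffcut} legitimately shrinks the evolution domain to \m{\ivr\land C} and thereby removes precisely those offending points from the premise of \irref{diffind}.

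The main obstacle is the universal quantification over \emph{all} candidate witnesses $G$: the non-provability claim cannot rest on eliminating a few natural guesses, it must rule out every formula in the term language. I would handle this with a semantic characterization of the set of possible witnesses, for instance using a real Nullstellensatz-style argument to constrain which polynomials can have non-negative Lie derivatives along the given field on \m{\ivr}, or, more directly, a geometric argument that the reachable set of the unrestricted dynamics has a boundary point at which no locally sound algebraic witness can exist. The crux is this algebraic lower-bound argument about Lie derivatives; once it is in place, the gap between what \irref{diffcut} enables and what the cut-free fragment forbids is established.
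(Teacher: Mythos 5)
Your proposal lays out the right architecture --- exhibit one concrete formula that a single \irref{diffcut} (with a cut $C$ that is itself provable by \irref{diffind}) closes, and then show that the cut-free fragment cannot prove it --- and this is indeed the shape of the argument in the work this survey cites for \rref{thm:diffcut-power} (the survey itself gives no proof and defers to \cite{DBLP:journals/lmcs/Platzer12}). But as it stands the proposal has a genuine gap: the entire mathematical content of the theorem is the non-provability half, and that is exactly what you defer. You never fix a concrete vector field, postcondition $F$, or cut $C$, and the argument that \emph{every} candidate witness $G$ in the language fails is only gestured at (``a real Nullstellensatz-style argument \ldots or, more directly, a geometric argument''). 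The cited proof does precisely this hard work: it pins down one specific polynomial system and invariance property, and then rules out all quantifier-free real-arithmetic formulas as differential invariants by a semantic analysis of the signs their Lie derivatives must take at suitably chosen states of the unrestricted dynamics. Until that universal blocking argument is actually carried out for a concrete example, no separation has been established --- designing an example whose boundary is ``outward-pointing exactly where $C$ fails'' constrains $F$ itself, but says nothing yet about the unboundedly many other formulas $G$ that merely sit between the precondition and the postcondition.

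A second, smaller gap is your structural claim that any cut-free proof ``forces the existence of a witness formula $G$'' satisfying $F\limply G$, $G\limply F$ on \m{\ivr}, and the Lie-derivative condition. This is a lemma about the cut-free calculus, not an observation: you must first delimit precisely which rules the cut-free fragment contains (the result is relative to proving ODE properties by \irref{diffind}/\irref{diffweak}-style reasoning; if the solution axiom \irref{evolveb}, \irref{difffin}, or \irref{diffaux} were available, the statement and the example would have to be re-examined), and then prove that every derivation in that fragment factors through such a single witness $G$, including derivations that interleave \irref{K}, \irref{G}, monotonicity, and arithmetic in ways that do not obviously produce one invariant. Both this factorization lemma and the concrete counterexample with its exhaustive refutation of candidate invariants are what the proof actually consists of; your text correctly names them as the crux but does not supply them.
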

We refer to previous work \cite{DBLP:journals/lmcs/Platzer12} for details on the differential cut elimination hypothesis \cite{DBLP:journals/logcom/Platzer10}, the proof of its refutation \cite{DBLP:journals/lmcs/Platzer12}, and a complete investigation of the relative deductive power of several classes of differential invariants.
}

\subsection{Differential Auxiliaries} \label{sec:diffaux}

It is well-known that auxiliary variables may be necessary to conduct proofs about conventional discrete programs.
We have studied auxiliary differential variables, and have shown that some differential equation systems can only be proven after introducing auxiliary differential variables into the dynamics \cite{DBLP:journals/lmcs/Platzer12}.
That is, the addition of auxiliary differential variables increases the deductive power.
This is captured in the following proof rule \emph{differential auxiliaries} (\irref{diffaux}) for introducing auxiliary differential variables \cite{DBLP:journals/lmcs/Platzer12}:
\begin{center}
\begin{calculus}
\cinferenceRule[diffaux|DA]{differential auxiliary variables}
{\linferenceRule[sequent]
  {\lsequent{}{\phi\lbisubjunct\lexists{y}{\psi}}
  &\lsequent{\psi} {\dbox{\hevolvein{\D{x}=\theta\syssep\D{y}=\vartheta}{\ivr}}{\psi}}}
  {\lsequent{\phi} {\dbox{\hevolvein{\D{x}=\theta}{\ivr}}{\phi}}}
}{}%
\end{calculus}
\end{center}
Rule \irref{diffaux} is applicable if $y$ is a new variable (or vector of variables) and the new differential equation \m{\hevolve{\D{y}=\vartheta}} has global solutions on $\ivr$ (e.g., because term $\vartheta$ satisfies a Lipschitz condition \cite[Proposition 10.VII]{Walter:ODE}, which is definable in first-order real arithmetic and thus decidable).
Without a condition like this, adding \m{\D{y}=\vartheta} could limit the duration of system evolutions incorrectly.
In fact, it would be sufficient for the domains of definition of the solutions of \m{\hevolve{\D{y}=\vartheta}} to be no shorter than those of $x$.

Intuitively, rule \irref{diffaux} can help proving properties, because it may be easier to characterize how $x$ changes in relation to  auxiliary variables $y$ that co-evolve with a suitable differential equation (\m{\hevolve{\D{y}=\vartheta}}).
We have proved that the addition of auxiliary differential variables increases the deductive power, even in the presence of differential cuts \cite{DBLP:journals/lmcs/Platzer12}.
That is, there are system properties that can only be proven using auxiliary differential variables in the dynamics.
Hence, auxiliary differential variables are also a fundamental proof principle for differential equations.
\begin{theorem}[Auxiliary differential variable power \cite{DBLP:journals/lmcs/Platzer12}] \label{thm:diffaux-power}
  The deductive power with auxiliary differential variables (rule \irref{diffaux}) exceeds the deductive power without auxiliary differential variables even in the presence of differential cuts.
\end{theorem}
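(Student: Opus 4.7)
The plan is to prove \rref{thm:diffaux-power} by exhibiting a concrete invariance property that admits a \dL proof using \irref{diffaux} but no \dL proof in the calculus obtained by removing \irref{diffaux} while keeping \irref{diffcut}. A canonical candidate is
\[
  x>0 \limply \dbox{\hevolve{\D{x}=-x}}{x>0},
\]
which is valid since each solution $x_0 e^{-t}$ stays strictly positive when $x_0>0$, yet the set $\{x>0\}$ has $0$ on its boundary and is not cut out by any polynomial sub-level set invariant under the dynamics.

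For the positive direction I would exploit the equivalence $x>0 \lbisubjunct \lexists{y}{xy^2=1}$: from $x>0$ the witness $y:=1/\sqrt{x}$ works, and $xy^2=1$ forces $y\neq 0$ and $x=1/y^2>0$. I then apply \irref{diffaux} with auxiliary variable $y$ and ghost dynamics $\D{y}=y/2$, which is globally Lipschitz and hence satisfies the side condition. The remaining obligation is that $xy^2=1$ is an invariant of $\hevolve{\D{x}=-x\syssep\D{y}=y/2}$. A single \irref{diffind} step closes it, because the differential formula reduces to
\[
  \D{x}\,y^2 + 2xy\,\D{y} \;=\; -x y^2 + 2xy \cdot (y/2) \;=\; 0,
\]
an arithmetic tautology discharged by \irref{qear}.

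For the negative direction I would argue semantically that no proof is possible in the \irref{diffaux}-free calculus. Every \dL proof of an invariance property of a polynomial differential equation in that calculus reduces, by induction on proof structure, to a finite collection of \irref{diffind} (and possibly \irref{diffweak}) leaves, where each application of \irref{diffcut} only enriches the evolution domain by semi-algebraic formulas that were themselves established by the same restricted calculus; the \irref{evolveb} branch is unavailable here because $x_0e^{-t}$ is not a term in the decidable fragment. By induction, any set $S$ of initial states for which the restricted calculus proves invariance of $x>0$ must be a semi-algebraic set that is forward-flow-invariant under $\D{x}=-x$. Now, any such semi-algebraic $S\subseteq(0,\infty)$ containing some $x_0>0$ must also contain the full forward trajectory $(0,x_0]$, so $S$ accumulates to $0$; by continuity of the defining polynomials, the closure of $S$ contains $0$. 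A case analysis shows $S$ cannot be closed (as $0\notin S\subseteq(0,\infty)$), and if $S$ is described using strict inequalities $p(x)>0$ with $p(0)\leq 0$, the lowest-order nonvanishing Taylor term of $p$ at $0$ forces $p'(x)>0$ for small $x>0$, contradicting the \irref{diffind} side condition $-x\,p'(x)\geq 0$ needed on the part of $S$ approaching $0$.

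The main obstacle is the impossibility direction: formulating a clean structural induction on proofs in the restricted calculus, tracking how \irref{diffcut} chains and \irref{diffweak} weakenings compose while preserving the semi-algebraic character of every certifiable invariant, and then ruling out every such certificate uniformly. The argument must in particular show that cuts cannot smuggle in non-polynomial shape via repeated Lie-type reductions, which I would handle by observing that each \irref{diffcut} premise is itself a provable invariant and so inherits the same semi-algebraic closure properties, reducing the whole proof tree to a single semantic constraint: the union of all reachable sets from $\{x>0\}$ under the flow is $(0,\infty)$, which is not semi-algebraic and not closed at $0$, contradicting the form of any candidate invariant provable without \irref{diffaux}.
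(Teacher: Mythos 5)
The survey itself does not prove \rref{thm:diffaux-power}; it states the result and defers the proof to \cite{DBLP:journals/lmcs/Platzer12}, so there is no in-paper argument to compare against line by line. Your positive half is correct and is exactly the standard auxiliary-differential-variable (ghost) construction from that line of work: the equivalence \(x>0\lbisubjunct\lexists{y}{x y^2=1}\) is provable by \irref{qear}, the ghost dynamics \(\D{y}=y/2\) is globally Lipschitz so the side condition of \irref{diffaux} is met, and \irref{diffind} closes \(xy^2=1\) since \(-xy^2+2xy\cdot(y/2)=0\). So the separating formula \(x>0 \limply \dbox{\hevolve{\D{x}=-x}}{x>0}\) and its proof with \irref{diffaux} are in good shape.

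The impossibility half, which is where the entire difficulty of the theorem lies, has genuine gaps. First, a concrete error: your closing claim that the reachable set \((0,\infty)\) "is not semi-algebraic" is false — it is exactly the semialgebraic set \(x>0\) — so semialgebraicity cannot be the obstruction; the obstruction must be that no semialgebraic formula sandwiched between precondition and postcondition (relative to the accumulated evolution domain) passes the \irref{diffind} derivative check. Second, the structural induction you appeal to is not carried out, and it is the crux. You must first delimit what "the calculus without \irref{diffaux}" is: for the full \dL calculus the statement is not even meaningful, since by \rref{thm:dL-complete} every valid formula is derivable from \FOD tautologies; the cited result compares the differential-equation proof principles \irref{diffind}, \irref{diffweak}, \irref{diffcut} over decidable real arithmetic, with and without \irref{diffaux}. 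Within that calculus your argument still needs: (i) a proof that every formula introduced by \irref{diffcut} from precondition \(x>0\) must (by soundness) contain the whole reachable set \((0,\infty)\), so that refined evolution domains never exclude the region near \(0\) and cuts cannot remove the local obstruction — and, conversely, that a cut equivalent to \(x>0\) itself is circular and must be excluded by induction on the number of cuts, which also rules out finishing with \irref{diffweak}; (ii) treatment of arbitrary invariant candidates, i.e., Boolean combinations of equations, weak and strict polynomial inequalities under the conjunctive definitions of \(\der{F\land G}\) and \(\der{F\lor G}\), not just a single strict atom \(p>0\) with \(p(0)\leq0\) (your Taylor-sign argument at \(0\) is the right local obstruction for that one atom, but weak inequalities, equations, and disjunctions each need their own case); and (iii) a correct formulation of what must be refuted: not forward invariance of a "set of initial states \(S\subseteq(0,\infty)\)", but existence of an invariant \(F\) with \(x>0\limply F\), \(F\) (together with the cut-down domain) implying \(x>0\), and \(F\) satisfying the \irref{diffind} premise on that domain. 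As it stands, your plan follows the intended route of the cited proof, but the negative direction is a sketch with one false claim and the key inductive argument missing.
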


\subsection{Implementation and Applications} \label{sec:KeYmaera}

Differential dynamic logic \cite{DBLP:conf/tableaux/Platzer07,DBLP:journals/jar/Platzer08,DBLP:conf/lics/Platzer12b} and its proof calculus \cite{DBLP:conf/tableaux/Platzer07,DBLP:journals/jar/Platzer08}, including differential invariants, differential variants, and differential cuts \cite{DBLP:journals/logcom/Platzer10} have been implemented in the automatic and interactive theorem prover \KeYmaera \cite{DBLP:conf/cade/PlatzerQ08},\footnote{Available at \url{http://symbolaris.com/info/KeYmaera.html}} which is based on Ke\kern-0.1emY \cite{KeYBook2007}.
The name \KeYmaera is a homophone to Chim{\ae}ra, the hybrid animal from ancient Greek mythology.
\KeYmaera implements a sequent calculus version \cite{DBLP:journals/jar/Platzer08} of the axiomatization in \rref{fig:dL}, because the sequent calculus is more suitable for automation.
Differential dynamic logic forms the basis for an automatic proof search procedure searching for invariants and differential invariants \cite{DBLP:conf/cav/PlatzerC08,DBLP:journals/fmsd/PlatzerC09,Platzer10} that has been implemented in \KeYmaera.

Differential dynamic logic and \KeYmaera have been used successfully for verifying system-level properties of local lane controllers for highway car traffic \cite{DBLP:conf/fm/LoosPN11}, car controllers for intersections \cite{DBLP:conf/itsc/LoosP11}, intelligent speed adaptation for variable speed limit control and incident management by traffic centers on highways \cite{DBLP:conf/iccps/MitschLP12}, CICAS-SLTA left-turn assist controllers for cars at intersections \cite{DBLP:conf/acc/ArechigaLPK12}, flyable roundabout collision avoidance maneuvers for aircraft \cite{DBLP:conf/fm/PlatzerC09}, the cooperation protocols of the European Train Control System ETCS \cite{DBLP:conf/icfem/PlatzerQ09}, and analog circuits \cite{Platzer10}.
\KeYmaera has been used to prove safety requirements of a distributed elevator controller, medical robotic surgery systems,  robotic factories, and to study biological models. 
Properties proved about these systems using \dL include safety, controllability, reactivity, liveness, and characterization properties.
More details about \dL and some of its applications are described in a book \cite{Platzer10} about the theory, practice, and applications of \dL and its extensions \DAL for differential-algebraic hybrid systems and \dTL for temporal properties.

\section{Quantified Differential Dynamic Logic for Distributed Hybrid Systems} \label{ch:QdL}

In this section, we study \emph{quantified differential dynamic logic} \QdL \cite{DBLP:conf/csl/Platzer10,DBLP:journals/lmcs/Platzer12b}, the \emph{logic of distributed hybrid systems}, i.e., systems that combine the dynamics of distributed systems with the discrete and continuous dynamics of hybrid systems.

Not all cyber-physical systems and certainly not all dynamical systems can be modeled faithfully as hybrid systems.
Cyber-physical systems typically combine \emph{communication, computation, and control}.
They may even form dynamic distributed networks, where neither structure nor dimension stay the same while the system follows hybrid dynamics.

Combining computation and control leads to \emph{hybrid systems}, whose behavior involves both discrete and continuous dynamics originating, e.g., from discrete control decisions and differential equations of movement (\rref{ch:dL}).
Combining communication and computation leads to \emph{distributed systems} \cite{Lynch,DBLP:conf/concur/AttieL01,AptdeBoerOlderog10}, whose dynamics are discrete transitions of system parts that communicate with each other.
They may form \emph{dynamic distributed systems}, where the structure of the system is not fixed but evolves over time and agents may appear or disappear during the system evolution.

Combinations of all three aspects (communication, computation, and control) are used in sophisticated applications, e.g., cooperative distributed car control \cite{HsuEskafiSachsVaraiya1991} and decentralized aircraft control \cite{PallottinoSFB06}.
Neither the structure nor dimension of the system stay the same, because new cars can appear on the street or leave it; see \rref{fig:distributed-car-control-new} on p.~\pageref{fig:distributed-car-control-new}.
These systems are \emph{(dynamic) distributed hybrid systems} \cite{DBLP:conf/hybrid/DeshpandeGV96,DBLP:conf/hybrid/Rounds04,DBLP:conf/hybrid/KratzSPL06,DBLP:conf/hybrid/MeseguerS06,DBLP:journals/taas/GilbertLMN09,DBLP:conf/csl/Platzer10,DBLP:journals/lmcs/Platzer12b}.
More generally, distributed hybrid systems are multi-agent hybrid systems that interact through remote communication or physical interaction.
They cannot be considered just as a distributed system (because, e.g., the continuous evolution of positions and velocities matters crucially for collision freedom in car control) nor just as a hybrid system (because the evolving system structure and appearance of new agents or structural changes in the system can make an otherwise collision-free system unsafe).
It is generally not possible to split the analysis of distributed hybrid systems soundly into an analysis of a distributed system (without continuous movement) and an analysis of a hybrid system (without structural changes or appearance), because all kinds of dynamics interact.
Just like hybrid systems that are very difficult to analyze from a purely discrete or a purely continuous perspective \cite{DBLP:conf/lics/Henzinger96}.
See previous work \cite{DBLP:conf/lics/Platzer12b} for a complete discussion of the relationship between discrete and continuous dynamics in hybrid systems.

As a formal logic for specifying and verifying correctness properties of distributed hybrid systems, we have introduced \emph{quantified differential dynamic logic} (\QdL) \cite{DBLP:conf/csl/Platzer10,DBLP:journals/lmcs/Platzer12b}.
\QdL extends \dL to distributed hybrid systems.
\QdL combines dynamic logic for reasoning about all (\m{\dbox{\alpha}{\phi}}) or some (\m{\ddiamond{\alpha}{\phi}}) system runs of a system~$\alpha$ \cite{Harel_et_al_2000} with many-sorted first-order logic for reasoning about all (\m{\lforall[C]{i}{\phi}}) or some (\m{\lexists[C]{i}{\phi}}) objects of a sort $C$, e.g., the sort of all cars.

The most important defining characteristic of \QdL is that~$\alpha$ can be a distributed hybrid system, because the \QdL system model of \emph{quantified hybrid programs} (\QHP) supports quantified operations that affect \emph{all} objects of a sort $C$ at once.
If $C$ is the sort of cars, the quantified assignment \m{\pupdate{\lforall[C]{i}{\pumod{a(i)}{a(i)+1}}}} increases the respective accelerations $a(i)$ of \emph{all cars} $i$ at once by a single instantaneous discrete jump.
It can be used to model simultaneous discrete changes in multiple agents at once.
Discrete changes where only some of the cars change their acceleration, others do not, are easy to model with quantified assignments by masking.
The quantified differential equation \m{\hevolve{\lforall[C]{i}{\D{v(i)}=a(i)}}} represents a continuous evolution of the respective velocities $v(i)$ of \emph{all cars}~$i$ at the same time according to their acceleration by their respective differential equations \m{\hevolve{\D{v(i)}=a(i)}}.
Again, continuous evolutions where only some of the cars evolve, others remain stopped, are easy to model with quantified differential equations by masking.
These quantified assignments and quantified differential equation systems of \QHPs are crucial for representing distributed hybrid systems where an unbounded number of objects co-evolve simultaneously, because no finite set of classical assignments and classical differential equations could represent that.
Note that, because of the close semantical relationship, we use the same quantifier notation \m{\pupdate{\lforall[C]{i}}}for quantified operations in programs and for quantifiers in logical formulas, instead of a separate notation $\Pi_{i:C}$ for parallel products in programs.

Interaction by communication can be modeled by (possibly quantified) discrete assignments to share data between agents $i$ and $j$ in \QHPs.
Physical interaction, instead, may be modeled either by (possibly quantified) discrete assignments when an agent $i$ activates a response in agent $j$ by an instantaneous discrete action (e.g., pushing a physical button) or by a (possibly quantified) differential equation involving multiple agents $i$ and $j$ when they come into physical contact and act jointly over a (nonzero) period of time (e.g., both agents jointly lifting and pulling on a rigid object).
Observe that the cyber structure of the system reconfigures dynamically when discrete communication topologies change, whereas the physical structure reconfigures dynamically when agents engage in physical contact.
\QHPs for the latter case may involve structural changes in the quantified differential equation.

We model the appearance of new participants in the distributed hybrid system, e.g., new cars entering the road, by a program \m{\pumod{\onew{}}{\pnew{C}}}. It creates a new object of type $C$, thereby extending the range of all subsequent quantified assignments or quantified differential equations ranging over created objects of type $C$.
With quantifiers and function terms, $\pnew{}$ can be defined and handled in an entirely modular way; see previous work \cite{DBLP:conf/csl/Platzer10,DBLP:journals/lmcs/Platzer12b} for details.
Overall, \dL, which we considered in \rref{ch:dL}, is for finite-dimensional hybrid systems, but \QdL can handle evolving or  infinite-dimensional distributed hybrid systems.
\QdL and \QHPs provide first-order state variables and quantifiers (also in the dynamics) over the arguments of first-order function symbols, which is the fundamental enabling technique for distributed hybrid systems.
This difference of \QdL compared to \dL is as fundamental as that of first-order logic compared to propositional logic, except that it also affects the dynamics not just the formulas.
The logic \dL and HPs only support primitive variables $x,v,a$ of type $\reals$, whereas \QdL supports first-order function symbols $x(i),v(i),a(i),d(i,j)$ and quantifiers, e.g., over $i$ and $j$ in the dynamics, so that an unbounded (instead of a statically fixed finite) number of agents can be described by the dynamics.

The model of \QHPs is of independent interest as a formal model for distributed hybrid systems.
Inside a \QHP, logical formulas can occur in state tests for conditional execution.
We thus explain logical formulas, terms, and sorts first.
Conversely, however, a \QHP~$\alpha$ occurs inside the modalities (\m{\dbox{\alpha}{}} and \m{\ddiamond{\alpha}{}}) of \QdL formulas, which state properties of the behavior of~$\alpha$.
Hence, \QHPs may occur inside \QdL formulas yet formulas may occur inside \QHPs.
The subsequent definitions of \QdL and \QHP are thus to be understood by simultaneous induction.

We first explain the logical formulas that \QdL provides for specification and verification (\rref{sec:QdL-formula}) and then explain the system model of quantified hybrid programs that \QdL provides for modeling distributed hybrid systems (\rref{sec:QHP}).
We define the semantics (\rref{sec:QdL-semantics}) and then explain reasoning principles, axioms, and proof rules for verifying \QdL formulas (\rref{sec:QdL-calculus}).
We then show soundness and relative completeness theorems (\rref{sec:QdL-complete}) and investigate stronger proof rules for quantified differential equations (\rref{sec:Qdiffind}).
Finally, we briefly discuss an implementation in the theorem prover \KeYmaeraD and applications (\rref{sec:KeYmaeraD}).

\subsection{\QdLbf Formulas} \label{sec:QdL-formula}
We have introduced quantified differential dynamic logic (\QdL) \cite{DBLP:conf/csl/Platzer10,DBLP:journals/lmcs/Platzer12b}, which is the first formal logic for specifying and verifying correctness properties of distributed hybrid systems.
\QdL is a combination of many-sorted first-order logic with dynamic logic, generalized to a system model (\QHPs) for distributed hybrid systems.

\subsubsection{Sorts}
\QdL supports a (finite) number of object sorts, e.g., the sort of all cars and that of all aircraft.
For continuous quantities of distributed hybrid systems like positions or velocities, we add the sort $\reals$ of real numbers.
It would be easy to add subtyping of sorts;  see previous work \cite{DBLP:conf/cade/BeckertP06} for details.
We refrain from doing so, because that just obscures the logical essence of our approach.

The primary purpose of the sorts is to distinguish different kinds of objects in multi-agent hybrid systems in which different kinds of agents occur, e.g., cars of sort $C$, traffic lights of sort $T$, lanes of sort $L$, and aircraft of sort $A$.

\subsubsection{Terms}
\QdL terms are built from a set of (sorted) function and variable symbols as in many-sorted first-order logic.
Each function symbol~$f$ has a fixed type \m{C_1\times\dots\times C_n\to D} for some $n\in\naturals$ and some sorts \m{D,C_1,\dots,C_n} such that $f$ only accepts argument terms~\m{\theta_1,\dots,\theta_n} of the respective sorts \m{C_1,\dots,C_n} and then \m{f(\theta_1,\dots,\theta_n)} is a term of sort $D$.
We use these function symbols to represent the state of the system or other parameters.
In a car control scenario like that in \rref{fig:distributed-car-control-new}, for example, we could use function symbol $x$ to represent the positions of cars, i.e., the term $x(i)$ could represent the position of car $i$ and $x(j)$ the position of car $j$.
Similarly, the term $v(i)$ could represent the velocity of car $i$ and $a(i)$ its acceleration.
These terms have sort $\reals$, whereas a term $l(i)$ that represents the car in front of car $i$ has sort $C$.

Unlike in first-order logic, the interpretation of function symbols can change when transitioning from one state to the other while following the dynamics of a distributed hybrid system.
The value of position $x(i)$ will change over time as car $i$ drives down the street.
The value of $x(i)$ also changes if the argument term $i$ changes its value and now refers to a different car than before.
Objects may appear or disappear as the distributed hybrid system evolves.
We use function symbol $\laexisting{\cdot}$ to distinguish between objects $i$ that actually exist (\m{\laexisting{i}=1}) and those that have not been created yet or exist no longer (\m{\laexisting{i}=0}), depending on the value of $\laexisting{i}$, which may change its interpretation from state to state.
We use $0,1,+,-,\cdot$ with the usual notation and fixed semantics for real arithmetic.
For \m{n\geq0} we abbreviate \m{f(s_1,\dots,s_n)} by \m{f(\vec{s})} using vectorial notation and we use \m{\vec{s}=\vec{t}} for component-wise equality.

\subsubsection{Formulas}
The formulas of \QdL are defined as in first-order dynamic logic plus many-sorted first-order logic
\begin{definition}[\QdL formula]
The formulas of \QdL are defined by the following grammar ($\phi,\psi$ are formulas, $\theta_1,\theta_2$ are terms of the same sort, $i$ is a variable of sort $C$, and $\alpha$ is a \QHP as defined in \rref{sec:QHP}):
\begin{equation*}
  \begin{array}{@{}l@{}}
  \phi,\psi ~\bebecomes~ %
  \theta_1=\theta_2 \alternative
  \theta_1\geq\theta_2 \alternative
  \lnot \phi \alternative
  \phi \land \psi \alternative
  \lforall[C]{i}{\phi} \alternative 
  \dbox{\alpha}{\phi} %
  \end{array}
\end{equation*}
\end{definition}
We use standard abbreviations to define $\leq,>,<,\lor,\limply,\lexists$.
The operator $\ddiamond{\alpha}{}$ dual to $\dbox{\alpha}{}$ is again defined by \m{\ddiamond{\alpha}{\phi} \mequiv \lnot\dbox{\alpha}{\lnot\phi}}.
Similarly, \m{\lexists[C]{i}{\phi} \mequiv \lnot\lforall[C]{i}{\lnot\phi}}.
Sorts \m{C\neq\reals} have no ordering and only \m{\theta_1=\theta_2} is allowed, not \m{\theta_1\geq\theta_2}.
For sort $\reals$, we abbreviate \m{\lforall[\reals]{x}{\phi}} by \m{\lforall{x}{\phi}} and \m{\lexists[\reals]{x}{\phi}} by \m{\lexists{x}{\phi}}.
All \QdL formulas and terms have to be well-typed.
For instance, \m{x(i)=l(i)} is no formula if $x$ has type \m{C\to\reals} and $l$ has type \m{C\to C} for a sort \m{C\neq\reals} or if $i$ has a sort \m{D\neq C}.
\QdL formula \m{\dbox{\alpha}{\phi}} expresses that \emph{all states} reachable by \QHP~$\alpha$ satisfy formula~$\phi$. Likewise, \m{\ddiamond{\alpha}{\phi}} expresses
that \emph{there is at least one state} reachable by~$\alpha$ for
which~$\phi$ holds.

For short notation, we allow \emph{conditional terms} of the form \m{\piif{\phi}{\theta_1}{\theta_2}} (where $\theta_1$ and $\theta_2$ have the same sort).
This term evaluates to $\theta_1$ if the formula $\phi$ is true and to $\theta_2$ otherwise.
We generally consider formulas with conditional terms as abbreviations, e.g.,
\m{\mapply{\psi}{\piif{\phi}{\theta_1}{\theta_2}}} abbreviates
\m{(\phi \limply \mapply{\psi}{\theta_1})  \land  (\lnot\phi \limply \mapply{\psi}{\theta_2})}.
Conditional terms can be understood as an additional operator for terms and formulas as well.

  \newcommand{\DCCS}{\textit{DCCS}\xspace}
  \newcommand{\abrake}{b}%
  \newcommand{\amax}{A}%
  \newcommand{\aset}{a}%
  \newcommand{\cyct}{\varepsilon}%

\newcommand*{\oa}[2]{#1(#2)}%
\newcommand*{\dcseparate}[2]{\mathcal{M}(#1,#2)}%
\let\dcaccelseparate\dcseparate
\newcommand*{\dcseparatet}[2]{#1\neq #2 \limply \big((\dcseparatetf{#1}{#2})\lor(\dcseparatets{#1}{#2})\big)}%
\newcommand*{\dcseparatetf}[2]{%
   \oa{x}{#1}<\oa{x}{#2}\land\oa{v}{#1}\leq\oa{v}{#2}\land\oa{a}{#1}\leq\oa{a}{#2}}%
\newcommand*{\dcseparatets}[2]{%
  \oa{x}{#1}>\oa{x}{#2}\land\oa{v}{#1}\geq\oa{v}{#2}\land\oa{a}{#1}\geq\oa{a}{#2}}%
\newcommand{\dcinv}{\laforall[C]{i,j}{\dcseparate{i}{j}}}%
\newcommand{\dcnu}{\onew{}}%
\newcommand{\dcnup}{\pumod{\dcnu}{\pnew{C}}}%
\newcommand{\dcnusep}{\laforall[C]{i}{\dcseparate{i}{\dcnu}}}%
\newcommand{\dcevo}{\pevolve{\laforall[C]{i}{(\D[2]{\oa{x}{i}}=\oa{a}{i})}}}%
\newcommand{\dcevot}{\pevolve{\laforall[C]{i}{(\D{\oa{x}{i}}=\oa{v}{i} \syssep \D{\oa{v}{i}}=\oa{a}{i})}}}%

\newcommand{\dcsys}{\dcnup; \ptest{\dcnusep}; \dcevo}%

  \newcommand*{\dcaccel}[1]{\mathcal{A}(#1)}%
  \newcommand*{\dcaccelt}[1]{\umod{\oa{a}{#1}}{\piif{\SBforma{#1}}{\amax}{\,{-}\abrake}}}%
  \newcommand*{\dcaccelall}{\pupdate{\laforall[C]{i}{\dcaccel{i}}}}%
  \newcommand*{\dcaccelallt}{\pupdate{\laforall[C]{i}{\dcaccelt{i}}}}%
\newcommand*{\dcaccelseparatetf}[2]{%
   \oa{x}{#1}<\oa{x}{#2}\land\oa{v}{#1}^2<\oa{v}{#2}^2+2\abrake(\oa{x}{#2}-\oa{x}{#1})}%
\newcommand*{\dcaccelseparatets}[2]{%
  \oa{x}{#1}>\oa{x}{#2}\land\oa{v}{#2}^2<\oa{v}{#1}^2+2\abrake(\oa{x}{#1}-\oa{x}{#2})}%
  \newcommand*{\SBforma}[1]{\laforall[C]{j}{\SBform{#1}{j}}}%
  \newcommand{\SBform}[2]{\textit{far}(#1,#2)}
  \newcommand{\SBformt}[2]{\oa{x}{#2}>\oa{x}{#1} \limply 
    \oa{x}{#2}>\oa{x}{#1} + \frac{\oa{v}{#1}^2-\oa{v}{#2}^2}{2\abrake} + \left(\frac{\amax}{\abrake}+1\right)\left(\frac{\amax}{2}\cyct^2+\cyct\oa{v}{#1}\right)
  }%
\newcommand{\dcaccelevot}{\pupdate{\umod{\tau}{0}};~ \hevolvein{\laforall[C]{i}{(\D{\oa{x}{i}}=\oa{v}{i} \syssep \D{\oa{v}{i}}=\oa{a}{i}\syssep\D{\tau}=1}}{\oa{v}{i}\geq0 \land \tau\leq\cyct)}}%

\begin{example}[Distributed car control] \label{ex:QdL}
  \let\laforall\lforall%
If $i$ is a term of type $C$ (for cars), let $x(i)$ denote the position of car $i$, $v(i)$ its current velocity, and $a(i)$ its current acceleration; see \rref{fig:distributed-car-control-new} on p.\,\pageref{fig:distributed-car-control-new}.
A car control system is collision-free at a state if all cars are at different positions (\m{\lforall[C]{i{\neq}j}{\oa{x}{i}{\neq}\oa{x}{j}}}).
Without a quantifier we could not describe that all cars on a highway are in a collision-free state, because there is a large number of cars on the highway and we may not know how many.
The car control system is globally collision-free if it will always stay collision-free.
The following \QdL formula expresses that a distributed car control system \DCCS (we develop QHP models for \DCCS later) controls cars such that they are always collision-free:
\begin{equation}
  {(\dcinv)\,} \limply {\,\dbox{\DCCS}{~\laforall[C]{i{\neq}j}{\oa{x}{i}{\neq}\oa{x}{j}}}}
  \label{eq:distributed-car-control-new}
\end{equation}
It says that cars following the distributed hybrid systems dynamics of \DCCS are always collision-free (postcondition), provided that \DCCS starts in an initial state satisfying a formula \m{\dcseparate{i}{j}} for all cars $i,j$ (precondition).
In particular, the modality \m{\dbox{\DCCS}{}} expresses that all states reachable by following the distributed hybrid system \DCCS satisfy the postcondition.
The simple-most choice for the formula \m{\dcseparate{i}{j}} in the precondition is a formula that characterizes a simple compatibility condition: for different cars $i\neq j$, the car that is further down the road (i.e., with greater position) neither moves slower nor accelerates slower than the other car, i.e.
$\dcseparate{i}{j}$ is
\begin{equation}
\begin{aligned}
\dcseparate{i}{j} \mequiv
  i \neq j &\limply \big((\dcseparatetf{i}{j})\qquad
  \\&~\lor(\dcseparatets{i}{j})\big)%
\end{aligned}
  \label{eq:distributed-car-control-separate}
\end{equation}
Based on a generalization of the \dL formulas in \rref{ex:dL}, we can also choose the following more permissive formula for \m{\dcseparate{i}{j}}, which allows cars to drive more aggressively with different accelerations, if only the respective safety distances are compatible with the different velocities:
\begin{equation}
\begin{aligned}
\dcseparate{i}{j} \mequiv
  i\neq j \limply\,& \oa{v}{i}\geq0\land\oa{v}{j}\geq0 \,\land
  \\& \big((\dcaccelseparatetf{i}{j})
  \\&
  \lor(\dcaccelseparatets{i}{j})\big)%
\end{aligned}
  \label{eq:distributed-car-control-accel-separate}  
\end{equation}
This formula characterizes that, for different cars $i\neq j$, the car that is further down the road is at a sufficient distance to allow its follower car to adapt its velocity by braking to be no faster than the car that is already further down the road.
With this choice of \m{\dcseparate{i}{j}}, observe the relationship of \QdL formula \rref{eq:distributed-car-control-new} to \dL formula \rref{eq:car-single-essentials} in \rref{ex:dL}.
For a more detailed and exhaustive study of possible initial conditions and invariants for distributed car control, we refer to previous work \cite{DBLP:conf/csl/Platzer10,DBLP:journals/lmcs/Platzer12b,DBLP:conf/fm/LoosPN11}.
\end{example}

The reader should note that more sophisticated combinations of nested quantifiers and modalities like the ones we considered for \dL (e.g., \rref{ex:dL}) are possible with \QdL and its axiomatization in \rref{sec:QdL-calculus} as well.

\subsection{Quantified Hybrid Programs} \label{sec:QHP}

As a formal model for distributed hybrid systems, we have introduced \emph{quantified hybrid programs}~(\QHPs) \cite{DBLP:conf/csl/Platzer10,DBLP:journals/lmcs/Platzer12b}.
These are regular programs from dynamic logic \cite{Harel_et_al_2000} to which we add quantified assignments and quantified differential equation systems for \emph{distributed} hybrid dynamics.
From these quantified assignments and quantified differential equations, \QHPs are built like a Kleene algebra with tests \cite{DBLP:journals/toplas/Kozen97}.
\begin{definition}[Quantified hybrid program]
\QHPs are defined by the following grammar ($\alpha,\beta$ are \QHPs, $i$ a variable of sort $C$, $f$ is a function symbol, $\vec{s}$ is a vector of terms with sorts compatible to the arguments of $f$, $\theta$ is a term with sort compatible to the result of $f$, and $\ivr$ is a formula of many-sorted first-order logic):
\[
  \alpha,\beta ~\bebecomes~
  \pupdate{\lforall[C]{i}{\pumod{f(\vec{s})}{\theta}}}
  \alternative
  \ptest{\ivr}
  \alternative
  \hevolvein{\lforall[C]{i}{\D{f(\vec{s})}=\theta}}{\ivr}
  \alternative
  \alpha\cup\beta
  \alternative
  \alpha;\beta
  \alternative
  \prepeat{\alpha}
\]
\end{definition}
In order to simplify technical difficulties, we impose regularity assumptions on $f(\vec{s})$ in quantified assignments and quantified differential equations.
We assume $\vec{s}$ to be either a vector of length 0 or that the mapping from the quantified variable $i$ to $\vec{s}$ is \emph{injective}.
That is, each value of $\vec{s}$ can be exhibited by at most one choice of $i$.
A system is injective, e.g., when at least one component of $\vec{s}$ is the quantified variable $i$.
These assumptions can be relaxed, but are sufficient for our purposes; see \rref{sec:QdL-semantics} for a discussion of injectivity.
For quantified differential equations, we further assume that $f$ is an $\reals$-valued function symbol so that derivatives can be defined.

\subsubsection{Quantified State Change}
The effect of \dfn[assignment!quantified]{quantified assignment} 
\m{\pupdate{\lforall[C]{i}{\pumod{f(\vec{s})}{\theta}}}} is an instantaneous discrete jump assigning~$\theta$ to~$f(\vec{s})$ simultaneously for all objects $i$ of sort $C$.
Hence all $f(\vec{s})$ that are affected by \m{\pupdate{\lforall[C]{i}{\pumod{f(\vec{s})}{\theta}}}} will change their value to the respective~$\theta$ simultaneously for all choices of~$i$ in a single discrete instant of time.
Usually,~$i$ occurs in term~$\theta$, but does not have to.
The effect of \dfn[differential~equation!quantified]{quantified differential equation}
\m{\hevolvein{\lforall[C]{i}{\D{f(\vec{s})}=\theta}}{\ivr}} is a continuous evolution where, for all objects $i$ of sort $C$, all differential equations \m{\D{f(\vec{s})}=\theta} hold at the same time and formula~$\ivr$ holds throughout the evolution (the state always remains in the region described by~$\ivr$, i.e., the evolution stops at any arbitrary time before it leaves~$\ivr$).
Again,~$i$ usually occurs in term~$\theta$.
For the trivial evolution domain restriction~\m{\ivr\mequiv\ltrue}, which is always satisfied, we also write \m{\hevolve{\lforall[C]{i}{\D{f(\vec{s})}=\theta}}} instead of \m{\hevolvein{\lforall[C]{i}{\D{f(\vec{s})}=\theta}}{\ltrue}}.

The dynamics of \QHPs changes the interpretation of terms over time:
\m{\D{f(\vec{s})}} is intended to denote the derivative of the interpretation of the term \m{f(\vec{s})} over time during continuous evolution, not the derivative of \m{f(\vec{s})} by its argument $\vec{s}$.
For \m{\D{f(\vec{s})}} to be defined, we assume $f$ is an $\reals$-valued function symbol.
Although our approach can be extended, we assume that~$f$ does not occur in~$\vec{s}$.
The most common choice of $\vec{s}$ in quantified assignments and quantified differential equations is just $i$.
Other choices are possible for $\vec{s}$, e.g., \m{\vec{s}=(i,f(i))} in \m{\pupdate{\lforall[C]{i}{\pumod{d(i,f(i))}{\frac{1}{2}a(i)+\frac{1}{2}a(f(i))}}}}.
The latter \QHP could be used to model that, for each car $i$, the average acceleration of a car $i$ and its follower $f(i)$ is assigned to a data field $d(i,f(i))$ that car $i$ and its follower communicate to determine their safe distance.

Time itself is not special but implicit. If a clock variable $t$ is needed in a \QHP, it can be axiomatized by \m{\hevolve{\D{t}=1}}, which is equivalent to
\m{\hevolve{\lforall[C]{i}{\D{t}=1}}} where $i$ does not occur in $t$.
For such \dfn[vacuous!quantifier]{vacuous quantification} ($i$ does not occur anywhere), we may omit $\lforall[C]{i}{}$from assignments and differential equations, which are then classical assignments and ordinary differential equations as in HPs (\rref{ch:dL}).
We may omit vectors $\vec{s}$ of length 0.

\subsubsection{Regular Programs}
The \emph{test} $\ptest{\ivr}$ of a first-order formula $\ivr$ of real arithmetic is as in HPs except that $\ivr$ is a formula of many-sorted first-order logic.
Compound QHPs are generated from atomic QHPs by nondeterministic choice ($\cup$), sequential composition ($;$), and Kleene's nondeterministic repetition ($\prepeat{}$), just like in \rref{sec:HP}.
The (decisive) difference of \QHPs of \QdL compared to HPs of \dL is that \QHPs can contain \emph{quantified} assignments and \emph{quantified} differential equations with first-order functions.

\ignore{
The \dfn{test} action~\m{\ptest{\ivr}} is used to define conditions. Its effect is that of a \textit{no-op} if the formula~$\ivr$ is true in the current state; otherwise, like \textit{abort}, it allows no transitions.
That is, if the test succeeds because formula~$\ivr$ holds in the current state, then the state does not change, and the system execution continues normally.
If the test fails because formula~$\ivr$ does not hold in the current state, then the system execution cannot continue, is cut off and not considered further.

The nondeterministic choice~\m{\pchoice{\alpha}{\beta}}, sequential composition~\m{\alpha;\beta}, and non\-de\-ter\-min\-is\-tic repetition~\m{\prepeat{\alpha}} of programs are as in regular expressions but generalized to a semantics in distributed hybrid systems.
\dfn[nondeterministic!choice]{Nondeterministic choice} \m{\pchoice{\alpha}{\beta}} is used to express behavioral alternatives between the transitions of~$\alpha$ and~$\beta$.
That is, the \QHP~\m{\pchoice{\alpha}{\beta}} can choose nondeterministically to follow the transitions of \QHP~$\alpha$, or, instead, to follow the transitions of \QHP~$\beta$.
The \dfn[composition!sequential]{sequential composition}~\m{\alpha;\beta} says that the \QHP~$\beta$ starts executing after \QHP~$\alpha$ has finished ($\beta$ never starts if~$\alpha$ does not terminate).
In~\m{\alpha;\beta}, the transitions of~$\alpha$ take effect first, until~$\alpha$ terminates (if it does), and then~$\beta$ continues.
Observe that, like repetitions, continuous evolutions within~$\alpha$ can take more or less time, which causes uncountable nondeterminism.
This nondeterminism is inherent in distributed hybrid systems, because they can operate in so many different ways, which is as such reflected in \QHPs.
\dfn[nondeterministic!repetition]{Nondeterministic repetition}~\m{\prepeat{\alpha}} is used to express that the \QHP~$\alpha$ repeats any number of times, including zero times.
When following~\m{\prepeat{\alpha}}, the transitions of \QHP~$\alpha$ can be repeated over and over again, any nondeterministic number of times (\m{{\geq}0}).
}

\QHPs (with their semantics and our proof rules) can be extended to systems of quantified differential equations, systems of simultaneous assignments to multiple functions $f,g$, and statements with multiple quantifiers (\m{\lforall[C]{i}{\lforall[D]{j}{\dots}}}) similar to vectorial generalizations in discrete programs \cite{DBLP:conf/cade/BeckertP06,DBLP:conf/lpar/Rummer06}.

\begin{example}[Distributed car control] \label{ex:QHP}
  \let\laforall\lforall%
Continuous movement of position $x(i)$ of car $i$ with acceleration $a(i)$ is expressed by differential equation \m{\hevolve{\D[2]{x(i)}=a(i)}}, which corresponds to the first-order differential equation system \m{\hevolve{\D{x(i)}=v(i)\syssep\D{v(i)}=a(i)}} where $v(i)$ is the velocity of car $i$.
Simultaneous movement of all cars with their respective accelerations $a(i)$ is expressed by the quantified differential equation
\m{\dcevo} where quantifier $\lforall[C]{i}$ranges over all cars, such that all cars co-evolve along their respective differential equations at the same time.

In addition to continuous dynamics, cars have discrete control.
In the following \QHP, discrete and continuous dynamics interact (repeatedly because of the~$\prepeat{}$ repetition operator):
\begin{equation}
  \renewcommand*{\dcaccelt}[1]{({\oa{a}{#1}}\,{\mathrel{{:}{=}}}\,{\piif{\SBforma{#1}}{\amax}{\,{-}\abrake}})}%
  \big(\dcaccelallt;\,
  \dcevo \prepeat{\big)}
  \label{eq:distributed-car-control-accel}
\end{equation}
First, all cars $i$ control their acceleration $a(i)$.
Each car $i$ chooses maximum acceleration \m{\amax\geq0} for \m{a(i)} if its distance to all other cars $j$ is far enough (some condition \m{\textit{far}(i,j)} that depends on the velocities and either on the acceleration of $j$ or on reaction times $\varepsilon$ as in \rref{ex:dL}).
Otherwise, $i$ chooses full braking \m{-\abrake<0}.
After all accelerations have been set, all cars move continuously along \m{\dcevo}.
Accelerations may change repeatedly, because the repetition operator $\prepeat{}$ can repeat the  \QHP after the continuous evolution stops, which it can do at any time.
When \DCCS denotes QHP \rref{eq:distributed-car-control-accel}, the \QdL formula \rref{eq:distributed-car-control-new} from \rref{ex:QdL} is valid, when choosing \rref{eq:distributed-car-control-separate} for \m{\dcseparate{i}{j}}.
For more elaborate car models, verification results, and formal proofs, including verification results about distributed car control with dynamic appearance and disappearance of cars on highways with arbitrarily many cars on arbitrarily many lanes including onramps and exits, we refer to previous work \cite{DBLP:conf/csl/Platzer10,DBLP:journals/lmcs/Platzer12b,DBLP:conf/fm/LoosPN11}.
Observe that the \QHP \rref{eq:distributed-car-control-accel} requires car $i$ to check \m{\textit{far}(i,j)} for all other cars $j$, which is easy to model, but hard to implement.
We refer to previous work \cite{DBLP:conf/fm/LoosPN11} for \QdL proofs for distributed car control models that are globally safe even though each car reaches its control decisions solely based on local sensor/communication input and local control decisions.
\end{example}

Note that the presence of the function argument $i$ in $x(i), v(i), a(i)$ is a decisive difference when comparing the \QHP in \rref{ex:QHP} to the HP in \rref{ex:HP} and when comparing the \QdL formula in \rref{ex:QdL} to the \dL formula in \rref{ex:dL}.
In hybrid systems, we are limited to using variables $x,v,a$ of a single car.
If we want to add a second car to a hybrid system model, we need to add new state variables $y,w,c$, new dynamics \m{\hevolve{\D{y}=w\syssep\D{w}=c}}, and new control for the second car.
We can keep on adding any fixed finite number of state variables that way, but we need to know exactly how many cars there are on the street.
This does not work when we want to model and verify situations with arbitrarily many cars or in distributed car control scenarios like \rref{fig:distributed-car-control-new}, where new cars appear or disappear during the evolution of the system.
A quantified differential equation like \m{\hevolve{\lforall[C]{i}{(\D{x(i)}=v(i)\syssep\D{v(i)}=a(i))}}}, for example, cannot be expressed in hybrid systems, because we do not know how many cars $i$ ranges over.
If $i$ did range over exactly 3 cars, called 1, 2, and 3, we could replace it by
\[\hevolve{\D{x(1)}=v(1)\syssep\D{v(1)}=a(1)\syssep\D{x(2)}=v(2)\syssep\D{v(2)}=a(2)\syssep\D{x(3)}=v(3)\syssep\D{v(3)}=a(3)}\]
and change notation to obtain primitive state variables \(x_1,v_1,a_1,x_2,v_2,a_2,x_3,v_3,a_3\) in an ordinary differential equation system
\[\hevolve{\D{x_1}=v_1\syssep\D{v_1}=a_1\syssep\D{x_2}=v_2\syssep\D{v_2}=a_2\syssep\D{x_3}=v_3\syssep\D{v_3}=a_3}\]
But this replacement does not work unless we know exactly how many cars are in the system.
Even for systems with a fixed known but large number of participants, such flat representations as (non-distributed) hybrid systems are inefficient, because the system dimension is exponential in the number of participants and all reasoning needs to be repeated for each participant, or even for each pair of participants (collision freedom requires each pair of cars to remain safely separated).

In \QdL formulas and in \QHP models, we can leverage the distributed structure in systems, make the models more expressive, and make the reasoning more efficient by exploiting their first-order structure.
Only in \QdL can properties of distributed hybrid systems with an unknown or evolving number of participants be proved.
See previous work \cite{DBLP:conf/fm/LoosPN11} for a detailed practical illustration of those phenomena in verification of local and of distributed car control.

\subsection{Semantics} \label{sec:QdL-semantics}
The \QdL semantics is a \emph{constant domain Kripke semantics \cite{Fitting_Mendelsohn_1999} with first-order structures as states} that associate total functions of appropriate type with function symbols.
In constant domain, all states share the same domain for quantifiers.
We choose to represent object creation not by changing the domain of states, but by changing the interpretation of the createdness flag $\laexisting{i}$ of the object denoted by $i$.
With $\laexisting{i}$, object creation is definable in a modular way by masking the effect of QHPs to objects $i$ with $\laexisting{i}=1$ \cite{DBLP:conf/csl/Platzer10,DBLP:journals/lmcs/Platzer12b}.

\subsubsection{States}
A \emph{state} $\iget[state]{\I}$ 
associates an (infinite) set $\idomain{\I}{C}$ of objects with each sort $C$, and it
associates a function $\iget[state]{\I}(f)$ of appropriate type with each function symbol $f$, including $\laexisting{\cdot}$.
For simplicity, $\iget[state]{\I}$ also associates a value $\iget[state]{\I}(i)$ of appropriate type with each variable $i$.
The domain of $\reals$ and the interpretation of $0,1,+,-,\cdot$ is that of real arithmetic.
We assume \emph{constant domain} for each sort $C$: all states $\iget[state]{\I},\iget[state]{\It}$ share the same domains \m{\idomain{\I}{C}=\idomain{\It}{C}} for~$C$.
Sorts $C\neq D$ are disjoint: \m{\idomain{\I}{C} \cap \idomain{\I}{D} = \emptyset}.
The set of all states is again denoted by \m{\linterpretations{\Sigma}{V}}, but different from the set of states of \dL.
The state
\m{\iget[state]{\imodif[state]{\I}{i}{e}}} agrees with~$\iget[state]{\I}$ except for the interpretation of variable~$i$, which is changed to~\m{e \ignore{\idomain{\I}{C}}}.
We assume $\laexisting{\cdot}$ to have (unbounded but) finite support, i.e., each state only has a finite number of positions $i$ at which \m{\laexisting{i}=1}.
This makes sense in practice, because there is a varying and possibly large but still finite numbers of participants (e.g., cars).

\subsubsection{Formulas} \label{sec:QdL-valuation}
We use $\ivaluation{\I}{\theta}$ to denote the value of term~$\theta$ at $\iname[state]{\I}$~$\iget[state]{\I}$, which is defined as in first-order logic.
Especially, \m{\ivaluation{\imodif[state]{\I}{i}{e}}{\theta}} denotes the value of $\theta$ in state \m{\iget[state]{\imodif[state]{\I}{i}{e}}}, i.e., in state $\iget[state]{\I}$ with $i$ interpreted as $e$.
Further, \m{\iaccess[\alpha]{\I}\ignore{\subseteq \linterpretations{\Sigma}{V} \times \linterpretations{\Sigma}{V}}} denotes the state transition relation of \QHP~$\alpha$, which we define below.
{%
    \newcommand{\Id}{\imodif[state]{\I}{i}{e}}%
\begin{definition}[\QdL semantics] \label{def:QdL-valuation}
  The \dfn{interpretation} \m{\imodels{\I}{\phi}} of \QdL formula~$\phi$ with respect to $\iname[state]{\I}~\iget[state]{\I}$ is defined inductively as:
  \begin{itemize}
  \item \(\imodels{\I}{(\theta_1=\theta_2)}\)
    iff \(\ivaluation{\I}{\theta_1} = \ivaluation{\I}{\theta_2}\).
  \item \(\imodels{\I}{(\theta_1\geq\theta_2)}\)
    iff \(\ivaluation{\I}{\theta_1} \geq \ivaluation{\I}{\theta_2}\).
  \item \(\imodels{\I}{\lnot\phi}\) iff
    it is not the case that \(\imodels{\I}{\phi}\).
  \item \(\imodels{\I}{\phi \land \psi}\) iff
    \(\imodels{\I}{\phi}\) and \(\imodels{\I}{\psi}\).
  \item \(\imodels{\I}{\lforall[C]{i}{\phi}}\)
    iff
    \(\imodels{\Id}{\phi}\)
    for all objects \m{e\in\idomain{\I}{C}}.
    \index{$\lforall{}{}$}
  \item \(\imodels{\I}{\lexists[C]{i}{\phi}}\)
    iff
    \(\imodels{\Id}{\phi}\)
    for some object \m{e\in\idomain{\I}{C}}.
    \index{$\lexists{}{}$}
  \item
    \(\imodels{\I}{\dbox{\alpha}{\phi}}\)
      iff
      \(\imodels{\It}{\phi}\)
      for all states~$\iget[state]{\It}$ with
      \(\related{\iaccess[\alpha]{\I}}{\iget[state]{\I}}{\iget[state]{\It}}\).
      \index{$\dbox{\alpha}{}$}
     \item \(\imodels{\I}{\ddiamond{\alpha}{\phi}}\)
       iff
       \(\imodels{\It}{\phi}\)
       for some~$\iget[state]{\It}$ with
       \(\related{\iaccess[\alpha]{\I}}{\iget[state]{\I}}{\iget[state]{\It}}\).
       \index{$\ddiamond{\alpha}{}$}
     \end{itemize}
If \m{\imodels{\I}{\phi}}, then we say that $\phi$ is true at $\iportray{\I}$.
\QdL formula $\phi$ is \dfn{valid}, written \m{\entails\phi}, iff \m{\imodels{\I}{\phi}} for all $\iportray{\I}$.
\end{definition}

\subsubsection{Programs} \label{sec:QdL-QHP-transition}
The transition semantics of \QHPs is defined similar to the transition semantics of HPs, except that the quantified assignments and quantified differential equations need to be defined.
{\newcommand{\Ii}{\imodif[state]{\I}{i}{e}}%
   \newcommand{\ws}{\sigma}%
    \newcommand{\Ifz}[1][\zeta]{\iconcat[state=\varphi(#1)]{\stdI}}%
    \newcommand{\Ifzi}[1][\zeta]{\imodif[state]{\Ifz[#1]}{i}{e}}%
\begin{definition}[Transition semantics of \QHPs]
   The \dfn[transition~relation]{transition relation, \(\iaccess[\alpha]{\I}\subseteq \linterpretations{\Sigma}{V} \times \linterpretations{\Sigma}{V}\)}, of \QHP~$\alpha$
    specifies which $\iname[state]{\I}$ $\iget[state]{\It}\ignore{\in \linterpretations{\Sigma}{V}}$ is reachable from $\iget[state]{\I}\ignore{\in \linterpretations{\Sigma}{V}}$ by running \QHP $\alpha$.
    It is defined inductively:
    \begin{enumerate}
    \item \label{case:QdL-QHP-transition-assign}
      \(\relateds{\iaccess[\pupdate{\lforall[C]{i}{\pumod{f(\vec{s})}{\theta}}}]{\I}}{\iget[state]{\I}}{\iget[state]{\It}}\)
      iff $\iname[state]{\I}$~$\iget[state]{\It}$ is identical to~$\iget[state]{\I}$ except that
      at each position \m{\vec{o} \ignore{\in \iget[state]{\I}(\vec{S})}} of $f$:
      if \m{\ivaluation{\Ii}{\vec{s}} = \vec{o}} for some object \m{e\in\idomain{\I}{C}}, then
      \m{\iget[state]{\It}(f)\big(\ivaluation{\Ii}{\vec{s}}\big) = \ivaluation{\Ii}{\theta}}.
      If there are multiple objects $e$ giving the same position \m{\ivaluation{\Ii}{\vec{s}} = \vec{o}}, then all of the resulting states $\iget[state]{\It}$ are reachable.
      
    \item \label{case:QdL-QHP-transition-evolve}
      \(\relateds{\iaccess[\hevolvein{\lforall[C]{i}{\D{f(\vec{s})}=\theta}}{\ivr}]{\I}}{\iget[state]{\I}}{\iget[state]{\It}}\)
      iff
      there is a\ignore{ (\emph{flow})} function
      \({{\varphi}{:}{\interval{[0,r]}\to\linterpretations{\Sigma}{V}}}\)
      for some \m{r\geq0} with
      \(\varphi(0)=\iget[state]{\I}\) and \(\varphi(r)=\iget[state]{\It}\)
      satisfying the following conditions.
      At each time \m{t \in \interval{[0,r]}}, state $\iget[state]{\Ifz[t]}$ is identical to $\iget[state]{\I}$, except that
      at each position \m{\vec{o} \ignore{\in \iget[state]{\I}(\vec{S})}} of $f$:
        if \m{\ivaluation{\Ii}{\vec{s}} = \vec{o}} for some object \m{e\in\idomain{\I}{C}}, then, at each time \m{\zeta \in \interval{[0,r]}}:
      \begin{itemize}
        \item All differential equations hold and corresponding derivatives exist (trivial for \m{r=0}):
        \[
        \D[t]{\,(\ivaluation{\Ifzi[t]}{f(\vec{s})})} (\zeta) = \ivaluation{\Ifzi[\zeta]}{\theta}
        \]
      \item The evolution domain is respected:
      \m{\imodels{\Ifzi}{\ivr}}.
      \end{itemize}
      If there are multiple objects $e$ giving the same position \m{\ivaluation{\Ii}{\vec{s}} = \vec{o}}, then all of the resulting states $\iget[state]{\It}$ are reachable.
    \item \(\iaccess[\ptest{\ivr}]{\I} =
      \{(\iget[state]{\I},\iget[state]{\I}) {\with}  \imodels{\I}{\ivr}\}\)
\item \m{\iaccess[\pchoice{\alpha}{\beta}]{\I} = \iaccess[\alpha]{\I} \cup \iaccess[\beta]{\I}}
\item \m{\iaccess[\alpha;\beta]{\I} = \iaccess[\beta]{\I} \compose\iaccess[\alpha]{\I}}
\item \m{\iaccess[\prepeat{\alpha}]{\I} = \displaystyle\cupfold_{n\in\naturals}\iaccess[{\prepeat[n]{\alpha}}]{\I}} 
with \m{\prepeat[n+1]{\alpha} \mequiv \prepeat[n]{\alpha};\alpha} and \m{\prepeat[0]{\alpha}\mequiv\,\ptest{\ltrue}}.
    \end{enumerate}
\end{definition}
The semantics is \emph{explicit change}:
nothing changes unless an assignment or differential equation specifies how.
In cases~\ref{case:QdL-QHP-transition-assign}--\ref{case:QdL-QHP-transition-evolve}, only $f$ changes and only at positions of the form \m{\ivaluation{\Ii}{\vec{s}}} for some interpretation \m{e\in\idomain{\I}{C}} of $i$.
If there are multiple such $e$ that affect the same position $\vec{o}$, any of those changes can take effect by a nondeterministic choice.
\QHP \m{\hupdate{\lforall[C]{i}{\umod{x}{a(i)}}}} may change $x$ to \emph{any} $a(i)$.
Hence,
\m{\dbox{\hupdate{\lforall[C]{i}{\umod{x}{a(i)}}}}{\mapply{\phi}{x}} \mequiv \lforall[C]{i}{\mapply{\phi}{a(i)}}},
because that modality considers \emph{all} possibilities of changing $x$ to \emph{any} $a(i)$.
In contrast,
\m{\ddiamond{\hupdate{\lforall[C]{i}{\umod{x}{a(i)}}}}{\mapply{\phi}{x}} \mequiv \lexists[C]{i}{\mapply{\phi}{a(i)}}},
because that modality considers \emph{some} possibility of changing $x$ to \emph{any} $a(i)$.
Similarly, $x$ can evolve along \m{\hevolve{\lforall[C]{i}{\D{x}=a(i)}}} with any of the slopes $a(i)$. But evolutions cannot start with slope $a(c)$ and then switch to a different slope $a(d)$ later.
Any choice for the quantified variable $i$ is possible but $i$ remains unchanged during each evolution.

We call a quantified assignment \m{\pupdate{\lforall[C]{i}{\pumod{f(\vec{s})}{\theta}}}}  or a quantified differential equation \m{\hevolvein{\lforall[C]{i}{\D{f(\vec{s})}=\theta}}{\ivr}} \dfn[injective!QHP]{injective} iff there is at most one $e$ satisfying cases~\ref{case:QdL-QHP-transition-assign}--\ref{case:QdL-QHP-transition-evolve}.
For injective quantified assignments and injective quantified differential equations, conditions~\ref{case:QdL-QHP-transition-assign}--\ref{case:QdL-QHP-transition-evolve} can be simplified as follows:
\begin{enumerate}
\item
      \(\relateds{\iaccess[\pupdate{\lforall[C]{i}{\pumod{f(\vec{s})}{\theta}}}]{\I}}{\iget[state]{\I}}{\iget[state]{\It}}\)
      iff $\iname[state]{\I}$~$\iget[state]{\It}$ is identical to~$\iget[state]{\I}$ except that
      for each \m{e\in\idomain{\I}{C}}:
      \m{\iget[state]{\It}(f)\big(\ivaluation{\Ii}{\vec{s}}\big) = \ivaluation{\Ii}{\theta}}.
\item
      \(\relateds{\iaccess[\hevolvein{\lforall[C]{i}{\D{f(\vec{s})}=\theta}}{\ivr}]{\I}}{\iget[state]{\I}}{\iget[state]{\It}}\)
      iff
      there is a\ignore{ (\emph{flow})} function
      \({{\varphi}{:}{\interval{[0,r]}\to\linterpretations{\Sigma}{V}}}\)
      for some \m{r\geq0} with
      \(\varphi(0)=\iget[state]{\I}\) and \(\varphi(r)=\iget[state]{\It}\)
      such that
      for each \m{e\in\idomain{\I}{C}} and each time \m{\zeta \in \interval{[0,r]}}:
      \begin{itemize}
        \item All differential equations hold and corresponding derivatives exist (trivial for \m{r=0}):
        \[
        \D[t]{\,(\ivaluation{\Ifzi[t]}{f(\vec{s})})} (\zeta) = \ivaluation{\Ifzi[\zeta]}{\theta}
        \]
      \item The evolution domain is respected:
      \m{\imodels{\Ifzi}{\ivr}}.
      \end{itemize}
\end{enumerate}
We call quantified assignments and quantified differential equations \dfn[schematic!QHP]{schematic} iff $\vec{s}$ is $i$ (thus injective) and the only arguments to function symbols in $\theta$ are $i$.
Schematic quantified differential equations like \m{\hevolvein{\lforall[C]{i}{\D{f(i)}=a(i)}}{\ivr}} are very common, because distributed hybrid systems often have a family of similar differential equations replicated for multiple participants $i$. Their synchronization often comes from discrete communication on top of their continuous dynamics. Physically coupled differential equations are possible as well.
They correspond to continuous physical interactions, e.g., if a car bumps into another car from the side, it radically changes the structure of the differential equations that determine its movement.
Either case can be represented in \QHPs, even if the schematic case is more common.

\subsection{Axiomatization} \label{sec:QdL-calculus}

\begin{figure*}[tbh]
  \renewcommand*{\irrulename}[1]{\text{#1}}%
  \newdimen\linferenceRulehskipamount%
  \linferenceRulehskipamount=1mm%
  \newdimen\lcalculuscollectionvskipamount%
  \lcalculuscollectionvskipamount=0.1em%
\advance\leftskip-0.5cm
  \begin{calculuscollections}{\columnwidth}
    \begin{calculus}
      \cinferenceRule[Qupskip|${[:]}$]{update skip}
      {
        \linferenceRule[equiv]
        {\mapply{\mascriptor}{\dbox{\pupdate{\lforall[C]{i}{\umod{f(\vec{s})}{\theta}}}}{\vec{u}}}}
        {\dbox{\pupdate{\lforall[C]{i}{\umod{f(\vec{s})}{\theta}}}}{\mapply{\mascriptor}{\vec{u}}}}
      }{\m{f\neq\mascriptor}}%
      \cinferenceRule[Qassignb|${[:=]}$]{quantified assignment}
      {\linferenceRule[equiv]
        {
            \piif{\lexists[C]{i}{\vec{s}=\dbox{\jupd}{\vec{u}}}}
            {\lforall[C]{i}{(\vec{s}=\dbox{\jupd}{\vec{u}} \limply \mapply{\phi}{\theta})}}
            {\mapply{\phi}{f(\dbox{\jupd}{\vec{u}})}}
        }
        {\mapply{\phi}{\dbox{\pupdate{\lforall[C]{i}{\umod{f(\vec{s})}{\theta}}}}{f(\vec{u})}}}
      }{}%
      \cinferenceRule[Qassignsb|$\dibox{:=}_s$]{quantified assignment  axiom, schematic case}
      {\linferenceRule[equiv]
        {
            \lforall[C]{i}{(i=\dbox{\pupdate{\lforall[C]{i}{\umod{f(i)}{\theta}}}}{u} \limply \mapply{\phi}{\theta})}
        }
        {\mapply{\phi}{\dbox{\pupdate{\lforall[C]{i}{\umod{f(i)}{\theta}}}}{f(u)}}}
      }{}
      \cinferenceRule[Qassignrb|${[{:}{*}]}$]{random assignment}
      {\linferenceRule[equiv]
        {\lforall[C]{j}{\mapply{\phi}{\theta}}}
        {\dbox{\pupdate{\lforall[C]{j}{\pumod{\onew{}}{\theta}}}}{\mapply{\phi}{\onew{}}}}
      }{}
      \cinferenceRule[testb|$\dibox{?}$]{test}
      {\linferenceRule[equiv]
        {(\ivr \limply \phi)}
        {\dbox{\ptest{\ivr}}{\phi}}
      }{}
      \cinferenceRule[Qevolveb|$\dibox{'}$]{evolve}
      {\linferenceRule[equiv]
        {\lforall{t{\geq}0}{\dbox{\pupdate{\lforall[C]{i}{\pumod{f(\vec{s})}{\solf_{\vec{s}}(t)}}}}{\phi}}\quad}
        {\dbox{\hevolve{\lforall[C]{i}{\D{f(\vec{s})}=\theta}}}{\phi}}
      }{\m{\D{\solf_{\vec{s}}}(t)=\genDE{\solf}\,\forall i}}%
    \cinferenceRule[Qevolveinb|${[\&]}$]{evolution domain restriction} %
      {\linferenceRule[equiv]
        {\lforall{t_0{=}\stime}{\dbox{\hevolve{\lforall[C]{i}{\D{f(\vec{s})}=\genDE{x}}}}{}}
        {{\big(\dbox{\hevolve{\lforall[C]{i}{\D{f(\vec{s})}=-\genDE{x}}}}{(\stime\geq t_0\limply\ivr)} \limply \phi\big)}}}
        {\dbox{\hevolvein{\lforall[C]{i}{\D{f(\vec{s})}=\genDE{x}}}{\ivr}}{\phi}}
      }{}%
      \cinferenceRule[choiceb|$\dibox{\cup}$]{axiom of nondeterministic choice}
      {\linferenceRule[equiv]
        {\dbox{\alpha}{\phi} \land \dbox{\beta}{\phi}}
        {\dbox{\pchoice{\alpha}{\beta}}{\phi}}
      }{}
      \cinferenceRule[composeb|$\dibox{{;}}$]{composition} %
      {\linferenceRule[equiv]
        {\dbox{\alpha}{\dbox{\beta}{\phi}}}
        {\dbox{\alpha;\beta}{\phi}}
      }{}
      \cinferenceRule[iterateb|$\dibox{{}^*}$]{iteration/repeat unwind} %
      {\linferenceRule[equiv]
        {\phi \land \dbox{\alpha}{\dbox{\prepeat{\alpha}}{\phi}}}
        {\dbox{\prepeat{\alpha}}{\phi}}
      }{}
      \cinferenceRule[newex|\usebox{\exbox}]{new existence pool}
      {\lexists[C]{\onew{}}{\lnaexisting{\onew{}}}}
      {\m{C\neq\reals}}
      {}%
      \cinferenceRule[K|K]{K axiom / modal modus ponens} %
      {\linferenceRule[impl]
        {\dbox{\alpha}{(\phi\limply\psi)}}
        {(\dbox{\alpha}{\phi}\limply\dbox{\alpha}{\psi})}
      }{}
      \cinferenceRule[I|I]{loop induction}
      {\linferenceRule[impl]
        {\dbox{\prepeat{\alpha}}{(\inv\limply\dbox{\alpha}{\inv})}}
        {(\inv\limply\dbox{\prepeat{\alpha}}{\inv})}
      }{}
      \cinferenceRule[C|C]{loop convergence}
      {\linferenceRule[impl]
        {\dbox{\prepeat{\alpha}}{\lforall{v{>}0}{(\mapply{\var}{v}\limply\ddiamond{\alpha}{\mapply{\var}{v-1}})}}}
        {\lforall{v}{(\mapply{\var}{v} \limply
            \ddiamond{\prepeat{\alpha}}{\lexists{v{\leq}0}{\mapply{\var}{v}}})}}
      }{\m{v\not\in\alpha}}%
      \cinferenceRule[B|B]{Barcan$\dbox{}{}\forall{}$} %
      {\linferenceRule[impl]
        {\lforall[C]{x}{\dbox{\alpha}{\phi}}}
        {\dbox{\alpha}{\lforall[C]{x}{\phi}}}
      }{\m{x\not\in\alpha}}
      \cinferenceRule[V|V]{vacuous $\dbox{}{}$}
      {\linferenceRule[impl]
        {\phi}
        {\dbox{\alpha}{\phi}}
      }{\m{FV(\phi)\cap BV(\alpha)=\emptyset}}%
      \cinferenceRule[G|G]{$\dbox{}{}$ generalisation} %
      {\linferenceRule[formula]
        {\phi}
        {\dbox{\alpha}{\phi}}
      }{}
      \cinferenceRule[MP|MP]{modus ponens}
      {\linferenceRule[formula]
        {\phi\limply\psi \quad \phi}
        {\psi}
      }{}%
      \cinferenceRule[gena|$\forall$]{$\forall{}$ generalisation}
      {\linferenceRule[formula]
        {\phi}
        {\lforall[C]{x}{\phi}}
      }{}%
    \end{calculus}%
  \end{calculuscollections}
  \caption{Quantified differential dynamic logic axiomatization}
  \label{fig:QdL}
\end{figure*}

Our axiomatization of \QdL is shown in \rref{fig:QdL}.
To again highlight the logical essentials, we use an axiomatization that is significantly simplified compared to our earlier work \cite{DBLP:conf/csl/Platzer10,DBLP:journals/lmcs/Platzer12b}.
The axiomatization we use here is in the spirit of our simpler \dL axiomatization that we show in \rref{sec:dL-calculus}.
We use the first-order Hilbert calculus (modus ponens \irref{MP} and $\forall$-generalization rule \irref{gena}) as a basis and allow all instances of valid formulas of many-sorted first-order logic and first-order real arithmetic as axioms.
The first-order theory of real-closed fields is decidable \cite{tarski_decisionalgebra51}.
More constructive deduction modulo rules, which can be used to combine first-order real arithmetic of many-sorted first-order logic with the proof calculus presented here and are suitable for automation, have been reported in previous work \cite{DBLP:conf/csl/Platzer10,DBLP:journals/lmcs/Platzer12b}.
Note that the combination of first-order real arithmetic augmented with many-sorted function symbols is more challenging than the decidable first-order arithmetic of real-closed fields used as a basis for \dL.
It can still be handled with a combination of free variables, instantiation, requantification, and quantifier elimination \cite{DBLP:conf/csl/Platzer10,DBLP:journals/lmcs/Platzer12b}, which we use to lift quantifier elimination to the context of \dL (\rref{lem:qelim-lift}) using real-valued free variables, Skolemization, and Deskolemization for automation purposes \cite{DBLP:journals/jar/Platzer08}.

We write \m{\infers \phi} iff \QdL formula $\phi$ can be \emph{proved} with \QdL rules from \QdL axioms (including first-order rules and axioms); see \rref{fig:QdL}.

The \QdL axioms \irref{testb}, \irref{choiceb}, \irref{composeb}, \irref{iterateb}, \irref{K}, \irref{I}, \irref{C}, \irref{B}, \irref{V}, and rule \irref{G} are as for \dL in \rref{sec:dL-calculus}, because \QdL is a modular extension of \dL and the operators $\ptest{},\pchoice{},;,\prepeat{}$ have the same compositional semantics as in \dL.
We use the same first-order rules \irref{MP} and \irref{gena}, except that \irref{gena} applies to variables $x$ of any sort $C$.
The axioms \irref{Qassignb}, \irref{Qassignsb}, \irref{Qassignrb}, \irref{Qupskip} for quantified assignments, \irref{Qevolveb}, \irref{Qevolveinb} for quantified differential equations, and \irref{newex} for object creation are specific to \QdL.
Observe that, despite the radical semantical generalization, an important principle of generalizing \dL to \QdL is modularity.
One local, but decisive change is from \dL's primitive variables to \QdL's first-order variables.
The other changes are modular in the syntax, semantics, and axiomatization and consist in adding new cases for quantified assignments and for quantified differential equations (and object creation).

Axiom \irref{Qupskip} characterizes the fact that quantified assignments to $f$ have no effect on all other operators $\mascriptor\neq f$ (including other function symbols, $\land$, $\piif{}{}{}$), so that $\mascriptor$ will not be affected by the quantified assignment and can be skipped over.
The argument $\vec{u}$ may still be affected by the quantified assignment, hence \irref{Qupskip} prefixes $\vec{u}$ (component-wise) by \m{\pupdate{\lforall[C]{i}{\umod{f(\vec{s})}{\theta}}}}.
Thus, the \irref{Qupskip} axiom maps a quantified assignment over all arguments homomorphically.
For example, if $\mascriptor$ is an operator taking two arguments and is not the function symbol $f$, then axiom \irref{Qupskip} derives the proof step
\begin{sequentdeduction}
        \linfer[Qupskip]
        {\lsequent[s]{}{\mapply{\mascriptor}{\dbox{\pupdate{\lforall[C]{i}{\umod{f(\vec{s})}{\theta}}}}{u_1},\dbox{\pupdate{\lforall[C]{i}{\umod{f(\vec{s})}{\theta}}}}{u_2}}}}
        {\lsequent[s]{}{\dbox{\pupdate{\lforall[C]{i}{\umod{f(\vec{s})}{\theta}}}}{\mapply{\mascriptor}{u_1,u_2}}}}
\end{sequentdeduction}

Axiom \irref{Qassignb} characterizes how a quantified assignment to $f$ affects the value of a term $f(\vec{u})$.
The effect depends on whether the quantified assignment \m{\pupdate{\lforall[C]{i}{\umod{f(\vec{s})}{\theta}}}} \dfn[match]{matches} \m{f(\vec{u})}, i.e., there is a choice for $i$ such that \m{f(\vec{u})} is affected by the assignment, because $\vec{u}$ is of the form $\vec{s}$ for some $i$.
Whether it matches or not cannot always be decided statically, because it may depend on the particular interpretations.
Hence, axiom \irref{Qassignb} makes a case distinction on matching by yielding an \keywordfont{if-then-else} formula.
The formula \m{\piif{\phi}{\phi_1}{\phi_2}} is short notation for
\m{(\phi \limply \phi_1)  \land  (\lnot\phi \limply \phi_2)}.
If the quantified assignment does not match (\keywordfont{else} part), the occurrence of $f$ in \m{\mapply{\phi}{f(\vec{u})}} will be left unchanged, because $f$ is not changed at position $\vec{u}$.
If it matches (\keywordfont{then} part), the term $\theta$ assigned to \m{f(\vec{s})} is used instead of \m{f(\vec{u})}, for all possible \m{\hastype{i}{C}} that match \m{f(\vec{u})}.
In all cases, the original quantified assignment \m{\pupdate{\lforall[C]{i}{\umod{f(\vec{s})}{\theta}}}}, which we abbreviate by $\jupd$ in \irref{Qassignb}, will be applied to \m{\vec{u}} in the premise, because the value of argument \m{\vec{u}} may also be affected by \m{\jupd}, recursively.
Recall that axioms \irref{Qassignb} and \irref{Qupskip} assume \m{\pupdate{\lforall[C]{i}{\umod{f(\vec{s})}{\theta}}}} to be injective or vacuous.

Axiom \irref{Qassignsb} is an important special case of \irref{Qassignb} that applies to the schematic case where $\vec{s}$ is of the form $i$, which matches trivially.
If $f$ does not occur in $u$, then \irref{Qupskip} simplifies this further:
\[
      \linfer[Qassignsb+Qupskip]
        {\lsequent[s]{}{
            \lforall[C]{i}{(i=u \limply \mapply{\phi}{\theta})}
        }}
        {\lsequent[s]{}{\mapply{\phi}{\dbox{\pupdate{\lforall[C]{i}{\umod{f(i)}{\theta}}}}{f(u)}}}}
\]
When $f$ does not occur in $u$, standard first-order reasoning can simplify further ($\subst[\theta]{i}{u}$ is the term $\theta$ with $i$ replaced by $u$):
\[
      \linfer[Qassignb]
        {\lsequent[s]{}{
            \mapply{\phi}{\subst[\theta]{i}{u}}
        }}
        {\lsequent[s]{}{\mapply{\phi}{\dbox{\pupdate{\lforall[C]{i}{\umod{f(i)}{\theta}}}}{f(u)}}}}
\]
Together with \irref{Qupskip} to propagate the change to both arguments of $\neq$, this derived rule proves, e.g., the following proof step:
\begin{sequentdeduction}[array]
  \linfer[Qassignb+Qupskip]
                   {\lsequent{}%
                     {\lforall{j{\neq}k}{({-}{\frac{b}{2}}\skolem{s}^2+v(j)\skolem{s} + x(j) \neq {-}{\frac{b}{2}}\skolem{s}^2+v(k)\skolem{s} + x(k))}}}
               {\lsequent{}%
                 {\lforall{j{\neq}k}{\dbox{\hupdate{\lforall{i}{\humod{x(i)}{{-}{\frac{b}{2}}\skolem{s}^2+v(i)\skolem{s} + x(i)}}}}{\,x(j){\neq}x(k)}}}}
\end{sequentdeduction}

Axiom \irref{Qassignrb} reduces nondeterministic assignments to universal quantification.
For the handling of other general nondeterministic assignments and nondeterministic differential equations, we refer to previous work \cite{DBLP:journals/logcom/Platzer10,Platzer10}.

Axioms \irref{Qassignb+Qupskip} also apply for assignments without quantifiers, which correspond to vacuous quantification $\lforall[C]{i}{}$where $i$ does not occur anywhere; see previous work \cite{DBLP:conf/csl/Platzer10,DBLP:journals/lmcs/Platzer12b}.
That case amounts to a notational variant of the \irref{assignb} axiom of \dL from \rref{fig:dL}.
Vectorial extensions to systems of quantified assignments and systems of quantified differential equations with multiple function symbols are possible as well \cite{DBLP:conf/csl/Platzer10,DBLP:journals/lmcs/Platzer12b}.

Axiom \irref{Qevolveb} handles continuous evolutions for quantified differential equations with first-order definable solutions.
The difference compared to the axiom \irref{evolveb} of \dL is that \QdL handles infinite-dimensional quantified differential equation systems or quantified differential equation systems with evolving dimensions.
Their solutions are no longer expressible as assignments, but need quantified assignments.
Given a solution for the quantified differential equation system with symbolic initial values~$f(\vec{s})$, continuous evolution along differential equations can be replaced with a quantified assignment \m{\pupdate{\lforall[C]{i}{\pumod{f(\vec{s})}{\solf_{\vec{s}}(t)}}}} corresponding to the simultaneous solution (of the differential equations \m{\hevolve{\lforall[C]{i}{\D{f(\vec{s})}=\theta}}} with~\m{f(\vec{s})} as symbolic initial values)
and an additional quantifier for all evolution durations~$t\geq0$.

For schematic cases like \m{\hevolve{\lforall[C]{i}{\D{f(i)}=a(i)}}}, first-order definable solutions can be obtained by adding argument $i$ to first-order definable solutions of the deparametrized version \m{\hevolve{\D{f}=a}}.
For example, the following proof step uses axiom \irref{Qevolveb} to turn a quantified differential equation system into a quantified assignment with an extra quantifier for the common duration $t$ of the evolution.
{%
   \def\arraystretch{1.3}%
    \begin{sequentdeduction}[array]
          \linfer[Qevolveb]
           {\lsequent{}%
             {\lforall{t{\geq}0}{\dbox{\hupdate{\lforall{i}{\humod{x(i)}{{-}{\frac{b}{2}}t^2+v(i)t + x(i)}}}}{\,\lforall{j{\neq}k}{x(j){\neq}x(k)}}}}}
         {\lsequent{}%
           {\dbox{\hevolve{\lforall{i}{\D{x(i)}=v(i)\syssep\D{v(i)}=-b}}}{\,\lforall{j{\neq}k}{x(j){\neq}x(k)}}}}
  \end{sequentdeduction}
}%
The quantified assignment \m{\hupdate{\lforall{i}{\humod{x(i)}{{-}{\frac{b}{2}}t^2+v(i)t + x(i)}}}} solving the above quantified differential equation system can be obtained easily from the solution \m{\hupdate{\humod{x}{{-}{\frac{b}{2}}t^2+vt + x}}} of the deparametrized differential equation system \m{\hevolve{\D{x}=v\syssep\D{v}=-b}}, just by adding the parameter $i$ back in and checking whether this gives the solution.

The modular ``there and back again'' axiom \irref{Qevolveinb} that reduces quantified differential equations with evolution domain constraints to quantified differential equations without them works as in \dL (\rref{sec:dL-calculus}).
For an explanation how quantified differential equations have unique solutions as required for this to be sound, we refer to previous work \cite{DBLP:conf/csl/Platzer10,DBLP:journals/lmcs/Platzer12b}.
The \QdL axioms \irref{testb}, \irref{choiceb}, \irref{composeb}, \irref{iterateb}, \irref{K}, \irref{I}, \irref{C}, \irref{B}, \irref{V}, and rule \irref{G} are as for \dL, even if \irref{B} and rule \irref{gena} are now many-sorted.

Axiom \irref{newex} expresses that, for sort $C\neq\reals$, there always is a new object $\onew{}$ that has not been created yet ($\lnaexisting{\onew{}}$), because domains are infinite.
This is the only place where we are using the assumption about infinite domains.
The primary purpose is to simplify technicalities that would arise if object creation could run out of objects and may thus fail if, e.g., no more cars can be created; see previous work \cite{DBLP:conf/csl/Platzer10,DBLP:journals/lmcs/Platzer12b} for details on object/agent creation.

\begin{example}[Distributed car control] \label{ex:QdL-proof}
\def\prem{\lforall{i{\neq}j}{x(i){\neq}x(j)}}%
To illustrate how the \QdL proof calculus works, we use \QdL axioms as shown in \rref{fig:QdL-verification-example} to identify when the \QdL formula at the bottom is valid.
\begin{figure*}[tbh]
\def\prem{\lforall{i{\neq}j}{x(i){\neq}x(j)}}%
     \def\arraystretch{1.3}%
      \begin{sequentdeduction}[array]
          \linfer[evolveb]
            {\linfer[Qupskip]
              {\linfer[Qassignsb]
                {\linfer[qear]
                  {\linfer[qear]
                    {\lsequent{\prem} {\lforall{j{\neq}k}{(
                            x(j)\leq x(k)\land v(j)\leq v(k) \lor x(j)\geq x(k)\land v(j)\geq v(k))}}}
                 {\lsequent{\prem} {\lforall{j{\neq}k}{\lforall{t{\geq}0}{({-}{\frac{b}{2}}t^2+v(j)t + x(j) \neq {-}{\frac{b}{2}}t^2+v(k)t + x(k))}}}}
               }%
               {\lsequent{\prem} {\lforall{t{\geq}0}{\lforall{j{\neq}k}{({-}{\frac{b}{2}}t^2+v(j)t + x(j) \neq {-}{\frac{b}{2}}t^2+v(k)t + x(k))}}}}
             }%
             {\lsequent{\prem} {\lforall{t{\geq}0}{\lforall{j{\neq}k}{\dbox{\hupdate{\lforall{i}{\humod{x(i)}{{-}{\frac{b}{2}}t^2+v(i)t + x(i)}}}}{\,x(j){\neq}x(k)}}}}}
           }%
           {\lsequent{\prem}  {\lforall{t{\geq}0}{\dbox{\hupdate{\lforall{i}{\humod{x(i)}{{-}{\frac{b}{2}}t^2+v(i)t + x(i)}}}}{\,\lforall{j{\neq}k}{x(j){\neq}x(k)}}}}}
         }%
         {\lsequent{\prem} {\dbox{\hevolve{\lforall{i}{\D{x(i)}=v(i)\syssep\D{v(i)}=-b}}}{\,\lforall{j{\neq}k}{x(j){\neq}x(k)}}}}
      \end{sequentdeduction}
  \caption{Example of a \QdL prove about collision-freedom of simple distributed car control}
  \label{fig:QdL-verification-example}
\end{figure*}
The \QdL formula that we consider here follows the pattern of the running example formula \rref{eq:distributed-car-control-new}. We only consider the braking case for typesetting reasons and refer to \cite{DBLP:journals/lmcs/Platzer12b} for a full proof.
The case we consider is the distributed hybrid systems analog of \dL formula \rref{eq:car-single-braking-discovery} that we proved in \rref{ex:dL-proof}.
The other difference compared to \rref{eq:distributed-car-control-new} is that the formula in \rref{fig:QdL-verification-example} has a weaker assumption.
It only assumes that cars start from different positions (\m{\prem}), not that they respect the compatibility constraint \m{\let\laforall\lforall \dcinv}.
Like in \rref{ex:dL-proof}, we are using the \QdL axioms to find out by the derivation in \rref{fig:QdL-verification-example} how we need to choose \m{\dcseparate{i}{j}} to ensure collision freedom.

We start with the conjecture at the bottom of \rref{fig:QdL-verification-example} and successively reduce it by using \QdL axioms and proof rules.
First, we use axiom \irref{Qevolveb} to turn the quantified differential equation system into a quantified assignment with an extra quantifier for the duration $t$ of the evolution.
The quantified differential equation system is easy to solve.
The quantified assignment \m{\hupdate{\lforall{i}{\humod{x(i)}{{-}{\frac{b}{2}}t^2+v(i)t + x(i)}}}} solving it can be obtained easily from the solution \m{\hupdate{\humod{x}{{-}{\frac{b}{2}}t^2+vt + x}}} of the deparametrized differential equation system \m{\hevolve{\D{x}=v\syssep\D{v}=-b}}, just by adding the parameter $i$ back in and checking that the resulting terms solve the quantified differential equation.
The premise of the use of \irref{evolveb} has a quantifier \m{\forall{t}} as the top-most logical operator in the succedent.
Even though it is a quantifier over a real variable, we cannot use the decision procedure of quantifier elimination for real-closed fields \cite{tarski_decisionalgebra51} to handle it, because we do not have a formula of first-order real arithmetic, but still a \QdL formula with a modality expressing a property of all reachable states.
Instead, we use axiom \irref{Qupskip} to skip over the quantifier \m{\forall{j{\neq}k}} and then use axiom \irref{Qassignsb} or \irref{Qassignb} to let the quantified assignment to $x(i)$ (for all $i$) take effect on the postcondition \m{x(j)\neq x(k)} by skipping over the $\neq$ with axiom \irref{Qupskip} (not shown in \rref{fig:QdL-verification-example}) and then affecting $x(j)$ and $x(k)$ subsequently by rule \irref{Qassignsb}.

At this point (premise of the top-most use of axiom \irref{Qassignsb} in \rref{fig:QdL-verification-example}), we already have a first-order formula and it looks like we could apply quantifier elimination for real-closed fields \cite{tarski_decisionalgebra51} to the quantified real variable $t$.
This would not work, however, because quantifier elimination works from inside out and will try to eliminate the inner quantifier \m{\forall{j{\neq}k}} before the outer quantifier \m{\forall{\skolem{s}}}.
Yet, this formula is not an instance of first-order real arithmetic (not even when using \rref{lem:qelim-lift}), but of many-sorted first-order logic with quantified variables $j,k$ of sort $C$, because there are dependencies on the quantified variables $j,k$ in function arguments, which is fundamentally more difficult \cite{Platzer10}.
Instead, the proof in \rref{fig:QdL-verification-example} uses a tautology of first-order logic (marked \irref{qear}) to commute the quantifiers.
Now, we can apply quantifier elimination for real-closed fields \cite{tarski_decisionalgebra51} as an equivalence in first-order logic (written \irref{qear}), even though the formula is still not in first-order real arithmetic, because function symbols like $v(j)$ occur.
However, it is an instance ($v(\skolem{j})$ for $V$ and $x(\skolem{j})$ for $X$ and $v(\skolem{k})$ for $W$ and $x(\skolem{k})$ for $Y$) of the following formula of first-order real arithmetic:
\begin{equation}
\lforall{\skolem{s}{\geq}0}{\Big(\skolem{j}{\neq}\skolem{k} \limply
                          {{-}{\frac{b}{2}}\skolem{s}^2+V\skolem{s} + X \neq {-}{\frac{b}{2}}\skolem{s}^2+W\skolem{s} + Y}\Big)}
\label{eq:qelim-lift-ex}
\end{equation}
and, thus, quantifier elimination can be lifted by \rref{lem:qelim-lift}.
The result of quantifier elimination is an instance (with the same instantiation as above) of the result of applying \qelim{} to \rref{eq:qelim-lift-ex}.
Finally, we could use real arithmetic on the top-most formula, as an instance of the following formula of plain first-order real arithmetic (with the instantiation $x(j)$ for $X$, $v(j)$ for $V$, $x(k)$ for $Y$, and $v(k)$ for $W$) 
\[
j\neq k %
\limply X\leq Y\land V\leq W \lor X\geq Y \land V\geq W
\]
However, the formula is not valid, so the proof does not close.
This is good news for soundness, however, because the conjecture at the bottom of \rref{fig:QdL-verification-example} is not valid, unless the constraints at the top hold about the relation of the velocities and positions of the cars.
The constraints at the top of \rref{fig:QdL-verification-example} can be used to construct the constraints required for safety, which coincide with \m{\dcseparate{j}{k}} that we have shown in \rref{eq:distributed-car-control-separate}.
\end{example}

A comparison of the \QdL proof in \rref{ex:QdL-proof} compared to the \dL proofs in \rref{ex:dL-proof} shows that the interplay of instantiation in many-sorted first-order logic and quantifier elimination in first-order real arithmetic is an important challenge when reasoning about distributed hybrid systems.
We refer to previous work \cite{DBLP:conf/csl/Platzer10,DBLP:journals/lmcs/Platzer12b,DBLP:journals/jar/Platzer08,DBLP:conf/icfem/RenshawLP11} for details on how this can be automated in the \QdL calculus.

\subsection{Soundness and Completeness} \label{sec:QdL-complete}

Distributed hybrid systems have several independent sources of undecidability: discrete dynamics, continuous dynamics, and structural/dimensional dynamics \cite{DBLP:conf/csl/Platzer10,DBLP:journals/lmcs/Platzer12b}.
\begin{theorem}[Incompleteness of {\QdL} \cite{DBLP:conf/csl/Platzer10,DBLP:journals/lmcs/Platzer12b}] \label{thm:QdL-incomplete}
  \index{incomplete!QdL@\QdL}
  The discrete fragment of \QdL, the continuous fragment of \QdL, and the fragment of \QdL with structural and dimension-changing dynamics are \dfn[axiomatize]{not effectively axiomatizable}, i.e., they have no sound and complete effective calculus, because natural numbers are definable\index{definable} in each of those fragments.
  \index{integer!arithmetic}
  \index{natural!number!definable}
\end{theorem}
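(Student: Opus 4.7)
The plan is to follow the same strategy as for Theorem~\ref{thm:dL-incomplete}, but now verify definability of the natural numbers inside each of the three fragments separately; once \m{\naturals} is first-order definable in a fragment, G\"odel's incompleteness theorem \cite{Goedel_1931} immediately rules out any sound and complete effective axiomatization of that fragment.

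For the \emph{discrete fragment} of \QdL, the definition given in the proof of \rref{thm:dL-incomplete} transfers verbatim, since every HP is a \QHP obtained by vacuous quantification: the formula \(\textit{nat}(n) \lbisubjunct \ddiamond{\pupdate{\umod{x}{0}};\prepeat{(\pupdate{\umod{x}{x+1}})}}{x=n}\) is already a \QdL formula using only discrete quantified assignments with vacuous $\lforall[C]{i}{}$. The \emph{continuous fragment} case is likewise inherited directly: the trigonometric encoding of \m{\naturals} via \m{\hevolve{\D{s}=c\syssep\D{c}=-s\syssep\D{\htime}=1}} is a quantified differential equation with vacuous quantifier, and the detection formula involves only first-order arithmetic plus a single $\ddiamond{\cdot}{}$ modality over this ODE.

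The genuinely new case is the \emph{structural/dimensional fragment}, where we must define \m{\naturals} using only object creation \m{\pumod{\onew{}}{\pnew{C}}}, quantified (non-differential) assignments, tests, and program combinators --- in particular without relying on continuous dynamics and without ordinary real-valued assignments of arithmetic expressions being available in unrestricted form. The plan is to use the createdness flag \m{\laexisting{\cdot}} as a counting device: starting from a state with no created objects of sort $C$, each execution of \m{\pumod{\onew{}}{\pnew{C}}} enlarges the (finite) existence pool by exactly one element, so the cardinality of \m{\{i : \laexisting{i}=1\}} after $n$ iterations of creation is $n$. Concretely, I would define $\textit{nat}(n)$ to hold iff there is an execution of \m{\prepeat{(\pumod{\onew{}}{\pnew{C}}; \pupdate{\lforall[C]{i}{\umod{c(i)}{c(i)+1}}})}} starting from a state in which all $c(i)$ are $0$ and no object exists, whose resulting state exhibits an existing object whose counter value equals $n$ (or, dually, that there are exactly $n$ existing objects). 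One can then express via a \QdL formula with $\ddiamond{\cdot}{}$ and quantifiers over $C$ that $n$ is reachable this way.

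The main obstacle is ensuring that the counting construction lives entirely inside the intended fragment. Adding to a counter $c(i)$ requires a quantified assignment of arithmetic shape, and the resulting value needs to be compared against $n$ by an equality in first-order logic; I need to check that these primitives are permitted in the ``structural and dimension-changing'' sublanguage as intended in \cite{DBLP:conf/csl/Platzer10,DBLP:journals/lmcs/Platzer12b}, and if the fragment is more restrictive (e.g., forbids arithmetic inside quantified assignments), then I would instead encode $n$ purely through a chain of created objects linked by a function symbol $\textit{next}:C\to C$, defining $\textit{nat}(n)$ by stating the existence of a $\textit{next}$-chain of length $n$ of freshly created objects, which uses only object creation, (non-arithmetic) quantified assignments \m{\pupdate{\lforall[C]{i}{\umod{\textit{next}(i)}{\onew{}}}}} suitably guarded by tests, and many-sorted first-order quantification. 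Either encoding yields a predicate on $\reals$ coextensive with \m{\naturals}, and the standard argument from \cite{Goedel_1931} then delivers effective non-axiomatizability. \qed
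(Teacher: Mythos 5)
Your treatment of the discrete and continuous fragments is fine and matches the paper: both are inherited directly from \rref{thm:dL-incomplete} via vacuous quantification. The gap lies in the structural/dimensional case, which is the only genuinely new content of this theorem. Your primary encoding increments counters by \m{\pupdate{\lforall[C]{i}{\umod{c(i)}{c(i)+1}}}} and compares them to a real \m{n}; these are arithmetic discrete assignments, so the construction lives in the discrete fragment rather than in the fragment of structural and dimension-changing dynamics --- the very worry you raise yourself is warranted. The paper avoids this by defining the copy of \m{\naturals} purely by link-following along a function \m{p:C\to C} with a constant \m{z} of sort \m{C}, using only tests, non-arithmetic assignments, and repetition:
\(\textit{nat}(n) \lbisubjunct \ddiamond{\prepeat{(\ptest{n\neq z};~\pupdate{\umod{n}{p(n)}})}}{~n=z}\).

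More importantly, your fallback chain encoding does not, as you claim, yield ``a predicate on \m{\reals} coextensive with \m{\naturals}'': the copy of \m{\naturals} it produces lives on the object sort \m{C} (an object represents a number via the length of its chain), not inside \m{\reals}. Hence addition and multiplication cannot simply be inherited from real arithmetic as in the discrete and continuous cases, and G\"odel's incompleteness theorem requires an interpretation of arithmetic, not merely a unary predicate isomorphic to \m{\naturals}. This is precisely the additional work the paper performs for this fragment: addition (and, in the full proof, multiplication) is characterized using the dimension-changing dynamics itself, by creating a fresh chain that appends as many new links as one can follow along \m{p} from \m{n} and then as many as one can follow from \m{m} (cf.\ \rref{fig:QdL-incomplete-dim}). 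Your proposal omits this step entirely, so as written it does not establish definability of natural-number arithmetic in the structural/dimensional fragment, and the appeal to G\"odel does not yet go through for that case.
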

\begin{proof}
  Incompleteness of the discrete fragment and of the continuous fragment follows from \rref{thm:dL-incomplete}.
  G\"odel's incompleteness theorem \cite{Goedel_1931} applies to the fragment with only structural and dimensional dynamics, because natural numbers are definable in that fragment by chains of links along the values of a function $p$, encoding zero by constant symbol~$z$:
  \[
  \textit{nat}(n) ~\lbisubjunct~ \ddiamond{\prepeat{(\ptest{n\neq z};~\pupdate{\umod{n}{p(n)}})}}{~n=z}
  .
  \index{_nat_@$\textit{nat}$}
  \]%
  For details on the characterization of addition and multiplication, we refer to the full proof \cite{DBLP:journals/lmcs/Platzer12b}.
  The idea behind addition is shown in \rref{fig:QdL-incomplete-dim}: 
  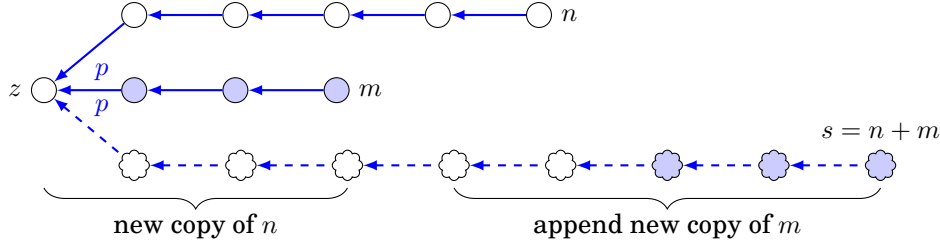
\begin{figure}[tbh]
    \centering
    \begin{tikzpicture}[every join/.style={trans}]
      \tikzstyle{node}=[draw,circle]
      \tikzstyle{trans}=[draw,<-,blue,thick,>=latex]
      \node[node,label=180:$z$] (z) {};
      \begin{scope}[start chain=n,xshift=1.2cm,yshift=1cm]
        \foreach \i in {1,...,4} {
          \node[node,on chain=n,join] {};
        }
        \node[node,on chain=n,join,label=0:$n$] {};
      \end{scope}
      \draw[trans] (z) -- (n-begin);
      \begin{scope}[start chain=m,xshift=1.2cm,yshift=0cm]
        \tikzstyle{node}+=[fill=blue!20]
        \foreach \i in {1,...,2}
          \node[node,on chain=m,join] {};
        \node[node,on chain=m,join,label=0:$m$] {};
      \end{scope}
      \draw[trans] (z) -- (m-begin) node[pos=0.7,above] {$p$} node[pos=0.7,below] {$p$};
      \tikzstyle{node}+=[cloud,cloud puffs=8]
      \tikzstyle{trans}+=[dashed]
      \begin{scope}[start chain=s,xshift=1.2cm,yshift=-1cm]
        \foreach \i in {1,...,5}
          \node[node,on chain=s,join] {};
       \tikzstyle{node}+=[fill=blue!20]
        \foreach \i in {1,...,2}
          \node[node,on chain=s,join] {};
        \node[node,on chain=s,join,label=90:${s=n+m}$] {};
      \end{scope}
      \draw[trans] (z) -- (s-begin);
      \node (desc) at (0,-1.3) {};
      \draw[decorate,decoration={brace,amplitude=8pt},yshift=1cm] (desc -| s-3) -- node[below=6pt] {new copy of $n$} (desc -| z);
      \draw[decorate,decoration={brace,amplitude=8pt},yshift=1cm] (desc -| s-end) -- node[below=6pt] {append new copy of $m$} (desc -| s-4);
    \end{tikzpicture}
    \caption{Characterization of $\naturals$ addition with $p$ links in dimensional dynamics}
    \label{fig:QdL-incomplete-dim}
  \end{figure}%
  create a new chain of links along the values of $p$ by first creating exactly as many links as we can follow along $p$ when starting from $n$, and then continue creating exactly as many links as we can follow along $p$ when starting from $m$, instead.
  The number of links of the result $s$ is the sum of the respective numbers of links of $n$ and $m$.
\end{proof}

We have shown that the original \QdL calculus \cite{DBLP:conf/csl/Platzer10,DBLP:journals/lmcs/Platzer12b} is a sound and complete axiomatization of \QdL relative to the continuous fragment (\FOQD).
\FOQD is the \emph{first-order logic of quantified differential equations}, i.e., (many-sorted) first-order logic with real arithmetic augmented with formulas expressing properties of quantified differential equations, that is, \QdL formulas of the form \m{\dbox{\hevolvein{\lforall[C]{i}{\D{f(\vec{s})}=\theta}}{\ivr}}{F}}.
The dual formula \m{\ddiamond{\hevolvein{\lforall[C]{i}{\D{f(\vec{s})}=\theta}}{\ivr}}{F}} is expressible as \m{\lnot\dbox{\hevolvein{\lforall[C]{i}{\D{f(\vec{s})}=\theta}}{\ivr}}{\lnot F}}.
A combination of the original proof \cite{DBLP:conf/csl/Platzer10,DBLP:journals/lmcs/Platzer12b} and the proof of \rref{thm:dL-complete} can be used to show that the simplified \QdL calculus in \rref{fig:QdL} is sound and complete relative to \FOQD.
\begin{theorem}[Relative completeness of \QdL \cite{DBLP:conf/csl/Platzer10,DBLP:journals/lmcs/Platzer12b}] \label{thm:QdL-complete}
  \index{complete!relatively!QdL@\QdL}
  The \QdL calculus is a \emph{sound and complete axiomatization} of distributed hybrid systems relative to \FOQD, i.e.,
  every valid \QdL formula can be derived from \FOQD tautologies:
  \[
  \entails\phi ~~\text{iff}~~ \text{\upshape Taut}_{\FOQD} \infers \phi
  \]
\end{theorem}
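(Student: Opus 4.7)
The plan is to establish soundness by local inspection of each axiom and rule in \rref{fig:QdL}, and to prove completeness by the standard Harel-style expressibility argument, adapted to the first-order, infinite-dimensional setting of distributed hybrid systems. Soundness is the easier direction: I would verify, one schema at a time, that each axiom is valid in every state and that each proof rule preserves validity. The propositional core, the Hilbert rules \irref{MP} and \irref{gena}, and the compositional axioms \irref{testb}, \irref{choiceb}, \irref{composeb}, \irref{iterateb}, \irref{K}, \irref{I}, \irref{C}, \irref{B}, \irref{V}, \irref{G} transfer directly from the soundness argument for \rref{thm:dL-complete}, because their semantics is modular. The genuinely new axioms \irref{Qassignb}, \irref{Qassignsb}, \irref{Qupskip}, \irref{Qevolveb}, \irref{Qevolveinb}, and \irref{newex} must be checked against the \QHP transition semantics of \rref{sec:QdL-QHP-transition}; for the quantified assignment axioms the injectivity (or schematic) assumption on $\lforall[C]{i}{\pumod{f(\vec{s})}{\theta}}$ is what makes the \keywordfont{if-then-else} case split of \irref{Qassignb} well-defined, and for \irref{Qevolveb} soundness rests on existence and uniqueness of solutions of the quantified differential equation together with the constant-domain assumption on sorts.

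For completeness, the core technique is to prove an \emph{expressibility lemma}: for every \QdL formula $\phi$ there exists a \FOQD formula $\phi^\natural$ with $\entails\phi\lbisubjunct\phi^\natural$, and, crucially, the equivalence $\phi\lbisubjunct\phi^\natural$ is derivable in the \QdL calculus from \FOQD tautologies. Given this lemma, a valid \QdL formula $\phi$ reduces to the valid \FOQD formula $\phi^\natural$, which by hypothesis lies in $\text{Taut}_{\FOQD}$, so $\infers\phi$ follows. The reduction proceeds by induction on the structure of $\phi$. First-order, propositional, and quantifier cases are immediate; modalities over atomic \QHPs are eliminated using \irref{Qassignb}, \irref{Qassignsb}, \irref{Qupskip}, \irref{testb}, and \irref{Qevolveb} together with \irref{Qevolveinb} for domain-restricted evolutions; sequential composition and choice are handled by \irref{composeb}, \irref{choiceb}, and modality propagation is done by \irref{K} and \irref{G}. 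Object creation is absorbed modularly via \irref{newex} and the $\laexisting{\cdot}$-masking convention.

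The principal obstacle, as always in relative completeness proofs of dynamic logic, is the case $\dbox{\prepeat{\alpha}}{\phi}$. To reduce this to \FOQD I must produce, inside \FOQD, a loop invariant $F$ such that (i) $F$ holds initially, (ii) $F\limply\dbox{\alpha}{F}$, and (iii) $F\limply\phi$, all three derivable in the calculus; closure by \irref{I} then yields the reduction. The invariant I want is ``$\phi$ holds at every state reachable by $\prepeat{\alpha}$,'' which requires \FOQD-expressibility of reachability for an arbitrary \QHP $\alpha$. In the \dL setting of \rref{thm:dL-complete} this uses Gödel coding of finite sequences of states inside real arithmetic. Here, however, a state is not a point of $\reals^n$ but an entire many-sorted first-order structure, assigning functions to an unbounded number of positions. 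My plan is to exploit that $\laexisting{\cdot}$ has finite support in every reachable state, so that only finitely many function values are ``live'' at once; these can be Gödel-coded into finite sequences of reals, and natural-number arithmetic needed for the coding is \FOQD-definable as exhibited in the proof of \rref{thm:QdL-incomplete}. With sequences of states expressible, $\prepeat{\alpha}$-reachability unfolds into ``there is a finite sequence whose consecutive pairs satisfy the inductively constructed \FOQD formula for one-step $\alpha$-reachability.'' The derivability of the equivalence in the calculus then follows by iterating \irref{iterateb}, \irref{I}, \irref{K}, \irref{G} over this coded invariant.

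The hard part is not the shape of the argument, which follows the blueprint of \rref{thm:dL-complete}, but ensuring the coding machinery is robust in the presence of first-order function symbols, quantified program operations, and dynamically appearing objects. In particular, one must check that the coded invariant is preserved across quantified assignments \irref{Qassignb} and quantified differential equations \irref{Qevolveb}, whose effects touch unboundedly many positions simultaneously; here the injectivity/schematic restriction and the finite-support property of $\laexisting{\cdot}$ are exactly what keeps the bookkeeping tractable. Once this is in place, the relative completeness equivalence $\entails\phi$ iff $\text{Taut}_{\FOQD}\infers\phi$ follows.
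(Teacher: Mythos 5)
Your overall skeleton --- soundness by local inspection of each axiom and rule, completeness by a Harel-style expressibility lemma relative to \FOQD, with loops closed via an invariant and \irref{I} --- matches the blueprint the paper points to; note that the paper itself gives no detailed proof of \rref{thm:QdL-complete} but defers to the cited work and says the result for the calculus of \rref{fig:QdL} follows by combining that proof with the proof of \rref{thm:dL-complete}. The genuine gap is in your coding step. You justify \FOQD-expressibility of \m{\prepeat{\alpha}}-reachability by claiming that, since $\laexisting{\cdot}$ has finite support, ``only finitely many function values are live,'' so a state can be G\"odel-coded as a finite sequence of reals. That inference is unsound for the semantics used here: finite support is assumed only for the createdness flag, whereas a state interprets \emph{every} function symbol as a total function on an infinite domain; quantified assignments \m{\pupdate{\lforall[C]{i}{\pumod{f(\vec{s})}{\theta}}}} and quantified differential equations change $f$ at infinitely many positions simultaneously, and \QdL formulas may quantify over all objects of a sort, created or not. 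So the part of the state relevant to the truth of your loop invariant is genuinely infinite-dimensional, the \dL-style G\"odel encoding of intermediate states into finitely many reals does not carry over, and your reduction of \m{\dbox{\prepeat{\alpha}}{\phi}} breaks at exactly this point.

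What is needed instead --- and this is the real technical content hidden behind the paper's phrase ``a combination of the original proof and the proof of \rref{thm:dL-complete}'' --- is an encoding that keeps the first-order structure: intermediate states have to be described through the function symbols themselves, e.g.\ coding per object the finitely many real values attached to it and letting quantification over the sort absorb the unbounded dimension, rather than flattening an entire state into a real tuple; establishing that such characterizations are both \FOQD-expressible and derivable in the calculus is precisely where the distributed case departs from a verbatim replay of the \dL argument, and your proposal does not supply it. Two smaller points: \irref{Qevolveb} cannot serve as a general elimination step for quantified differential equations, since it presupposes first-order definable solutions; it is also not needed for expressibility, because once the postcondition has been reduced, a box modality of a quantified differential equation over a first-order postcondition is already an \FOQD formula. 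Your soundness half is fine, including the roles you assign to injectivity/schematicity for \irref{Qassignb} and to uniqueness of solutions for \irref{Qevolveb}.
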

\noindent

This central result shows that properties of distributed hybrid systems can be proven to exactly the same extent to which properties of quantified differential equations can be proven.
Proof-theoretically, the \QdL calculus completely lifts verification techniques for quantified continuous dynamics to distributed hybrid dynamics.
Even though distributed hybrid systems have numerous independent sources of undecidability, we have shown that all true \QdL formulas can be proven in our \QdL calculus, if only we manage to tame the complexity of the continuous dynamics.
Despite these new independent sources of undecidability, we have shown that \QdL can still be axiomatized completely relative to differential equations, only now they are quantified differential equations.

Another important consequence of this result is that decomposition is successful in taming the complexity of distributed hybrid systems.
The \QdL proof calculus is strictly compositional.
All \QdL axioms and proof rules prove logical formulas or properties of \QHPs by reducing them to structurally simpler \QdL formulas.
As soon as we understand that the distributed hybrid systems complexity comes from a combination of several simpler aspects, we can, hence, tame the system complexity by reducing it to analyzing the dynamical effects of simpler parts.
This decomposition principle is exactly how \QdL proofs can scale to interesting systems in practice.
The relative completeness theorem~\ref{thm:QdL-complete} gives the theoretical evidence why this principle works in general.
This is yet another illustration of our principle of multi-dynamical systems and even a proof that the decompositions behind the multi-dynamical systems approach are successful.

\subsection{Quantified Differential Invariants} \label{sec:Qdiffind}

Differential invariants (\rref{sec:diffind}) are the premier proof technique for proving properties of complicated differential equations, but they only work for hybrid systems.
Their generalization to quantified differential equations of distributed hybrid systems is called \emph{quantified differential invariant} \cite{DBLP:conf/hybrid/Platzer11}.
For quantified differential equations, one of the extra challenges is that the system does not have a fixed finite dimension but can be arbitrary-dimensional and even of evolving dimensions. Consequently, there is not even a finite vector space in which the local directions of the vector field of the differential equations can be described and checked.
Instead we need a criterion based on implicit properties of the local dynamics at uncountably many points in an essentially infinite-dimensional vector field of evolving dimensions.
We lift the syntactic derivation operator from \rref{sec:diffind} to first-order for that purpose.
\begin{definition}[Quantified derivation]
  The operator~\m{\DD{}\ignore{:\lterms{\Sigma{\cup}\D{\Sigma}}{V}\to\lterms{\Sigma{\cup}\D{\Sigma}}{V}}} that is defined as follows on terms is called \dfn[derivation!syntactic]{(quantified) syntactic (total) derivation}:
  \begin{subequations}
  \begin{align}
    \der{r} & = 0
      \hspace{0.85cm}\text{for}~r\in\rationals
    \label{eq:QDconstant}\\
    \der{x(s)} & =  \D{x(s)}
      \quad\text{for function symbol}~x:C\to\reals~\text{with}~C\neq\reals~\text{discrete}&
    \label{eq:QDpolynomial}\\
    \der{a+b} & = \der{a} + \der{b}
    \label{eq:QDadditive}\\
    \der{a-b} & = \der{a} - \der{b}
    \label{eq:QDsubtractive}\\
    \der{a\cdot b} & = \der{a}\cdot b + a\cdot\der{b}
    \label{eq:QDLeibniz}\\
    \der{a / b} & = (\der{a}\cdot b - a\cdot\der{b}) / b^2
    \label{eq:QDquotient}
  \end{align}
  \end{subequations}
  \index{differential!symbol}%
  We extend $\DD{}$ to first-order formulas $F$\internal{only without negative equalities\internal{also not in disguise}} in prefix disjunctive normal form
  as follows:
  \begin{align*}
    \der{\lforall[C]{i}{F}} &\,\mequiv\, \lforall[C]{i}{\der{F}}\\
    \der{\lexists[C]{i}{F}} &\,\mequiv\, \lforall[C]{i}{\der{F}}\\
    \der{F\land G} &\,\mequiv\, \der{F} \land \der{G}\notag\\
    \der{F\lor G} &\,\mequiv\, \der{F} \land \der{G}
    \notag
    \label{eq:QDformula}
    \\
    \internal{\der{\lnot F} &\,\mequiv\, \lnot\der{F}
    &&\text{for literals}~F\\}
    \der{a\geq b} &\,\mequiv\, \der{a} \geq \der{b}
    \quad\text{accordingly for \m{<,>,\leq,=}}
    .
    \notag
  \end{align*}
\end{definition}
The \emph{quantified differential induction} rule is a natural induction principle for quantified differential equations:
\begin{center}
    \begin{calculus}
      \cinferenceRule[Qdiffind|DI]{quantified differential induction}
      {\linferenceRule[sequent]
        {\lsequent{\ivr}{
              \dbox{\pupdate{\lforall[C]{i}{\umod{\D{f(\vec{i})}}{\theta}}}}{\der{F}}}}
        {\lsequent{F}{
              \dbox{\hevolvein{\lforall[C]{i}{\D{f(\vec{i})}=\theta}}{\ivr}}{F}}}
       }{}
    \end{calculus}
\end{center}

For a formula $F$ if we can prove the premise of rule \irref{Qdiffind}, i.e., that, after a differential substitution \m{\dbox{\pupdate{\lforall[C]{i}{\umod{\D{f(\vec{i})}}{\theta}}}}{}}, the total derivative \m{\der{F}} is valid in the evolution domain region $\ivr$, then the conclusion of \irref{Qdiffind} is valid, i.e., the system always stays in region $F$ when it starts in $F$ (left assumption in conclusion).
It is important that we add the quantified assignment \m{\pupdate{\lforall[C]{i}{\umod{\D{f(\vec{i})}}{\theta}}}} for quantified differential symbol $\D{f(\vec{i})}$ as a \emph{quantified differential substitution} in the premise, because, otherwise, the premise of \irref{diffind} is not a logical formula that would have a well-defined semantics when evaluated in a state.
Unlike $F$, the total derivative \m{\der{F}} contains differential function symbols like $\D{f(i)}$, which do not have a semantics in isolated states but only along a flow (an insightful differential semantics can be defined but is beyond the scope of this article).
The quantified assignment defines a value for those differential function symbols, which has a well-defined correspondence to the local dynamics of quantified differential equations based on a differential substitution property \cite[Lemma 2]{DBLP:conf/hybrid/Platzer11}.
The conjunctive definition of \m{\der{F\lor G}} is crucial (recall\rref{sec:diffind}) and so is the universal definition of \m{\der{\lexists[C]{i}{F}}}.
Differential cuts \irref{diffcut} (see \rref{sec:diffcut}) generalize to quantified differential equations immediately \cite{DBLP:conf/hybrid/Platzer11} and are as crucial as they are for hybrid systems.
A generalization of the derivation lemma (\rref{lem:derivationLemma}) to quantified differential equations is a key argument to relate analytic differentiation and syntactic derivations \cite[Lemma 1]{DBLP:conf/hybrid/Platzer11}.
For details, see previous work \cite{DBLP:conf/hybrid/Platzer11}.

\subsection{Implementation and Applications} \label{sec:KeYmaeraD}

Quantified differential dynamic logic and a sequent calculus variation of its proof calculus \cite{DBLP:conf/csl/Platzer10,DBLP:journals/lmcs/Platzer12b}, including quantified differential invariants and quantified differential cuts \cite{DBLP:conf/hybrid/Platzer11} have been implemented in the LCF-style tactics-based theorem prover \KeYmaeraD \cite{DBLP:conf/icfem/RenshawLP11}.\footnote{Available at \url{http://symbolaris.com/info/KeYmaeraD.html}}
The name \KeYmaeraD extends \KeYmaera with a D for distributed, which represents the fact that \KeYmaeraD can prove distributed hybrid systems and the fact that \KeYmaeraD is implemented with an architecture that supports distributed and parallel proving of both distributed and non-distributed hybrid systems.

Quantified differential dynamic logic and \KeYmaeraD have been used successfully to prove collision-freedom properties for distributed car controllers for arbitrarily many cars on arbitrarily many lanes on straight highways \cite{DBLP:conf/csl/Platzer10,DBLP:journals/lmcs/Platzer12b,DBLP:conf/fm/LoosPN11,DBLP:conf/icfem/RenshawLP11} and safe separation properties for roundabout collision avoidance maneuvers for arbitrarily many aircraft with dynamic appearance of aircraft during collision avoidance \cite{DBLP:conf/hybrid/Platzer11}.
They have also been used to prove properties of advanced flight control protocols and medical robotic applications.

\section{Stochastic Differential Dynamic Logic for Stochastic Hybrid Systems} \label{ch:SdL}
\makeatletter
\renewcommand*{\@dL@interpretationformatfix}[2]{{#2}^{#1}}\makeatother
\renewcommand{\precond}{I}
\renewcommand{\inv}{\mathsf{f}}
\renewcommand{\postcond}{g}

\newcommand{\Io}[1][t]{\SDLint}
\renewcommand{\stdI}[1][t]{\SDLint[state=X_{#1}]}
\let\I\stdI
\newcommand{\Iw}[1][t]{\SDLint[state=X_{#1}(\omega)]}
\renewcommand{\If}{\SDLint[flow=X]}
\newcommand{\Ix}[1][]{\iconcat[state=x,time=#1]{\stdI}}
\newcommand{\IZ}[1][]{\iconcat[state=Z,time=#1]{\stdI}}
\newcommand{\IZw}[1][]{\iconcat[state=Z(\omega),time=#1]{\stdI}}
\newcommand{\IXc}[1][t]{\iconcat[state=\check{X}_{#1},time=#1]{\stdI}}
\newcommand{\IZS}[1][]{\iconcat[state=S,time=#1]{\stdI}}
\newcommand{\Iy}[1][]{\iconcat[state=y,time=#1]{\I}}%
\newcommand{\IY}[1][]{\iconcat[state=Y,time=#1]{\I}}%
\newcommand{\IYw}[1][]{\iconcat[state=Y(\omega),time=#1]{\I}}%
\newcommand{\spair}[2]{#1\otimes#2}%

In this section, we study \emph{stochastic differential dynamic logic} \SdL \cite{DBLP:conf/cade/Platzer11}, the \emph{logic of stochastic hybrid systems}, i.e., systems that combine stochastic dynamics with the discrete and continuous dynamics of hybrid systems.

  In the previous sections, we have seen that logic of dynamical systems is a powerful tool for analyzing and verifying dynamical systems, including hybrid systems (\rref{ch:dL}) and distributed hybrid systems (\rref{ch:QdL}).
  Some applications also exhibit a stochastic behavior, however, either because of fundamental properties of nature, uncertain environments, or simplifications to overcome complexity.
  Uncertainties can sometimes be modeled well by taking a nondeterministic perspective where anything could happen and we are not concerned with how probable which outcome is.
  Nondeterminism can be used to model uncertainty in hybrid systems (\rref{ch:dL}) and distributed hybrid systems (\rref{ch:QdL}).
  That is useful, for example, if a car is uncertain about whether the car in front of it will brake or accelerate, so that the follower car has to be prepared to handle either choice safely.
  In other situations, however, stochastic models are needed to model uncertainty.
  For example, a nondeterministic model of a lossy communication channel would consider it possible for a message to arrive and possible for a message to disappear during transmission.
  But then one possible resolution of the nondeterminisms would cause all messages to disappear from all communication attempts, which is certainly possible, just extremely unlikely, except in adversarial situations.
  Whenever our analysis would be too imprecise with a nondeterministic view or whenever we have good stochastic models, probabilistic resolutions of uncertainties are more appropriate.
  This is what we study in this section.
  
  Discrete probabilistic systems have been studied successfully using logic \cite{DBLP:journals/jcss/Kozen81,DBLP:journals/jcss/Kozen85,DBLP:journals/jcss/FeldmanH84,McIverMorgan04}.
  In this section, we present our dynamic logic of stochastic hybrid systems \cite{DBLP:conf/cade/Platzer11}, i.e., systems with interacting discrete, continuous, and stochastic dynamics.
  Our results indicate that logic is a promising tool for understanding stochastic hybrid systems and can help taming some of their complexity.

Classical logic is about boolean yes/no truth.
That makes it tricky to use logic for systems with stochastic effects.
Logic has reached out into probabilistic extensions at least for discrete programs \cite{DBLP:journals/jcss/Kozen81,DBLP:journals/jcss/Kozen85,DBLP:journals/jcss/FeldmanH84,McIverMorgan04} and for first-order logic over a finite domain \cite{DBLP:journals/ml/RichardsonD06}.
Logic has been used for the purpose of specifying system properties in model checking finite Markov chains \cite{DBLP:journals/sttt/YounesKNP06} and probabilistic timed automata \cite{DBLP:journals/iandc/KwiatkowskaNSW07}.

Stochastic hybrid systems \cite{BujorianuL06,Cassandras2006,DBLP:conf/hybrid/HuLS00,DBLP:conf/cade/Platzer11} are more general systems with interacting discrete, continuous, and stochastic dynamics.
There is not just one canonical way to add stochastic behavior to a system model.
Stochasticity might be restricted to the discrete dynamics, as in piecewise deterministic Markov decision processes \cite{DBLP:journals/roystats/Davis84}, restricted to the continuous and switching behavior as in switching diffusion processes \cite{DBLP:journals/jcopt/GhoshAM97}, or allowed in many parts as in so-called General Stochastic Hybrid Systems; see \cite{BujorianuL06,Cassandras2006} for an overview.
Several different forms of combinations of probabilities with hybrid systems and continuous systems have been considered, both for model checking \cite{Cassandras2006,DBLP:journals/tsmc/KoutsoukosR08,DBLP:journals/jlp/FranzleTE10} and for simulation-based validation \cite{DBLP:conf/hybrid/MeseguerS06,DBLP:conf/hybrid/ZulianiPC10}.

We have introduced a very different approach \cite{DBLP:conf/cade/Platzer11} that is based on logic for dynamical systems.
We consider logic and proofs for stochastic hybrid systems\footnote{Note that there is a specific class of models called Stochastic Hybrid Systems \cite{DBLP:conf/hybrid/HuLS00}. We do not mean this specific model in the narrow sense but refer to stochastic hybrid systems as the broader class of systems that share discrete, continuous, and stochastic dynamics.} to transfer the success that logic has had in other domains.
Our approach is partially inspired by probabilistic PDL \cite{DBLP:journals/jcss/Kozen85} and by barrier certificates for continuous dynamics \cite{DBLP:journals/tac/PrajnaJP07}.
We follow the arithmetical view that Kozen identified as suitable for probabilistic logic \cite{DBLP:journals/jcss/Kozen85}.
Simple probabilistic effects can already be encoded in real variables of \dL and \QdL, but general stochastic dynamics requires the approach presented in this section.

Classical analysis is provably inadequate \cite{KloedenPlaten2010}\internal{\cite[Ch I.8]{Protter10}} for analyzing even simple continuous stochastic processes.
We heavily draw on both stochastic calculus and logic.
It is not possible to present all mathematical background exhaustively here.
We provide basic definitions and intuition and refer to the literature for more details and proofs \cite{DBLP:conf/cade/Platzer11,KaratzasShreve,Oksendal07,KloedenPlaten2010}.

We show the model of \emph{stochastic hybrid programs} (SHPs) \cite{DBLP:conf/cade/Platzer11} that combine discrete stochastic dynamics and stochastic differential equations for continuous stochastic dynamics, and we define a compositional semantics of SHP runs in terms of stochastic processes.
We have proved that the semantic processes are adapted, almost surely have \cadlag paths, and that their natural stopping times are Markov times.
We have introduced \emph{stochastic differential dynamic logic} (\SdL) for specifying and verifying properties of SHPs \cite{DBLP:conf/cade/Platzer11}.
We define a semantics and have proved that the semantics is measurable such that probabilities are well-defined and probabilistic questions become meaningful.
We present proof rules for \SdL and have proved their soundness \cite{DBLP:conf/cade/Platzer11}.
\SdL makes the rich semantical complexity and deep theory of stochastic hybrid systems accessible in a simple syntactic language.
This makes the verification of stochastic hybrid systems possible with elementary syntactic proof principles.

We first briefly recall the basics of stochastic processes and stochastic differential equations (\rref{sec:SDE}).
Then we explain the system model of stochastic hybrid programs that \SdL provides for modeling stochastic hybrid systems (\rref{sec:SHP}) and define their semantics.
We define the terms and logical formulas that \SdL provides for specification and verification (\rref{sec:SdL-formula}), show measurability results (\rref{sec:SdL-measurable}) and then provide reasoning principles, axioms, and proof rules for verifying \SdL formulas (\rref{sec:SdL-calculus}).
We then show soundness theorems (\rref{sec:SdL-sound}) and investigate proof rules for stochastic differential equations (\rref{sec:Sdiffind}).

\subsection{Preliminaries: Stochastic Differential Equations} \label{sec:SDE}
We fix a dimension $d\in\naturals$ for the Euclidean state space $\reals^d$ equipped with its \dfn[Borel!$\sigma$-algebra]{Borel $\sigma$-algebra} $\mathcal{B}$, i.e., the $\sigma$-algebra generated by all open subsets.
A \dfn{$\sigma$-algebra} on a set $\Omega$ is a nonempty set \m{\mathcal{F}\subseteq2^\Omega} that is closed under complement (\(E\in\mathcal{F}\) implies \(\Omega\setminus E\in\mathcal{F}\)) and countable union (\(E_i\in\mathcal{F}\) implies \(\cupfold_{i=1}^\infty E_i\in\mathcal{F}\)).
Even though it could also be constructed, we axiomatically fix a \dfn[probability!space]{probability space} \m{(\Omega,\mathcal{F},P)} with a $\sigma$-algebra $\mathcal{F}\subseteq2^\Omega$ of events on space $\Omega$ and a probability measure $P$ on $\mathcal{F}$.
Function \m{P:\mathcal{F}\to[0,1]} is a \dfn[probability!measure]{probability measure} on $\mathcal{F}$ if $P$ is countable additive (i.e., \(P(\cupfold_{i=1}^\infty E_i)=\sum_{i=1}^\infty P(E_i)\) when \(E_i\cap E_j=\emptyset\) for all $i\neq j$) and $P\geq0, P(\Omega)=1$.
We assume the probability space has been \emph{completed}, i.e., every subset of a null set (i.e., $P(A)=0$) is measurable.
A property holds $P$-\dfn[almost~surely]{almost surely} (\emph{a.s.}) if it holds with probability 1.
A \dfn{filtration} is a family \m{(\mathcal{F}_t)_{t\geq0}} of $\sigma$-algebras that is increasing, i.e., \m{\mathcal{F}_s \subseteq \mathcal{F}_t} for all \m{s<t}.
Intuitively, $\mathcal{F}_t$ are the events that can be discriminated at time $t$.
We always assume a filtration \m{(\mathcal{F}_t)_{t\geq0}} that has been completed to include all null sets and that is \dfn{right-continuous}, i.e., \m{\mathcal{F}_t=\capfold_{u>t} \mathcal{F}_u} for all $t$.
We generally assume the compatibility condition that $\mathcal{F}$ coincides with the $\sigma$-algebra \m{\mathcal{F}_\infty:=\sigma\left(\cupfold_{t\geq0}\mathcal{F}_t\right)}, i.e., the $\sigma$-algebra generated by all $\mathcal{F}_t$.

For a $\sigma$-algebra $\Sigma$ on a set $D$ and the Borel $\sigma$-algebra $\mathcal{B}$ on $\reals^d$, function \m{f:D\to\reals^d} is \dfn[measurable!function]{measurable} iff \m{f^{-1}(B) \in \Sigma} for all \m{B\in\mathcal{B}} (or, equivalently, for all open \m{B\subseteq\reals^d}).
An $\reals^d$-valued \dfn[random!variable]{random variable} is an $\mathcal{F}$-measurable function \m{X:\Omega\to\reals^d}.
All sets and functions definable in first-order logic over real arithmetic are Borel-measurable \cite{tarski_decisionalgebra51}.
A \dfn[stochastic!process]{stochastic process} $X$ is a collection \m{\{\iget[state]{\I}\}_{t\in T}} of $\reals^d$-valued random variables $\iget[state]{\I}$ indexed by some set $T$ for time.
That is, \m{X:T\times\Omega\to\reals^d} is a function such that at all $t\in T$, \m{X_t=X(t,\cdot):\Omega\to\reals^d} is a random variable.
Process $X$ is \dfn{adapted} to filtration $(\mathcal{F}_t)_{t\geq0}$ if $X_t$ is $\mathcal{F}_t$-measurable for each $t$.
That is, the process does not depend on future events.
We consider only adapted processes (which can be ensured, e.g., by using the completion of the natural filtration of a process or the completion of the optional $\sigma$-algebra for $\mathcal{F}$, see \cite{KaratzasShreve}).
A process $X$ is \emph{{\cadlag}} iff its \dfn[path]{paths} \m{t\mapsto X_t(\omega)} (for each $\omega\in\Omega$) are \cadlag a.s., i.e., \dfn{right-continuous} (\m{\lim_{s\searrow t} X_s(\omega)=X_t(\omega)}) and have \dfn[left~limit]{left limits} (\m{\lim_{s\nearrow t} X_s(\omega)} exists).

We further need an $e$-dimensional \emph{Brownian motion} $W$ \cite{KaratzasShreve,Oksendal07,KloedenPlaten2010}, that is, $W$ is a stochastic process starting at 0 ($W_0=0$) that is almost surely continuous and has independent increments that are normally distributed with mean 0 and variance equal to the time difference, i.e., \m{W_t-W_s \sim \mathcal{N}(0,t-s)}.
Brownian motion is mathematically extremely complex.
Its paths are a.s. continuous everywhere but a.s. differentiable nowhere and a.s. of unbounded variation.
Thus, Brownian motion is a.s. not of finite variation, hence, standard integral notions are inapplicable.
Brownian motion is also a.s. nonmonotonic on every interval.
Intuitively, $W$ can be understood as the limit of a random walk.
We denote the Euclidean vector norm by $|x|$ and use the Frobenius norm \m{|\sigma|:=\sqrt{\sum_{i,j}\sigma_{ij}^2}} for matrices $\sigma\in\reals^{d\times e}$.

We use stochastic differential equations \cite{Oksendal07,KloedenPlaten2010} to describe stochastic continuous system dynamics.
They are like ordinary differential equations but have an additional diffusion term that varies the state stochastically.
Stochastic differential equations are of the form 
\m{d{X_t} = b(X_t)dt + \sigma(X_t)d{W_t}}.
We consider It\=o stochastic differential equations, whose solutions are defined by the stochastic It\=o integral \cite{Oksendal07,KloedenPlaten2010}, which is a stochastic process.
\begin{figure}[tb]
  \includegraphics[width=0.5\columnwidth]{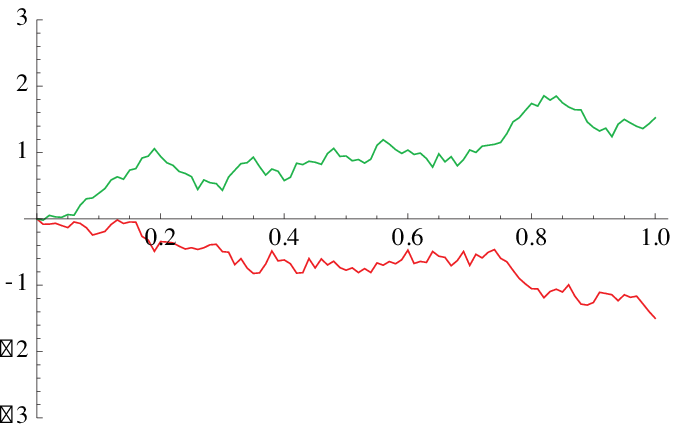}%
  \includegraphics[width=0.5\columnwidth]{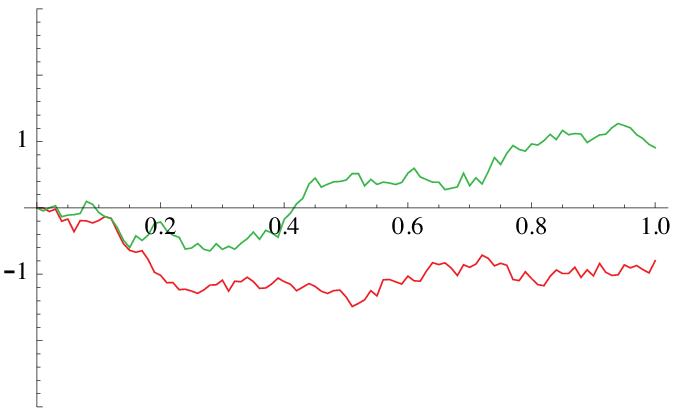}
  \caption{Sample paths with $b=1,\sigma=1$ (top) and $b=0,\sigma=1$ (bottom)}
  \label{fig:brownian-motion}
\end{figure}
Like in an ordinary differential equation, the drift coefficient $b(X_t)$ determines the deterministic part of how $X_t$ changes systematically over time as a function of its current value.
As a function of $X_t$, the diffusion coefficient $\sigma(X_t)$ determines the stochastic influence by integration with respect to the Brownian motion process $W_t$.
See \rref{fig:brownian-motion} for sample paths.
Ordinary differential equations are retained for $\sigma=0$.
We focus on the time-homogeneous case, where $b$ and $\sigma$ are time-independent, because time could be added as an extra state variable if needed.

\begin{definition}[Stochastic differential equation] \label{def:SDE}
  A stochastic process \(X:[0,\infty)\times\Omega\to\reals^d\) solves the (It\=o) \dfn[stochastic!differential~equation]{stochastic differential equation}
\begin{equation}
  d{X_t} = b(X_t)dt + \sigma(X_t)d{W_t}
  \label{eq:SDE}
\end{equation}
with \m{X_0=Z}, if
\m{X_t = Z + \int b(X_t)dt + \int \sigma(X_t)d{W_t}},
where \m{\int \sigma(X_t)d{W_t}} is an It\=o integral process \cite{Oksendal07,KloedenPlaten2010}.
We assume \m{b:\reals^d\to\reals^d} and \m{\sigma:\reals^d\to\reals^{d\times e}} to be \ignore{$\mathcal{L}^2$-}measurable and locally Lipschitz-continuous, i.e.,
for all $N$ there is a $C$ such that for all $x,y$ with \(|x|,|y|\leq N\):
\[
|b(x)-b(y)|\leq C|x-y| ~\text{and}~ |\sigma(x)-\sigma(y)|\leq C|x-y|
\]
\end{definition}
As an integral of an a.s. continuous process, solution $X$ has a.s. continuous paths \cite{Oksendal07}.
Local Lipschitz-continuity guarantees that the a.s. continuous solution $X$ is pathwise unique \cite[Ch 4.5]{KloedenPlaten2010}\internal{Lem 4.5.2 and text after Theorem 4.5.3} and enables us to compute the infinitesimal generator of $X$ from its differential generator (see \rref{sec:Sdiffind}).
Process $X$ is a strong Markov process for each initial value $x$ \cite[Theorem 7.2.4]{Oksendal07}.
We focus on the time-homogeneous case, where $b$ and $\sigma$ are time-independent, because time could be added as an extra state variable.

\subsection{Stochastic Hybrid Programs} \label{sec:SHP}

As a system model for stochastic hybrid systems, we have introduced stochastic hybrid programs (SHPs) \cite{DBLP:conf/cade/Platzer11}.
SHPs combine stochastic differential equations for describing the stochastic continuous system dynamics with program operations to describe the discrete stochastic choices, discrete switching, and jumps.
These primitive dynamics can be combined programmatically in flexible ways.
All basic terms in stochastic hybrid programs and stochastic differential dynamic logic are polynomial terms built over real-valued variables and rational constants.
Extensions to rational functions are possible.

\begin{definition}[Stochastic hybrid program]
\emph{Stochastic hybrid programs (SHPs)} are formed by the following grammar (where $x_i$ is a variable, $x$ a vector of variables, $\theta$ a basic term, $b$ a vector of basic terms, $\sigma$ a matrix of basic terms, $\ivr$ is a quantifier-free first-order real arithmetic formula, $\lambda,\nu\geq0$ are rational numbers):
\[
\alpha,\beta ~\bebecomes~ 
\pupdate{\pumod{x_i}{\theta}}
\alternative \prandom{x_i}
\alternative \ptest{\ivr}
\alternative \hevolvein{d{x} = b dt + \sigma d{W}}{\ivr}
\alternative\\ \lambda\alpha \pplus \nu\beta
\alternative \alpha;\beta
\alternative \prepeat{\alpha}
\]
\end{definition}
\emph{Assignment} \m{\pupdate{\pumod{x_i}{\theta}}} deterministically assigns term $\theta$ to variable $x_i$ instantaneously.
\emph{Random assignment} \m{\prandom{x_i}} randomly updates variable $x_i$, but unlike in classical dynamic logic \cite{DBLP:conf/focs/Pratt76} and differential dynamic logic, we assume a probability distribution for $x$.
As one example for a probability distribution, we consider uniform distribution in the interval [0,1], but other distributions can be used as long as they are computationally tractable, e.g., definable in first-order real arithmetic.

Most importantly, \m{\hevolvein{d{x} = b dt + \sigma d{W}}{\ivr}} represents a \emph{stochastic continuous evolution} along a stochastic differential equation, restricted to the evolution domain region $\ivr$, i.e., the stochastic process will stop when it leaves $\ivr$.
Unlike in \dL and \QdL, where we take a nondeterministic view, stochastic continuous evolutions do not stop prematurely, but exactly when they leave $\ivr$.
The time when evolutions stop is still random, but now described by the stochastic continuous evolution and its evolution domain constraint instead of by nondeterminism.
In particular, $\ivr$ represents control decisions when to interrupt the continuous stochastic process.
Because of that, we can show that the random variable describing when the stochastic continuous evolution stops is a Markov time (\rref{thm:cadlag-adapted-Markov}).
We assume that \m{\hevolve{d{x} = b dt + \sigma d{W}}} satisfies the assumptions of stochastic differential equations from \rref{def:SDE}.

\emph{Test} \m{\ptest{\ivr}} represents a stochastic process that fails (disappears into an absorbing state) if $\ivr$ is not satisfied yet continues unmodified otherwise.
\emph{Linear combination} \m{\lambda\alpha\pplus\nu\beta} evolves like $\alpha$ with probability $\lambda$ and like $\beta$ otherwise.
For conceptual simplicity, we assume \m{\lambda+\nu=1}, but other linear combinations are possible when taking care to ensure that the result still gives probabilities \cite{DBLP:journals/jcss/Kozen85}, e.g., when using complementary tests in the definition 
\(\pif{\ivr}{\alpha}{\beta} \mequiv (\ptest{\ivr};\alpha) \pplus (\ptest{\lnot\ivr};\beta)\).
It is possible to extend this to the case where $\lambda,\mu$ are terms.
Linear combination alias probabilistic choice \m{\lambda\alpha\pplus\nu\beta} is the counterpart of the nondeterministic choice \m{\pchoice{\alpha}{\beta}} from \rref{ch:dL}, but gives information about the probability with which the respective choices are taken.
\emph{Sequential composition} $\alpha;\beta$ and \emph{repetition} $\prepeat{\alpha}$ work similarly to differential dynamic logic, except that they combine SHPs instead of non-stochastic HPs.

The semantics of a SHP is the stochastic process that it generates.
The semantics $\iaccess[\alpha]{\Io}$ of a SHP $\alpha$ consists of a function \m{\iaccess[\alpha]{\Io}:(\Omega\to\reals^d)\to([0,\infty)\times\Omega\to\reals^d)} that maps any $\reals^d$-valued random variable $\iget[state]{\IZ}$ describing the initial state to a stochastic process $\iaccess[\alpha]{\IZ}$ together with a function \m{\istop[\alpha]{\Io}:(\Omega\to\reals^d)\to(\Omega\to\reals)} that maps any $\reals^d$-valued random variable $\iget[state]{\IZ}$ describing the initial state to a (random) stopping time $\istop[\alpha]{\IZ}$ indicating when to stop $\iaccess[\alpha]{\IZ}$.
Often, an $\mathcal{F}_0$-measurable random variable $\iget[state]{\IZ}$ or deterministic state is used to describe the initial state.
We assume independence of $\iget[state]{\IZ}$ from subsequent stochastic processes like Brownian motions occurring in the definition of \m{\iaccess[\alpha]{\IZ}}.
For an $\reals^d$-valued random variable $Z$, we denote the stochastic process \m{\pproc{Z}:\{0\}\times\Omega\to\reals^d; (0,\omega)\mapsto \pproc{Z}_0(\omega):=Z(\omega)} that is stuck at $Z$ by $\pproc{Z}$.
We write $\pproc{\iget[state]{\Ix}}$ for the random variable $Z$ that is a deterministic state $Z(\omega):=\iget[state]{\Ix}$ for all $\omega\in\Omega$.
We write \m{\iaccess[\alpha]{\Ix}} and \m{\istop[\alpha]{\Ix}} for \m{\iaccess[\alpha]{\IZ}} and \m{\istop[\alpha]{\IZ}} in that case.

In order to simplify notation, we assume that all variables are uniquely identified by an index, i.e., the only occurring variables are $x_1,x_2,\dots,x_d$.
We write \m{\imodels{\IZw}{\ivr}} if state $\iget[state]{\IZ}(\omega)$ satisfies first-order real arithmetic formula $\ivr$ and \m{\inonmodels{\IZw}{\ivr}} otherwise.
In the semantics we use a family of random variables $\{U_i\}_{i\in I}$ that are distributed uniformly in $[0,1]$ and independent of other $U_j$ and of all other random variables and stochastic processes in the semantics.
Hence, \m{U} satisfies
    \m{P(\{\omega\in\Omega \with U(\omega)\leq s\}) = \int_{-\infty}^s \charf{[0,1]}dt}
    with the usual extensions to other Borel subsets of $\reals$.
To describe this situation, we just say that ``\m{U\sim\mathcal{U}(0,1)} is i.i.d. (independent and identically distributed)'', meaning that $U$ is furthermore independent of all other random variables and stochastic processes in the semantics.
We denote the \dfn[characteristic!function]{characteristic function} of a set $S$ by \m{\charf{S}}, defined by \m{\charf{S}(x):=1} if \m{x\in S} and \m{\charf{S}(x):=0} if \m{x\not\in S}.

\begin{definition}[Process semantics of SHPs] \label{def:SHP-semantics}
  The \dfn[semantics!SHP]{semantics} of SHP $\alpha$ is defined by  
  \begin{align*}
    \iaccess[\alpha]{\Io}:&(\Omega\to\reals^d)\to([0,\infty)\times\Omega\to\reals^d); \iget[state]{\IZ}\mapsto\iaccess[\alpha]{\IZ} = (\iaccess[\alpha]{\IZ[t]})_{t\geq0}\\
    \istop[\alpha]{\Io}:&(\Omega\to\reals^d)\to(\Omega\to\reals\cup\{\infty\}); \iget[state]{\IZ}\mapsto\istop[\alpha]{\IZ}
  \end{align*}
  These functions are inductively defined for random variable \m{\iget[state]{\IZ}:(\Omega\to\reals^d)} by
\begin{enumerate}
  \item \label{case:assign}
    \m{\iaccess[\pupdate{\umod{x_i}{\theta}}]{\IZ} = \pproc{\iget[state]{\IY}}}
    where \m{\iget[state]{\IYw}_i=\ivaluation{\IZw}{\theta}}
    and \m{\iget[state]{\IY}_j=\iget[state]{\IZ}_j} for all $j\neq i$,
    and \m{\istop[\pupdate{\umod{x_i}{\theta}}]{\IZ} = 0}.
  \item \label{case:random}
    \m{\iaccess[\prandom{x_i}]{\IZ} = \pproc{U}}
    where \m{U_j=\iget[state]{\IZ}_j} for all $j\neq i$,
    and \m{U_i\sim\mathcal{U}(0,1)} is i.i.d. and $\mathcal{F}_0$-measurable.
    Further, \m{\istop[\prandom{x_i}]{\IZ} = 0}.
  \item \label{case:test}
    \m{\iaccess[\ptest{\ivr}]{\IZ} = \pproc{\iget[state]{\IZ}}} on the event \m{\{\iget[state]{\IZ}\models\ivr\}}
    and \m{\istop[\ptest{\ivr}]{\IZ} = 0} (on all events $\omega\in\Omega$).
    Note that \m{\iaccess[\ptest{\ivr}]{\IZ}} is not defined on the event \m{\{\iget[state]{\IZ}\nonmodels\ivr\}}.
\item \label{case:sde}
  \(\iaccess[\hevolvein{d{x} = b dt + \sigma d{W}}{\ivr}]{\IZ}\) is the stochastic process \m{\iget[flow]{\If}:[0,\infty)\times\Omega\to\reals^d} that solves the (It\=o) stochastic differential equation 
\m{d{\iget[state]{\I}} = \ivaluation{\I}{b}dt + \ivaluation{\I}{\sigma} d{B_t}} with \m{\iget[state]{\I[0]}=\iget[state]{\IZ}} on the event \m{\{\iget[state]{\IZ}\models\ivr\}},
  where $B_t$ is a fresh $e$-dimensional Brownian motion if $\sigma$ has $e$ columns.
  We assume that $\iget[state]{\IZ}$ is independent of the $\sigma$-algebra generated by $(B_t)_{t\geq0}$.
  Further, \m{\istop[\hevolvein{d{x} = b dt + \sigma d{W}}{\ivr}]{\IZ} = \inf \{t\geq0 \with \iget[state]{\I[t]} \not\in\ivr \}}.
    Note that \m{X} is not defined on the event \m{\{\iget[state]{\IZ}\nonmodels\ivr\}}.
  \item \label{case:pplus}%
    \begin{minipage}{6cm}
    \begin{align*}
    \iaccess[\lambda\alpha \pplus \nu\beta]{\IZ}
    &=
    \charf{U\leq\lambda}\iaccess[\alpha]{\IZ} + \charf{U>\lambda}\iaccess[\beta]{\IZ}
    =
    \begin{cases}
    \iaccess[\alpha]{\IZ} &\text{on the event}~\{U\leq\lambda\}\\
    \iaccess[\beta]{\IZ} &\text{on the event}~\{U>\lambda\}
    \end{cases}
    \\
    \istop[\lambda\alpha \pplus \nu\beta]{\IZ} &%
    =\charf{U\leq\lambda}\istop[\alpha]{\IZ} + \charf{U>\lambda}\istop[\beta]{\IZ}
    \end{align*}
    \end{minipage}

    where  \m{U\sim\mathcal{U}(0,1)} is i.i.d. and $\mathcal{F}_0$-measurable.
  \item \label{case:compose}%
    \begin{minipage}{6cm}
    \def\alphastop{\istop[\alpha]{\IZ}}%
    \def\Iint{\iaccess[\alpha]{\IZ[\alphastop]}}%
    \newcommand{\Icont}{\SDLint[state=\Iint,time=t-\alphastop]}%
    \begin{align*}
    \iaccess[\alpha;\beta]{\IZ[t]}
    &=
    \begin{cases}
      \iaccess[\alpha]{\IZ[t]} &\text{on the event}~ \{\istop[\alpha]{\IZ}>t\}\\
      \iaccess[\beta]{\Icont} &\text{on the event}~ \{\istop[\alpha]{\IZ}\leq t\}
    \end{cases}
    \\
    \istop[\alpha;\beta]{\IZ} &= \istop[\alpha]{\IZ} + \istop[\beta]{\Icont}
    \end{align*}
    \end{minipage}
  \item \label{case:repeat}%
    \begin{minipage}{6cm}
    \begin{align*}
    \iaccess[\prepeat{\alpha}]{\IZ[t]}
    &=
    \iaccess[\alpha^n]{\IZ[t]}
    ~\text{on the event}~ \{\istop[\alpha^n]{\IZ}>t\}
    \\
    \istop[\prepeat{\alpha}]{\IZ} &= \lim_{n\to\infty} \istop[\alpha^n]{\IZ}
    \end{align*}
    \end{minipage}
    
    where $\alpha^0 \mequiv \ptest{\ltrue}$, $\alpha^1 \mequiv \alpha$, and $\alpha^{n+1} \mequiv \alpha; \alpha^n$.
\end{enumerate}
\end{definition}
  For \rref{case:repeat}, note that \m{\istop[\alpha^n]{\IZ}} is monotone in $n$, hence the limit \m{\istop[\prepeat{\alpha}]{\IZ}} exists and is finite if the sequence is bounded. The limit is $\infty$ otherwise.
    Note that \m{\iaccess[\prepeat{\alpha}]{\IZ[t]}} is independent of the choice of $n$ on the event \m{\{\istop[\alpha^n]{\IZ}>t\}} (but not necessarily independent of $n$ on the event \m{\{\istop[\alpha^n]{\IZ}\geq t\}}, because $\alpha$ might start with a jump after $\alpha^n$).
    Observe that \m{\iaccess[\prepeat{\alpha}]{\IZ[t]}} is not defined on the event \m{\{\lforall{n}{\istop[\alpha^n]{\IZ}\leq t}\}}, which happens, e.g., for Zeno executions violating divergence of time.
    It would still be possible to give a semantics in this case, e.g., at \m{t=\istop[\alpha^n]{\IZ}}, but we do not gain much from introducing those technicalities.
    Note that the choice of inequalities in cases~\ref{case:compose} and~\ref{case:repeat} is important to obtain a \cadlag process.

In the semantics of \m{\iaccess[\alpha]{\IZ}}, time is allowed to end.
We explicitly consider \m{\iaccess[\alpha]{\IZ[t]}} as not defined for a realization $\omega$ if a part of this process is not defined, because of failed tests in $\alpha$.
The process is explicitly not defined when \m{\istop[\alpha]{\IZ}<t}.
Explicitly being not defined can be viewed as being in a special absorbing state that can never be left again, as in killed processes.
The stochastic process \m{\iaccess[\alpha]{\IZ}} is only intended to be used until time \m{\istop[\alpha]{\IZ}}.
We stop using \m{\iaccess[\alpha]{\IZ}} after time \m{\istop[\alpha]{\IZ}}, which is a random variable.

\begin{example}[Jumping rotational Brownian motion] \label{ex:SHP}
Consider a simple example illustrating how SHPs combine discrete and stochastic continuous dynamics to form stochastic hybrid systems.
Consider the following SHP:
\[
\ptest{x^2{+}y^2{\leq}\frac{1}{3}};~\pupdate{\pumod{x}{\frac{x}{2}}};~ \hevolvein{dx=\frac{-x}{2} dt - y dW, dy=\frac{-y}{2} dt + x dW}{\ivr}
\]
It starts with a test \m{\ptest{x^2{+}y^2{\leq}\frac{1}{3}}} checking whether the state variables $x,y$ are in a ball of radius $\sqrt{\frac{1}{3}}$ around 0. Only if it succeeds can the SHP continue.
Then, the SHP performs an instantaneous discrete jump reducing the value of $x$ to \m{\pupdate{\pumod{x}{\frac{x}{2}}}}.
Finally, the SHP follows a stochastic differential equation for rotational Brownian motion restricted to the evolution domain region $\ivr$ that we define as \m{\ivr\mequiv x^2+y^2<9}.
This gives rotational dynamics from the diffusion terms (compare \rref{ex:rotational}) and an exponentially contracting drift.
\end{example}

\subsection{\SdLbf Formulas} \label{sec:SdL-formula}

Formulas of stochastic differential dynamic logic are built out of \SdL function terms.
\begin{definition}[\SdL term]
\emph{Function terms} of \emph{stochastic differential dynamic logic} \SdL are formed by the grammar ($F$ is a primitive measurable function definable in first-order real arithmetic, e.g., the characteristic function $\charf{S}$ of a set $S$ definable in first-order real arithmetic, $B$ is a boolean combination of such characteristic functions using operators $\land,\lor,\lnot,\limply$ from \rref{fig:SdL-abbreviations}, $\lambda,\nu$ are rational numbers):
\[
f,g ~\bebecomes~ 
F
\alternative \lambda f + \nu g
\alternative Bf
\alternative \ddiamond{\alpha}{f}
\]
\end{definition}

\begin{wrapfigure}{r}{4cm}
\vspace{-1.5\baselineskip}
\begin{align*}
0 &\mequiv \charf{\emptyset}\\
1 &\mequiv \charf{\reals^d}\\
\lnot f &\mequiv 1-f\\
A\land B &\mequiv AB\\
A\lor B &\mequiv A+B-AB\\
A\limply B &\mequiv 1-A+AB\\
\dbox{\alpha}{f} &\mequiv \lnot\ddiamond{\alpha}{\lnot f}
\end{align*}
  \caption{Common \SdL and SHP abbreviations}
  \label{fig:SdL-abbreviations}
\end{wrapfigure}
One typical choice for a primitive measurable function $F$ is the characteristic function $\charf{S}$ of a set $S$ definable in first-order real arithmetic, which is then measurable \cite{tarski_decisionalgebra51}.
The \SdL term \m{\lambda f + \nu g} is a \emph{linear combination} of \SdL terms.
The \SdL term \m{Bf} is a boolean product with a boolean combination $B$ of characteristic functions of measurable sets.
\SdL term \m{\ddiamond{\alpha}{f}} represents the supremal value of $f$ along the process belonging to $\alpha$.
The syntactic abbreviations in \rref{fig:SdL-abbreviations} can be useful, especially for convenient operators on boolean combinations of characteristic functions.
Formulas of \SdL are simple, because \SdL function terms are powerful.
\emph{\SdL formulas} express equational and inequality relations between \SdL function terms~$f,g$.
\begin{definition}[\SdL formula]
The \emph{formulas of \SdL} are defined by the following grammar (where $f,g$ are \SdL function terms):
\[
\phi ~\bebecomes~ f\leq g \alternative f=g
\]
\end{definition}
The semantics of classical logics maps an interpretation to a truth-value.
This does not work for stochastic logic, because the state evolution of SHPs contained in \SdL formulas is stochastic, not deterministic.
Instead, the semantics of an \SdL function term is a generator for a random variable.

\begin{definition}[\SdL semantics] \label{def:SdL-semantics}
The \emph{semantics} \m{\ivaluation{\Io}{f}} of \SdL function term $f$ is a function \m{\ivaluation{\Io}{f}:(\Omega\to\reals^d)\to(\Omega\to\reals)} that maps any $\reals^d$-valued random variable $\iget[state]{\IZ}$ describing the current state to a random variable \m{\ivaluation{\IZ}{f}}.
It is defined inductively by
\begin{enumerate}
  \item \label{case:primitivef}
    \(\ivaluation{\IZ}{F} = F^\ell(\iget[state]{\IZ})\), i.e.,
    \(\ivaluation{\IZ}{F}(\omega) = F^\ell(\iget[state]{\IZ}(\omega))\)
    where $F^\ell$ is the function denoted by function symbol $F$
  \item \label{case:linear}
    \(\ivaluation{\IZ}{\lambda f + \nu g} = \lambda\ivaluation{\IZ}{f} + \nu\ivaluation{\IZ}{g}\)
  \item \label{case:multiply}
    \(\ivaluation{\IZ}{B f} = \ivaluation{\IZ}{B}\cdot\ivaluation{\IZ}{f}\), i.e., pathwise multiplication
  \(\ivaluation{\IZ}{B f}(\omega) = \ivaluation{\IZ}{B}(\omega)\cdot\ivaluation{\IZ}{f}(\omega)\)
  \item \label{case:diamond}
    \def\Iint{\iaccess[\alpha]{\IZ[t]}}%
    \newcommand{\Ilater}{\iconcat[state=\Iint]{\IZ}}%
    \(\ivaluation{\IZ}{\ddiamond{\alpha}{f}} = \sup \{\ivaluation{\Ilater}{f} \with 0\leq t\leq \istop[\alpha]{\IZ} \}\)
\end{enumerate}
\end{definition}
When $\iget[state]{\IZ}$ is not defined (results from a failed test), then \m{\ivaluation{\IZ}{f}} is not defined.

If $f$ is a characteristic function of a measurable set, then \m{\ivaluation{\IZ}{\ddiamond{\alpha}{f}}} corresponds to a random variable that reflects the supremal $f$ value that $\alpha$ can reach at least once during its evolution until stopping time \m{\istop[\alpha]{\IZ}} when starting in a state corresponding to random variable $\iget[state]{\IZ}$.
Then \m{P(\ivaluation{\IZ}{\ddiamond{\alpha}{f}}=1)} is the probability with which $\alpha$ reaches $f$ at least once.
Expected values of \SdL terms are well-defined, e.g., \m{E(\ivaluation{\IZ}{\ddiamond{\alpha}{(f+g)}})} is an expected value, given $\iget[state]{\IZ}$.
This includes the special case where $\iget[state]{\IZ}$ is a deterministic state \m{\iget[state]{\IZ}(\omega):=\iget[state]{\Ix}} for all \m{\omega\in\Omega}.

We say that \SdL formula $f\leq g$ is \dfn{valid}, written \m{\entails f\leq g}, if for all $\reals^d$-valued random variables $\iget[state]{\IZ}$:
\[
\ivaluation{\IZ}{f}\leq\ivaluation{\IZ}{g}, 
~\text{i.e.},~
(\ivaluation{\IZ}{f})(\omega)\leq(\ivaluation{\IZ}{g})(\omega) ~\text{for all}~\omega\in\Omega
\]
Validity of \SdL formula \m{f=g} is defined accordingly, hence, \m{\entails f=g} iff \m{\entails f\leq g} and \m{\entails g\leq f}.
As consequence relation on \SdL formulas, we use the \emph{(global) consequence} that we define as follows (similarly when some of the formulas are $f_i=g_i$):
\begin{align*}
  &f_1\leq g_1,\dots,f_n\leq g_n \entails[g] f \leq g
  \\\text{iff}~&
  \entails f_1\leq g_1,\dots, \entails f_n\leq g_n ~\text{implies}~ \entails f \leq g
\end{align*}
The (global) consequence \m{f_1\leq g_1,\dots,f_n\leq g_n \entails[g] f \leq g} holds \dfn{pathwise} if it holds for each $\omega\in\Omega$.

\begin{example}[Jumping rotational Brownian motion]  \label{ex:SdL}
Let $\alpha$ denote the SHP from \rref{ex:SHP}.
The \SdL term \m{\ddiamond{\alpha}{x^2{+}y^2}} represents the supremal value of the square, $x^2{+}y^2$, of the Euclidean norm along the process $\alpha$.
The semantics \m{\ivaluation{\IZ}{\ddiamond{\alpha}{x^2{+}y^2}}} of this \SdL term starting from an initial random variable $\iget[state]{\IZ}$ is a random variable.
The \SdL formula \m{\ddiamond{\alpha}{x^2{+}y^2\geq1}} expresses that this supremal value is greater or equal 1.
The following states that this happens (\SdL formula \m{\ddiamond{\alpha}{x^2{+}y^2\geq1}} holds) with probability at most $\frac{1}{3}$:
\begin{equation}
P(\ddiamond{\alpha}{x^2{+}y^2} {\geq} 1) \leq \frac{1}{3}
\label{eq:ex-SdL}
\end{equation}
\end{example}

\subsection{Measurability} \label{sec:SdL-measurable}

The semantics of SHPs and \SdL needs to satisfy measurability properties to make sure that probabilities are well-defined and probabilistic questions become meaningful.

A \dfn[Markov!time]{Markov time} (a.k.a. stopping time) is a non-negative random variable $\tau$ such that \m{\{\tau\leq t\} \in \mathcal{F}_t} for all $t$ (i.e., it does not depend on the future).
For a Markov time $\tau$ and a stochastic process $X_t$, the following process that is stuck after time $\tau$ is called \emph{stopped process} $X^\tau$
\[
X^\tau_t :=
X_{t\imin\tau} =
\begin{cases}
X_t &\text{if}~ t<\tau\\
X_{\tau} &\text{if}~ t\geq\tau
\end{cases}
\quad\text{where}\quad t\imin\tau := \min\{t,\tau\}
\]
A class $\mathcal{C}$ of processes is \emph{stable under stopping} if $X\in\mathcal{C}$ implies $X^\tau\in\mathcal{C}$ for every Markov time $\tau$.
Right continuous adapted processes, and processes satisfying the strong Markov property are stable under stopping \cite[Theorem 10.2]{Dynkin65}.

We have proved that the SHP semantics is well-defined.
This includes that the natural stopping times \m{\istop[\alpha]{\IZ}} are Markov times so that it is meaningful to stop process \m{\iaccess[\alpha]{\IZ}} at \m{\istop[\alpha]{\IZ}} and so that useful properties of \m{\iaccess[\alpha]{\IZ}} inherit to the stopped process 
\def\tmpstoppedt{t\imin\istop[\alpha]{\IZ}}%
\m{\iaccess[\alpha]{\IZ[\tmpstoppedt]}}.
Furthermore, we have shown that the process \m{\iaccess[\alpha]{\IZ}} is adapted (does not look into the future) and is a.s. \cadlag (right-continuous and has left limits), which is important to define a semantics for \SdL formulas.

\begin{theorem}[Adaptive \cadlag process with Markov times \cite{DBLP:conf/cade/Platzer11}] \label{thm:cadlag-adapted-Markov}
  For each SHP $\alpha$ and any $\reals^d$-valued random variable $\iget[state]{\IZ}$,
  \m{\iaccess[\alpha]{\IZ}} is a.s. a \cadlag process and adapted (to the completed filtration \m{(\mathcal{F}_t)_{t\geq0}} generated by $\iget[state]{\IZ}$ and the constituent Brownian motions \m{(B_s)_{s\leq t}} and uniform processes $U$)
  and \m{\istop[\alpha]{\IZ}} is a Markov time (for \m{(\mathcal{F}_t)_{t\geq0}}).
    \def\stoptime{\istop[\alpha]{\IZ}}%
  In particular, the end value \m{\iaccess[\alpha]{\IZ[\stoptime]}} is again \ignore{\m{\mathcal{F}_{\stoptime}}-}measurable.
\end{theorem}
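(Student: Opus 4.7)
The plan is to prove the three conclusions (adaptedness, a.s. \cadlag paths, and the Markov-time property of $\istop[\alpha]{\IZ}$) \emph{simultaneously} by structural induction on the SHP $\alpha$. A joint induction is essential because in compound cases the Markov-time property at stage $\alpha$ is exactly what licenses the clean restart of $\beta$ that we need for adaptedness and path regularity of $\alpha;\beta$ and $\prepeat{\alpha}$.

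For the atomic non-continuous constructs (assignment, random assignment, test), the process is the constant $\pproc{Y}$ with $\istop{} = 0$. Adaptedness reduces to $\mathcal{F}_0$-measurability of $Y$, which holds by measurability of the polynomial $\theta$, by the assumption that the fresh uniform variable is $\mathcal{F}_0$-measurable and independent of the past, and by measurability of $\iget[state]{\IZ}$; constancy in time yields \cadlag paths; $\istop = 0$ is trivially Markov. For the SDE case, strong existence and uniqueness under the local Lipschitz hypothesis of \rref{def:SDE} delivers an adapted, a.s.\ continuous solution $\iget[flow]{\If}$ started at the $\mathcal{F}_0$-measurable $\iget[state]{\IZ}$ independent of the driving Brownian motion $B$. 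The crucial ingredient here is that $\istop = \inf\{t\geq 0 : \iget[state]{\I[t]} \notin \ivr\}$ is a Markov time: since $\ivr$ is a first-order real arithmetic set (hence Borel), since the paths are continuous, and since the filtration is right-continuous and complete, the exit time is a stopping time by the standard d\'ebut argument, decomposing $\ivr^c$ into its closed and open parts and using that first hitting times of closed sets by continuous processes are Markov times.

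For the inductive step, probabilistic choice $\lambda\alpha\pplus\nu\beta$ is immediate: the branch-selecting $U\sim\mathcal{U}(0,1)$ is $\mathcal{F}_0$-measurable and independent of everything else, so $\charf{\{U\leq\lambda\}}\iaccess[\alpha]{\IZ} + \charf{\{U>\lambda\}}\iaccess[\beta]{\IZ}$ inherits adaptedness and \cadlag paths, and the mixed stopping time is Markov. The sequential composition case is the real work. Writing $\tau = \istop[\alpha]{\IZ}$ and $Y = \iaccess[\alpha]{\IZ[\tau]}$, the argument has four steps: (i) by the induction hypothesis $\iaccess[\alpha]{\IZ}$ is adapted, \cadlag, and stopped at a Markov time, so optional sampling gives $Y \in \mathcal{F}_\tau$; (ii) the continuation $\iaccess[\beta]{\pproc{Y}}$ is built from fresh Brownian motions and uniform variables independent of $\mathcal{F}_\tau$, so by the strong Markov property and our standing independence hypotheses it is adapted to the time-shifted filtration $(\mathcal{F}_{\tau+t})_{t\geq 0}$; (iii) pasting these pieces together yields a process adapted to $(\mathcal{F}_t)_{t\geq 0}$, with \cadlag paths thanks to the semantic convention of \rref{def:SHP-semantics}\ref{case:compose} (the continuation takes over on $\{\tau\leq t\}$, precisely matching the right-continuous value $Y$); (iv) the sum of two Markov times in this strong-Markovian setup is again a Markov time. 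For repetition $\prepeat{\alpha}$, the stopping time $\lim_n \istop[\alpha^n]{\IZ}$ is a monotone increasing limit of Markov times, hence Markov. Adaptedness of $\iaccess[\prepeat{\alpha}]{\IZ[t]}$ on $\{\istop[\alpha^n]{\IZ}>t\}$ follows by $n$-fold application of the sequential case; almost sure \cadlag paths hold because each sample path coincides locally with some $\iaccess[\alpha^n]{\IZ}$, and the semantics is explicitly undefined on the Zeno event $\{\lforall{n}{\istop[\alpha^n]{\IZ}\leq t}\}$, which is why the ``a.s.'' qualifier suffices.

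The main obstacle will be the sequential composition case, specifically step (ii): verifying that the time-shifted continuation really is adapted to $(\mathcal{F}_{\tau+t})$ requires the strong Markov property for the SDE constituents (secured by the local Lipschitz hypothesis) together with a careful bookkeeping argument that the ``fresh'' Brownian motions and uniforms introduced inside $\beta$ are genuinely independent of $\mathcal{F}_\tau$. Everything else is either a routine inductive bookkeeping argument or an appeal to a standard theorem of stochastic calculus; the repetition case is then essentially a transfinite iteration of the composition case plus a monotone convergence step for the stopping times, with the only subtlety being the graceful handling of Zeno paths by leaving the process undefined on a null event where necessary.
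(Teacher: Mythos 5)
Your proposal is correct and follows essentially the same route as the original proof (which this survey defers to the cited CADE 2011 paper): a simultaneous structural induction on $\alpha$ establishing adaptedness, almost-sure c\`adl\`ag paths, and the Markov-time property together, with the SDE exit time handled via path continuity plus completeness/right-continuity of the filtration, sequential composition via measurability of the end value at the Markov time and independence of the fresh Brownian motions and uniforms, and repetition via the monotone limit of stopping times. The only cosmetic slip is attributing $Y\in\mathcal{F}_\tau$ to ``optional sampling''; the correct justification is that a c\`adl\`ag adapted process is progressively measurable, so its evaluation at a stopping time is $\mathcal{F}_\tau$-measurable, exactly as the paper notes after the theorem statement.
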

Note in particular, that the event \m{\{\istop[\alpha^n]{\IZ}\geq t\}} is $\mathcal{F}_t$-measurable, thus, by \cite[Prop 1.2.3]{KaratzasShreve}, the event \m{\{\istop[\alpha^n]{\IZ}>t\}} in \rref{case:repeat} of the semantics of SHPs is $\mathcal{F}_t$-measurable.
As a corollary to \rref{thm:cadlag-adapted-Markov}, \m{\iaccess[\alpha]{\IZ}} is progressively measurable \cite[Prop 1.1.13]{KaratzasShreve}, which implies that the stopped processes are measurable.

We have proved that the \SdL semantics is well-defined and that \m{\ivaluation{\IZ}{f}} is, indeed, a random variable, i.e., measurable.
Without this, probabilistic questions about the value of \SdL terms and formulas would not be well-defined, because they are not measurable with respect to probability space \m{(\Omega,\mathcal{F},P)}.
\begin{theorem}[Measurability \cite{DBLP:conf/cade/Platzer11}] \label{thm:measurable-valuation}
  For any $\reals^d$-valued random variable $\iget[state]{\IZ}$, the semantics \m{\ivaluation{\IZ}{f}} of function term $f$ is a random variable (i.e., $\mathcal{F}$-measurable).
\end{theorem}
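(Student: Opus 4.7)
The plan is to proceed by structural induction on the function term $f$, mirroring the inductive clauses of \rref{def:SdL-semantics}. The base case and the linear/product cases are routine measure-theoretic closure properties; the real work is in the modal case $\ddiamond{\alpha}{f}$, where an uncountable supremum must be shown to be measurable. Throughout, I would lean on \rref{thm:cadlag-adapted-Markov}, which guarantees that $\iaccess[\alpha]{\IZ}$ is adapted and almost surely \cadlag and that $\istop[\alpha]{\IZ}$ is a Markov time; without these structural guarantees the supremum step would not even be well-posed.

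For the base case \m{\ivaluation{\IZ}{F}=F^\ell(\iget[state]{\IZ})}, measurability follows because $F^\ell$ is Borel-measurable (first-order real arithmetic definable, hence Borel \cite{tarski_decisionalgebra51}) and $\iget[state]{\IZ}$ is a random variable, so the composition $F^\ell\circ\iget[state]{\IZ}$ is $\mathcal{F}$-measurable. For \m{\lambda f+\nu g} and for the pathwise product $B f$, I would invoke standard stability of the space of real-valued random variables under finite linear combinations and finite products, which reduces immediately to the induction hypotheses for $f$, $g$, and $B$.

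The principal step is the modal case. Given $\iget[state]{\IZ}$, define the stopped process $Y_t:=\iaccess[\alpha]{\IZ[t\imin\istop[\alpha]{\IZ}]}$, which, by \rref{thm:cadlag-adapted-Markov} and the stability of \cadlag adapted processes under stopping, is again adapted and almost surely \cadlag. Let $h(x):=\ivaluation{\Ix}{f}$ be the measurable map produced by the induction hypothesis applied at a deterministic state $x$ (formally, one first establishes, as part of the induction, that the map $x\mapsto\ivaluation{\Ix}{f}$ is jointly measurable in $x$ and $\omega$; this joint measurability is the natural strengthening of the statement one must carry through the induction). Then $Z_t(\omega):=h(Y_t(\omega))$ is an almost surely \cadlag adapted process, and the \SdL semantics gives
\[
\ivaluation{\IZ}{\ddiamond{\alpha}{f}}(\omega)=\sup_{0\leq t\leq\istop[\alpha]{\IZ}(\omega)} Z_t(\omega).
\]
Because $Z$ is a.s. \cadlag, its supremum over the random compact interval $[0,\istop[\alpha]{\IZ}(\omega)]$ coincides almost surely with the supremum taken over the countable set $\bigl(\mathbb{Q}_{\geq0}\cap[0,\istop[\alpha]{\IZ}(\omega)]\bigr)\cup\{\istop[\alpha]{\IZ}(\omega)\}$; each term in this countable supremum is $\mathcal{F}$-measurable (using that $\istop[\alpha]{\IZ}$ is a Markov time so that $\{q\leq\istop[\alpha]{\IZ}\}\in\mathcal{F}_q\subseteq\mathcal{F}$, and that the value at the Markov time $\istop[\alpha]{\IZ}$ is $\mathcal{F}_{\istop[\alpha]{\IZ}}$-measurable by optional sampling), and a countable supremum of measurable functions is measurable.

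The hard part will be the modal case, and specifically two subtleties inside it: first, formulating and carrying through the joint-measurability strengthening of the induction hypothesis so that $h\circ Y$ is itself a well-defined adapted process rather than only a pointwise composition; second, the reduction of the uncountable supremum over $[0,\istop[\alpha]{\IZ}]$ to a countable one, which relies essentially on the \cadlag property guaranteed by \rref{thm:cadlag-adapted-Markov} together with care at the random right endpoint (handled by adjoining the value at $\istop[\alpha]{\IZ}$ and invoking optional sampling). Failed-test realisations, where $\iget[state]{\IZ}$ is undefined, are handled uniformly by restricting all statements to the $\mathcal{F}$-measurable event on which the process is defined, which poses no additional measure-theoretic difficulty.
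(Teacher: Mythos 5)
Your induction skeleton (base, linear, and product cases by standard closure properties; all the real work in the modal case) is the right shape, but the decisive step of your modal case does not hold. What \rref{thm:cadlag-adapted-Markov} gives you is that the \emph{state} process of $\alpha$ is adapted and a.s. \cadlag; it does not make the real-valued process $t\mapsto{}$(value of $f$ along the run) \cadlag, because \SdL function terms are in general discontinuous functions of the state: the primitive terms $F$ are typically characteristic functions $\charf{S}$ of first-order definable sets, and composing a merely Borel-measurable map with a \cadlag (even continuous) path destroys right-continuity. Hence your reduction of the supremum over $[0,\tau]$ to a countable supremum over rational times plus the endpoint is unsound. Concretely, take $f=\charf{\{x=0\}}$, let $\alpha$ be the degenerate stochastic differential equation $dx=-1\,dt$ (diffusion $\sigma=0$) with evolution domain $x\geq-1$, and start at the deterministic state $x=\sqrt{2}$: every path passes through $x=0$ exactly once, at the irrational time $t=\sqrt{2}$, so the supremum of $f$ along the run is $1$, while the supremum over rational times together with the stopping time (where $x=-1$) is $0$. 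The claimed a.s. identity thus fails surely, and measurability of the value of $\ddiamond{\alpha}{f}$ cannot be obtained this way.

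There is a second gap feeding into the same step: you pass from the induction hypothesis to a Borel map $h(x)$ (``value of $f$ at the deterministic state $x$'') and set $Z_t(\omega)=h(Y_t(\omega))$. But for modal $f$ the semantics at a random variable is not a pathwise function of the current state value at all—fresh Brownian motions and uniform random variables enter the evaluation—so identifying the value of $f$ at the random state $Y_t$ with $h(Y_t(\omega))$, i.e.\ the joint-measurability/flow property you mention only in passing, is itself a substantial proof obligation to be carried through the induction, not bookkeeping. The machinery this paper sets up for precisely this step is different from your shortcut: adaptedness plus a.s. \cadlag paths yield \emph{progressive measurability} of the state process (this is the corollary recorded immediately after \rref{thm:cadlag-adapted-Markov}), hence joint $(t,\omega)$-measurability of the process under the supremum up to the Markov time $\tau$; measurability of that supremum then comes from joint measurability together with the standing completeness assumptions on the probability space and filtration (measurable projection / d\'ebut-type reasoning), not from any path regularity of $f$ along the run. (A minor point: ``optional sampling'' concerns expectations of martingales; measurability of the end value at $\tau$ is simply measurability of a progressively measurable process stopped at a Markov time, which \rref{thm:cadlag-adapted-Markov} already states.) If you repair the modal case along these lines—carrying a joint-measurability statement through the induction and replacing the \cadlag/countable-supremum argument by progressive measurability plus completeness—the remaining cases of your proposal go through.
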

In particular, for each Borel-measurable set $S$, the probability \m{P(\ivaluation{\IZ}{f} \in S)} is well-defined so that probabilistic questions have a well-defined answer.
Note that well-definedness of case~\ref{case:diamond} of the semantics of \SdL uses \rref{thm:cadlag-adapted-Markov}.

\subsection{Axioms} \label{sec:SdL-calculus}

Stochastic hybrid systems are a very expressive model and can even represent systems with a very complicated behavior. 
Just like the systems they model, however, they still expose their rich semantical complexity.
In order to make this rich semantical complexity and the deep theory behind the stochastic dynamics accessible in a form that is amenable to a computational approach, we have introduced a proof calculus for \SdL \cite{DBLP:conf/cade/Platzer11}.
Just like the proof calculi for \dL and \QdL, the proof rules for \SdL are syntactic so that only the justification of their soundness requires the deep theory of stochastic processes and stochastic hybrid systems, their use does not.
This makes it possible to computerize stochastic calculus in the form of syntactic \SdL proofs.

We show axioms and proof rules that can be used to prove \SdL formulas in \rref{fig:SdL}.
\begin{figure*}[tb]
  \renewcommand*{\irrulename}[1]{\text{#1}}%
  \newdimen\linferenceRulehskipamount%
  \linferenceRulehskipamount=1mm%
  \newdimen\lcalculuscollectionvskipamount%
  \lcalculuscollectionvskipamount=0.1em%
  \begin{calculuscollections}{\columnwidth}
    \begin{calculus}
      \cinferenceRule[Sassignd|$\langle:=\rangle$]{assignment / substitution axiom}
      {\linferenceRule[eq]
        {\mapply[x]{f}{\theta}}
        {\ddiamond{\pupdate{\umod{x}{\theta}}}{\mapply[x]{f}{x}}}
      }{}%
      \cinferenceRule[Stestd|$\langle{?}\rangle$]{test}
      {\linferenceRule[eq]
        {\ivr f}
        {\ddiamond{\ptest{\ivr}}{f}}
      }{}
      \cinferenceRule[Scomposed|$\langle{{;}}\rangle$]{sequential composition} 
      {\linferenceRule[leq]
        {\ddiamond{\alpha}{(f\imax\ddiamond{\beta}{f})}}
        {\ddiamond{\alpha;\beta}{f}}
      }{}
      \cinferenceRule[Sscalarpost|$\langle\rangle\lambda$]{scalar multiplication in post condition}
      {\linferenceRule[eq]
        {\lambda\ddiamond{\alpha}{f}}
        {\ddiamond{\alpha}{(\lambda f)}}
      }{}
      \cinferenceRule[Slinpost|$\langle\rangle+$]{linear in post condition}
      {\linferenceRule[leq]
        {\lambda\ddiamond{\alpha}{f} + \nu\ddiamond{\alpha}{g}}
        {\ddiamond{\alpha}{(\lambda f+\nu g)}}
      }{}
      \cinferenceRule[Scharf|\usebox{\Ival}]{characteristic function}
      {0\leq B =B B \leq 1\hspace{3cm}}
      {\text{$B$ boolean from characteristic functions}}
      \cinferenceRule[SR|R]{regular modal or monotonicity}
      {\linferenceRule[entail]
        {f\leq g}
        {\ddiamond{\alpha}{f}\leq\ddiamond{\alpha}{g}}
      }{}
      \cinferenceRule[Sdiffweak|DW]{stochastic differential weakening}
      {\linferenceRule[entail]
        {\closure{\ivr}\limply f\leq \lambda}
        {\ddiamond{\hevolvein{dx = b dt + \sigma dW}{\ivr}}{f}\leq \lambda}
      }{\m{\lambda\in\rationals}}
      \cinferenceRule[Sdind|ind]{diamond induction}
      {\linferenceRule[entail]
        {\ddiamond{\alpha}{g}\leq g}
        {\ddiamond{\prepeat{\alpha}}{g}\leq g}
      }{}
    \end{calculus}
  \end{calculuscollections}
  \caption{Pathwise proof rules for \SdL}
  \label{fig:SdL}
\end{figure*}
First we present proof rules that are sound pathwise, i.e., satisfy the global consequence relation pathwise for each \m{\omega\in\Omega}.
Axiom \irref{Sassignd} corresponds to Hoare's assignment rule for deterministic discrete assignment dynamics.
Axiom \irref{Stestd} is the test axiom from dynamic logic, just using the product notation of \SdL instead of conjunctions.

Axiom \irref{Scomposed} is the \SdL form of the sequential composition axiom.
By $\imax$ we denote the binary maximum operator.
Operator $\imax$ coincides with $\lor$ for values in \{0,1\}, e.g., for \SdL terms built using operators $\land,\lor,\lnot,\ddiamond{\alpha}{}$ from characteristic functions.
As a supremum, $\ddiamond{\alpha}{B}$ only takes on values \{0,1\} if $B$ does.
Note that the two sides of \irref{Scomposed} are generally not equal, because $\alpha$ has to run to completion before $\beta$ starts in \m{\ddiamond{\alpha;\beta}{f}}, but $\alpha$ can stop early in \m{\ddiamond{\alpha}{(f\imax\ddiamond{\beta}{f})}} and $\beta$ can then start already.

Axiom \irref{Sscalarpost} and \irref{Slinpost} represent scalar multiplication and (sub)linearity for scalars $\lambda,\nu$.
The two sides of axiom \irref{Slinpost} are not equal if the suprema \m{\ddiamond{\alpha}{f}} and \m{\ddiamond{\alpha}{g}} are at different times.
Axiom \irref{Scharf} expresses idempotence and range constraints on boolean combinations (\rref{fig:SdL-abbreviations}) of characteristic functions.

Rule \irref{SR} is the generalization rule of regular modal logic C expressing monotonicity.
It is the counterpart of a corresponding regular generalization rule that is easily derivable from \irref{K} and \irref{G} in \dL.
Rule \irref{Sdiffweak} is the weakening rule for differential equations, yet generalized to stochastic differential equations.
Note that formula \m{\closure{\ivr}\limply f\leq \lambda} in \irref{Sdiffweak} is equivalent to \m{\closure{\ivr}f\leq\closure{\ivr}\lambda} but easier to read.
If $f$ is continuous, rule \irref{Sdiffweak} is sound when replacing the topological closure $\closure{\ivr}$ (which is computable by quantifier elimination) by $\ivr$, because the inequality is a weak inequality.
Rule \irref{Sdind} is an induction rule.
It corresponds to rule \irref{invind} of \dL from p.\,\pageref{ir:invind}.

Other rules are derivable from \rref{fig:SdL}:
\begin{center}
  \begin{calculus}
      \dinferenceRule[Spositive|$pos$]{positivity}
      {\linferenceRule[entail]
        {0\leq f}
        {0\leq\ddiamond{\alpha}{f}}
      }{}
      \dinferenceRule[Scomposeds|$\langle{{;}}\rangle_2$]{strong composition} 
      {\linferenceRule[leq]
        {\ddiamond{\alpha}{\ddiamond{\beta}{f}}\hspace{1cm}}
        {\ddiamond{\alpha;\beta}{f}}
      }{\m{\entails f\leq\ddiamond{\beta}{f}}}
      \dinferenceRule[Scomposedwp|$\langle{;}\rangle_3$]{positive weak composition}
      {\linferenceRule[leq]
        {\ddiamond{\alpha}{(f+\ddiamond{\beta}{f})}\qquad}
        {\ddiamond{\alpha;\beta}{f}}
      }{\m{0\leq f}}
  \end{calculus}
\end{center}
Rule \irref{Spositive} expresses positivity (if $f$ is positive then $\ddiamond{\alpha}{f}$ is) and is derivable\footnote{
  By \irref{SR}, \m{0\leq f \entails \ddiamond{\alpha}{0}\leq\ddiamond{\alpha}{f}}.
  By \irref{Sscalarpost}, \m{\ddiamond{\alpha}{0}=\ddiamond{\alpha}{(0\ast0)}=0\ddiamond{\alpha}{0}=0}.
}
from \irref{Sscalarpost} and \irref{SR}.
The simpler sequential composition principle \irref{Scomposeds} is derivable\footnote{
By \irref{qear}, \m{f\leq\ddiamond{\beta}{f}} entails \m{f\imax\ddiamond{\beta}{f} =\ddiamond{\beta}{f}}, from which \irref{SR} derives
\(\ddiamond{\alpha}{(f\imax\ddiamond{\beta}{f})} \leq \ddiamond{\alpha}{\ddiamond{\beta}{f}}\).
Thus, \(\ddiamond{\alpha;\beta}{f}\leq\ddiamond{\alpha}{\ddiamond{\beta}{f}}\) by \irref{Scomposed}.
}
from \irref{Scomposed} and \irref{SR}.
Its assumption \m{\entails f\leq\ddiamond{\beta}{f}} holds in particular if $\beta$ is continuous at 0 a.s..
A sufficient condition for SHP $\beta$ to be a.s. continuous at 0 is that, on all paths, the first atomic operation that is not a test is a stochastic differential equation, not an assignment or random assignment.
Formula \irref{Scomposedwp} is derivable\footnote{
By \irref{Spositive}, \(0\leq f\) entails \(0\leq\ddiamond{\beta}{f}\).
Thus, \irref{SR} and the semantics of $\imax$, 
yield
\(\ddiamond{\alpha}{(f\imax\ddiamond{\beta}{f})} \leq \ddiamond{\alpha}{(f+\ddiamond{\beta}{f})}\) by \m{0\leq f},
which derives \irref{Scomposedwp} together with
\(\ddiamond{\alpha;\beta}{f} \leq \ddiamond{\alpha}{(f\imax\ddiamond{\beta}{f})}\), which holds by \irref{Scomposed}.
}
from \irref{Scomposed}, \irref{Spositive}, and \irref{SR}.
Consequently the operator $\imax$ can either be added into the language or approximated conservatively by $+$ as in \irref{Scomposedwp}.

Not all reasoning for stochastic hybrid systems is sound pathwise.
We have introduced other \SdL axioms and proof rules that do not hold pathwise, but are still sound in distribution.
Rule \irref{Splusd} relates probabilities of linear combinations (alias probabilistic choices):
\begin{center}
\hspace{-0.3cm}
  \begin{calculus}
      \cinferenceRule[Splusd|$\langle{\oplus}\rangle$]{axiom of probabilistic choice}
      {\linferenceRule[eq]
        {\lambda P(\ddiamond{\alpha}{f}\in S) + \nu P(\ddiamond{\beta}{f}\in S)}
        {\hspace{-0.3cm}P(\ddiamond{\lambda\alpha\oplus\nu\beta}{f}\in S)}
      }{}
  \end{calculus}
\end{center}

How to prove properties about a random assignment \m{\prandom{x_i}} depends on the distribution that we use for the random assignment. For a uniform distribution in [0,1], e.g., we obtain the following proof rule that is sound in distribution:
\begin{center}
  \begin{calculus}
      \cinferenceRule[Srandomd|$\langle*\rangle$]{(uniform) random assignment}
      {\linferenceRule[eq]
        {\int_0^1 \charf{\ddiamond{\pupdate{\pumod{x_i}{r}}}{f}\in S} dr}
        {P(\ddiamond{\prandom{x_i}}{f} \in S)}
      }{}
  \end{calculus}
\end{center}
The integrand is measurable for measurable $S$ by \rref{thm:measurable-valuation}.
The rule is applicable when $f$ has been simplified enough using other proof rules such that the integral can be computed after using \irref{Sassignd} to simplify the integrand.

The \SdL rules and axioms generalize to probabilistic assumptions by the rule of partition, i.e., using
\[
P(C) = P(C|A)P(A) + P(C|\lnot A)P(\lnot A)
\]
to consider the case where assumption $A$ holds separately from the case where $A$ does not hold (giving less information about conclusion $C$, but the probability can by bounded).

\subsection{Soundness} \label{sec:SdL-sound}

The most critical result about the \SdL axioms and proof rules is that they are sound, so that their simple syntactic reasoning always gives correct results about the semantical behavior of stochastic hybrid systems.
This result splits into \SdL axioms and proof rules that are sound pathwise (i.e., the global consequence relation between premises and conclusion holds pathwise) and those that are sound in distribution (i.e., the indicated probability relations hold).
All proof rules that are sound pathwise are sound in distribution but not vice versa.

\begin{theorem}[Pathwise soundness of \SdL \cite{DBLP:conf/cade/Platzer11}] \label{thm:sound-global}
  The \SdL axioms and proof rules in \rref{fig:SdL} are globally sound pathwise.
\end{theorem}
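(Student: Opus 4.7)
The plan is to verify each axiom and rule schema in Figure~\ref{fig:SdL} separately, showing that the claimed equalities hold pointwise in $\omega\in\Omega$ for every $\reals^d$-valued random variable $\iget[state]{\IZ}$, and that each rule's global consequence holds in the same pathwise sense. Throughout I appeal to the compositional definitions Def.~\ref{def:SHP-semantics} and Def.~\ref{def:SdL-semantics} together with the cadlag-adaptedness and Markov-time properties established in Theorem~\ref{thm:cadlag-adapted-Markov}, which guarantee that all suprema in Def.~\ref{def:SdL-semantics}(\ref{case:diamond}) are well-defined random variables.

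For the atomic modalities \irref{Sassignd} and \irref{Stestd} I unfold cases (\ref{case:assign}) and (\ref{case:test}) of the SHP semantics: both have stopping time $0$, so the ``supremum'' over $[0,\istop{}{}]$ collapses to the single reached state. Substitution then gives $\mapply[x]{f}{\theta}$ in the assignment case, and the resulting state coincides with $\iget[state]{\IZ}$ precisely on $\{\iget[state]{\IZ}\models\ivr\}$ (undefined otherwise), yielding the product $\ivr f$ pathwise. For \irref{Sscalarpost} and \irref{Slinpost} I use positive homogeneity and subadditivity of $\sup$ (pathwise) — noting that the equality in \irref{Sscalarpost} for rational $\lambda\geq 0$ is immediate, and that strict inequality in \irref{Slinpost} can arise when the suprema of $f$ and $g$ along the same sample path occur at different times. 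Axiom \irref{Scharf} is an immediate consequence of the grammar for $B$: any boolean combination of characteristic functions is itself $\{0,1\}$-valued pathwise, so idempotence $BB=B$ and $0\leq B\leq 1$ are trivial. Rule \irref{SR} follows from monotonicity of $\sup$ applied pathwise to the stopped process.

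The two interesting proof rules involving continuous and iterative dynamics are handled as follows. For \irref{Sdiffweak}, case~(\ref{case:sde}) of Def.~\ref{def:SHP-semantics} defines $\iaccess[\hevolvein{dx=b\,dt+\sigma\,dW}{\ivr}]{\IZ}$ as a solution stopped at $\inf\{t\geq 0 \with \iget[state]{\I[t]}\notin\ivr\}$; since the paths are a.s.\ continuous, the stopped state lies in the topological closure $\closure{\ivr}$ throughout $[0,\istop{}{}]$, so the premise $\closure{\ivr}\limply f\leq\lambda$ forces the supremum of $f$ along the process to be $\leq\lambda$ pathwise. For \irref{Sdind}, I induct on $n$: from the premise $\ddiamond{\alpha}{g}\leq g$ applied pathwise at the endpoint of each $\alpha^{n}$-run and monotonicity \irref{SR}, I obtain $\ddiamond{\alpha^{n}}{g}\leq g$ for every $n\in\naturals$; case~(\ref{case:repeat}) of the semantics then expresses $\ddiamond{\prepeat{\alpha}}{g}$ as the supremum over such $\alpha^n$-pieces partitioned by the events $\{\istop[\alpha^n]{\IZ}>t\}$, so the same upper bound $g$ passes to the limit.

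I expect the main obstacle to be \irref{Scomposed}. Here case~(\ref{case:compose}) of the semantics splits $\iaccess[\alpha;\beta]{\IZ[t]}$ at the Markov time $\istop[\alpha]{\IZ}$ into an $\alpha$-piece on $\{t<\istop[\alpha]{\IZ}\}$ and a $\beta$-piece started at $\iaccess[\alpha]{\IZ[\istop[\alpha]{\IZ}]}$ on $\{t\geq\istop[\alpha]{\IZ}\}$. The supremum along the concatenation must then be rewritten as the $\sup$ over $\alpha$ of the pointwise maximum between the current value of $f$ and the supremum that $\beta$ could subsequently attain from the reached state, which is exactly $f\imax\ddiamond{\beta}{f}$. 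Making this identity precise pathwise requires care at the junction time $\istop[\alpha]{\IZ}$, where the cadlag property from Theorem~\ref{thm:cadlag-adapted-Markov} and the strong Markov property of the constituent SDE solutions are essential to ensure that the $\beta$-branch seen in the $\sup$ on the right-hand side is the same process that actually runs on the left-hand side; the resulting bound is an inequality rather than equality because $\alpha$'s sample path may linger near an early maximum of $f$ without triggering the transition to $\beta$.
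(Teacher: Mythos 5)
Your proposal is correct and follows essentially the same route as the proof this survey defers to the cited work: a case-by-case pathwise verification obtained by unfolding the compositional SHP semantics and the supremum semantics of the diamond terms, with the closure argument for the stochastic differential weakening rule and the induction over $\alpha^n$ for the loop rule exactly as expected. Two minor remarks: in the sequential composition case the appeal to the strong Markov property is not actually needed (it suffices that both sides use the same semantic function for $\beta$ started from the end state of $\alpha$, so the $\beta$-branch coincides pathwise by definition), and your explanation of why that axiom is only an inequality is off (the right-hand side dominates because it also evaluates $\ddiamond{\beta}{f}$ at intermediate states of $\alpha$, i.e., $\alpha$ may stop early there), but neither point affects soundness of the stated inequality.
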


\begin{theorem}[Soundness in distribution of \SdL \cite{DBLP:conf/cade/Platzer11}] \label{thm:sound-distribution}
  The \SdL proof rules \irref{Splusd} and \irref{Srandomd} are sound in distribution.
\end{theorem}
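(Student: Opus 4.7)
The plan is to prove both rules by direct semantic calculation, reducing each to the law of total probability together with the independence properties that Definition \ref{def:SHP-semantics} builds into the semantics.

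For \irref{Splusd}, I would unfold case~\ref{case:pplus} of Definition \ref{def:SHP-semantics}: the process $\iaccess[\lambda\alpha\pplus\nu\beta]{\IZ}$ equals $\iaccess[\alpha]{\IZ}$ on the event $A:=\{U\leq\lambda\}$ and equals $\iaccess[\beta]{\IZ}$ on the event $A^c=\{U>\lambda\}$, where $U\sim\mathcal{U}(0,1)$ is i.i.d.\ and in particular independent of $\iget[state]{\IZ}$ and of every Brownian motion and uniform random variable occurring in the semantic construction of $\iaccess[\alpha]{\IZ}$ and $\iaccess[\beta]{\IZ}$. Hence, pathwise on $A$, the random variable $\ivaluation{\IZ}{\ddiamond{\lambda\alpha\pplus\nu\beta}{f}}$ agrees with $\ivaluation{\IZ}{\ddiamond{\alpha}{f}}$, and symmetrically on $A^c$ with $\ivaluation{\IZ}{\ddiamond{\beta}{f}}$. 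By the partition formula,
\[
P(\ddiamond{\lambda\alpha\pplus\nu\beta}{f}\in S)
= P(\{\ddiamond{\alpha}{f}\in S\}\cap A) + P(\{\ddiamond{\beta}{f}\in S\}\cap A^c),
\]
and the two events in each intersection are independent (the $f$-events are measurable with respect to the $\sigma$-algebra generated by $\iget[state]{\IZ}$ and the constituents of $\alpha$ and $\beta$, which are all independent of $U$), so each factors into a product $\lambda\,P(\ddiamond{\alpha}{f}\in S)$ and $\nu\,P(\ddiamond{\beta}{f}\in S)$, respectively. That yields \irref{Splusd}.

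For \irref{Srandomd}, I would unfold case~\ref{case:random} of Definition \ref{def:SHP-semantics}: $\iaccess[\prandom{x_i}]{\IZ}$ is the stuck process $\pproc{U}$ where $U$ agrees with $\iget[state]{\IZ}$ except that $U_i\sim\mathcal{U}(0,1)$ is i.i.d.\ and $\mathcal{F}_0$-measurable, hence independent of $\iget[state]{\IZ}$. Comparing with case~\ref{case:assign}, for every fixed $r\in[0,1]$ the process $\iaccess[\pupdate{\pumod{x_i}{r}}]{\IZ}$ is the stuck process whose $i$-th coordinate equals the deterministic $r$, so the random variable $\ivaluation{\IZ}{\ddiamond{\prandom{x_i}}{f}}$ coincides with the random variable obtained by evaluating $\ivaluation{\IZ}{\ddiamond{\pupdate{\pumod{x_i}{r}}}{f}}$ at $r = U_i$. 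Conditioning on $U_i=r$ and using independence of $U_i$ from $\iget[state]{\IZ}$, the law of total probability yields
\[
P(\ddiamond{\prandom{x_i}}{f}\in S)
= \int_0^1 P\bigl(\ddiamond{\pupdate{\pumod{x_i}{r}}}{f}\in S\bigr)\,dr,
\]
which is exactly the right-hand side of \irref{Srandomd} once the conditional probability is rewritten using the characteristic function $\charf{\cdot}$.

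The main obstacle will be the measurability bookkeeping rather than the probabilistic idea. For \irref{Srandomd} in particular, I must know that $r\mapsto \charf{\ddiamond{\pupdate{\pumod{x_i}{r}}}{f}\in S}$ is jointly measurable in $(r,\omega)$ so that Fubini applies and the integral on the right-hand side is well-defined as a function of $\omega$ (or, as written, as a number when $\iget[state]{\IZ}$ is deterministic). This joint measurability follows from Theorem \ref{thm:measurable-valuation} applied parametrically in $r$, together with the explicit constant-in-$\omega$ structure of the assignment $\pupdate{\pumod{x_i}{r}}$; I would verify it by induction on the structure of $f$, using that the operations $\lambda f+\nu g$, $Bf$, and $\ddiamond{\alpha}{\cdot}$ (the latter as a supremum over a \cadlag adapted process per Theorem \ref{thm:cadlag-adapted-Markov}) all preserve joint measurability in an additional parameter. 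For \irref{Splusd} the parallel subtlety is to justify independence of the $f$-events from $A$: this follows because the semantic construction introduces $U$ as a fresh i.i.d.\ variable, so the $\sigma$-algebras generated by $\iaccess[\alpha]{\IZ}$ and $\iaccess[\beta]{\IZ}$ (and hence by their $\ddiamond{\cdot}{f}$-valuations) lie in the independent complement of $\sigma(U)$.
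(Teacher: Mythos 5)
Your argument is correct and is essentially the proof behind the cited result: the survey states \rref{thm:sound-distribution} only by reference to \cite{DBLP:conf/cade/Platzer11}, and the proof there likewise unfolds the semantics of $\lambda\alpha\pplus\nu\beta$ and $\prandom{x_i}$ (cases~\ref{case:pplus} and~\ref{case:random} of \rref{def:SHP-semantics}), exploits that the fresh uniform $U$ is independent of all other randomness in the semantics, and concludes by the law of total probability together with Fubini and the measurability guaranteed by \rref{thm:measurable-valuation}. The only step worth making explicit is that your $\int_0^1 P\bigl(\ddiamond{\pupdate{\pumod{x_i}{r}}}{f}\in S\bigr)\,dr$ agrees with the rule's $\int_0^1 \charf{\ddiamond{\pupdate{\pumod{x_i}{r}}}{f}\in S}\,dr$ because the assignment $\pupdate{\pumod{x_i}{r}}$ introduces no randomness, so in the intended reading (deterministic initial state, integrand simplified via \irref{Sassignd}) the event is deterministic and its probability equals its characteristic function, while for random initial states the identity holds after taking the expectation over the initial randomness.
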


\subsection{Stochastic Differential Invariants} \label{sec:Sdiffind}

Proving properties of differential equations is very challenging.
Differential invariants (\rref{sec:diffind}) are the primary proof technique for more complicated differential equations that have no computable solutions.
Differential invariants are the ``only'' choice for differential equations with disturbances \cite{DBLP:journals/logcom/Platzer10,Platzer10}, because those do not have a single unique solution but depend on input functions.
In that respect, stochastic differential equations are like differential equations with disturbances, because both have noise terms in the right hand sides, which make solutions non-unique, depending on input functions and noise.
The difference is that stochastic differential equations have a stochastic model of the noise, whereas classical differential equations with disturbances have a nondeterministic model.
Solving stochastic differential equations is, for the most part, limited to sampling, and even that is difficult \cite{Oksendal07,KloedenPlaten2010}.

For \SdL, we lift the ideas behind differential invariants for differential equations to \emph{stochastic differential invariants} for stochastic differential equations \cite{DBLP:conf/cade/Platzer11}.
One critical element behind differential invariants is \rref{lem:derivationLemma}, which relates (computable) syntactic derivatives to (semantic) analytic differentiation.
The stochastic analogue of analytic differentiation are infinitesimal generators, the analogue of syntactic derivatives are differential generators.

\begin{definition}[Infinitesimal generator]
  The \dfn[generator!infinitesimal]{(infinitesimal) generator} of an a.s. right continuous strong Markov process (e.g., a solution from \rref{def:SDE}) is the operator $A$ that maps a function \m{f:\reals^d\to\reals} to the function \m{Af:\reals^d\to\reals} defined as
  \[
  Af(x) := \lim_{t\searrow0} \frac{E^x f(X_t) - f(x)}{t}
  \]
\end{definition}
We say that $Af$ is defined if this limit exists for all $x\in\reals^d$.
By Dynkin's formula \cite[p. 133]{Dynkin65}, infinitesimal generators can be used to determine, without solving the stochastic differential equation, the expected value of a function when following the process until a Markov time.
\begin{theorem}[{Dynkin's formula \cite[Theorem 7.4.1]{Oksendal07}}] \label{thm:Dynkin} %
  Let $X_t$ an a.s. right continuous strong Markov process (e.g., a solution from \rref{def:SDE}).
  If \m{f\in\continuouss[2]{\reals^d}{\reals}} has compact support and $\tau$ is a Markov time with $E^x \tau <\infty$, then
  \[
  E^x f(X_\tau) = f(x) + E^x \int_0^\tau Af(X_s)ds
  \]
\end{theorem}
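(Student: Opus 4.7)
The plan is to prove Dynkin's formula by reducing it to the martingale characterization of the generator together with optional stopping, using the compact support hypothesis to obtain the integrability needed for the limiting arguments. The core identity I want to establish is that, for $f \in C^2(\reals^d)$ with compact support, the process
\[
M_t^f := f(X_t) - f(X_0) - \int_0^t Af(X_s)\,ds
\]
is a martingale with respect to the filtration generated by $X$. Once this is in hand, Dynkin's formula is essentially an application of the optional stopping theorem at the Markov time $\tau$, followed by rearranging the identity $E^x[M_\tau^f] = E^x[M_0^f] = 0$.

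First I would establish the martingale property of $M_t^f$. For an \cadlag strong Markov process, this follows from the definition of the infinitesimal generator combined with the Markov property: for $s \le t$, conditioning on $\mathcal{F}_s$ and using the Markov property reduces $E^x[f(X_t) \mid \mathcal{F}_s]$ to $E^{X_s}[f(X_{t-s})]$, and an application of the semigroup identity together with the fundamental theorem of calculus (using that $f$ is in the domain of $A$, which is where $f \in C^2$ with compact support enters) gives the required martingale decomposition. Because $f$ has compact support, both $f$ and $Af$ are bounded, so integrability is automatic and no truncation is needed for the martingale property itself.

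Next I would apply optional stopping. Since $M_t^f$ is a martingale and $\tau$ is a Markov time with $E^x\tau < \infty$, I would first stop at the bounded Markov times $\tau \wedge n$ to obtain $E^x[M_{\tau \wedge n}^f] = 0$, i.e.,
\[
E^x f(X_{\tau \wedge n}) = f(x) + E^x \int_0^{\tau \wedge n} Af(X_s)\,ds.
\]
Then I would pass to the limit $n \to \infty$: on the left, $f(X_{\tau \wedge n}) \to f(X_\tau)$ by right-continuity of paths, and dominated convergence applies because $f$ is bounded by $\|f\|_\infty$; on the right, $\int_0^{\tau \wedge n} Af(X_s)\,ds \to \int_0^\tau Af(X_s)\,ds$ monotonically in $n$ in absolute value, and the dominating integrable variable $\|Af\|_\infty \cdot \tau$ is in $L^1(P^x)$ precisely because of the hypothesis $E^x\tau < \infty$.

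The main obstacle I anticipate is the first step: rigorously deriving the martingale property $E^x[M_t^f \mid \mathcal{F}_s] = M_s^f$ from only the abstract assumption that $X$ is a right-continuous strong Markov process with generator $A$ in the sense given. This requires unwinding the definition of $A$ as an a.s. limit together with the semigroup structure $P_{t+s} = P_t P_s$ (implicit in the strong Markov property) to justify that the pointwise derivative $Af$ is exactly what integrates back up to give the difference $E^x[f(X_t)] - f(x)$; this is the content of Kolmogorov's backward equation in this generality, and the compact support assumption on $f$ is what ensures $f$ lies in the domain of $A$ and that $t \mapsto P_tf(x)$ is continuously differentiable with derivative $P_t Af(x)$. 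Everything else is a comparatively routine application of optional stopping and dominated convergence.
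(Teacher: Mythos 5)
The paper does not prove this statement itself; it is quoted from {\O}ksendal (Theorem 7.4.1), where it is proved for It\=o diffusions by applying It\=o's formula to $f(X_t)$: compact support makes the integrand $\nabla f^{T}\sigma(X_s)$ of the stochastic integral bounded, so that integral is a martingale whose expectation vanishes at $\tau\wedge n$, and one then lets $n\to\infty$. Your optional-stopping scaffolding reproduces the second half of that argument correctly: stopping at $\tau\wedge n$, dominated convergence on the left using $\|f\|_\infty$, and on the right using the dominating variable $\|Af\|_\infty\,\tau$, which is integrable by $E^x\tau<\infty$. (One small slip: the integrals $\int_0^{\tau\wedge n}Af(X_s)\,ds$ do not converge ``monotonically in absolute value,'' since $Af(X_s)$ can change sign; but domination already suffices, so nothing is lost.)

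The genuine gap is exactly the step you flag: the martingale property of $M^f_t=f(X_t)-f(X_0)-\int_0^t Af(X_s)\,ds$. With the generator defined, as in this paper, only as the pointwise limit $Af(x)=\lim_{t\searrow0}\bigl(E^xf(X_t)-f(x)\bigr)/t$ for a general a.s.\ right-continuous strong Markov process, it is not automatic that $f\in C^2$ with compact support lies in the domain of $A$, nor that $t\mapsto E^xf(X_t)$ is continuously differentiable with derivative $E^x Af(X_t)$; the backward-equation identity $E^xf(X_t)-f(x)=\int_0^t E^x Af(X_s)\,ds$ that your conditioning argument needs requires either the strong (uniform-convergence) generator of a Feller semigroup or extra regularity, and for diffusions its standard proof is itself It\=o's formula --- so asserting it ``in this generality'' is close to assuming what is to be shown. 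To close the gap, either (i) work in the intended setting of solutions of the stochastic differential equation of \rref{def:SDE}, obtain the martingale $f(X_t)-f(X_0)-\int_0^t Lf(X_s)\,ds$ directly from It\=o's formula (compact support bounds both $Lf$ and the stochastic-integral integrand), and then identify $A$ with $L$ on compactly supported $C^2$ functions via \rref{thm:generatorid}; or (ii) set up the Feller/strong-generator framework and actually prove, rather than assert, that compactly supported $C^2$ functions belong to the domain and satisfy the backward equation. With either repair, the remainder of your argument goes through.
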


Dynkin's formula is very useful, but only if we can compute the infinitesimal generator and its integral.
The generator $A$ is a stochastic expression in the sense that $A$ is defined in terms of a limit of an expectation of a stochastic process.
It has been shown, however, that, under fairly mild assumptions, it is equal to a deterministic expression of $x$ called the differential generator.
\begin{theorem}[{Differential generator \cite[Theorem 7.3.3]{Oksendal07}}] \label{thm:generatorid}
  Consider a solution $X_t$ of a stochastic differential equation from \rref{def:SDE}.
  If \m{\inv\in\continuouss[2]{\reals^d}{\reals}} is compactly supported, then $A\inv$ is defined and equal to the \emph{differential generator} $L\inv$ of $\inv$:
  \[
  L\inv(x) := \sum_i b_i(x)\Dp[x_i]{\inv}(x) + \frac{1}{2}\sum_{i,j}(\sigma(x)\transpose{\sigma(x)})_{i,j} \frac{\partial^2 \inv}{\partial x_i \partial x_j}(x)
  \]
\end{theorem}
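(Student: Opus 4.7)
The plan is to prove this by applying Itô's formula to $\inv(X_t)$ and then taking expectations, which is the standard route for relating the infinitesimal generator to the differential generator. First I would invoke Itô's formula for $\inv \in \continuouss[2]{\reals^d}{\reals}$ applied to the solution process $X_t$ of $d{X_t} = b(X_t)dt + \sigma(X_t)d{W_t}$, which yields
\[
\inv(X_t) = \inv(X_0) + \int_0^t \sum_i \Dp[x_i]{\inv}(X_s)\, dX_s^i + \frac{1}{2}\int_0^t \sum_{i,j} \frac{\partial^2 \inv}{\partial x_i \partial x_j}(X_s)\, d\langle X^i, X^j\rangle_s.
\]
Substituting $dX_s^i = b_i(X_s)ds + \sum_k \sigma_{ik}(X_s) dW_s^k$ and computing the quadratic covariation $d\langle X^i, X^j\rangle_s = (\sigma(X_s)\transpose{\sigma(X_s)})_{ij}\, ds$ from the Itô isometry (mixed $dt\cdot dW$ and $dt\cdot dt$ terms vanish, while $dW^k\cdot dW^\ell = \delta_{k\ell}dt$) lets me collect the bounded variation part as $L\inv(X_s)\, ds$.

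Next I would separate the result into a deterministic drift contribution and a martingale contribution:
\[
\inv(X_t) = \inv(X_0) + \int_0^t L\inv(X_s)\, ds + \int_0^t \sum_{i,k} \Dp[x_i]{\inv}(X_s)\,\sigma_{ik}(X_s)\, dW_s^k.
\]
The key observation is that, because $\inv$ has compact support, all of $\inv$, $\Dp[x_i]{\inv}$, and $\frac{\partial^2 \inv}{\partial x_i \partial x_j}$ are bounded globally and vanish off a compact set; combined with local Lipschitzness of $\sigma$ (and hence its boundedness on the support of $\Dp[x_i]{\inv}$), the integrand of the stochastic integral is bounded, so the stochastic integral is a genuine martingale of mean zero. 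Taking expectation $E^x$ starting from $X_0 = x$ therefore gives
\[
E^x \inv(X_t) - \inv(x) = E^x \int_0^t L\inv(X_s)\, ds.
\]

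Finally I would divide by $t$ and pass $t \searrow 0$. Since $L\inv$ is continuous and bounded (again by compact support of $\inv$ together with continuity of $b$ and $\sigma$), and since $X_s \to x$ as $s \searrow 0$ a.s. by a.s.\ continuity of the solution, Fubini plus dominated convergence yields
\[
\lim_{t \searrow 0} \frac{1}{t}\, E^x\!\int_0^t L\inv(X_s)\, ds = L\inv(x),
\]
so the limit defining $A\inv(x)$ exists and equals $L\inv(x)$. The main obstacle is the careful justification that the stochastic integral is a true martingale (as opposed to merely a local martingale), which is where compact support of $\inv$ and the resulting global boundedness of the integrand are used decisively; without compact support one would only get the identity $A\inv = L\inv$ after a localization argument via stopping times. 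Itô's formula itself is invoked as a black box; the only genuine analytic content beyond bookkeeping is the dominated convergence step at the end, for which the uniform bound on $L\inv$ is essential.
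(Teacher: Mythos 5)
Your proof is correct and is essentially the same argument as in the source the paper cites for this theorem (the paper offers no proof of its own, deferring to \cite[Theorem 7.3.3]{Oksendal07}): It\^o's formula, compact support of the $C^2$ function making the stochastic-integral term a genuine mean-zero martingale, taking expectations, and Fubini plus dominated convergence as $t\searrow0$, exactly as in the cited textbook. The only cosmetic point is that the identity $d\langle X^i,X^j\rangle_s=(\sigma(X_s)\sigma(X_s)^T)_{ij}\,ds$ follows from the quadratic-covariation rules for It\^o integrals rather than from the It\^o isometry, which does not affect the argument.
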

Observe that this deterministic differential generator can be computed syntactically, yet is still used for a stochastic process in \rref{thm:Dynkin}.
We turn this principle into a fully syntactic proof rule for stochastic differential invariants.
\begin{theorem}[Soundness of stochastic differential invariants] \label{thm:sdi}
  If function \m{\inv\in\continuouss[2]{\reals^d}{\reals}} has compact support on $\ivr$ (which holds for all \m{\inv\in\continuouss[2]{\reals^d}{\reals}} if $\ivr$ represents a bounded set), then the proof rule \irref{sded} is sound for \m{\lambda>0,p\geq0}
  \begin{center}
  \begin{calculus}
    \cinferenceRule[sded|$\langle'\rangle$]{}{
      \linferenceRule
        {\lsequent{}{\ddiamond{\alpha}{(\ivr\limply\inv)} \leq \lambda p} & \lsequent{\ivr}{\inv\geq0} & \lsequent{\ivr}{L\inv\leq0}}
        {P(\ddiamond{\alpha}{\ddiamond{\hevolvein{dx = b dt + \sigma dW}{\ivr}}{\inv}}\geq \lambda)\leq p}
    }{}
  \end{calculus}
  \end{center}
\end{theorem}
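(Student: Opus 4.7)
The plan is to prove soundness of \irref{sded} by combining three classical tools from stochastic analysis: Dynkin's formula (\rref{thm:Dynkin}) to relate $L\inv$ to the semantic evolution of $\inv$ along the SDE, Doob's supermartingale maximal inequality to bound the tail of the running supremum, and Markov's inequality to lift a bound on the expectation of $\ddiamond{\alpha}{(\ivr\limply\inv)}$ to a tail bound for the nested modality. Conceptually, this mirrors the classical differential-invariant argument of \rref{sec:diffind}: the derivation lemma is replaced by the differential/infinitesimal generator identification of \rref{thm:generatorid}, and ``staying in $\inv$'' is replaced by ``$\inv$ is a nonnegative supermartingale.''

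I first handle the inner stochastic continuous evolution. Fix an initial state $x\in\ivr$ and let $X_t$ solve $d{X_t}=b(X_t)dt+\sigma(X_t)dW_t$ with $X_0=x$, stopped at the Markov time $\tau_\ivr:=\inf\{t\geq 0: X_t\not\in\ivr\}$. Since $\inv\in\continuouss[2]{\reals^d}{\reals}$ is compactly supported on $\ivr$, \rref{thm:generatorid} gives that the infinitesimal generator of the stopped process coincides with $L\inv$ on $\ivr$. Premise~3 then yields $L\inv\leq0$ on $\ivr$, which by Dynkin's formula (applied to every bounded Markov time up to $\tau_\ivr$) shows that $M_t:=\inv(X_{t\imin\tau_\ivr})$ is a supermartingale; premise~2 upgrades this to a nonnegative supermartingale. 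Doob's supermartingale maximal inequality applied to $M_t$ then gives, for $\lambda>0$,
\[
\lambda\, P\bigl(\sup_{t\geq 0} \inv(X_{t\imin\tau_\ivr}) \geq \lambda \,\big|\, X_0=x\bigr) \;\leq\; E[M_0] \;=\; \inv(x).
\]
By \rref{case:sde} of the SHP semantics, the left-hand supremum equals $\ivaluation{\Ix}{\ddiamond{\hevolvein{dx=b\,dt+\sigma\, dW}{\ivr}}{\inv}}$. For $x\not\in\ivr$ the inner process is undefined (equivalently stops at $\tau_\ivr=0$), so the inner value is $0$ and the bound $\lambda\,P(\cdot\geq\lambda\mid X_0=x)\leq (\ivr\limply\inv)(x)$ holds trivially using $(\ivr\limply\inv)(x)=1$. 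Thus, uniformly in $x$,
\(
\lambda\, P(\ivaluation{\Ix}{\ddiamond{\text{SDE}_\ivr}{\inv}}\geq\lambda) \leq (\ivr\limply\inv)(x).
\)

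I then lift this pointwise bound to the outer $\ddiamond{\alpha}{\cdot}$ by integrating along the $\alpha$-process. Because the fresh Brownian motion driving the inner SDE is independent of the $\alpha$-process (\rref{case:sde} introduces a fresh $B_t$), conditioning on the state $X_t^\alpha$ and invoking the tower property together with monotonicity of $P$ under the outer supremum gives
\[
\lambda\, P\bigl(\ivaluation{\IZ}{\ddiamond{\alpha}{\ddiamond{\text{SDE}_\ivr}{\inv}}} \geq \lambda\bigr) \;\leq\; E\bigl[\ivaluation{\IZ}{\ddiamond{\alpha}{(\ivr\limply\inv)}}\bigr].
\]
Premise~1 bounds the argument of the expectation pointwise by $\lambda p$, so dividing by $\lambda>0$ yields the desired conclusion $P(\ddiamond{\alpha}{\ddiamond{\text{SDE}_\ivr}{\inv}}\geq\lambda)\leq p$.

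The main obstacle is the passage from the pointwise Doob bound to the outer-process bound in the last step: the outer $\ddiamond{\alpha}{}$ is a supremum over the uncountable index set $t\in[0,\istop[\alpha]{\IZ}]$, and at each $t$ the inner modality samples a fresh Brownian trajectory, so a naive union bound fails. Justifying the inequality requires exploiting measurability of the map $x\mapsto P(\ddiamond{\text{SDE}_\ivr}{\inv}\geq\lambda\mid X_0=x)$ (a consequence of \rref{thm:cadlag-adapted-Markov} and \rref{thm:measurable-valuation}) together with independence of the inner driving noise from $\mathcal{F}^\alpha$, so that a Fubini/tower argument applies to the whole combined process. A secondary technical point is ensuring that the supermartingale from Dynkin is a genuine supermartingale rather than merely a local supermartingale; this is where the assumption that $\inv$ has compact support on $\ivr$ is used, making $M_t$ uniformly bounded and in particular uniformly integrable.
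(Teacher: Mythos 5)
Your treatment of the inner stochastic continuous evolution is essentially the intended argument and matches the paper's route: compact support makes \rref{thm:generatorid} applicable so that the infinitesimal generator of the solution agrees with $L\inv$, Dynkin's formula (\rref{thm:Dynkin}) together with the premises $\ivr\vdash\inv\geq0$ and $\ivr\vdash L\inv\leq0$ turns $\inv$ into a nonnegative right-continuous supermartingale along the solution stopped at the exit from $\ivr$, and Doob's maximal inequality then bounds $\lambda\,P\bigl(\ddiamond{\hevolvein{dx = b dt + \sigma dW}{\ivr}}{\inv}\geq\lambda\bigr)$ by the expected initial value of $\inv$, with the off-$\ivr$ case handled separately (minor point: by \rref{def:SHP-semantics} the inner value is \emph{undefined}, not $0$, off $\ivr$, and $\inv\geq0$ extends to the closure of $\ivr$ by continuity).

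The genuine gap is the lifting over the outer $\ddiamond{\alpha}{}$, which is precisely where \irref{sded} differs from a textbook statement about one stochastic differential equation. The inequality you assert, $\lambda\,P\bigl(\ddiamond{\alpha}{\ddiamond{\hevolvein{dx = b dt + \sigma dW}{\ivr}}{\inv}}\geq\lambda\bigr)\leq E\bigl[\ddiamond{\alpha}{(\ivr\limply\inv)}\bigr]$, does not follow from conditioning on $X^\alpha_t$ plus the tower property: those yield, for each \emph{fixed} $t$, a bound on $P(V_t\geq\lambda\mid\mathcal{F}^\alpha)$, where $V_t$ is the inner value launched from the $\alpha$-state at time $t$, whereas the event in the conclusion concerns $\sup_t V_t$ over uncountably many $t$, and no Fubini/tower bookkeeping converts per-$t$ conditional bounds into a bound on that supremum event. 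Worse, under the reading you yourself adopt (``at each $t$ the inner modality samples a fresh Brownian trajectory''), the inequality is false: take $d=1$, $\ivr\equiv -1<x<1$, $\inv(x)=x+1$, the dynamics $dx=dW$, $\lambda=1$, and let $\alpha$ idle deterministically at $x=-\tfrac12$ for one time unit; all premises of \irref{sded} hold with $p=\tfrac12$, yet already a sequence of independent inner runs from $-\tfrac12$ reaches $\inv\geq1$ with probability $1$. So the outer step needs a different argument, not measurability plus independence: one must work with the combined construction of \rref{def:SHP-semantics} (a single driving Brownian motion, independent of the $\alpha$-process) and exploit that premise~1 is a \emph{pathwise} bound, i.e., $(\ivr\limply\inv)\leq\lambda p$ at every $\alpha$-reachable state on every path, so that Doob's inequality is applied once to the supermartingale started from the (random) $\alpha$-state feeding the continuous evolution rather than separately at each time point; this is how the proof in \cite{DBLP:conf/cade/Platzer11} proceeds. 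As written, your last step is the unproven heart of the theorem.
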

See \cite{DBLP:conf/cade/Platzer11} for a proof of \rref{thm:sdi}.
The implications in the premises of \irref{sded} can be understood like that in \irref{Sdiffweak}.
Let $\ivr$ be given by first-order real arithmetic formulas.
If $\inv$ is polynomial and, thus, \m{\inv\in\continuouss[2]{\reals^d}{\reals}},
then the second and third premise of \irref{sded} are in first-order real arithmetic, hence decidable.

\begin{example}[Jumping rotational Brownian motion] 

With \m{\ivr\mequiv x^2+y^2<9}, the probabilistic statement about the \SdL formula \rref{eq:ex-SdL} from \rref{ex:SdL} can be proved easily in the \SdL calculus.
The first step is to use \irref{Scomposeds} as follows:
\begin{align*}
&P\Big(\Big\langle\ptest{x^2{+}y^2{\leq}\frac{1}{3}};~\pupdate{\pumod{x}{\frac{x}{2}}};~ \hevolvein{dx=\frac{-x}{2} dt - y dW\syssep dy=\frac{-y}{2} dt + x dW}{\ivr}\Big\rangle{x^2{+}y^2} {\geq} 1\Big)
\\\stackrel{\irref{Scomposeds}}{\leq}&
P\Big(\Big\langle\ptest{x^2{+}y^2{\leq}\frac{1}{3}};~\pupdate{\pumod{x}{\frac{x}{2}}}\Big\rangle\Big\langle \hevolvein{dx=\frac{-x}{2} dt - y dW\syssep dy=\frac{-y}{2} dt + x dW}{\ivr}\Big\rangle{x^2{+}y^2} {\geq} 1\Big) \leq \frac{1}{3}
\end{align*}
The last inequality (${\leq}\frac{1}{3}$) is by rule \irref{sded}.
The second premise of \irref{sded} is proved by \m{\inv\mequiv x^2+y^2\geq0}.
The third premise is proved as follows.
The stochastic differential equation has the form
\[
d\begin{pmatrix}x\\y\end{pmatrix} = b dt + \sigma dW
= \begin{pmatrix}-\frac{x}{2}\\-\frac{y}{2}\end{pmatrix}dt
+ \begin{pmatrix}-y\\x\end{pmatrix}dW
\]
Thus, in order to determine $L\inv$, we compute
\[
\sigma\transpose{\sigma} =
\begin{pmatrix}-y\\x\end{pmatrix} \begin{pmatrix}-y&x\end{pmatrix}
=
\begin{pmatrix}
y^2 & -xy\\
-xy & x^2
\end{pmatrix}
\]
and obtain by \rref{thm:generatorid}:
\begin{align*}
L\inv =& -\frac{x}{2}\Dp[x]{\inv}-\frac{y}{2}\Dp[y]{\inv} + \frac{1}{2}\left(y^2\Dp[x^2]{^2\inv}-2xy\frac{\partial^2\inv}{\partial x\partial y} +x^2\Dp[y^2]{^2\inv}\right)
\\
= &
-\frac{x}{2}2x-\frac{y}{2}2y+\frac{1}{2}\left(2y^2-0xy+2x^2\right)
\leq0
\end{align*}
The first premise of \irref{sded} can be proved in the \SdL calculus by the following chain of inequalities, which are justified by \SdL axioms and arithmetic as indicated:
\begin{align*}
  & \ddiamond{\ptest{x^2+y^2{\leq}\frac{1}{3}};~\pupdate{\pumod{x}{\frac{x}{2}}}}{(\ivr\limply\inv)}\\
  &\stackrel{\irref{Scomposed}}{\leq} \ddiamond{\ptest{x^2+y^2{\leq}\frac{1}{3}}}{\Big((\ivr\limply\inv)\imax\ddiamond{\pupdate{\pumod{x}{\frac{x}{2}}}}{(\ivr\limply\inv)}\Big)}\\
  &\stackrel{\irref{Sassignd}}{=} \ddiamond{\ptest{x^2+y^2{\leq}\frac{1}{3}}}{\Big((\ivr\limply\inv)\imax(
  \ddiamond{\pupdate{\pumod{x}{\frac{x}{2}}}}{\ivr}%
  \limply\Big(\frac{x}{2}\Big)^2+y^2)\Big)}\\
  &\stackrel{\irref{qear}}{=} \ddiamond{\ptest{x^2+y^2{\leq}\frac{1}{3}}}{(\ivr\limply\inv)}\\
  &\stackrel{\irref{Stestd}}{=} \Big(x^2+y^2{\leq}\frac{1}{3}\Big)(\ivr\limply\inv)\\
  &\stackrel{\irref{qear}}{=} \Big(x^2+y^2{\leq}\frac{1}{3}\Big)(x^2+y^2{<}9 \limply x^2+y^2) \leq \frac{1}{3}
\end{align*}
Together with \m{\ivr\limply\inv\geq0} and \m{\ivr\limply L\inv\leq0}, this inequality entails \SdL formula \rref{eq:ex-SdL} by \SdL proof rule \irref{sded} and \irref{Scomposeds}.
\end{example}

Observe that the \dL and \QdL calculus use equivalences and implications of \dL and \QdL axioms together with proof rules to prove \dL and \QdL formulas.
The \SdL calculus uses equations and inequalities of \SdL axioms together with \SdL proof rules to prove \SdL formulas.
Some of this reasoning is in the scope of a probability operator (e.g., when using axiom \irref{Splusd}), other inequalities and equations can be determined without a probability operator.

\section{Related Work} \label{ch:RelatedWork}

Since dynamical systems, hybrid systems, and their extensions are very active areas of research, a comprehensive overview of all results is impossible.
In this survey, we focus on logic and proofs for dynamical systems.
For more details on hybrid systems, we refer to the literature \cite{DBLP:conf/lics/Henzinger96,DBLP:conf/emsoft/Alur11}.
For background on classical logic and proving, we refer to the literature \cite{HughesCresswell96,Fitting96a,Harel_et_al_2000}.

Model checking and reachability analysis have been used successfully for hybrid systems. They work by state space exploration and use various abstractions or approximations \cite{DBLP:conf/lics/Henzinger96,DBLP:conf/lics/HenzingerNSY92,DBLP:journals/tcs/AlurCHHHNOSY95,DBLP:conf/lics/Henzinger96,DBLP:conf/csl/Franzle99,DBLP:conf/hybrid/AnaiW01,DBLP:journals/ijfcs/ClarkeFHKOST03,DBLP:conf/hybrid/Tiwari03,DBLP:conf/atva/MysorePM05,DBLP:journals/tecs/AlurDI06,DBLP:journals/tcs/AlurDI06}, including numerical approximations \cite{DBLP:journals/tac/ChutinanK03,DBLP:conf/hybrid/AsarinDG03}.
Lafferriere et al.~\cite{DBLP:conf/hybrid/LafferrierePY99,DBLP:journals/mcss/LafferrierePS2000,DBLP:journals/jsc/LafferrierePY01} have shown that finite-state bisimulations, which generally do not exist for hybrid systems \cite{DBLP:conf/lics/Henzinger96}, still work for o-minimal hybrid automata and classes of linear dynamics with a homogeneous eigenstructure, provided the discrete and continuous dynamics are completely decoupled.
Surveys of model checking techniques for hybrid systems appeared elsewhere \cite{DBLP:conf/lics/Henzinger96,DoyenPP12}.

Discretizations have been used for linear systems \cite{DBLP:conf/cav/GuernicG09}, to obtain abstractions of fragments of hybrid systems \cite{DBLP:journals/ieee/AlurHLP00,DBLP:journals/tecs/AlurDI06,DBLP:journals/fmsd/Tiwari08}, and to approximate nonlinear systems by hybrid systems \cite{DBLP:journals/tac/HenzingerHWT98} or by piecewise linear dynamics \cite{DBLP:conf/hybrid/AsarinDG03}.
Constraint-based verification approaches \cite{DBLP:journals/tac/PrajnaJP07,DBLP:journals/fmsd/SankaranarayananSM08,DBLP:conf/cav/GulwaniT08} have been considered, which are related to differential invariants.
Verification tools are based on logic and proofs \cite{DBLP:journals/jar/Platzer08,DBLP:conf/cade/PlatzerQ08,DBLP:conf/lics/Platzer12b}, polyhedral reachability analysis \cite{DBLP:journals/sttt/HenzingerHW97,DBLP:journals/sttt/Frehse08}, reachability analysis with support functions \cite{DBLP:conf/cav/GuernicG09,DBLP:conf/cav/FrehseGDCRLRGDM11}, interval-constraint propagation \cite{RatschanS07}, and numerical PDE solving \cite{DBLP:conf/hybrid/MitchellT05}.

Many languages have been proposed for modeling hybrid systems, including extended duration calculus \cite{DBLP:conf/hybrid/ChaochenRH92}, hybrid automata \cite{DBLP:conf/lics/Henzinger96}, hybrid programs \cite{DBLP:conf/tableaux/Platzer07,DBLP:journals/jar/Platzer08}, guarded commands \cite{DBLP:journals/tcs/RonkkoRS03}, hybrid $\pi$-calculus \cite{DBLP:conf/hybrid/RoundsS03}, and process algebra $\chi$ \cite{DBLP:journals/jlp/BeekMRRS06}.

Logic has been used successfully for real-time systems~\cite{DBLP:conf/lics/HenzingerNSY92,DBLP:conf/lics/Dutertre95,DBLP:journals/tcs/SchobbensRH02,ZhouH04,OlderogD08} and for timed-automata based model checking \cite{DBLP:journals/tcs/AlurD94,DBLP:conf/cav/Alur99,DBLP:conf/concur/ComonJ99,DBLP:journals/sttt/LarsenPY97,BaierKL08}.
More details about real-time systems can be found in books \cite{OlderogD08,BaierKL08}.

The importance of understanding dynamic / reconfigurable distributed hybrid systems was recognized in modeling languages SHIFT \cite{DBLP:conf/hybrid/DeshpandeGV96} and R-Charon \cite{DBLP:conf/hybrid/KratzSPL06} for simulation and compilation \cite{DBLP:conf/hybrid/DeshpandeGV96} or semantical considerations \cite{DBLP:conf/hybrid/KratzSPL06}.
For distributed hybrid systems, even giving a formal semantics is very challenging \cite{DBLP:conf/hybrid/ChaochenJR95,DBLP:conf/hybrid/Rounds04,DBLP:conf/hybrid/KratzSPL06,DBLP:journals/jlp/BeekMRRS06}.
Random simulation has been proposed for general dynamical systems \cite{DBLP:conf/hybrid/MeseguerS06}.

Discrete programs with random number generators have been studied in the literature, including \cite{DBLP:journals/jcss/Kozen81,DBLP:journals/jcss/FeldmanH84,DBLP:journals/jcss/Kozen85,McIverMorgan04}.
Probabilities and logic have been considered in many contexts, e.g., \cite{DBLP:journals/ml/RichardsonD06}.
Discrete probabilistic systems and finite Markov chains, have been studied using probabilistic model checking \cite{DBLP:conf/icalp/BaierCHKR97,DBLP:journals/tse/BaierHHK03,DBLP:journals/jacm/CourcoubetisY95} and statistical model checking \cite{DBLP:conf/cav/SenVA05,DBLP:journals/sttt/YounesKNP06,DBLP:conf/cmsb/JhaCLLPZ09}.
Extensions to probabilistic timed automata \cite{DBLP:journals/iandc/KwiatkowskaNSW07} and probabilistic hybrid automata \cite{DBLP:conf/hybrid/ZulianiPC10,DBLP:conf/cav/ZhangSRHH10} have been considered too.

For an overview of model checking techniques for various classes of stochastic hybrid systems, we refer to a survey \cite{Cassandras2006}.
Most verification techniques for stochastic hybrid systems use discretizations, approximations, or assume discrete time and bounded horizon \cite{DBLP:journals/tsmc/KoutsoukosR08,Cassandras2006,DBLP:journals/automatica/AbatePLS08,DBLP:conf/hybrid/HuLS00}.
Barrier certificates have been extended to the stochastic case \cite{DBLP:journals/tac/PrajnaJP07}.

The use of logic has been proposed for hybrid systems, e.g., in a propositional modal $\mu$-calculus \cite{DavorenNerode_2000} or in early work based on phase transition systems \cite{DBLP:conf/rex/MalerMP91}.
See \cite{DavorenNerode_2000} for an excellent overview.
We consider the first-order case, i.e., how to model and prove systems with concrete differential equations like \m{\D{x}=v\syssep\D{v}=a} and concrete control decisions like \m{\pupdate{\pumod{a}{-b}}}, instead of abstract propositional actions $A,B,C$ of unknown effects that propositional modal $\mu$-calculi consider \cite{DBLP:conf/focs/Pratt81,DavorenNerode_2000}.
The use of theorem provers has been suggested in hybrid systems, including STeP~\cite{DBLP:conf/hybrid/MannaS98,DBLP:journals/acta/KestenMP00} and PVS~\cite{DBLP:conf/iceccs/Abraham-MummSH01}.
Their working principles are different from what we show here.
They separate hybridness from the logic and proof by compiling a given global system invariant for a hybrid automaton into a single  verification condition expressing that the invariant is preserved under all transitions of the hybrid automaton.

In our approach, we, instead, take logic and hybridness at face value by developing and studying logics for hybrid systems, which directly integrate the logic and the hybrid dynamics (or extensions) within a single language.
That makes it easier to identify the core logical reasoning principles and transform formulas soundly in an entirely local way even for more general properties than invariance checking.
This view enables the study of logically more foundational questions, including completeness, deductive power, and relationships of differential invariants and differential cuts.
Benefits for automation of proofs and for computing invariants and differential invariants have been discussed elsewhere \cite{DBLP:journals/fmsd/PlatzerC09,Platzer10}.

\section{Summary and Outlook} \label{ch:Conclusions}

We have surveyed logics of dynamical systems, including hybrid systems, distributed hybrid systems, and stochastic hybrid systems.
The logic of discrete dynamical systems and the logic of continuous dynamical systems are fragments of the logic of hybrid systems.
We have surveyed differential dynamic logic (\dL) for hybrid systems,  quantified differential dynamic logic (\QdL) for distributed hybrid systems, and stochastic differential dynamic logic (\SdL) for stochastic hybrid systems.
We have recalled dynamical system models, dynamic logics, their semantics, their axiomatizations, and proof calculi for each of those dynamical systems. We have surveyed important theoretical results, including soundness and completeness, and results about the relative deductive power of differential cuts and of differential auxiliaries.
The differential dynamic logics and their induction techniques for differential equations, which are captured in various forms of differential invariants and differential variants, have been instrumental in proving properties for more advanced dynamical systems.
While the use of the theorem provers implementing differential dynamic logics is beyond the scope of this article, we have given references to more information, including a number of applications and case studies.

Not all important results about logics of dynamical systems and about differential dynamic logics are included in this survey. We still hope to have given the reader a good overview of logics for dynamical systems, and point out relationships and similarities among the techniques.
The reader should note that, for space reasons, not all important members of the family of differential dynamic logics have been presented in this article.
Prominent cases that are missing from this survey include differential-algebraic dynamic logic (\DAL) for hybrid systems with differential-algebraic constraints modeled in differential-algebraic program and  the temporal extension of differential temporal dynamic logic (\dTL).

The results summarized in this article demonstrate that logic is a powerful tool, not just for studying discrete phenomena, but also continuous phenomena, infinite-dimensional phenomena, and stochastic phenomena.
These logics set a strong logical foundation for dynamical systems, including logical foundations for cyber-physical systems.
Such stable foundations for the relatively young area of logic of dynamical systems make it a very promising direction for future research, including theoretical, practical, and applied research.
Given the tremendous progress that logic for programs has made since its conception, we expect to see no less from the area of logics for dynamical systems.

\begin{acks}
I am grateful to Jan-David Quesel for indispensable help with implementing \KeYmaera and to David Renshaw for implementing \KeYmaeraD.
My thanks go to Sicun Gao, David Harel, David Henriques, Oded Maler, Jo\~ao Martins, Sayan Mitra, Vaughan Pratt, and Sriram Sankaranarayanan for feedback on this article.
\end{acks}

\bibliographystyle{acmsmall}
\bibliography{platzer,bibliography}

\received{May 2012}{Month YYYY}{Month YYYY}
\end{document}